\newcommand{\bra}[1]{{\left\langle{#1}\right\vert}}
\newcommand{\ket}[1]{{\left\vert{#1}\right\rangle}}
\author[1]{Juan Bermejo-Vega\thanks{juan.bermejovega@mpq.mpg.de}}
\author[2]{Cedric Yen-Yu Lin\thanks{cedricl@mit.edu}}
\author[1]{Maarten Van den Nest}
\affil[1]{\small Max-Planck-Institut f\"ur Quantenoptik, Theory Division, Garching, Germany.}
\affil[2]{\small \ Center for Theoretical Physics, Massachusetts Institute of Technology,  Cambridge, MA, USA}
\title{ Normalizer circuits and a Gottesman-Knill theorem \\ for infinite-dimensional systems}
\begin{document}

\maketitle

\begin{abstract}
\textbf{\emph{Normalizer circuits}} \cite{VDNest_12_QFTs,BermejoVega_12_GKTheorem}  are generalized Clifford circuits that act on  arbitrary finite-dimensional systems  $\mathcal{H}_{d_1}\otimes \cdots \otimes \mathcal{H}_{d_n}$ with a standard basis labeled by the elements of a finite Abelian group $G=\mathbb{Z}_{d_1}\times\cdots \times \mathbb{Z}_{d_n}$. Normalizer gates implement operations associated with the  group $G$ and can be of three types: quantum  Fourier transforms, group automorphism gates and quadratic phase gates. In this work, we extend the normalizer formalism \cite{VDNest_12_QFTs,BermejoVega_12_GKTheorem} to \emph{infinite dimensions}, by allowing normalizer gates to act on systems of the form  $\mathcal{H}_\mathbb{Z}^{\otimes a}$:  each factor $\mathcal{H}_\mathbb{Z}$ has a  standard basis labeled by \emph{integers} $\mathbb{Z}$, and a  Fourier basis  labeled by \emph{angles}, elements of the \emph{circle group} $\mathbb{T}$. Normalizer circuits become hybrid quantum circuits acting both on continuous- and discrete-variable systems. We show that infinite-dimensional normalizer circuits can be efficiently simulated classically with a generalized \textbf{\emph{stabilizer formalism}} for Hilbert spaces associated with groups of the form $\mathbb{Z}^a\times \mathbb{T}^b \times \mathbb{Z}_{d_1}\times\cdots\times \mathbb{Z}_{d_n}$. We develop new techniques to track stabilizer-groups based on  \emph{normal forms} for group automorphisms and quadratic functions. We use our normal forms to reduce  the problem of simulating normalizer circuits to that of finding general solutions of  systems of mixed real-integer linear equations  \cite{BowmanBurget74_systems-Mixed-Integer_Linear_equations} and exploit this fact to devise a robust simulation algorithm: the latter remains efficient even in pathological cases where stabilizer groups become \emph{infinite}, \emph{uncountable} and \emph{non-compact}.  The techniques developed in this paper might find applications  in the study of fault-tolerant quantum computation with  superconducting qubits \cite{Kitaev06_Protected_Qubit_Supercond_Mirror,Brooks13_Protected_gates_for_superconducting_qubits}.
\end{abstract}

\tableofcontents

\section{Introduction}\label{sect:Previous Work}

Normalizer circuits  \cite{VDNest_12_QFTs,BermejoVega_12_GKTheorem} are a family of quantum circuits that generalize the so-called Clifford circuits \cite{Gottesman_PhD_Thesis,Gottesman99_HeisenbergRepresentation_of_Q_Computers,Knill96non-binaryunitary,Gottesman98Fault_Tolerant_QC_HigherDimensions} to Hilbert spaces associated with  finite Abelian groups $G$. Normalizer circuits are composed of  normalizer gates, important examples of which are quantum Fourier transforms (QFTs) over finite Abelian groups, which play a central role in quantum algorithms such as Shor's factoring and discrete-log algorithms \cite{Shor}. In Refs.\ \cite{VDNest_12_QFTs,BermejoVega_12_GKTheorem} it was shown that every normalizer circuit can be efficiently classically simulated. This result can be regarded as a natural generalization of the celebrated Gottesman-Knill theorem \cite{Gottesman_PhD_Thesis,Gottesman99_HeisenbergRepresentation_of_Q_Computers}, which asserts that every Clifford circuit (i.e., a circuit composed of Hadamard, CNOT and $\pi/2$-phase gates acting on qubit systems) can be efficiently classically simulated.

In this work we further generalize the normalizer circuit framework. In particular, we introduce normalizer circuits  where the associated Abelian group $G$  can be \emph{infinite}. We focus on groups of the form  $G=F\times \Z^a$,  where $F=\DProd{d}{n}$ is a finite Abelian group (decomposed into cyclic groups) as in the normalizer circuit setting considered in \cite{VDNest_12_QFTs, BermejoVega_12_GKTheorem}, and where  $\Z$ denotes the additive group of integers---the latter being an infinite group. The motivation for adding $\Z$ is that several number theoretical problems are naturally connected to problems over the integers. For example, it is well known that the factoring problem is related to the hidden subgroup problem over $\mathbb{Z}$ \cite{Brassard_Hoyer97_Exact_Quantum_Algorithm_Simons_Problem,Hoyer99Conjugated_operators,MoscaEkert98_The_HSP_and_Eigenvalue_Estimation,Damgard_QIP_note_HSP_algorithm}. Similarly to the finite Abelian case, normalizer circuits over an infinite group $G$  are composed of normalizer gates. These gates come in three types, namely automorphism gates (which are generalizations of CNOT gates), quadratic phase gates (which are generalizations of $\pi/2$-phase gates) and the quantum Fourier transform.  The main result of this paper is a proof that all  normalizer circuits over infinite groups $G$ can be simulated classically in polynomial time, thereby extending the classical simulation results obtained in \cite{VDNest_12_QFTs, BermejoVega_12_GKTheorem} for normalizer circuits over finite Abelian groups.

In extending normalizer circuits  to infinite groups $G$, several issues arise that are not present in the finite group setting. First of all, the physical system associated with $G$ is a system with a standard basis vectors $|g\rangle$ where $g$ ranges over all elements in $G$. This implies that the Hilbert space is infinite-dimensional. Another important point is related to the quantum Fourier transform (QFT). Roughly speaking, the QFT over a group $G$ is a transformation which relates two bases of the Hilbert space: namely, the standard basis $\{|g\rangle\}$ and the \emph{Fourier basis}. The Fourier basis vectors are related to the character group of $G$. If $G$ were finite Abelian, it would be isomorphic to its own character group. This feature is however no longer true for infinite groups such as $\Z$. Indeed, the character group of $\Z$ is (isomorphic to) a different group, namely, the circle group $\T=[0, 1)$ with addition modulo 1. This group represents the addition of angles in a circle, up to a constant rescaling of $2\uppi$. As we will see below, this phenomenon has important consequences for the treatment of normalizer gates over $G$. In particular, in order to construct a closed normalizer formalism, we must in fact consider groups of the more general form $\Z^a\times\T^b \times  F$ and their associated normalizer circuits. Note that $\T$ is a continuous group, whereas  $\Z$ is discrete (finitely generated).

To achieve an efficient classical simulation of normalizer circuits over $\Z^a\times\T^b \times  F$ (\textbf{theorem \ref{thm:Main Result}}),  we develop new \emph{\textbf{stabilizer formalism}} techniques which extend the stabilizer formalism for finite Abelian groups developed in  \cite{VDNest_12_QFTs,BermejoVega_12_GKTheorem} (which was in turn a generalization of the well-known stabilizer formalism for qubit/qudit systems  \cite{Gottesman_PhD_Thesis,Gottesman99_HeisenbergRepresentation_of_Q_Computers,Knill96non-binaryunitary,dehaene_demoor_coefficients,dehaene_demoor_hostens,AaronsonGottesman04_Improved_Simul_stabilizer,deBeaudrap12_linearised_stabiliser_formalism}) . The stabilizer formalism is a paradigm where a quantum state may be described by means of a group of unitary (Pauli) operators (the Pauli stabilizer group) which leave the state invariant. An appealing feature of the stabilizer formalism for finite-dimensional systems is that each stabilizer group is finite and fully determined by specifying a small list of group generators.
This list of generators thus forms a concise representation of the corresponding quantum state. Furthermore, if a Clifford gate is applied to the state, the list of generators transforms in a transparent way which can be efficiently updated. Performing such updates throughout the computation yields a stabilizer description of the output state. Finally, one can show that the statistics arising from a final measurement can be efficiently reproduced classically by suitably manipulating the stabilizer generators of the final state. 

In generalizing the \emph{stabilizer formalism} to normalizer circuits over groups of the form $G=\Z^a\times\T^b \times  F$, several complications arise since the groups in question are no longer finite nor even finitely generated. One immediate consequence is that the associated Pauli stabilizer groups are  \emph{no longer finitely generated} either. This means that the paradigm of representing the group in terms of a small list of generators no longer applies, and a different method needs to be used.  We will show how \emph{stabilizer groups can be efficiently represented in terms of certain linear maps}; these maps have concise matrix representations, which will form efficient classical descriptions of the associated stabilizer group. 

An important technical ingredient in our simulation is a proof that both quadratic functions and homomorphisms on $G$ have certain concise \textbf{\emph{normal forms}}. The latter is a purely group-theoretic result that may find interesting applications elsewhere in quantum information. For instance, our normal form can be applied to describe the relative phases of \emph{stabilizer states}, since it was shown in \cite{BermejoVega_12_GKTheorem} that such phases are quadratic\footnote{The result in \cite{BermejoVega_12_GKTheorem} is for finite dimensional systems. However, adopting the definitions in section \ref{sect:Stabilizer States}, it is easy to check that the proof extends step-by-step to the infinite dimensional case.}; as a result, one may use it to generalize part of Gross' discrete Hudson theorem \cite{Gross06_discrete_Hudson_theorem} to our setting\footnote{Gross's theorem  provides a normal form for odd-dimensional-qudit stabilizer states in terms of quadratic functions. In addition, it states that a pure state is an odd-dim qudit stabilizer state iff it has non-negative Wigner function \cite{Gross06_discrete_Hudson_theorem}. The second statement  cannot hold in our set-up, due to the presence of non-local effects (cf.\ next section).}.

A crucial ingredient in the last step of our simulation is a  polynomial-time classical algorithm that computes the \textbf{\emph{support}} of a stabilizer state, given a stabilizer group that describes it. This algorithm exploits a classical reduction of this problem to solving systems of \textbf{\emph{mixed real-integer linear equations}}  \cite{BowmanBurget74_systems-Mixed-Integer_Linear_equations}, which can be efficiently solved classically. To find this reduction, we make crucial use of  the afore-mentioned normal forms and our infinite-group stabilizer formalism  (section \ref{sect:Proof of theorem 1}).

Lastly, we mention a technical issue that arises in the simulation of the final measurement of a normalizer computation: the basis in which the measurement is performed may be \emph{continuous} (stemming again from the fact that $G$ contains factors of $\T$). As a result, accuracy issues need to be taken into account in the simulation. For this purpose, we develop \textbf{\emph{$\boldsymbol{\varepsilon}$-net techniques}} to sample the support of stabilizer states.

\subsection*{Relationship to previous work}

In the particular case when $G$ is finite, our results completely generalize the results in \cite{VDNest_12_QFTs} and some of the results in \cite{BermejoVega_12_GKTheorem}. Our normal forms for quadratic functions/stabilizer states  improve  those in \cite{BermejoVega_12_GKTheorem}. However,  the simulations in \cite{BermejoVega_12_GKTheorem} are stronger in the  finite group case, allowing  non-adaptive Pauli measurements in the middle of a normalizer computation. It should be possible to extend the techniques in \cite{BermejoVega_12_GKTheorem} to our regime, which we leave for future investigation. Normal forms for qudit stabilizer states based on quadratic functions were also given in \cite{dehaene_demoor_hostens}.

Prior to our work, an  infinite-dimensional stabilizer formalism best-known as ``the \emph{continuous variable} (CV) \emph{stabilizer formalism}'' was developed for systems that can be described in terms of harmonic oscillators \cite{Braunstein98_Error_Correction_Continuous_Quantum_Variables,Lloyd98_Analog_Error_Correction,Gottesman01_Encoding_Qubit_inan_Oscillator,Bartlett02Continuous-Variable-GK-Theorem,BartlettSanders02Simulations_Optical_QI_Circuits,Barnes04StabilizerCodes_for_CV_WEC}, which can be used as ``continuous variable'' carriers of quantum information. The CV stabilizer formalism is  used in the field of  continuous-variable  quantum information processing \cite{Braunstein98_Error_Correction_Continuous_Quantum_Variables,Lloyd98_Analog_Error_Correction,Gottesman01_Encoding_Qubit_inan_Oscillator,Bartlett02Continuous-Variable-GK-Theorem,BartlettSanders02Simulations_Optical_QI_Circuits,Barnes04StabilizerCodes_for_CV_WEC,LloydBraunstein99_QC_over_CVs,BraunsteinLoock05QI_with_CV_REVIEW,GarciaPatron12_Gaussian_quantum_information}, being a key ingredient in  current schemes for  CV quantum error correction \cite{Gottesman01_Encoding_Qubit_inan_Oscillator,Menicucci14FaultTolertant_MBQC_CV_ClusterState} and CV measurement-based quantum computation with CV cluster states \cite{ZhangBraunstein13_CV_Gaussian_Cluster_States,Menicucci06_UQC_with_CV_cluster_states,GuMileWeedbrook09_QC_with_CV_cluster,Menicucci14FaultTolertant_MBQC_CV_ClusterState}. A CV version of the Gottesman-Knill theorem \cite{Bartlett02Continuous-Variable-GK-Theorem,BartlettSanders02Simulations_Optical_QI_Circuits} for simulations of Gaussian unitaries (acting on Gaussian states) has been derived in this framework. 

We  stress that, although our infinite-group stabilizer formalism and  the CV stabilizer formalism share some similarities, they are physically and mathematically \emph{inequivalent} and should not be confused with each other. Ours is applied to describe Hilbert spaces of the form $\mathcal{H}_{\Z}^{\otimes a}\otimes\mathcal{H}_{\T}^{\otimes b}\otimes\mathcal{H}_{N_1} \otimes \cdots \otimes \mathcal{H}_{N_c}$  with a basis  $\ket{g}$ labeled by the elements of $\T^a\times \Z^b\times \DProd{N}{c}$: the last $c$ registers correspond to finite-dimensional ``discrete variable'' systems;  the first $a+b$ registers can be thought of infinite-dimensional ``\textbf{rotating-variable}'' systems that are best described in terms of \textbf{quantum rotors}\footnote{\noindent The quantum states of a \textbf{quantum fixed-axis rigid rotor} (a quantum particle that can move in a circular orbit around a fixed axis) live in a Hilbert space with position and momentum bases labeled by $\T$ and $\Z$: the position is given by a continuous angular coordinate and the angular momentum is quantized in $\pm 1$ units (the sign indicates the direction in which the particle rotates \cite{aruldhasquantum}).}. In the CV formalism \cite{Bartlett02Continuous-Variable-GK-Theorem}, in contrast, the Hilbert space is $\mathcal{H}_\R^m$ with a standard basis labeled by $\R^m$ (explicitly constructed as a product basis of    position and momentum eigenstates of $m$ harmonic oscillators). Due to these differences, the available families of normalizer gates and Pauli operators in each framework (see sections \ref{sect_normalizer_circuits}, \ref{sect:Pauli operators over Abelian groups} and \cite{Bartlett02Continuous-Variable-GK-Theorem}) are simply inequivalent.

Furthermore, dealing with continuous-variable stabilizer groups as in \cite{Gottesman01_Encoding_Qubit_inan_Oscillator, Bartlett02Continuous-Variable-GK-Theorem, Barnes04StabilizerCodes_for_CV_WEC} is sometimes simpler, from the simulation point of view, because the group $\R^m$ is also a finite-dimensional \textbf{vector space} with  
a \emph{finite basis}. In our setting, in turn, $G$ is \emph{no longer} a vector space but a \textbf{group} that may  well be \emph{uncountable} yet having \emph{neither a basis} \emph{nor a  finite generating set}; on top of that, our groups contain \emph{zero divisors} and are not \emph{compact}. These differences require new techniques to track stabilizer groups as they \emph{inherit} all these rich properties. For further reading on these issues we refer to the discussion in \cite{BermejoVega_12_GKTheorem}, where the differences between prime-qudit stabilizer codes \cite{Gottesman_PhD_Thesis,Gottesman99_HeisenbergRepresentation_of_Q_Computers,Gottesman98Fault_Tolerant_QC_HigherDimensions} (which can described in terms of fields and vector spaces) and  stabilizer codes over arbitrary spaces $\mathcal{H}_{d_1}\otimes\cdots\otimes\mathcal{H}_{d_n}$ (which are associated to a finite Abelian group) are explained in detail.

Finally, we mention some related work on the classical simulability of Clifford circuits based on different techniques other than stabilizer groups: see \cite{VdNest10_Classical_Simulation_GKT_SlightlyBeyond} for simulations of  qubit non-adaptive Clifford circuits in the Schrödinger picture based on the stabilizer-state normal form of \cite{dehaene_demoor_coefficients}; see \cite{Veitch12_Negative_QuasiProbability_Resource_QC,MariEisert12_Positive_Wigner_Functions_Quantum_Computation} for phase-space simulations of odd-dimensional qudit Clifford operation exploiting  a local hidden variable theory based on the discrete Wigner function of \cite{Gibbons04_Discrete_Phase_Space_Finite_Fields,Gross06_discrete_Hudson_theorem,Gross_PhD_Tehsis}; last, see \cite{Delfosse14_Wigner_Function_Rebits} for phase-space simulations of qubit CSS-preserving Clifford operations based on a Wigner function for rebits. 

It should be insightful at this point to discuss briefly  whether the  latter results may extend to our set-up.  In this regard, it seems plausible to the authors that efficient simulation schemes for normalizer circuits analog to those in \cite{VdNest10_Classical_Simulation_GKT_SlightlyBeyond} might exist and may even benefit from the techniques  developed in the present work (specifically, our normal forms, as well as those given in \cite{BermejoVega_12_GKTheorem}). Within certain limitations, it might also be possible to extend the results in \cite{Veitch12_Negative_QuasiProbability_Resource_QC,MariEisert12_Positive_Wigner_Functions_Quantum_Computation} and \cite{Delfosse14_Wigner_Function_Rebits} to our setting. We are aware  that local hidden variable models for the full-fledged normalizer formalism studied here cannot exist due to the existence of stabilizer-type Bell inequalities \cite{Mermin90_extreme_quantum_entanglement,Scarani05_Nonlocality_Cluster_states,OtfriedToth05_Bell_Ineq_for_Graph_States}, which  can be violated already within the qubit stabilizer formalism (the $G=\Z_2^n$ normalizer formalism). Consequently, in order to find a hypothetical non-negative quasi-probability representation of normalizer circuits with  properties analogue to those of the standard discrete Wigner function of \cite{Gross06_discrete_Hudson_theorem,Gross_PhD_Tehsis}, one would necessarily need to specialize to restricted normalizer-circuit models\footnote{Note that this might not be true for all quasi-probability representations. The locality of the hidden variable models given in \cite{Gibbons04_Discrete_Phase_Space_Finite_Fields,Gross_PhD_Tehsis,Veitch12_Negative_QuasiProbability_Resource_QC,Delfosse14_Wigner_Function_Rebits} comes both from the positivity of the Wigner function and an additional factorizability property (cf. \cite{Gross_PhD_Tehsis} and \cite{Delfosse14_Wigner_Function_Rebits} page 5, property 4): in principle, classical simulation approaches that sample non-negative quasi-probability distributions without the factorizability property are well-defined and could also work, even if they do not  lead to local hidden variable models.} with, e.g., fewer types of gates, input states or measurements; this, in fact, is part of the approach followed in \cite{Delfosse14_Wigner_Function_Rebits}, where the positive Wigner representation for qubit CSS-preserving Clifford elements is given.

Currently, there are no good candidate Wigner functions for extending the results of \cite{Veitch12_Negative_QuasiProbability_Resource_QC,MariEisert12_Positive_Wigner_Functions_Quantum_Computation} or \cite{Delfosse14_Wigner_Function_Rebits} to systems of the form $\mathcal{H}_\Z^{\otimes a}$: the proposed ones (see  \cite{Rigas_Soto10_nonnegative_Wigner_OAM_states,Rigas11_OAM_in_phase_space,HinarejosPerezBanuls12_Wigner_function_Lattices} and references therein)  associate negative Wigner values to  Fourier basis states (which are allowed inputs in our formalism and also in \cite{Veitch12_Negative_QuasiProbability_Resource_QC,MariEisert12_Positive_Wigner_Functions_Quantum_Computation,Delfosse14_Wigner_Function_Rebits}) that we introduce in section \ref{sect:Quantum states over infinite Abelian groups}; for one qubit, these  are the usual $\ket{+}$, $\ket{-}$ states. The existence of a  non-negative Wigner representation for this individual case has, yet, not been ruled out by Bell inequalities, up to our best knowledge.

\subsection*{Discussion and outlook}

Finally, we discuss some open questions suggested by our work as well as a few potential avenues for future research.

First, we mention an upcoming work \cite{BermejoLinVdN13_BlackBox_Normalizers} where we will draw a rigorous connection between the normalizer circuit framework developed here and a large family of quantum algorithms. This includes Shor's factoring and discrete-log algorithms \cite{Shor},  Cheung-Mosca's algorithm for decomposing finite Abelian groups \cite{mosca_phd, cheung_mosca_01_decomp_Abelian_groups},  Deutsch's algorithm \cite{Deutsch85quantumtheory}, Simon's \cite{Simon94onthe} and other algorithms for solving Abelian hidden subgroup problems \cite{Boneh95QCryptanalysis,Grigoriev97_testing_shift_equivalence_polynomials,kitaev_phase_estimation,Kitaev97_QCs:_algorithms_error_correction,Brassard_Hoyer97_Exact_Quantum_Algorithm_Simons_Problem,Hoyer99Conjugated_operators,MoscaEkert98_The_HSP_and_Eigenvalue_Estimation,Damgard_QIP_note_HSP_algorithm}. Note that these  quantum algorithms achieve superpolynomial speed-ups over the best known classical algorithms. In \cite{BermejoLinVdN13_BlackBox_Normalizers} we will consider normalizer circuits over groups $G=\Z^a\times\T^b \times  F$, as in the present paper, with the important distinction that the finite group $F$ need not necessarily be presented in its canonical decomposition into cyclic groups $\DProd{d}{n}$. In particular, $F$ may be a finite Abelian black-box group \cite{BabaiSzmeredi_Complexity_MatrixGroup_Problems_I}. The resulting class of normalizer circuits---which we call black-box normalizer circuits---can no longer be classically simulated. In particular, we will show that each of the quantum algorithms mentioned above \emph{is} in fact an instance of a black box normalizer circuit.  We will thus obtain a precise  connection between this class of powerful quantum algorithms, the framework of normalizer circuits and the Gottesman-Knill theorem.

Also, we point out that, due to the presence of Hilbert spaces of the form $\mathcal{H}_\Z$, our stabilizer formalism over infinite groups yields a natural framework to study continuous-variable error correcting codes for quantum computing architectures based on superconducting qubits. Consider, for instance, the so-called  \emph{$0$-$\uppi$ qubits} \cite{Kitaev06_Protected_Qubit_Supercond_Mirror,Brooks13_Protected_gates_for_superconducting_qubits}. These are encoded qubits that, in our formalism, can be written as eigenspaces of groups of (commuting) generalized Pauli operators associated to  $\Z$ and $\T$ (cf.\ sections \ref{sect:Pauli operators over Abelian groups}-\ref{sect:Stabilizer States} and also the definitions in \cite{Kitaev06_Protected_Qubit_Supercond_Mirror,Brooks13_Protected_gates_for_superconducting_qubits}). Hence, we can interpret them as instances of generalized \emph{stabilizer codes}\footnote{A generalized notion of stabilizer code over an Abelian group was introduced in \cite{BermejoVega_12_GKTheorem} by two of us.} over the groups $\Z$ and $\T$. The authors believe that it should be possible to apply the simulation techniques in this paper (e.g., our generalized Gottesman-Knill theorem) in the study of fault-tolerant quantum-computing schemes  that employ this form of generalized stabilizer codes: we remind the reader that the standard Gottesman-Knill theorem \cite{Gottesman_PhD_Thesis,Gottesman99_HeisenbergRepresentation_of_Q_Computers} is often  applied in fault-tolerant schemes for quantum computing with  traditional qubits, in order to   delay recovery operations and track the evolution of Pauli errors (see, for instance, \cite{Steane03_Overhead_Threshold_FTQEC, Knill05_QComp_RealisticallyNoisy,DiVincenzo07_effectiveFTQC_Slow_Measurements,Paler14ErrorTracking}).

Also in relation with quantum error correction, it would interesting to improve our stabilizer formalism in order to describe \emph{adaptive Pauli measurements}; this would  extend the results in \cite{BermejoVega_12_GKTheorem}, where two of us showed how to do that for finite Abelian groups.

In connection with previous works, it would be interesting to study normalizer circuits over the group $\R^m$, to understand how they compare to the Gaussian formalism and to analyze in a greater level of detail the full hybrid scenario $\mathcal{H}_{\R}^{\otimes a}\otimes\mathcal{H}_{\Z}^{\otimes b}\otimes\mathcal{H}_{\T}^{\otimes c}\otimes\mathcal{H}_{N_1} \otimes \cdots \otimes \mathcal{H}_{N_d}$.  We have left this question to future investigation. However, we  highlight that the formalism of  normalizer circuits we present can be extended transparently to study this case, by considering additional Hilbert spaces of the form  $\mathcal{H}_{\R}^{\otimes m}$ with associated groups $\R^m$;  the stabilizer formalism over groups that we develop, our normal forms and simulation techniques can be applied to study that context with little (or zero) modifications. 

Another question that remains  open after our work is whether normalizer circuits over finite Abelian groups constitute all possible Clifford operations (see a related conjecture in \cite{BermejoVega_12_GKTheorem}, for finite dimensional systems).

Lastly, we  mention that an important ingredient underlying the consistency of our normalizer/stabilizer formalism is the fact that the groups associated to the Hilbert space fulfill the so-called \textbf{Pontryagin-Van Kampen duality} \cite{Morris77_Pontryagin_Duality_and_LCA_groups,Stroppel06_Locally_Compact_Groups,Dikranjan11_IntroTopologicalGroups,rudin62_Fourier_Analysis_on_groups,HofmannMorris06The_Structure_of_Compact_Groups,Armacost81_Structure_LCA_Groups,Baez08LCA_groups_Blog_Post}. From a purely-mathematical point of view, it is possible to associate a family of normalizer gates to every group in such class, which accounts for all possible Abelian groups that are locally compact Hausdorff (often called LCA groups). Some LCA groups are notoriously complex objects and remain unclassified to date. Hilbert spaces associated to them can exhibit exotic properties, such as   \emph{non-separability}, and may not always be in correspondence with natural quantum mechanical systems. In order to construct a physically relevant model of quantum circuits, we have restricted ourselves to groups of the form $\Z^a \times \T^b \times \DProd{N}{c}$, can be naturally associated to known quantum mechanical systems. We believe that the results presented in this paper can be easily extended to all groups of the form $\Z^a \times \T^b \times \DProd{N}{c}\times \R^d$, which we call ``\emph{elementary}'', and form a well-studied class of groups known as ``\emph{compactly generated Abelian Lie groups}'' \cite{Stroppel06_Locally_Compact_Groups}. Some examples of LCA groups that are not elementary are the $p$-adic numbers $\mathbb{Q}_p$ and the adele ring $\mathbb{A}_F$ of an algebraic number field $F$ \cite{Dikranjan11_IntroTopologicalGroups}.

\section{Outline of the paper}

Section \ref{sect:Summary} contains a non-technical \textbf{summary of concepts} and several examples of normalizer gates. We explain the more technical aspects of \textbf{our setting} in sections \ref{sect:Quantum states over infinite Abelian groups},\ref{sect_normalizer_circuits}, where we introduce  normalizer circuits and the Hilbert spaces where they act in full detail.

In section \ref{sect:Main Result} we state the \textbf{main result} (theorem \ref{thm:Main Result}). 

In the remaining sections we develop the techniques we use to prove the main result.

Sections  \ref{sect:Group Theory}-\ref{sect:quadratic_functions} contain \textbf{classical techniques}. The proofs of the theorem in this section have been moved to appendices \ref{appendix:Supplement to section Homomorphisms}-\ref{app:Supplement Quadratic Functions}, in order to give more attention  to the quantum techniques of the paper. Section \ref{sect:Group Theory} surveys some necessary notions of group and character theory. In section \ref{sect:Homomorphisms and matrix representations} we develop a  theory of matrix representations of group homomorphisms.  In section \ref{sect:Systems of linear equations over groups} we study systems of linear equations over groups (including systems of mixed real-integer linear equations) and algorithms to solve these. In section \ref{sect:quadratic_functions} we present normal forms for quadratic functions.

Sections \ref{sect:Pauli operators over Abelian groups}-\ref{sect:Proof of theorem 1} contain \textbf{quantum techniques}. In section \ref{sect:Pauli operators over Abelian groups} we study the properties of Pauli operators over Abelian groups. In section \ref{sect:Stabilizer States} we present stabilizer group techniques based on these operators. Finally, in \ref{sect:Proof of theorem 1}, we prove our main result.

\section{Summary of concepts}\label{sect:Summary}

In this section we give a rough intuitive definition of our circuit model and provide examples of normalizer gates to illustrate their operational meaning. Our model is presented in full detail in section \ref{sect_normalizer_circuits}, after reviewing some necessary notions of group-theory in section  \ref{sect:Quantum states over infinite Abelian groups}. The readers interested in understanding the proofs of our main results should consult these sections.

\subsection{The setting}

In short, normalizer gates are quantum gates that act on a Hilbert space $\mathcal{H}_G$ which has an orthonormal standard basis $\{\ket{g}\}_{g\in G}$ labeled by the elements of an Abelian group $G$. The latter can be finite or infinite, but it must have a well-defined integration (or summation) rule\footnote{All groups studied later have  a well-defined Haar measure.} so that the group has well-defined Fourier transform. We define a \emph{normalizer circuits over $G$} to be any quantum circuits built of the following \emph{normalizer gates}:
\begin{enumerate}
\item \textbf{Quantum Fourier transforms}. These gates implement the (classical) Fourier transform of the group $\psi(x)\rightarrow\hat{\psi}(p)$ as a quantum operation $\int \psi(x)\ket{x}\rightarrow \int \hat{\psi}(p)\ket{p}$. Here, $\psi$ is a  complex function acting on the group and $\hat{\psi}$ is its Fourier transform.
\item \textbf{Group automorphism gates}. These implement  group automorphisms $\alpha:G \rightarrow G$, at the quantum level $\ket{g}\rightarrow\ket{\alpha(g)}$. Here, $g$ and $\alpha(g)$ denote elements of $G$.
\item \textbf{Quadratic phase gates} are diagonal gates  that multiply standard basis states with \emph{quadratic} phases: $\ket{g}\rightarrow \xi(g)\ket{g}$. This means that $g\rightarrow \xi(g)$ is a quadratic (``almost multiplicative'')  function with the property $\xi(g+h)=\xi(g)\xi(h)B(g,h)$, where $B(g,h)$ is a bi-multiplicative correcting term.
\end{enumerate}

\subsection{Examples}

In order to illustrate these definitions, we  give  examples of normalizer gates for finite groups and infinite groups of the form $\Z^a$ (integer lattices) and $\T^b$ (hypertori).

\subsubsection*{Normalizer circuits over finite groups}

First we give a few examples of normalizer circuits over finite Abelian groups. (We also refer the reader to  \cite{VDNest_12_QFTs,BermejoVega_12_GKTheorem}, where these circuits have been  extensively studied.)

First consider $G=\Z_d^m$. The Hilbert space in this case is $\mathcal{H}=\mathcal{H}_d^m$, corresponding to a system of $m$ qudits. The following generalizations of the CNOT and Phase gates are automorphism and quadratic phase gates, respectively \cite{VDNest_12_QFTs}:
\begin{equation} \nonumber \text{SUM}_{d,a} = \sum_{x,y\in\Z_d} \ket{x,x+ay} \bra{x,y} ,\qquad  S_d = \sum_{x\in\Z_d} \exp{\left(\tfrac{\pii}{d} x(x+d)\right)} \ket{x} \bra{x} \end{equation}
where $a\in\Z_d$ is arbitrary. The quantum Fourier transform can also be expressed similarly as a gate:
\begin{equation}  \nonumber  F_d = \frac{1}{\sqrt{d}}\sum_{x,y\in\Z_d} \exp{\left(\tfrac{2\pii xy}{d}\right)} \ket{x} \bra{y} \end{equation}
For the qubit case, $d=2$, we recover the definitions of CNOT, phase gate $S=\mathrm{diag}(1,\imun)$ and Hadamard gate, and our main result (theorem \ref{thm:Main Result}) yields the (non-adaptive) Gottesman-Knill theorem \cite{Gottesman_PhD_Thesis,Gottesman99_HeisenbergRepresentation_of_Q_Computers}. 

Another normalizer gate for qudits is the  multiplication gate $M_{d,a}= \sum_{x\in\Z_d} \ket{ax}\bra{x}$, where $a$ is coprime to $d$, which is an automorphism gate. This single-qudit gate acts non-trivially only for $d\geq 2$.

As an interesting subcase, we can choose the group to be of the form $G=\Z_{2^n}$ with exponentially large $d$. Normalizer gates in this example are $M_{2^n,a}$, $S_{2^n}$, and $\mathcal{F}_{2^n}$, as defined above. The quantum Fourier transform in this case, $\mathcal{F}_{2^n}$, corresponds to the standard discrete QFT used in Shor's algorithm for factoring \cite{Shor}.

\subsubsection*{The infinite case $G=\Z^m$}

Let us now move to  an infinite case, choosing the group $G=\Z^m$ to be an integer lattice. Examples of automorphism gates and quadratic phase gates are, respectively, 
\begin{equation}\nonumber \text{SUM}_{\Z,a} = \sum_{x,y\in\Z} \ket{x,x+ay} \bra{x,y},\qquad S_p = \sum_{x\in\Z} \exp{(\pii p x^2)} \ket{x} \bra{x} \end{equation}
where $a$ is an arbitrary integer and $p$ is an arbitrary real number. The fact that these gates are indeed normalizer gates follows from  general normal forms for matrix representations group homomorphisms (lemma \ref{lemma:Normal form of a matrix representation}) and quadratic functions (theorem \ref{thm:Normal form of a quadratic function}) that we later develop.

The case of quantum Fourier transforms is more involved in this case, since the QFT over $\Z$ can no longer be regarded as a quantum logic gate, i.e., a unitary rotation \cite{nielsen_chuang}. This happens  because the QFT now performs a non-canonical change of the standard integer basis $\{z\}_{z\in \Z}$ of the space, into a new basis $\{t\}_{t\in \T}$ labeled by the elements of the circle group $\T=[0,1)$. The latter property is due to the fact that the QFT over $\Z$ is nothing but the quantum version of the discrete-time Fourier transform \cite{oppenheim_Signals_and_Systems} (the inverse of the well-known Fourier series) which sends  functions over $\Z$ to \emph{periodic functions over the reals}. In this paper, we will understand the  QFT over $\Z$ as  change of basis  between two orthonormal bases of the Hilbert space $\mathcal{H}_\Z$. We discuss these technicalities in detail  in section \ref{sect_normalizer_circuits}.

The QFT over $\Z$ acts on quantum states as in the following examples (cf.\ section \ref{sect_normalizer_circuits}):
\begin{center}
\begin{minipage}{0.45\linewidth}
\centering
{\it State before QFT over $\Z$}
\begin{equation*} \ket{x} \end{equation*}
\begin{equation*} \sum_{x\in\Z} \overline{\euler^{2\pii\, px}}\ket{x} \end{equation*}
\begin{equation*} \sum_{\substack{x\in\Z\\\:}} \ket{rx} \end{equation*}
\end{minipage}
\begin{minipage}{0.45\linewidth}
\centering
{\it State after QFT over $\Z$}
\begin{equation*} \int_{\mbb{T}} dp\, {\euler^{2\pii\, px}}\ket{p} \end{equation*}
\begin{equation*} \ket{p} \end{equation*}
\begin{equation*} \frac{1}{r}\sum_{\substack{k\in\Z :\\ k/r\in \mbb{T}}}\ket{k/r}  \end{equation*}
\end{minipage}
\end{center}
These transformations  can be found in standard signal processing textbooks  \cite{oppenheim_Signals_and_Systems}.

\subsubsection*{The infinite case: $G = \T^m$}

Finally, we choose the group $G=\T^m$ be an $m$-dimensional torus. Two examples of automorphism gates are the sum and sign-flip gates:
\begin{equation} \nonumber \text{SUM}_{\T,b} = \sum_{p,q\in\T} \ket{p,q+bp} \bra{p,q},\qquad M_{\T,s} = \sum_{p\in\T} \ket{sp} \bra{p} \end{equation}
where $b$ is an arbitrary integer and $s=\pm1$ (again, these formulas come from lemma \ref{lemma:Normal form of a matrix representation}). 

Unlike the previous examples we have considered, any quadratic phase gate over $G$  is \emph{purely multiplicative} (i.e., the bi-multiplicative function $B(g,h)$ is always trivial\footnote{This fact can be understood in the light of a later result, theorem \ref{thm:Normal form of a quadratic function} and it is related to  nonexistence of nontrivial group homomorphisms from $\T^m$ to $\Z^m$, the latter being the character group of $\T^m$ up to isomorphism.}). In the case $m=1$, this is equivalent to saying that any such gate is of the form
\begin{equation}\nonumber \sum_{p\in\T} \exp{(2\pi i b p)}\ket{p} \bra{p} \end{equation}
with $b$  an arbitrary integer.

Lastly, we take a look at the effect of the quantum Fourier transform over $\T$ in some examples. Similarly to the  QFT over $\Z$, this gate performs a non-canonical change of the standard basis (now  $\{t\}_{t\in \T}$ changes to $\{z\}_{z\in \Z}$) and should  be understood as a change of basis that does not correspond to a gate. The QFT over $\T$ is the quantum version of the Fourier series \cite{oppenheim_Signals_and_Systems}.
\begin{center}
\begin{minipage}{0.45\linewidth}
\centering
\vspace{+5pt}{ \em State before QFT over $\T$}
\begin{equation*} \ket{p} \end{equation*}
\begin{equation*} \int_{\mbb{T}} dp\, \euler^{2\pii\, px}\ket{p} \end{equation*}
\begin{equation*} \frac{1}{r}\sum_{\substack{k\in\Z :\\ k/r\in \mbb{T}}}\ket{k/r} \end{equation*}
\end{minipage}
\begin{minipage}{0.45\linewidth}
\centering
{\it State after  QFT over $\T$}
\begin{equation*} \sum_{x\in\Z} \euler^{2\pii\, px}\ket{x} \end{equation*}
\begin{equation*} \ket{-x} \end{equation*}
\begin{equation*} \sum_{\substack{x\in\Z\\\:}} \ket{rx} = \sum_{\substack{x\in\Z\\\:}} \ket{-rx} \end{equation*}
\end{minipage}
\end{center}
Comparing the effect of the QFT on ${\cal H}_{\Z}$ and the QFT on ${\cal H}_{\T}$, we see that the former is the ``inverse'' of the latter up to a change of sign of the group elements labeling the basis; concatenating the two of them yields the transformation $|x\rangle$ to $|-x\rangle$. This is a general phenomenon, which we shall observe again in the proof of some results (namely,  theorem \ref{thm:Normalizer gates are Clifford}).

\section{Preliminaries: the Hilbert space of a group}\label{sect:Quantum states over infinite Abelian groups}

In this section we introduce Hilbert spaces associated with Abelian groups of the form 
\begin{equation}\label{group_hilbert_space} 
G=\DProd{N}{a} \times \Z^b
\end{equation}
 where $\mathbb{Z}_N$ is the additive group of integers modulo $N$ and $\Z$ is the additive group of integers.

\subsection{The integers modulo $N$}

First we consider $\mathbb{Z}_N$. With this group, we associate an $N$-dimensional Hilbert space ${\cal H}_{N}$ having a basis $\{|x\rangle: x\in\mathbb{Z}_N\}$, henceforth called the standard basis of ${\cal H}_N$. A state in ${\cal H}_{N}$ is (as usual) a unit vector $|\psi\rangle = \sum \psi(x)|x\rangle$ with $\sum |\psi(x)|^2=1$, where $\psi(x)\in\mathbb{C}$ and where the sum is over all $x\in\mathbb{Z}_N$.

\subsection{The integers $\Z$}\label{sect_Hilbert_space_Z}

An analogous construction is made for the group $\mathbb{Z}$, although an important distinction with the former case is that $\mathbb{Z}$ is infinite. We consider the infinite-dimensional Hilbert space ${\cal H}_{\Z}= \ell_2(\mathbb{Z})$ with standard basis states $|z\rangle$ where $z\in \mathbb{Z}$. A state in ${\cal H}_\Z$ has the form \begin{equation}\label{psi_Z} 
|\psi\rangle = \sum_{x\in\Z} \psi(x)|x\rangle
\end{equation}
 with $\sum |\psi(x)|^2=1$, where the (infinite) sum is over all $x\in\mathbb{Z}$. More generally, any normalizable sequence $\{\psi_x: x\in\mathbb{Z}\}$ with $\sum |\psi_x|^2<\infty$ can be associated with a quantum state in ${\cal H}_{\Z}$ via normalization. Sequences whose sums are not finite can give rise to \emph{non-normalizable} states. These states do not belong to ${\cal H}_{\Z}$ and are hence unphysical; however it is often convenient to consider such states nonetheless. Some examples are the \emph{plane-wave states}, 
\begin{equation}\label{eq:Fourier basis state over Z}
|p\rangle:=\sum_{z\in \Z} \overline{e^{2\pi i zp}}|z\rangle \quad p\in [0, 1).
\end{equation}
Define the \emph{circle group}\footnote{Since  the group $\T^n$ is nothing but an $n$-dimensional torus,  $\T$ is sometimes referred as  ``the torus group'' \cite{Dikranjan11_IntroTopologicalGroups}. $\T$ should not be confused with $\T^2$ the usual (two-dimensional) torus.} $\T$  to be the group consisting of all real numbers in the interval $[0, 1)$ with the addition modulo 1 as group operation. Even though the  $|p\rangle$ themselves do not belong to ${\cal H}_\Z$, every state (\ref{psi_Z}) in this Hilbert space can be re-expressed as \begin{equation}
\label{psi_T}|\psi\rangle = \int_{\mathbb{T}} \mbox{d}p \ \phi(p) |p\rangle\end{equation}
for some complex function $\phi:\T\to \mathbb{C}$,  where d$p$ denotes the Haar measure on $\T$. Thus the states $|p\rangle$ form an alternate basis\footnote{Although we use this terminology, the $|p\rangle$ do not form a ``basis'' in the usual sense since these states are unnormalizable and lie outside the Hilbert space ${\cal H}_{\Z}$.  Rigorously, the $|p\rangle$ kets should be understood as Dirac-delta measures, or as Schwartz-Bruhat tempered distributions \cite{Bruhat61_Schwatz-Bruhat-functions,Osborne75_Schwartz_Bruhat}. The theory of rigged Hilbert spaces \cite{delaMadrid05_roleofthe_riggedHilbert, Antoine98_QM_beyond_Hilber_space, Gadella02_unified_Dirac_formalism, gadella12_Riggings_LCA_Groups} (often used to study observables with continuous spectrum) establishes that the $|p\rangle$ kets  can be used as a  basis for all practical purposes that concern us.}  of ${\cal H}_\Z$ which is both infinite and continuous; we call this the Fourier basis. The Fourier basis is orthonormal in the sense that 
\begin{equation} 
\langle p|p'\rangle =  \delta(p-p'),
\end{equation}
where $\delta(\cdot)$ is the Dirac delta function.

In the following, when dealing with the Hilbert space ${\cal H}_{\Z}$, we will use both the standard and Fourier basis. This is different from the finite space ${\cal H}_N$ where we only use the standard basis. The motivation for this is the following: note that, similarly to ${\cal H}_{\Z}$, the space ${\cal H}_N$ has a Fourier basis, given by the vectors 
\begin{equation} 
|\tilde y\rangle:=\sum_{x\in \Z_N}\overline{e^{2\pi i \frac{xy}{N}}} |x\rangle\quad \mbox{ for every } y\in\Z_N.
\end{equation} 
Since both the Fourier basis and the standard basis have equal cardinality, there exists a unitary operation mapping $|y\rangle\to |\tilde y\rangle$. This operation is precisely the quantum Fourier transform over $\Z_N$ (see section \ref{sect_normalizer_circuits}). Thus, instead of working with both the standard and Fourier basis, we may equivalently use the standard basis alone plus the possibility of applying this unitary rotation (which is the standard approach). In contrast, the standard and Fourier basis in the infinite-dimensional space ${\cal H}_\Z$ have different cardinality, since the former is indexed by the discrete group $\Z$ and the latter is indexed by the continuous group $\T$. Thus there does not exist a unitary operation which maps one basis to the other. For this reason, we will work with both bases.

\subsection{Total Hilbert space}

In general we will consider groups of the form (\ref{group_hilbert_space}). The associated Hilbert space has the tensor product form
\begin{equation}
\label{total_H} {\cal H}_G={\cal H}_{{N_1}}\otimes \dots\otimes {\cal H}_{{N_a}}\otimes {\cal H}_{\Z}\otimes\dots\otimes {\cal H}_{\Z}
\end{equation} with  $b$ occurrences of the infinite-dimensional space ${\cal H}_{\Z}$. For each finite-dimensional space ${\cal H}_{{N_i}}$ we consider its standard basis as above. For each ${\cal H}_\Z$ we may consider either its standard basis or its Fourier basis. The total Hilbert space has thus a total $2^b$ possible choices that are labeled by different groups: we call these bases the \emph{\textbf{group-element bases}} of $\mathcal{H}_G$. For example, choosing the standard basis everywhere yields the basis states
\begin{equation}
\label{standard_basis_all} |x(1)\rangle\otimes \cdots\otimes |x(a)\rangle\otimes |y(1)\rangle\otimes\dots\otimes |y(b)\rangle \quad x(i)\in\mathbb{Z}_{N_i}; \ y(j)\in \mathbb{Z},
\end{equation}
which is labeled by the elements of the group $\DProd{N}{a}\times \Z^b$. By choosing the Fourier basis in the $(a+b)$-th space we obtain in turn
\begin{equation}
|x(1)\rangle\otimes \cdots\otimes |x(a)\rangle\otimes |y(1)\rangle\otimes \dots\otimes|y(b-1)\rangle\otimes |p\rangle \quad x(i)\in\mathbb{Z}_{N_i}; \ y(j)\in \mathbb{Z};\ p\in \mathbb{T},
\end{equation}
which is labeled by the elements of $\DProd{N}{a}\times \Z^{b-1}\times \T$.

More generally, each of the $2^b$ tensor product bases of (\ref{total_H}) is constructed as follows. Consider any Abelian group of the form
\begin{equation}\label{group_labels_basis} G'=\mathbb{Z}_{N_1}\otimes \dots\mathbb{Z}_{N_a} \times G_1\times\cdots \times G_b \quad G_i \in\{\mathbb{Z}, \mathbb{T}\}.
\end{equation}
The associated \textbf{\emph{group-element basis}} of $\mathcal{B}_{G'}$ of $\mathcal{H}_G$ is 
 \begin{equation}\label{basis} {\cal B}_{G'}:=\{ |g\rangle:=|g(1)\rangle\otimes \dots|g(a+b)\rangle; \quad g=(g(1), \dots, g(a+b))\in G'\}.
 \end{equation}
The notation $G_i = \T$ indicates that  $\ket{g(i)}$ is a Fourier state of $\Z$ (\ref{eq:Fourier basis state over Z}). The states $\ket{g}$ are product-states with respect to the tensor-product decomposition of (\ref{total_H}). There are $2^b$ possible choices of groups in (\ref{group_labels_basis}) (all of them related via Pontryagin duality\footnote{From a mathematical point of view, all groups (\ref{group_labels_basis}) form a family (in fact, a category) which is generated by replacing the factors $G_i$ of the  group $\Z^{a}\times \DProd{N}{b}\times \mathbf{B}$ with their character groups  $G_i^*$, and identifying isomorphic groups. Pontryagin duality \cite{Morris77_Pontryagin_Duality_and_LCA_groups,Stroppel06_Locally_Compact_Groups,Dikranjan11_IntroTopologicalGroups,rudin62_Fourier_Analysis_on_groups,HofmannMorris06The_Structure_of_Compact_Groups,Armacost81_Structure_LCA_Groups,Baez08LCA_groups_Blog_Post} then tells us that there are $2^b$ different groups and bases. Note that this  multiplicity is a purely \emph{infinite-dimensional feature}, since  all finite groups are isomorphic to their character groups; consequently, this feature does not play a role in the study of finite-dimensional normalizer circuits   (or Clifford circuits) \cite{VDNest_12_QFTs,BermejoVega_12_GKTheorem}.}) and $2^b$ inequivalent group-element basis of the Hilbert space.

Finally, we note that, relative to any basis ${\cal B}_G$,  every quantum state (normalizable or not) can be expressed as
\begin{equation}
\ket{\psi}=\int_X\mathrm{d}g\, \psi(g)\ket{g},
\end{equation} where the $\psi(g)$ are complex coefficients and where $X$ is a subset of $X$ (if $X$ is a discrete set, the integral should be replaced by a sum over all elements in $X$).

\section{Normalizer circuits}\label{sect_normalizer_circuits}

In this section we generalize the  normalizer circuits \cite{VDNest_12_QFTs,BermejoVega_12_GKTheorem} to general infinite-dimensional Hilbert spaces of the form (\ref{total_H}). 

We remind that, in previous works \cite{VDNest_12_QFTs,BermejoVega_12_GKTheorem}, a finite Abelian group $G$ determines both the standard basis of $\mathcal{H}_G$ and the allowed types of gates, called normalizer gates. A technical complication that arises when one tries to define normalizer gates over  infinite dimensional groups of the form $\Z$ and $\T$ is that there are several group-elements basis that can be associated to the Hilbert space $\mathcal{H}_G$ (see previous section). As a result, there is not a natural preferred choice of standard basis\footnote{In the usual sense of the word, which comes from  the standard model of quantum circuits  \cite{nielsen_chuang}.} in a  normalizer circuit over an infinite Abelian group. To deal with this aspect of the Hilbert space, we allow the ``standard'' basis of the computation to be \emph{time-dependent}: at every time step $t$ in a normalizer circuit there is a designated ``standard'' basis ${\cal B}_{G_t}$, which must be a group-element basis, and is subject to change along the computation.\\

\noindent\textbf{Designated basis $\mathcal{B}_G$.} Let $G'$  be a member of the family of $2^b$ groups defined  in (\ref{group_labels_basis})---note that $G$ thus generally contains factors $\Z_{N_i}$, $\Z$ and $\T$.  We consider the corresponding group-element basis ${\cal B}_{G'}= \{|g\rangle: g\in G'\}$ as defined in (\ref{basis}). When we  set ${\cal B}_{G'}$ to be the \emph{designated basis} of the computation, it is meant that the group $G$ that labels  $\mathcal{B}_{G'}$ will determine the allowed  normalizer gates (read below), and that $\mathcal{B}_{G'}$ will be the basis in which measurements are performed.

\subsection{Normalizer gates}\label{sect Normalizer gates}

We now define the allowed gates of the computation, called \emph{normalizer gates}. These can be of three types, namely automorphism gates, quadratic phase gates and quantum Fourier transforms. To define these gates, we assume that the designated basis of the computation is $\mathcal{B}_G$ for some  $G$ of the form (\ref{group_labels_basis}).

\

\noindent \textbf{Automorphism gates.} Consider a continuous group automorphism $\alpha:G\rightarrow G$, i.e. $\alpha$ is an invertible map satisfying $\alpha(g+h)  = \alpha(g) + \alpha(h)$ for every $g, h\in G$. Define a unitary gate $U_{\alpha}$ by its action on the basis ${\cal B}_G$, as follows: $ U_\alpha:\ket{h}\rightarrow\ket{\alpha(h)}$. Any such gate is called an automorphism gate. Note that $U_{\alpha}$ acts as a permutation on the basis ${\cal B}_{G}$ and is hence a unitary operation.

\

\noindent \textbf{Quadratic phase gates.}  A function $\chi:G\rightarrow U(1)$  (from the group $G$ into the complex numbers of unit modulus) is called a character if $\chi(g+h) =\chi(g)\chi(h)$ for every $g, h\in G$ and $\chi$ is continuous. A function $B:G\times G\to U(1)$ is said to be a bicharacter if it is continuous and if  it is a character in both arguments. A function $\xi:G\rightarrow U(1)$   is called \emph{quadratic} if it is continuous and if
\begin{equation}\label{eq:Quadratic Function}
\xi(g+h)=\xi(g)\xi(h)B(g,h),\quad \text{for every $g$, $h\in G$}
\end{equation}
for some bicharacter $B(g,h)$. A quadratic phase gate is any diagonal unitary operation acting on ${\cal B}_G$ as $D_{\xi}: \ket{h}\rightarrow \xi(h)\ket{h}$, where $\xi$ is a quadratic function of $G$.

\

\noindent\textbf{Quantum Fourier transform.} In contrast to both automorphism gates and quadratic phase gates, which leave the designated basis unchanged, the quantum Fourier transform (QFT)  is a basis-changing operation: its action is \emph{precisely} to change the designated basis ${\cal B}_G$ (at a given time) into another group-element basis ${\cal B}_{G'}$, according to certain rules described next.

Roughly speaking, the QFT  realizes the unitary change of basis between standard and Fourier basis (section \ref{sect:Quantum states over infinite Abelian groups}). More precisely, there are different types of Fourier transforms, which act on the individual spaces ${\cal H}_N$ and ${\cal H}_{\Z}$.

\begin{itemize}
\item \textbf{QFT over $\Z_N$} In the case of ${\cal H}_N$, both bases have the same cardinality (since the ${\cal H}_N$ is finite-dimensional) and such a change of basis can be actively performed by means of a unitary rotation; the QFT over $\Z_N$  (which we denote ${\cal F}_{N}$) is precisely defined to be this unitary operation. In the standard basis  $|x\rangle$ with $x\in \Z_N$, the action of this gate on a state $|\psi\rangle = \sum \psi_x|x\rangle$ is 
\begin{equation} {\cal F}_N|\psi\rangle= \sum_{y\in\mathbb{Z}_N} \hat\psi(y) |y\rangle \quad\mbox{ with } \hat\psi(y):= \frac{1}{\sqrt{N}} \sum_{x\in\mathbb{Z}_N} {\euler^{2\pi i xy}}\psi(x).
\end{equation}

\item \textbf{Infinite-dimensional QFTs}. Quantum Fourier transforms acting on spaces $\mathcal{H}_\Z$ have  more exotic features than their finite dimensional counterparts. In first palce, these \emph{{QFTs are not gates}} in the strict sense: since the standard basis $\{|x\rangle: x\in \Z\}$ and Fourier basis $\{|p\rangle: p\in \T\}$ have different cardinality, they  cannot be rotated into each other (section \ref{sect_Hilbert_space_Z}); thus, the QFT is a change of basis between two orthonormal basis, but it does not define a  unitary rotation. In addition, due to this asymmetry, there are \emph{two distinct types of QFTs}, defined as follows.

 \textbf{QFT over $\Z$}. If the standard basis ($\Z$ basis) is the designated basis of ${\cal H}_{\Z}$, states are represented as 
 \begin{equation}|\psi\rangle= \sum_{x\in\mathbb{Z}} \psi(x) |x\rangle.\end{equation}
Gates are defined according to this (integer) basis, which is also our measurement basis. When we say that \emph{the QFT over $\Z$ is applied to $|\psi\rangle$}, we mean that the designated basis is changed from the standard basis to the Fourier basis.  The state does not physically change\footnote{This somewhat metaphoric notation is chosen to be consistent with previously existing terminology \cite{VDNest_12_QFTs,BermejoVega_12_GKTheorem}.}, but  normalizer gates applied after the QFT  will be related to the circle group $\T$ (and \emph{not} $\Z$); if we would measure the state right after the QFT, we would also do it in the $\T$ basis.  As a result, the relevant amplitude-expansion of  $|\psi\rangle$ is
    \begin{equation}\label{eq:QFT over Z}
    |\psi\rangle= \int_{\mathbb{T}}\mbox{d}p\  \hat\psi(p) |p\rangle \quad\mbox{ with } \hat\psi(p):=  \sum_{x\in\mathbb{Z}} {\euler^{2\pi i px}} \psi(x).
    \end{equation}
    Some readers may  notice, at this point, that the QFT over $\Z$ is nothing but the inverse Fourier series, also commonly known as the \emph{discrete-time Fourier transform} \cite{oppenheim_Signals_and_Systems}.

     \textbf{QFT over $\boldsymbol{\mathbb{T}}$.}  On the other hand, if the designated basis of ${\cal H}_{\Z}$ is the Fourier basis, states are represented as 
    \begin{equation}
    |\psi\rangle= \int_{\mathbb{T}}\mbox{d}p\  \psi(p) |p\rangle \end{equation} 
    When we say that \emph{the QFT over $\T$ is applied to $|\psi\rangle$}, we mean that (i) the designated basis is changed from the Fourier basis to the standard basis and, therefore, (ii) we re-express the state $|\psi\rangle$  as
    \begin{equation}\label{eq:Fourier coefficients}
    |\psi\rangle= \sum_{x\in\mathbb{Z}} \hat\psi(x) |x\rangle \quad\mbox{ with } \hat\psi(x):=  \int_{\mathbb{T}}\mbox{d}p\ {\euler^{2\pi i px} }\psi(p).
    \end{equation}
    Note that the coefficients $\hat{\psi}(x)$ in (\ref{eq:Fourier coefficients}) are the Fourier coefficients appearing in the Fourier series \cite{oppenheim_Signals_and_Systems}. Due to this fact, one can always regard $\psi(p)$ as a periodic function over the real numbers (with period 1), and identify the QFT over $\T$ as the quantum version of  the Fourier series \cite{oppenheim_Signals_and_Systems}.
    
Lastly, note that the QFT over $\Z$ may only be applied (as an operation in a quantum computation) if the designated basis is the standard basis. Conversely, the QFT over $\T$ may only be applied if the designated basis is the Fourier basis.

\end{itemize}
We now consider the total Hilbert space ${\cal H}_G$ with designated basis ${\cal B}_G$, where $G= G_1\times \dots\times G_m$ with each $G_i$ being a group of the form $\Z_N$, $\Z$ or $\T$.  The total \emph{QFT over $G$} is obtained by applying the QFT over $G_i$, as defined above, to all individual spaces in the tensor product decomposition of ${\cal H}_G$. In particular, application of the QFT over $G$ implies that the designated basis is changed from ${\cal B}_G$ to ${\cal B}_{G'}$, where  $G'=G_i'\times \dots G_m'$ is defined as follows: if $G_i$ has the form $\Z_N$ then $G_i'=G_i$: if $G_i=\Z$ then $G_i':= \T$ and, vice versa, if  $G_i=\Z$ then $G_i':= \T$.

Similarly, one may perform a \emph{partial QFT} by applying the QFT over $G_i$ for a subset of the $G_i$, and leaving the other systems unchanged. The designated basis is changed accordingly on all subsystems where a QFT is applied.

\subsection{Normalizer circuits}\label{sect_normalizer_circuits_sub}

Roughly speaking, a normalizer circuit of size $T$ is a quantum circuit ${\cal C}=U_T\cdots U_1$ composed of $T$ normalizer gates $U_i$. More precisely, the definition is as follows.

A \emph{normalizer circuit over $G=\Z^{a+b}\times\DProd{N}{c}$} acts on a Hilbert space of the form
\begin{equation*}
\mathcal{H}_G=\mathcal{H}_\Z^{a}\otimes\mathcal{H}_\Z^{b}\otimes \left(\mathcal{H}_{N_1}\otimes \cdots \otimes\mathcal{H}_{N_c}\right),
\end{equation*}
with arbitrary  parameters $a$, $b$, $c$, $N_i$. At time $t=0$ (before gates are applied), the designated basis of the computation is the group-element basis ${\cal B}_{G(0)}$ where $G(0)$ is a group from the family (\ref{group_labels_basis}); without loss of generality, we set $G(0)=\Z^a \times \T^b \times \DProd{N}{c}$. The input, output, and gates of the circuit are constrained as follows:
\begin{itemize}
\item {\textbf{Input states}}. The input states are  elements of the  designated group basis $\mathcal{B}_{G(0)}$,  i.e.\, $|g\rangle$ with $g\in G(0)$. The \emph{registers}\footnote{In a quantum computation we call each physical subsystem in (\ref{total_H}) a ``register''.}  $\mathcal{H}_\Z^a$ and $\mathcal{H}_\Z^b$ are initialized to be in standard-basis  $\ket{n}$, $n\in \Z$ and Fourier-basis states  $\ket{p}$, $p\in \T$, respectively\footnote{The results in this paper can be easily generalized to input states that are stabilizer states (section \ref{sect:Stabilizer States}), given that we know the stabilizer group of the state.}. It is assumed that Fourier basis inputs can be prepared within any finite (yet arbitrarily high) level of precision (cf.\  section \ref{sect:Main Result}).
\item\textbf{Gates.} At time $t=1$, the gate $U_1$ is applied, which is either an automorphism gate, quadratic phase gate or a partial QFT over $G(0)$. The designated basis is changed from ${\cal B}_{G(0)}$ to ${\cal B}_{G(1)}$, for some group $G(1)$ in the family (\ref{group_labels_basis}), according to the rules for changing the designated basis as described in section \ref{sect Normalizer gates}.

At time $t=2$, the gate $U_1$ is applied, which is either an automorphism gate, quadratic phase gate or a partial QFT over $G(1)$. The designated basis is changed from ${\cal B}_{G(1)}$ to ${\cal B}_{G(2)}$, for some group $G(2)$.

The gates $U_3, \dots, U_t$ are considered similarly. We denote by ${\cal B}_{G(t)}$ the designated basis after application of $U_t$ (for some group $G(t)$ in the family (\ref{group_labels_basis})), for all $t=3, \dots, T$. Thus, after all gates have been applied, the designated basis is $\mathcal{B}_{G(T)}$.

\item\textbf{Measurement.} After the circuit, a measurement in the final designated basis $\mathcal{B}_{G(T)}$ is performed.
\end{itemize}

For \emph{finite} Abelian groups and their associated Hilbert spaces (i.e. the Hilbert space has the form ${\cal H}_{N_1}\otimes\dots\otimes {\cal H}_{N_a}$), the above definitions of normalizer circuits and normalizer gates specialize to the previously defined notion of normalizer circuits over finite Abelian groups, as done in \cite{VDNest_12_QFTs,BermejoVega_12_GKTheorem}.

\section{Main result}\label{sect:Main Result}

In our main result (theorem \ref{thm:Main Result} below) we show that any polynomial-size normalizer circuit (cf.\ section \ref{sect_normalizer_circuits_sub} for definitions) associated to  any group of the form (\ref{group_labels_basis})  can be simulated efficiently classically. Before stating the result, we will rigorously state what it is meant in our work by an efficient classical simulation of a normalizer circuit, in terms of computational complexity.

In short, the  \textbf{computational problem} we consider is the  following: given a classical description of a normalizer quantum circuit, its input quantum state and the measurement performed at the end of the computation (see section \ref{sect_normalizer_circuits_sub} for details on our computational model), our task is to sample the probability  distribution of final measurement outcomes with a classical algorithm. Any classical algorithm to solve this problem in polynomial time (in the bit-size of the input) is said to be \emph{efficient}.

We specify next how an instance of the computational problem is presented to us.

First, we introduce \textbf{standard encodings} that we use to describe normalizer gates. Our encodings  are  \emph{efficient}, in the sense that the number of bits needed to store a description of a normalizer gate scales as $O(\poly{m},\polylog{N_i})$, where $m$ is the total number of {registers} of the Hilbert space (\ref{total_H}) and $N_i$ are the local dimensions of the finite dimensional registers (the memory size of each normalizer gate in these encodings is given in table \ref{table:Main Result INPUT SIZE}). This polynomial (as opposed to exponential) scaling in $m$ is crucial in our setting, since normalizer gates may act non-trivially on all $m$ registers of the Hilbert space (\ref{total_H})---this is an important difference between our computational model (based on normalizer gates) and the standard quantum circuit model \cite{nielsen_chuang}, where a  quantum circuit is always given as a sequence of one- and two-qubit gates. 
\begin{enumerate}
\item[(i)] A partial quantum Fourier transform $\mathcal{F}_{i}$ over $G_i$ (the $i$th factor of $G$) is described by the index $i$ indicating the register where the gate acts non-trivially.
\item[(ii)] An automorphism gate $U_\alpha$ is described by  what we call a \emph{matrix representation $A$}  of the automorphism $\alpha$ (definition \ref{def:Matrix representation}): an $m\times m$ real matrix $A$ that specifies the action of the map $\alpha$.
\item[(iii)] A quadratic phase gate $D_\xi$ is described by an $m\times m$ real matrix $M$ and an $m$-dim real vector $v$. The pair $(M,v)$ specifies the action of the quadratic function $\xi$ associated to $D_\xi$. Here we exploit a normal form for quadratic functions  given later in theorem \ref{thm:Normal form of a quadratic function}. 
\end{enumerate}
 In our work, we assume that all maps $\alpha$ and $\xi$ can be represented \emph{exactly} by rational matrices and vectors $A$, $M$, $v$, which are explicitly given to us\footnote{Some automorphisms and quadratic functions can only be represented by matrices with irrational entries (cf.\ the normal forms in sections \ref{sect:Homomorphisms and matrix representations},\ref{sect:quadratic_functions}). Restricting ourselves to study the rational ones allows us to develop \emph{exact simulation algorithms}. We believe irrational matrices (even with transcendental entries) could also be handled by taking into account floating-point errors. We highlight that the stabilizer formalism in this paper and all of our normal forms are developed \emph{\textbf{analytically}}, and hold even if transcendental numbers appear in the matrix representations of $\alpha$ and $\xi$. (It is an good question to explore whether an exact simulation results may hold for matrices with algebraic coefficients.)}.
 
Second, a normalizer circuit is specified as a list of normalizer gates given to us in their standard encodings.

The efficiency of our standard encodings relies strongly on results presented in sections \ref{sect:Homomorphisms and matrix representations} and \ref{sect:quadratic_functions}. In section \ref{sect:Homomorphisms and matrix representations}, we develop a (classical) \textbf{ theory of matrix representations} of group homomorphisms, proving their existence (lemma \ref{lemma:existence of matrix representations}) and providing a normal form that characterizes the structure of these matrices (lemma \ref{lemma:Normal form of a matrix representation}). In section \ref{sect:quadratic_functions}, we develop analytic normal forms for bicharacter functions (lemmas \ref{lemma:Normal form of a bicharacter 1}, \ref{lemma symmetric matrix representation of the bicharacter homomorphism}) and quadratic functions (theorem \ref{thm:Normal form of a quadratic function}). These results are also main contributions of our work. 

We would like to highlight, in particular, that our \textbf{normal form for quadratic functions} (theorem \ref{thm:Normal form of a quadratic function}) should be of interest to a quantum audience. It was recently shown  in \cite{BermejoVega_12_GKTheorem} that  quadratic functions over an Abelian group describe the quantum wave-functions of the so-called \emph{stabilizer states}. Hence, our normal form can be used to characterize the complex amplitudes of such states\footnote{As mentioned in the introduction, the result in \cite{BermejoVega_12_GKTheorem} is for  stabilizer states over finite dimensional Hilbert spaces but it can be easily generalized.}.

Lastly, we mention that, in our work, we allow the matrices $A$, $M$ and the vector $v$ in (i-iii) to contain \emph{arbitrarily large \emph{and} arbitrarily small} coefficients. This degree of generality is necessary in the setting we consider, since we allow \emph{all} normalizer gates to be valid components of a normalizer circuit. However,  the presence of infinite groups in (\ref{group_labels_basis}) implies that  that there exists an infinite number of normalizer gates (namely, of automorphism and quadratic gates, which follows from the our  analysis in sections  \ref{sect:Homomorphisms and matrix representations} and \ref{sect:quadratic_functions}). This is in contrast with the settings considered in \cite{VDNest_12_QFTs,BermejoVega_12_GKTheorem}, where both the group (\ref{group_hilbert_space}) and the associated set of  normalizer gates are finite. As a result, the arithmetic precision needed to store the coefficients of $A$, $M$, $v$ in our standard encodings becomes a variable of the model (just like in the standard problem of multiplying two integer matrices).

We state now our main result.
\begin{theorem}[\textbf{Main result}]\label{thm:Main Result} Let $\mathcal{C}$ be any normalizer circuit over any group $G=\Z^a \times \Z^b \times \DProd{N}{c}$ as defined in section \ref{sect_normalizer_circuits_sub}. Let $\mathcal{C}$ act on a input state $\ket{g}$ in the designated standard basis at time zero, and be followed by a final  measurement in  the  designated basis at time $T$. Then the output probability distribution can be sampled classically  efficiently  using epsilon-net methods. 
\end{theorem}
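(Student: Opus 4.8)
The plan is to follow the stabilizer simulation paradigm, adapted to the infinite-dimensional setting. First I would establish that every input group-element basis state $\ket{g}$ with $g\in G(0)$ is a \emph{stabilizer state}, i.e.\ the unique joint $+1$-eigenstate of an Abelian group of generalized Pauli operators (section \ref{sect:Pauli operators over Abelian groups}). The crucial point is that, because $G(0)$ contains $\Z$ and $\T$ factors, this stabilizer group is in general not finitely generated, so I would not store a list of generators but instead represent the stabilizer group by the linear (matrix) data anticipated in section \ref{sect:Stabilizer States}, whose kernel/image structure encodes the group. The normal form of lemma \ref{lemma:Normal form of a matrix representation} guarantees that this data admits a concise rational description of size $O(\poly{m},\polylog{N_i})$.

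Second, I would prove the Clifford property: each normalizer gate (automorphism, quadratic-phase, or partial QFT) conjugates Pauli operators to Pauli operators (theorem \ref{thm:Normalizer gates are Clifford}), hence maps stabilizer states to stabilizer states and transforms the stabilizer group by a symplectic-type action on its matrix representation. The key step is to read off, from the normal form for automorphisms (lemma \ref{lemma:Normal form of a matrix representation}) and for quadratic functions (theorem \ref{thm:Normal form of a quadratic function}), an \emph{explicit and efficiently computable update rule} for the matrices describing the stabilizer group. The QFT needs extra care because it changes the designated group-element basis ${\cal B}_{G(t)}\to{\cal B}_{G(t+1)}$ (swapping $\Z\leftrightarrow\T$ factors), so I would track the current group $G(t)$ alongside the matrices and transform both. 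Composing the $T$ updates yields, in polynomial time, a matrix description of the stabilizer group of the final state.

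Third, to simulate the measurement in the final basis ${\cal B}_{G(T)}$, I would compute the \emph{support} of the final stabilizer state --- the set of $g\in G(T)$ on which its wavefunction is nonzero --- together with the quadratic function describing the amplitudes on that support (using that stabilizer-state amplitudes are quadratic, cf.\ section \ref{sect:Stabilizer States}). The central reduction is that, given the matrix representation of the stabilizer group, the support is exactly the solution set of a system of \emph{mixed real-integer linear equations}, which by \cite{BowmanBurget74_systems-Mixed-Integer_Linear_equations} can be solved efficiently and described by a finite set of affine generators. Sampling the output distribution then amounts to sampling from this affine support with the quadratic weighting.

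The main obstacle, and where the $\varepsilon$-net techniques enter, is that when $G(T)$ contains $\T$ factors the support is a continuous (possibly uncountable, non-compact) set and the measurement basis is continuous, so exact sampling is ill-defined and precision must be controlled. I would handle this by discretizing the continuous part of the support on a sufficiently fine $\varepsilon$-net, bounding the total-variation error incurred, and sampling from the net. Verifying that a net of size $\ppoly{1/\varepsilon}$ suffices --- that the discretized distribution approximates the true one within the target accuracy --- is the most delicate part, since it requires uniform control of the quadratic amplitudes over a non-compact affine domain. Throughout, one must also check that the rational matrices produced by successive gate updates do not blow up in bit-size: the normal forms keep their dimension bounded by $m$, and polynomial growth of the entries then follows from standard bounds on rational arithmetic.
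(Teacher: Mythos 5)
Your first two stages coincide with the paper's proof: track the stabilizer group by a triple of matrix data (a homomorphism $\Lambda_t$ whose image is the label group $\mathbb{L}(t)$, plus the pair $(M_t,v_t)$ encoding the Pauli phases via theorem \ref{thm:Normal form of a quadratic function}), update it gate-by-gate while also tracking the change of $G(t)$ under partial QFTs, and then reduce the support computation to systems of linear equations over groups, hence to mixed real-integer linear systems solved by \cite{BowmanBurget74_systems-Mixed-Integer_Linear_equations}. The gaps are in your third stage, the simulation of the measurement.

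First, you propose to sample ``from this affine support with the quadratic weighting'' and worry about ``uniform control of the quadratic amplitudes over a non-compact affine domain.'' This misreads what the quadratic data is: a quadratic function takes values in $U(1)$, so it describes the \emph{phases} of the wavefunction, not a weighting, and it is irrelevant to the outcome distribution. The step your proposal is missing is lemma \ref{lemma:StabStates_are_Uniform_Superpositions}: every stabilizer state is a \emph{uniform} superposition over a coset $s+\mathbb{H}$, so the measurement statistics are exactly the flat distribution on the support and the pair $(M,v)$ plays no role in the sampling stage (it is needed earlier, to determine a point $x_0$ in the support via equation (\ref{eq_support_final})). Without this uniformity lemma the sampling problem is not even well posed, and with it your ``delicate'' amplitude-control step disappears. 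Second, even granting uniformity, sampling the flat distribution on $\mathbb{H}=\mathrm{im}\,\mathcal{E}$ is not achieved by sampling the domain $\R^\alpha\times\Z^\beta$ of $\mathcal{E}$ and pushing forward: $\mathcal{E}$ is in general far from injective, and collisions skew the pushforward distribution arbitrarily. The paper resolves this by computing $\ker\mathcal{E}$ (another linear-system solve), bringing it into Smith normal form, and extracting an explicit isomorphism between a group $\DProd{\sigma}{\mathbf{b}}\times\Z^{\mathbf{c}}$ and the net, which yields collision-free uniform sampling of the $(\Delta,\varepsilon)$-net; your proposal never addresses collisions. Finally, a smaller point: your plan to bound total-variation error against ``the true distribution'' is ill-defined when $G(T)$ has $\Z$ factors, since the uniform measure on an unbounded support is not normalizable; the paper instead \emph{defines} the simulation target through $(\Delta,\varepsilon)$-nets, with the parallelotope parameter $\Delta$ controlling the truncation of the non-compact (free) part and $\varepsilon$ the discretization of the compact part.
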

We remind that, in theorem \ref{thm:Main Result}, both standard and Fourier basis states of $\mathcal{H}_{\Z}$ are allowed inputs (cf.\ section \ref{sect_normalizer_circuits_sub}). 

In theorem \ref{thm:Main Result}, the state $\ket{g}$ is described by the group element $g$, which is encoded as a tuple of $m$ rational\footnote{In this work we do not use floating point arithmetic.} numbers of varying size (see table \ref{table:Main Result INPUT SIZE}, row 1).  The memory needed to store the normalizer gates comprising $\mathcal{C}$ is summarized in table \ref{table:Main Result INPUT SIZE}, row 2. By ``\emph{classically efficiently}'' it is meant that there exists a classical algorithm (\textbf{theorem \ref{thm:Algorithm to sample subgroups}}) to perform the given task whose worst-time running time scales \emph{polynomially} in the input-size (namely, in the number of subsystems $m$, the number of normalizer gates of $\mathcal{C}$) and of all other variables listed in the ``bits needed'' column of table \ref{table:Main Result INPUT SIZE}), and \emph{polylogarithmically}  in the parameters ${\frac{1}{\varepsilon}}$, ${\Delta}$ that specify the number of points in a \emph{$(\Delta, \varepsilon)$-net} (read below) and their geometrical arrangement. 
\begin{table}[h]
\begin{center}   
  \begin{tabular}{| c | c |  c|}
           \hline 
           Input element & Description needs to  & Bits needed\\ 
           \hline \multirow{2}{*}{Input state $\ket{g}$}  & Specify element $g(i)$ of infinite group $\Z$, $\T$ & variable\\
          & Specify element $g(j)$ of finite group $\Z_{N_j}$ & $\log N_j$   \\ 
          \hline \multirow{3}{*}{Normalizer circuit $\mathcal{C}$}  & Specify  quantum Fourier transform $\mathcal{F}_i$ & $\log{m}$ \\ 
          & Specify automorphism gate $U_\alpha$ & $m^2 \|A\|_{\mathbf{b}}$  \\
          & Specify quadratic phase gate $D_\xi$ & $ m^2 \|M\|_{\mathbf{b}}+m \|v\|_{\mathbf{b}}$  \\ \hline     
    \end{tabular}
  \end{center}
   \caption{The input-size in theorem \ref{thm:Main Result}. $\|X\|_{\mathbf{b}}$ denotes the number of bits used to store one single coefficient of $X$, which is always assumed to a rational matrix/vector. Formulas in column 3 are written in Big Theta $\Theta$ notation and do not  include constant factors (which are anyway small).} \label{table:Main Result INPUT SIZE}
\end{table}

\subsubsection*{Sampling techniques}

We finish this section by saying a few words about the $(\Delta, \epsilon)$-net methods used in the proof of theorem \ref{thm:Main Result}. These techniques are fully developed in sections \ref{sect:Proof of theorem 1} and \ref{sect:Sampling the support of a state}.

We shall show later (lemma \ref{lemma:StabStates_are_Uniform_Superpositions}) that the final quantum state $\ket{\psi}$ generated by a normalizer circuit is always a uniform quantum superposition in any of the computational basis we consider (\ref{basis}): if $G$ is the group associated to our designated basis, and if $X$ is the set of $x\in G$ such that $\psi (x)\neq 0$, then $|\psi(x)|=|\psi(y)|$ for all $x,y\in X$.  As a result the final distribution of measurement outcomes is always a flat distribution over some set $X$. 

Moreover, we show in section that $X$ is always isomorphic to a group of the form $K \times \Z^\mathbf{r}$ where $K$ is compact, and that such isomorphism can be efficiently computed: as a result, we see that, although $X$ is not compact, the non-compact component of $X$ has an Euclidean geometry. Our sampling algorithms are based on this fact: to sample $X$ in an approximate sense, we construct a subset $\mathcal{N}_{\Delta,\varepsilon} \subset X$ of the form
\begin{equation}
\mathcal{N}_{\Delta,\varepsilon} =\mathcal{N}_{\varepsilon}\oplus\mathcal{P}_\Delta,
\end{equation}
where $\mathcal{N}_{\varepsilon}$ is an $\varepsilon$-net (definition \ref{def:Epsilon Net}) of the compact component $K$ of $X$ and $\mathcal{P}_\Delta$ is a $\mathbf{r}$-dimensional parallelotope contained in the Euclidean component $\Z^\mathbf{r}$, centered at 0, with edges of length $2\Delta_1,\ldots, 2\Delta_\mathbf{r}$. We call $\mathcal{N}_{\Delta,\varepsilon}$ a ($\Delta, \varepsilon$)-net  (definition \ref{def:Delta Epsilon Net}). The algorithm in theorem \ref{thm:Main Result} can efficiently construct and sample such sets for any $\varepsilon$ and $\Delta:=\Delta_1,\ldots, \Delta_\mathbf{r}$ of our choice: its worse running-time is $O(\polylog{\tfrac{1}{\varepsilon}},\polylog{\Delta_i}) $,  as a function of these parameters. We refer the reader to section \ref{sect:Proof of theorem 1} and theorem \ref{thm:Algorithm to sample subgroups} for more details.

\subsubsection*{Treatment of finite-squeezing errors}

It follows from the facts that we have just dicussed that when $G$ is not a compact group (i.e.\, $G$ contains $\Z$ primitive factors) the support $X$ of the quantum state $\ket{\psi}$ can be an unbounded set. In such case, it follows from the fact that $\ket{\psi}$ is a uniform superposition that the quantum state is \emph{unphysical} and that the physical preparation of such a state requires infinite energy; in the continuous-variable quantum information community, states like $\ket{\psi}$ are often called \emph{infinitely squeezed states} \cite{KokLovett10Intro_to_Optical_Quantum_Information_Processing}. In a physical implementation (cf.\  \cite{BermejoLinVdN13_BlackBox_Normalizers}), these states can be replaced by physical finitely-squeezed states, whose amplitudes will decay towards the infinite ends\footnote{The particular form of the damping depends on the implementation. These effects vanish in the limit of infinite squeezing.}  of the support set $X$. This leads to finite-squeezing errors, compared to the ideal scenario.

In this work, we consider normalizer circuits to work perfectly in the ideal infinite-squeezing scenario. Our simulation algorithm in theorem \ref{thm:Main Result} samples the ideal distribution that one would obtain in the infinite precision limit, neglecting the presence of  finite-squeezing errors. This is achieved with the  $(\Delta, \epsilon)$-net methods described above, which we use to discretize and sample the  manifold $X$ that supports the ideal output state $\ket{\psi}$;  the output of this procedure  reveals the information  encoded in the  wavefunction of the state.

In our upcoming work \cite{BermejoLinVdN13_BlackBox_Normalizers}, we  make use of this simulation algorithm to study quantum algorithms based on normalizer circuits. We also study  in \cite{BermejoLinVdN13_BlackBox_Normalizers}  how  information can be represented with finitely-squeezed states in a computation.

\section{Group and Character Theory}\label{sect:Group Theory}

\subsection{Elementary Abelian groups}\label{sect:elementary_Abelian_groups}

A group of the form \be G=\Z^a \times \mathbb{R}^b\times  \DProd{N}{c} \times \T^d\ee will be called an \emph{elementary Abelian group}. We will often use the shorthand notation $F= \DProd{N}{c}$ for the finite subgroup in the above decomposition. Note that the Hilbert space formalism introduced in section \ref{sect:Quantum states over infinite Abelian groups} does not refer to  groups of the form $\mathbb{R}^b$ (i.e. direct products of the group of real numbers). However for some of our calculations in later sections, it will be convenient to include these types of groups in the analysis.

An elementary Abelian group of the form $\Z$, $\R$, $\T$ or $\Z_N$ is said to be primitive. Thus every elementary Abelian group can be written as $G=G_1\times\dots \times G_m$ with each $G_i$ primitive; we will often use this notation. We will also use the notation $G_{\Z}$, $G_{\R}$, $G_{F}$, $G_{\T}$ to denote elementary Abelian groups that are, respectively, integer lattices $\Z^a$, real lattices $\R^b$, finite groups $F$ and tori $\T^d$. We will also assume that the  factors $G_i$ of $G$ are arranged so that $G=G_{\Z}\times G_{\R}\times G_{F}\times G_{\T}$.

Next we introduce the notion of group characteristic $\charac{G}$ for primitive groups:
\begin{equation}
\charac{\Z}:=0, \quad \charac{\R}:=0,\quad \charac{\Z_N}:=N, \quad \charac{\T}:=1.
\end{equation}
Alternatively, we can also define the characteristic as follows:  $\charac{G}$ is the number that coincides with (a) the order of $1$ in $G$ if 1 has finite order (which is the case for $\Z_N$ and $\T$); (b) zero, if $1$ has infinite order in $G$ (which is the case for $\Z$ and $\R$).

Consider an elementary Abelian group $G= G_1\times\dots \times G_m$ where $c_i$ is the characteristic of $G_i$. Each element $g\in G$ can be represented as an $m$-tuple $g=(g_1, \dots, g_m)$ of real numbers. If $x=(x_1, \dots, x_m)$ is an arbitrary $m$-tuple of real numbers, we say that $x$ is congruent to $g$, denoted by  $x\equiv g$ $(\mbox{mod } G)$, if \be x_i \equiv  g_i\  (\mbox{mod } c_i) \quad \mbox{ for every } i=1, \dots, m.\ee
For example, every string of the form $x=(\lambda_1c_1, \dots, \lambda_mc_m)$ with $\lambda_i\in \Z$ is congruent to $0\in G$.

\subsection{Characters}\label{sect:characters}

\begin{definition}[\textbf{Character} \cite{Morris77_Pontryagin_Duality_and_LCA_groups,Moreno05_Analytic_Number_Theory_L_Functions}] \label{def:Characters} Let $G$ be an elementary Abelian group. A character of $G$ is  a continuous homomorphism $\chi$ from  $G$  into the group $U(1)$ of  complex numbers with modulus one. Thus $\chi(g+h)=\chi(g)\chi(h)$ for every $g, h\in G$.
\end{definition}
If $G$ is an elementary Abelian group, the set of all of its characters is again an elementary Abelian group, called the dual group of $G$, denoted by $\widehat{G}$. Moreover, $\widehat{G}$ is isomorphic to another elementary Abelian group, according to the following rule:
\begin{equation}\label{eq:Elementary Character Group}
G=\R^a\times \T^b \times \Z^c \times F \qquad \longrightarrow\qquad  \widehat{G}\cong\R^a\times \Z^b \times \T^c \times F.
\end{equation}
Thus, in particular, $\widehat{\mathbb{R}}$ is isomorphic to $\mathbb{R}$ itself and similarly $\widehat{F}$ is isomorphic to $F$ itself; these groups are called autodual. On the other hand, $\widehat{\mathbb{Z}}$ is isomorphic to $\mathbb{T}$ and, conversely, $\widehat{\mathbb{T}}$ is isomorphic to $\mathbb{Z}$. We also note from the rule (\ref{eq:Elementary Character Group}) that the dual group of $\widehat{G}$ is isomorphic to $G$ itself. This is a manifestation of the Pontryagin-Van Kampen duality \cite{Morris77_Pontryagin_Duality_and_LCA_groups,Stroppel06_Locally_Compact_Groups,Dikranjan11_IntroTopologicalGroups}.

We now give explicit formulas for the characters of any primitive Abelian group.
\begin{itemize}
\item The characters of $\mathbb{R}$ are
\begin{equation}\label{char_R}
 \chi_x(y):=\exp{\left(2\pi i xy\right)}, \quad\text{for every $x$, $y\in\R$}.
\end{equation}
Thus each character is labeled by a real number. Note that $\chi_x\chi_{x'}= \chi_{x+x'}$ for all $x, x'\in \mathbb{R}$. The map $x\to \chi_x$ is an isomorphism from $\R$ to $\widehat{\R}$, so that $\R$ is autodual.
\item The characters of $\mathbb{Z}_N$ are
\begin{equation}\label{char_F}
\chi_x(y):=\exp{\left(\frac{2\pi i}{N}\ xy\right)}, \quad\text{for every $x$, $y\in\mathbb{Z}_N$}.
\end{equation}
Thus each character is labeled by an element of $\mathbb{Z}_N$. As above, we have $\chi_x\chi_{x'}= \chi_{x+x'}$ for all $x, x'\in\mathbb{Z}_N$. The map $x\to \chi_x$ is an isomorphism from $\mathbb{Z}_N$ to $\widehat{\mathbb{Z}}_N$, so that $\mathbb{Z}_N$ is autodual.
\item  The characters of $\Z$ are
\begin{equation}\label{char_Z}
\chi_p(m):=\exp{(2\pi i pm)}, \quad\text{for every $p\in\T$, $m\in\Z$},
\end{equation}
Each character is labeled by an element of $\mathbb{T}$. Again we have $\chi_p\chi_{p'}= \chi_{p+p'}$ for all $p, p'\in\mathbb{T}$ and the map $p\to \chi_p$ is an isomorphism from $\mathbb{T}$ to $\widehat{\mathbb{Z}}$.
\item The characters of $\T$ are
\begin{equation}\label{char_T}
\chi_m(p):=\exp{(2\pi i pm)}, \quad\text{for every $p\in\T$, $m\in\Z$};
\end{equation}
Each character is labeled by an element of $\mathbb{Z}$. Again we have $\chi_m\chi_{m'}= \chi_{m+m'}$ for all $m, m'\in\mathbb{Z}$ and the map $m\to \chi_m$ is an isomorphism from $\mathbb{Z}$ to $\widehat{\mathbb{T}}$.
\end{itemize}
If $G$ is a general elementary Abelian group, its characters are obtained by taking products of the characters described above. More precisely, if $A$ and $B$ are two elementary Abelian groups, the character group of $A\times B$ consists of all products $\chi_A\chi_B$ with $\chi_A\in \widehat{A}$ and $\chi_B\in \widehat{B}$, and where $\chi_A\chi_B(a, b):= \chi_A(a)\chi_B(b)$ for every $(a, b)\in A\times B$. To obtain all characters of a group $G$ having the form (\ref{eq:Elementary Character Group}), we denote
\be G^*:=\R^a\times \Z^b \times \T^c \times F.\ee Considering an arbitrary element \be\label{mu} \mu= (r_1, \dots, r_a ,z_1, \dots, z_b, t_1, \dots, t_c, f_1, \dots, f_d)\in G^*,\ee the associated character is given by the product \be \label{mu_character} \chi_{\mu}:=\chi_{r_1}\dots\chi_{r_a}\ \chi_{z_1}\dots\chi_{z_b}\ \chi_{t_1}\dots\chi_{t_c}\ \chi_{f_1}\dots\chi_{f_d}\ee where the individual characters $\chi_{r_i}, \chi_{z_j}, \chi_{t_k}, \chi_{f_l}\dots$ of $\mathbb{R}, \mathbb{Z}, \mathbb{T}$ and $\mathbb{Z}_{N_l}$  are defined above. The character group of $G$ is given by \be \widehat{G}= \{\chi_{\mu}: \mu\in G^*\}.\ee
The rule (\ref{eq:Elementary Character Group}) immediately implies that $(G^*)^* = G$. This implies that the character group of $G^*$ is $\{\chi_{g}: g\in G\}$, where $\chi_g$ is defined in full analogy with (\ref{mu_character}). Two elementary but important  
features are the following:
\begin{lemma}\label{lemma:Pontryagin duality for characters}
For every $g\in G$ and $\mu\in G^*$ we have
\be \chi_{\mu}(g)= \chi_g(\mu).\ee
\end{lemma}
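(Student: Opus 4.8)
The plan is to reduce the identity to the level of primitive factors and then verify it by direct inspection of the explicit character formulas (\ref{char_R})--(\ref{char_T}). Both $\chi_{\mu}$ and $\chi_{g}$ are defined through the product formula (\ref{mu_character}): the former by letting $\mu\in G^\ast$ label a character of $G$, the latter by letting $g\in G=(G^\ast)^\ast$ label a character of $G^\ast$ (well-defined precisely because $(G^\ast)^\ast=G$, as recorded above). Since a character of a direct product factorizes as $\chi_A\chi_B(a,b)=\chi_A(a)\chi_B(b)$, it suffices to prove $\chi_{\mu_i}(g_i)=\chi_{g_i}(\mu_i)$ on each primitive factor $G_i$ of $G$ (with $\mu_i$ ranging over the corresponding factor of $G^\ast$, i.e.\ the dual of $G_i$) and then take the product over $i$. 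Writing $g=(g_1,\dots,g_m)$ and $\mu=(\mu_1,\dots,\mu_m)$, this factorization is the only structural step.

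For the per-factor verification I would treat the four primitive types in turn. When $G_i=\R$ the relevant formula is (\ref{char_R}), $\chi_x(y)=\exp(2\pi i xy)$, which is manifestly symmetric under $x\leftrightarrow y$; likewise (\ref{char_F}) gives $\exp\!\left(\tfrac{2\pi i}{N}xy\right)$, again symmetric. Thus the autodual factors $\R$ and $F$ contribute identical exponentials to $\chi_{\mu}(g)$ and $\chi_{g}(\mu)$ with no further work.

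The one point that genuinely requires care is the dual pair $\Z\leftrightarrow\T$, where the factor of $G$ and the matching factor of $G^\ast$ are distinct groups: if $G_i=\Z$ then $\mu_i\in\T$ labels a character of $\Z$ via (\ref{char_Z}), whereas regarding $g_i\in\Z$ as a label for a character of $\T$ uses (\ref{char_T}). The key observation—exactly the fact that makes the whole duality self-consistent—is that both (\ref{char_Z}) and (\ref{char_T}) produce the \emph{same} exponential $\exp(2\pi i pz)$ for $p\in\T$, $z\in\Z$, regardless of which element is the label and which is the argument. Hence $\chi_p(z)=\exp(2\pi i pz)=\chi_z(p)$, which settles the $\Z$-factors, and symmetrically the $\T$-factors.

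I expect the only real obstacle to be bookkeeping rather than mathematics: one must keep straight the factor correspondence imposed by the duality rule (\ref{eq:Elementary Character Group})—in particular the swap sending the $\T$-block of $G$ to the $\Z$-block of $G^\ast$ and vice versa—so that in $\chi_g(\mu)$ each component $g_i$ is paired with the right component $\mu_i$. Once the factors are matched correctly, every contribution to $\chi_\mu(g)$ and to $\chi_g(\mu)$ is a single exponential whose exponent is a product of two real numbers, and commutativity of real multiplication closes the argument factor by factor. No input beyond the explicit formulas and the already-established identity $(G^\ast)^\ast=G$ is needed.
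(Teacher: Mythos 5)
Your proof is correct and follows essentially the same route as the paper, which simply observes that the lemma ``follows from inspection of the characters of $\mathbb{R}, \mathbb{Z}, \mathbb{T}$ and $\mathbb{Z}_{N}$ defined in (\ref{char_R})--(\ref{char_T})''. Your factor-by-factor reduction via the product formula (\ref{mu_character}), with the key observation that the $\Z\leftrightarrow\T$ pairing yields the same exponential $\exp(2\pi i\, pz)$ regardless of which element serves as label, is precisely that inspection written out in detail.
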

\begin{lemma}\label{lemma:Character Multiplication}
For every $\mu,\nu\in G^*$ and every $g\in G$ it follows that
\begin{equation}
\chi_{\mu+\nu}(g)=\chi_{\mu}(g)\chi_{\nu}(g)
\end{equation}
\end{lemma}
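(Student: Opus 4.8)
The plan is to reduce the identity to the primitive factors of $G$, where it follows immediately from the explicit exponential formulas for characters in (\ref{char_R})--(\ref{char_T}). First I would write $G=G_1\times\cdots\times G_m$ with each $G_i$ primitive, so that correspondingly $G^*=G_1^*\times\cdots\times G_m^*$, and decompose the arguments as $g=(g_1,\dots,g_m)$, $\mu=(\mu_1,\dots,\mu_m)$, $\nu=(\nu_1,\dots,\nu_m)$. By the product rule for characters of a direct product (namely $\chi_A\chi_B(a,b)=\chi_A(a)\chi_B(b)$ recalled in Section \ref{sect:characters}), we have $\chi_\mu(g)=\prod_i \chi_{\mu_i}(g_i)$, while addition in $G^*$ is componentwise so that $(\mu+\nu)_i=\mu_i+\nu_i$ in $G_i^*$. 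Hence it suffices to prove the single-factor identity $\chi_{\mu_i+\nu_i}(g_i)=\chi_{\mu_i}(g_i)\chi_{\nu_i}(g_i)$ for each primitive type and then multiply over $i$.

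Next I would dispatch the four primitive cases at once using the explicit formulas: in every case the character takes the bilinear exponential form $\chi_x(y)=\exp(2\pi i\, c\, x y)$ for a suitable constant $c$ (with $c=1$ for $\R,\Z,\T$ and $c=1/N$ for $\Z_N$), where the label $x$ ranges over the dual factor and $y$ over $G_i$. Thus $\chi_x(y)\chi_{x'}(y)=\exp(2\pi i\, c(x+x')y)=\chi_{x+x'}(y)$, which is exactly the per-factor identity (and is in fact already recorded as $\chi_x\chi_{x'}=\chi_{x+x'}$ in the bullet list of Section \ref{sect:characters}). Multiplying these $m$ identities and regrouping the two products then yields $\chi_{\mu+\nu}(g)=\big(\prod_i \chi_{\mu_i}(g_i)\big)\big(\prod_i \chi_{\nu_i}(g_i)\big)=\chi_\mu(g)\chi_\nu(g)$, as desired.

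The one point requiring care --- and essentially the only place where anything must be checked --- is that addition in the dual factors $\Z_N$ and $\T$ of $G^*$ involves a modular reduction (mod $N$, respectively mod $1$), whereas the naive sum $x+x'$ of labels need not lie in the chosen fundamental domain. I would verify that this reduction is invisible to the pairing: for a $\Z_N$ factor the discarded multiple $kN$ contributes $\exp(2\pi i\,(kN)y/N)=\exp(2\pi i\, ky)=1$ since $y\in\Z$; and for a $\T$ factor of $G^*$ (dual to a $\Z$ factor of $G$) the discarded integer part $k$ contributes $\exp(2\pi i\, k\, m)=1$ since the argument $m\in\Z$. Hence $\chi_{x+x'}$ is well-defined as the reduced-label character and the computation above goes through unchanged. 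No genuine obstacle remains; the lemma is a bookkeeping consequence of the bilinear exponential form of the characters, and the argument runs in exact parallel to (and could be stated alongside) Lemma \ref{lemma:Pontryagin duality for characters}.
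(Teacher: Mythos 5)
Your proof is correct and follows essentially the same route as the paper, which dispatches this lemma (together with Lemma \ref{lemma:Pontryagin duality for characters}) by direct inspection of the explicit character formulas (\ref{char_R})--(\ref{char_T}) and the product structure of $\widehat{G}$. Your write-up simply makes explicit the two details the paper leaves implicit --- the factorization $\chi_\mu(g)=\prod_i\chi_{\mu_i}(g_i)$ and the harmlessness of the modular reduction in the $\Z_N$ and $\T$ dual labels --- both of which you handle correctly.
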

Both lemmas \ref{lemma:Pontryagin duality for characters} follow from inspection of the characters of $\mathbb{R}, \mathbb{Z}, \mathbb{T}$ and $\mathbb{Z}_{N}$ defined in (\ref{char_R})-(\ref{char_T}). The lemmas also reflect the strong duality between $G$ and $G^*$.

 Finally, the definition of every character function $\chi_{a}(b)$ as given in (\ref{char_R})-(\ref{char_T}), which is in principle defined for $a$ in  $\R, \Z_N, \Z, \T$  and $b$ in $\R, \Z_N, \T, \Z$, respectively, can be readily extended to the entire domain of real numbers, yielding functions $\chi_x(y)$ with $x ,y\in \R$.  
Consequently, the character functions (\ref{mu_character}) of general elementary Abelian groups $G=G_1\times \dots\times G_m$ can also be extended to a larger domain, giving rise to functions $\chi_x(y)$ where $x, y\in \R^m$. With this extended notion, we have the following basic property:

\begin{lemma}\label{thm_extended_characters}
Let $g\in G$ and $\mu\in G^*$. For every  $x, y\in \R^m$ such that $x\equiv g$ $(\mbox{mod } G)$ and $y\equiv \mu$ $(\mbox{mod } G^*)$, we have
\be
\chi_y(x)=\chi_{\mu}(g).
\ee
\end{lemma}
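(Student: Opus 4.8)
The plan is to reduce the claim to the primitive factors of $G$ and then verify it coordinatewise from the explicit character formulas. By the product structure of characters in (\ref{mu_character}), every extended character factorizes as $\chi_y(x)=\prod_{i=1}^m\chi_{y_i}(x_i)$, and likewise $\chi_\mu(g)=\prod_{i=1}^m\chi_{\mu_i}(g_i)$; moreover, by definition, the congruences $x\equiv g\ (\mathrm{mod}\ G)$ and $y\equiv\mu\ (\mathrm{mod}\ G^*)$ hold coordinate by coordinate. Hence it suffices to prove $\chi_{y_i}(x_i)=\chi_{\mu_i}(g_i)$ for each primitive factor $G_i\in\{\R,\Z_N,\Z,\T\}$ separately, and then multiply.

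The observation that unifies the four cases is that each extended character in (\ref{char_R})--(\ref{char_T}) can be written as $\chi_{y_i}(x_i)=\exp(2\pi i\,\kappa_i x_iy_i)$, where $\kappa_i=1/N$ when $G_i=\Z_N$ and $\kappa_i=1$ when $G_i\in\{\R,\Z,\T\}$. This form is manifestly \emph{bi-multiplicative} on the reals (additive in each argument inside the exponent), so I would introduce the real pairing $\langle x,y\rangle:=\sum_i\kappa_i x_iy_i$, giving $\chi_y(x)=\exp(2\pi i\langle x,y\rangle)$, and write $x=g+\delta$, $y=\mu+\varepsilon$ with $\delta\equiv 0\ (\mathrm{mod}\ G)$ and $\varepsilon\equiv 0\ (\mathrm{mod}\ G^*)$. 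Bilinearity then yields $\langle x,y\rangle-\langle g,\mu\rangle=\langle g,\varepsilon\rangle+\langle\delta,\mu\rangle+\langle\delta,\varepsilon\rangle$, so the claim reduces to showing these three correction terms are integers.

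Each correction is then checked coordinatewise against the characteristics. Writing $\delta_i=\lambda_i c_i$ and $\varepsilon_i=\lambda_i'c_i^*$ with $\lambda_i,\lambda_i'\in\Z$, where $c_i=\charac{G_i}$ and $c_i^*=\charac{G_i^*}$, the coordinate-$i$ contributions are $\kappa_i c_i^*\,g_i\lambda_i'$, $\kappa_i c_i\,\mu_i\lambda_i$ and $\kappa_i c_i c_i^*\,\lambda_i\lambda_i'$. One checks that $\kappa_i c_i$ and $\kappa_i c_i^*$ are always $0$ or $1$ (indeed $\kappa_i c_i=1$ exactly when $G_i\in\{\Z_N,\T\}$, and $\kappa_i c_i^*=1$ exactly when $G_i\in\{\Z_N,\Z\}$), while $\kappa_i c_i c_i^*$ is always an integer (equal to $N$ for $\Z_N$ and $0$ otherwise). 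Finally, in precisely those cases where $\kappa_i c_i=1$ the dual coordinate $\mu_i$ is an integer (its representative for $\Z_N$, and an element of $\Z$ for $\T$ since $\widehat{\T}\cong\Z$), and where $\kappa_i c_i^*=1$ the coordinate $g_i$ is an integer (for both $\Z_N$ and $\Z$). Hence all three terms lie in $\Z$, the phases cancel, and the primitive identity follows.

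I expect the bulk of the work to be pure bookkeeping: matching, for each of the four primitive groups, the characteristic of $G_i$, that of its dual $G_i^*$, and the normalization $\kappa_i$, so that every cross-term lands in $\Z$. The only genuinely delicate point should be the self-dual case $G_i=\Z_N$, where $c_i=c_i^*=N$ but $\kappa_i=1/N$: there the quadratic term $\kappa_i c_ic_i^*\lambda_i\lambda_i'=N\lambda_i\lambda_i'$ survives as a nonzero integer, and one must use $\kappa_i=1/N$ together with the integrality of the $\Z_N$-representatives to see it still contributes a trivial phase. The asymmetric pairs $\Z\leftrightarrow\T$ (where $c_i\neq c_i^*$) are the other spot to watch, but there one of $\delta_i,\varepsilon_i$ is automatically $0$ because the relevant characteristic vanishes, which trivializes two of the three corrections.
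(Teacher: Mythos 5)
Your proof is correct and follows essentially the same route as the paper, which simply remarks that the identity is verified for primitive groups and then extended multiplicatively to products. Your bilinear-pairing bookkeeping (writing $\chi_y(x)=\exp(2\pi i\sum_i\kappa_i x_i y_i)$ and checking that all cross-terms involving $\delta$ and $\varepsilon$ are integers via the characteristics $c_i$, $c_i^*$) is a clean way of carrying out exactly the case-by-case verification the paper leaves to the reader.
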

The proof is easily given for primitive groups, and then extended to general elementary Abelian groups.

\subsection{Simplifying characters via the bullet group}

Let $G= G_1\times \dots \times G_m$ be an elementary Abelian group where each $G_i$ is of primitive type. Recall that in section \ref{sect:characters} we have introduced the definition of the elementary Abelian group $G^*$, which is isomorphic to the character group of $G$. Here we define another group $G^{\bullet}$ (which is Abelian but not elementary), called the bullet group of $G$. The bullet group is isomorphic to $G^*$ (and hence also to the character group of $G$) and is mainly introduced for notational convenience, as working with $G^{\bullet}$ will turn out to simplify calculations. The bullet group is defined to be $G^\bullet= G_1^{\bullet}\times \dots\times G_m^{\bullet}$, where
\begin{align}
\Z_N^{\bullet} &:= \left\{ 0, \frac{1}{N}, \frac{2}{N}, \dots, \frac{N-1}{N}  \bmod{1}\right\},\notag\\
\R^{\bullet} &:= \R^*= \R; \quad \Z^{\bullet} := \Z^* = \T; \quad
\T^{\bullet} := \T^* = \Z.\label{eq:Bullet Group}
\end{align}
Thus the only difference between the groups $G^*$ and $G^\bullet$ is in the $\Z_{N_i}$ components. The groups $G^*$ and $G^\bullet$ are isomorphic via the ``bullet map''
\begin{equation}
\mu\in G^*\to \mu^\bullet:= (\mu_1^\bullet, \dots, \mu_m^\bullet) \in G^\bullet,
\end{equation}
where $\mu_i^\bullet:= \mu_i/N$ if $\mu_i\in \Z_N$ and $\mu_i^\bullet = \mu_i$ if $\mu_i$ belongs to either $\R, \Z$ or $\T$. The bullet map is easily seen to be an isomorphism.

One of the main purposes for introducing the bullet group is to simplify calculations with characters. In particular, for every $g\in G$ and $\mu\in G^*$ we have
\be\label{eq:definition of bullet map}
\chi_{\mu}(g)=\exp{\left(2\pii\, \sum_{i=1}^m \mu_i^\bullet g_i \right)}.
\ee

\subsection{Annihilators}\label{sect:Annihilators}

Let $G$ be an elementary Abelian group. Let $X$ be any subset of $G$. The\emph{ annihilator} $X^\perp$ is the subset
\begin{equation}
X^\perp:=\{\mu\in G^* :\: \chi_\mu(x)=1\textnormal{ for every }x\in X\}.
\end{equation}
We can define the annihilator $Y^\perp$ of a subset $Y\subseteq G^*$ analogously as \be Y^\perp:=\{x\in G :\: \chi_\mu(x)=1\textnormal{ for every }\mu\in Y\}.\ee By combining the two definitions it is possible to define  double annihilator sets $X^\Perp:= (X^{\perp})^{\perp}$, which is a subset of the initial group $G$, for every set $X\subseteq G$. Similarly,   $Y^{\Perp}\subseteq G^*$  for every $Y\subseteq G^*$. The following lemma states that $X$ and $X^\Perp$ are, in fact, identical sets \emph{iff} $X$ is closed as a set, and related to each other in full generality:
\begin{lemma}[\cite{Stroppel06_Locally_Compact_Groups}]\label{lemma:Annihilator properties}
Let $X$ be an arbitrary subset of an elementary Abelian group $G$. Then the following assertions hold:
\begin{enumerate}
\item[(a)] The annihilator $X^\perp$  is a closed subgroup of $G^*$ (and  $X^\Perp$ is a closed subgroup of $G$).
\item[(b)] $X^\Perp$  is the smallest closed subgroup of $G$ containing $X$.
\item[(c)] If $Y$ is a subset of $G$ such that $X\subseteq Y$ then $X^\perp \supseteq Y^\perp$ and $X^\Perp\subseteq Y^\Perp$.
\end{enumerate}
\end{lemma}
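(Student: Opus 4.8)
The plan is to establish (a) and (c) directly from the definitions and the elementary character identities of Section \ref{sect:characters}, and then to derive (b) by reducing everything to the single nontrivial fact that a closed subgroup equals its own double annihilator.

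First I would prove (a). For the algebraic part, fix $X$ and note that $0\in X^\perp$ because $\chi_0\equiv 1$, while Lemma \ref{lemma:Character Multiplication} gives $\chi_{\mu+\nu}(x)=\chi_\mu(x)\chi_\nu(x)$ and $\chi_{-\mu}(x)=\overline{\chi_\mu(x)}$, so $X^\perp$ is closed under addition and inversion and is therefore a subgroup of $G^*$. For topological closedness I would use Lemma \ref{lemma:Pontryagin duality for characters} to rewrite the defining condition as $\chi_x(\mu)=1$: since each $\chi_x$ is a continuous character of $G^*$, the set $\chi_x^{-1}(\{1\})$ is closed, and $X^\perp=\bigcap_{x\in X}\chi_x^{-1}(\{1\})$ is an intersection of closed sets, hence closed. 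The parenthetical claim that $X^\Perp$ is a closed subgroup of $G$ then follows by applying the same argument to the subset $X^\perp$ of $G^*$, using $(G^*)^*=G$.

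Next, (c) is purely formal: if $X\subseteq Y$ and $\mu\in Y^\perp$, then $\chi_\mu$ is trivial on all of $Y$ and in particular on $X$, so $\mu\in X^\perp$; thus $Y^\perp\subseteq X^\perp$. Applying this order-reversing property twice yields $X^\Perp\subseteq Y^\Perp$.

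Finally, for (b) I would first show $X\subseteq X^\Perp$: for $x\in X$ and any $\mu\in X^\perp$ one has $\chi_x(\mu)=\chi_\mu(x)=1$ by Lemma \ref{lemma:Pontryagin duality for characters}, so $x$ annihilates $X^\perp$ and therefore $x\in(X^\perp)^\perp=X^\Perp$. Combined with (a), $X^\Perp$ is a closed subgroup of $G$ containing $X$. For minimality, let $H$ be any closed subgroup with $X\subseteq H$; applying (c) twice gives $X^\Perp\subseteq H^\Perp$, so it suffices to prove $H^\Perp=H$ for closed subgroups $H$. The inclusion $H\subseteq H^\Perp$ is the argument just given, and the reverse inclusion $H^\Perp\subseteq H$ is the crux of the whole lemma and precisely where Pontryagin--Van Kampen duality enters: if $g\notin H$ then its image in the LCA quotient $G/H$ is nonzero, and because an LCA group has enough characters to separate its points there is a character of $G/H$ nontrivial on that image; pulling it back along the quotient map produces $\mu\in H^\perp$ with $\chi_\mu(g)\neq 1$, so $g\notin H^\Perp$. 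I expect this separation-of-points step --- equivalently, the identity $H^\Perp=H$ --- to be the only genuinely nontrivial ingredient, the rest being bookkeeping with the character identities. In the elementary setting one could alternatively verify $H^\Perp=H$ by hand from the explicit classification of closed subgroups of $\Z^a\times\R^b\times\T^c\times F$, but invoking duality is cleaner.
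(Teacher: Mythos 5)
Your argument is correct, but there is no internal proof to compare it against: the paper states this lemma with a citation to \cite{Stroppel06_Locally_Compact_Groups} and never proves it, so your write-up is, in effect, the textbook argument that the citation stands in for. Parts (a) and (c) are the routine steps you describe: the subgroup property follows from lemma \ref{lemma:Character Multiplication}, closedness from writing $X^\perp=\bigcap_{x\in X}\chi_x^{-1}(\{1\})$ as an intersection of preimages of the closed set $\{1\}$ under the continuous characters $\chi_x$ of $G^*$ (using lemma \ref{lemma:Pontryagin duality for characters}), and order-reversal is immediate from the definitions. You also correctly isolate the only substantive ingredient in (b): the identity $H^{\Perp}=H$ for a closed subgroup $H$, proved by separating any $g\notin H$ from $H$ with a character of the Hausdorff quotient $G/H$ and pulling back along the quotient map. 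Two details are worth pinning down to make this self-contained in the paper's framework. First, the pull-back $\psi\circ q$ is a continuous character of $G$ trivial on $H$, and to conclude that it equals $\chi_\mu$ for some $\mu\in H^\perp\subseteq G^*$ you need that \emph{every} continuous character of $G$ arises as $\chi_\mu$ with $\mu\in G^*$; this is exactly the parametrization $\widehat{G}=\{\chi_\mu:\mu\in G^*\}$ asserted in section \ref{sect:characters}. Second, the separation-of-points property of $G/H$ is the genuine Pontryagin--Van Kampen input, as you say; in the elementary setting it is also available concretely, since lemma \ref{lemma:Elementary_LCA_group property extends to subgroups, quotients, products} shows that $G/H$ is again elementary (being a quotient of an elementary group by a closed subgroup), so its characters are the explicit ones built from (\ref{char_R})--(\ref{char_T}) and visibly separate points. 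With those two references supplied, your proof is complete; what the paper's citation buys is simply not having to develop this duality machinery in-house.
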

We mention that in the quantum computation literature (see e.g. \cite{Brassard97_Exact_Simons_Algorithm,lomont_HSP_review,VDNest_12_QFTs,BermejoVega_12_GKTheorem}) the annihilator $H^\perp$ of a subgroup $H$ is more commonly known as the \emph{orthogonal subgroup} of $H$.

Let $\alpha:G\rightarrow H$ be a continuous group homomorphism between two elementary Abelian groups $G$ and $H$. Then \cite[proposition 30]{Morris77_Pontryagin_Duality_and_LCA_groups} there exists a unique continuous group homomorphism $\alpha^*:H^*\rightarrow G^*$, which we call the \textbf{dual homomorphism} of $\alpha$,  defined via the equation
\begin{equation}\label{eq:Dual Automorphism DEF}
\chi_{\alpha^*(\mu)}(g)=\chi_\mu(\alpha(g)).
\end{equation}
Note that $\alpha^{**}=\alpha$ by duality.

\section{Homomorphisms and matrix representations}\label{sect:Homomorphisms and matrix representations}

\subsection{Homomorphisms}\label{sect:Homomorphisms}
 
Let $G=G_1\times \ldots \times G_m$ and $H=H_1\times \ldots \times H_n$ be two elementary finite Abelian groups, where $G_i$, $H_j$ are primitive subgroups. As discussed in section \ref{sect:elementary_Abelian_groups}, we assume that the $G_i$ and $H_j$ are ordered so that $G=G_{\Z}\times G_{\R}\times G_{F}\times G_{\T}$
and $H=H_{\Z}\times H_{\R}\times H_{F}\times H_{\T}$.

Consider a continuous group homomorphism $\alpha: G\to H$. Let $\alpha_{\mathbb{Z}\mathbb{Z}}: {G_{\Z}}\to {G_{\Z}}$ be the map obtained by restricting the input and output of $\alpha$ to ${H_{\Z}}$. More precsiely, for $g\in G_{\Z}$ consider the map \be (g, 0, 0, 0)\in G \to \alpha(g, 0, 0, 0)\in H\ee and define $\alpha_{\mathbb{Z}\mathbb{Z}}(g)$ to be the $G_{\Z}$-component of $\alpha(g, 0, 0, 0)$. The resulting map $\alpha_{\mathbb{Z}\mathbb{Z}}$ is a continuous homomorphism from $\Z$ to $\Z$. Analogously, we define the continuous group homomorphisms $\alpha_{XY}: G_Y\rightarrow H_X$ with $X, Y = \Z, \R, \T, F$. It follows that, for any $g=(z, r, f, t)\in G$, we have
\be\label{eq:block decomp Homomorphism}
\alpha(g) = \begin{pmatrix}
      \alpha_{\Z\Z}(z) + \alpha_{\Z\R}(r) + \alpha_{\Z F}(f) + \alpha_{\Z\T}(t) \\
      \alpha_{\R\Z}(z) + \alpha_{\R\R}(r) + \alpha_{\R F}(f) + \alpha_{\R\T}(t) \\
      \alpha_{F\Z}(z) + \alpha_{F\R}(r) + \alpha_{FF}(f) + \alpha_{F\T}(t) \\
      \alpha_{\T\Z}(z) + \alpha_{\T\R}(r) + \alpha_{\T F}(f) + \alpha_{\T\T}(t)
    \end{pmatrix} \leftrightarrow \begin{pmatrix}
      \alpha_{\Z\Z} & \alpha_{\Z\R} & \alpha_{\Z F} & \alpha_{\Z\T} \\
      \alpha_{\R\Z} & \alpha_{\R\R} & \alpha_{\R F} & \alpha_{\R\T} \\
      \alpha_{F\Z} & \alpha_{F\R} & \alpha_{FF} & \alpha_{F\T} \\
      \alpha_{\T\Z} & \alpha_{\T\R} & \alpha_{\T F} & \alpha_{\T\T}
    \end{pmatrix}
    \begin{pmatrix}
    z\\ r\\f\\t
    \end{pmatrix}
\ee
 
$\alpha$ is therefore naturally identified with the the $4\times 4$ ``matrix of maps'' given in the r.h.s of (\ref{eq:block decomp Homomorphism}).

The following lemma (see e.g. \cite{PrasadVemuri08_classification_heisenberg_groupss} for a proof) shows that homomorphisms between elementary Abelian groups must have a particular block structure.
\begin{lemma}\label{corollary:block-structure of Group Homomorphisms} Let $\alpha: G\to H$ be a continuous group homomorphism. Then $\alpha$ has the following block structure
\begin{equation}\label{eq:block decomposition of a homomorphism}
    \alpha \leftrightarrow
    \begin{pmatrix}
      \alpha_{\Z\Z} & \mathpzc{0}  &\mathpzc{0}  &  \mathpzc{0} \\
      \alpha_{\R\Z} & \alpha_{\R\R} &  \mathpzc{0} &  \mathpzc{0} \\
      \alpha_{F\Z} &  \mathpzc{0} & \alpha_{FF} &  \mathpzc{0} \\
      \alpha_{\T\Z} & \alpha_{\T\R} & \alpha_{\T F} & \alpha_{\T\T}
    \end{pmatrix}
\end{equation}
where $\mathpzc{0}$ denotes the trivial group homomorphism.
\end{lemma}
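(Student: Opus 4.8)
The plan is to read off from the matrix in (\ref{eq:block decomposition of a homomorphism}) exactly which blocks must vanish and prove each vanishing statement separately, using the grouped subgroups directly rather than descending to individual primitive factors. Recall that $\alpha_{XY}$ denotes the component $G_Y\to H_X$ (codomain index $X$, domain index $Y$), and that $G_\Z=\Z^a$, $G_\R=\R^b$, $G_F=F$, $G_\T=\T^d$ (and likewise for $H$). The seven entries asserted to be $\mathpzc{0}$ are precisely the continuous homomorphisms $\R^b\to\Z^a$, $F\to\Z^a$, $\T^d\to\Z^a$, $F\to\R^b$, $\T^d\to\R^b$, $\R^b\to F$ and $\T^d\to F$. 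I would establish that each of these is trivial by invoking three elementary structural principles; the block decomposition (\ref{eq:block decomp Homomorphism}) then immediately gives the stated form. (One could first split each block into homomorphisms between single primitive constituents, but this is unnecessary: the arguments below apply verbatim to the grouped subgroups.)

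First I would use \emph{connectedness}: the continuous image of a connected group is connected, while $\Z^a$ and $F$ are discrete (totally disconnected), so any connected subgroup of either is $\{0\}$. Since $G_\R=\R^b$ and $G_\T=\T^d$ are connected, this forces the four blocks with connected domain and discrete codomain to vanish, namely $\alpha_{\Z\R}$, $\alpha_{\Z\T}$, $\alpha_{F\R}$ and $\alpha_{F\T}$ (i.e.\ $\R^b\to\Z^a$, $\T^d\to\Z^a$, $\R^b\to F$, $\T^d\to F$).

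Next I would use \emph{torsion}: a homomorphism sends an element of finite order to an element of finite order, and both $\Z^a$ and $\R^b$ are torsion-free, whereas every element of the finite group $F$ has finite order. Hence every homomorphism out of $G_F=F$ into $H_\Z$ or $H_\R$ is trivial, which disposes of $\alpha_{\Z F}$ and $\alpha_{\R F}$.

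The remaining block $\alpha_{\R\T}$ ($\T^d\to\R^b$) is the one that resists both preceding arguments: $\R^b$ is itself connected, so connectedness yields nothing, and $\T^d$ is divisible as well as torsion, so the torsion argument gives nothing directly. Here I would appeal to \emph{compactness}: the continuous image of the compact group $\T^d$ is a compact subgroup of $\R^b$, but $\R^b$ has no nontrivial compact subgroup (every closed subgroup of $\R^b$ is isomorphic to $\R^k\times\Z^l$, which is compact only when $k=l=0$), so the image is $\{0\}$. This compactness step — needing a genuinely different idea than connectedness or torsion — is the main subtlety of the argument. As a sanity check, all remaining blocks are realizable nontrivially (e.g.\ $\Z\to\T$ via $n\mapsto n\theta$, $\R\to\T$ via reduction mod $1$, $\Z_N\to\T$ via $k\mapsto k/N$), consistent with their appearance in (\ref{eq:block decomposition of a homomorphism}); alternatively, one could verify all seven cases by direct computation from the explicit character formulas (\ref{char_R})--(\ref{char_T}), but the structural argument above is cleaner and independent of the exponents $a,b,c,d$.
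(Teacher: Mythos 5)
Your proof is correct and complete. The seven blocks forced to vanish are exactly the continuous homomorphisms you list, and each of your three structural principles is applied to the right cases: connectedness kills the four maps with connected domain ($\R^b$ or $\T^d$) and discrete codomain ($\Z^a$ or $F$); the torsion argument (which needs no continuity at all) kills the two maps out of $F$ into the torsion-free groups $\Z^a$ and $\R^b$; and you correctly single out $\alpha_{\R\T}\colon\T^d\to\R^b$ as the one case that resists both, disposing of it by compactness of $\T^d$ together with the absence of nontrivial compact subgroups of $\R^b$. You also implicitly use, correctly, that each $\alpha_{XY}=\pi_X\circ\alpha\circ\iota_Y$ is itself a continuous homomorphism, so the structural arguments apply to the grouped blocks without descending to primitive factors. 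Your route is genuinely different from the paper's only in the sense that the paper gives no proof at all: it delegates the lemma to the reference \cite{PrasadVemuri08_classification_heisenberg_groupss}. What each approach buys: the citation keeps the paper lean, whereas your argument is self-contained, elementary, and makes transparent exactly which topological or algebraic obstruction annihilates each block (and your closing realizability check confirms that no further zeros could be forced). One cosmetic simplification: in the compactness step you do not need the classification of closed subgroups of $\R^b$ as $\R^k\times\Z^l$; it suffices to observe that a compact subgroup is bounded, while any nonzero $x\in\R^b$ generates the unbounded set $\left\{nx : n\in\Z\right\}$.
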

 
The lemma shows, in particular, that there are no non-trivial continuous group homomorphisms between certain pairs of primitive groups: for instance, continuous groups cannot be mapped into discrete ones, nor can finite groups be mapped into zero-characteristic groups.

\subsection{Matrix representations}\label{sect:Matrix Representations}

\begin{definition}
[\textbf{Matrix representation}]\label{def:Matrix representation}
Consider elementary Abelian groups $G= G_1\times\dots \times G_m$ and $H= H_1\times\dots \times H_n$ and a group homomorphism $\alpha:G\rightarrow H$. A \emph{matrix representation of $\alpha$} is an $n\times m$ real matrix  $A$ satisfying the following property:
 \be\label{def_matrix_rep} \alpha (g)\equiv Ax \ (\mbox{mod } H) \quad \mbox{ for every } g\in G \mbox{ and } x\in \R^m \mbox{ satisfying } x\equiv g \ (\mbox{mod } G)\ee
Conversely, a real $n\times m$ matrix $A$ is said to define a group homomorphism if there exists a group homomorphism $\alpha$ satisfying (\ref{def_matrix_rep}).
\end{definition}
It is important to highlight that in the definition of matrix representation we impose that the identity $\alpha(g)= Ax \pmod{H}$ holds in a very general sense: the output of the map must be equal for inputs $x,\,x'$ that are \emph{different} as strings of real numbers but correspond to the \emph{same} group element $g$ in the group $G$. In particular, all strings that are congruent to zero in $G$ must be mapped to strings congruent to zero in $H$. Though these requirements are (of course) irrelevant when we only consider groups of zero characteristic (like $\Z$ or $\R$), they are crucial when quotient groups are involved (such as $\Z_N$ or $\T$).

As a simple example of a matrix representation, we consider the bullet map\footnote{Strictly speaking, definition \ref{def:Matrix representation} cannot be applied to the bullet map, since $G^\bullet$ is not an elementary Abelian group. However the definition is straightforwardly extended to remedy this.}, which is an isomorphism from $G^*$ to $G^\bullet$ . Define the diagonal $m\times m$ matrix $\Upsilon$ with diagonal entries defined as
\begin{equation}\label{eq:matrix representation of bullet isomorphism}
\Upsilon(i,i)=
\begin{cases}
    1/N_i &\textrm{if ${G_i}= \Z_{N_i}$ for some $N_i$}, \\
   1  &\textrm{otherwise}.
\end{cases}
\end{equation}
It is easily verified that $\Upsilon$ satisfies the following property: for every $\mu\in G^*$ and $x\in \R^m$ satisfying $x\equiv \mu $ $(\mbox{mod } G^*)$, we have
\begin{equation}\label{eq:Upsilon is a Mat Rep of Bullet}
\mu^\bullet\equiv\Upsilon x \pmod{G^\bullet}.
\end{equation}
Note that, with the definition of $\Upsilon$, equation (\ref{eq:definition of bullet map}) implies
\begin{equation}\label{eq:definition of bullet map2}
\chi_{\mu}(g)=\exp{\left(2\pii\, \sum_{i=1}^m \mu^\bullet(i) g(i) \right)} = \exp{\left(2\pii\, \mu^T \Upsilon g\right)}.
\end{equation}
Looking at equation  (\ref{eq:Upsilon is a Mat Rep of Bullet}) coefficient-wise, we obtain a relationship  
$\mu^\bullet(i)\equiv\frac{x(i)}{N_i}\pmod{1}$ for each factor $G_i$ of the form $\Z_{N_i}$; other factors are left unaffected by the bullet map. From this expression it is easy to derive that  $\Upsilon^{-1}$ is a matrix representation of the inverse of the bullet map\footnote{We ought to highlight that the latter is by no means a general property of matrix representations. In fact, in many cases, the matrix-inverse $A^{-1}$ (if it exists) of a matrix representation $A$ of  a group isomorphism is not a valid matrix representation of a group homomorphism. (This happens, for instance, for all group automorphisms of the group $\Z_N$ that are different from the identity.)  In lemma \ref{lemma:Normal form of a matrix representation} we characterize which matrices are valid matrix representations. Also, in section \ref{sect:Computin Inverses} we discuss the problem of computing  matrix representations of  group automorphisms.}, i.e.\ the group isomorphism $\mu^\bullet\rightarrow\mu \pmod{G^*}$.

The next lemma (see appendix \ref{appendix:Supplement to section Homomorphisms} for a proof) summarizes some useful properties of matrix representations.
\begin{lemma}[\textbf{Properties of matrix representations}]\label{lemma:properties of matrix representations}

Let $G$, $H$, $J$ be elementary  Abelian groups, and $\alpha :G\rightarrow H$ and $\beta: H\rightarrow J$ be group homomorphisms with matrix representations $A,\, B$, respectively. Then it holds that
\begin{itemize*}
\item[(a)] $BA$ is a matrix representation of the composed homomorphism $\beta\circ\alpha$;
\item[(b)] The matrix $A^*:=\Upsilon_{G}^{- 1} A^\transpose \, \Upsilon_H $ is a matrix representation of the dual homomorphism  $\alpha^*$, where  $\Upsilon_X$ denotes the matrix representation of the bullet map $X^*\rightarrow X^\bullet$.
\end{itemize*}
\end{lemma}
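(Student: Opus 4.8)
The plan is to prove the two parts separately, in both cases reducing everything to the defining congruence of Definition~\ref{def:Matrix representation} together with the character formula~(\ref{eq:definition of bullet map2}).

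For part (a) I would argue directly from the definition, the only subtlety being that a matrix representation is required to reproduce $\alpha$ on \emph{all} real representatives, not merely canonical ones. Fix $g\in G$ and any $x\in\R^m$ with $x\equiv g\pmod G$. Since $A$ represents $\alpha$, we have $\alpha(g)\equiv Ax\pmod H$; that is, the real vector $Ax$ is itself a legitimate representative of the group element $\alpha(g)\in H$. I would then feed this representative into the matrix-representation property of $B$: applying (\ref{def_matrix_rep}) to the element $h=\alpha(g)$ and its representative $y=Ax$ yields $\beta(\alpha(g))\equiv B(Ax)=(BA)x\pmod J$. As $g$ and $x$ were arbitrary, this is exactly the statement that $BA$ represents $\beta\circ\alpha$. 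The crucial (and only) point is that the definition tolerates the non-integral representative $Ax$, which is precisely what makes composition of representations behave like matrix multiplication.

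For part (b) I would work at the level of characters. Rewriting (\ref{eq:definition of bullet map2}) for the groups $G$ and $H$ gives the extended-character identity $\chi_y(x)=\exp(2\pi i\, y^{\transpose}\Upsilon x)$, valid for arbitrary real representatives by Lemma~\ref{thm_extended_characters}. Fix $\mu\in H^*$ with representative $y\equiv\mu\pmod{H^*}$, and fix $g\in G$ with representative $x\equiv g\pmod G$. Starting from the defining equation (\ref{eq:Dual Automorphism DEF}), $\chi_{\alpha^*(\mu)}(g)=\chi_\mu(\alpha(g))$, I would substitute $Ax$ for $\alpha(g)$ (legitimate since $Ax\equiv\alpha(g)\pmod H$, exactly as in part (a)) to obtain $\chi_\mu(\alpha(g))=\exp(2\pi i\, y^{\transpose}\Upsilon_H A x)$. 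On the other hand, the candidate $A^*y=\Upsilon_G^{-1}A^{\transpose}\Upsilon_H\, y$ produces $\chi_{A^*y}(g)=\exp\bigl(2\pi i\,(A^*y)^{\transpose}\Upsilon_G x\bigr)$. Using that $\Upsilon_G,\Upsilon_H$ are diagonal (hence symmetric), one computes $(A^*)^{\transpose}\Upsilon_G=\Upsilon_H A\,\Upsilon_G^{-1}\Upsilon_G=\Upsilon_H A$, so the two exponentials coincide and $\chi_{A^*y}(g)=\chi_{\alpha^*(\mu)}(g)$ for \emph{every} $g\in G$.

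It then remains to convert this equality of characters into the congruence $A^*y\equiv\alpha^*(\mu)\pmod{G^*}$ demanded by Definition~\ref{def:Matrix representation}. Here I would invoke the non-degeneracy of the Pontryagin pairing: if two elements of $G^*$ induce the same character on all of $G$, their difference annihilates $G$ and hence lies in $G^\perp=\{0\}$ (equivalently, this is the uniqueness clause accompanying (\ref{eq:Dual Automorphism DEF})). This gives $\alpha^*(\mu)\equiv A^*y\pmod{G^*}$ for every $\mu$ and every representative $y$, which is exactly the claim. I expect the main obstacle to be the careful bookkeeping of the three distinct congruences (mod $G$, mod $H$, mod $G^*$) and, in particular, justifying that the real vectors $Ax$ and $A^*y$ — which need not be canonical representatives — may be inserted into the extended-character formula; Lemma~\ref{thm_extended_characters} is precisely the result that licenses this, and the diagonal (hence symmetric) structure of the bullet matrices $\Upsilon$ is what makes the transpose defining $A^*$ land correctly.
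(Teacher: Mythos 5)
Your proposal is correct and follows essentially the same route as the paper: part (a) rests, exactly as in the paper's proof, on the fact that Definition \ref{def:Matrix representation} tolerates the non-canonical representative $Ax$ of $\alpha(g)$, and part (b) is the same bullet-map computation (the identity $\Upsilon_G A^* = A^{\transpose}\Upsilon_H$ combined with Lemma \ref{thm_extended_characters}), which the paper merely arranges differently by first recognizing $\chi_\mu\circ\alpha$ as a continuous character $\chi_\nu$ with $\nu^\bullet \equiv A^{\transpose}\mu^\bullet \pmod{G^\bullet}$ and then solving for $\nu=\Upsilon_G^{-1}A^{\transpose}\Upsilon_H\mu$. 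The only step you leave informal — passing from equality of the extended characters $\chi_{A^*y}$ and $\chi_{\alpha^*(\mu)}$ on all of $G$ to the congruence $A^*y\equiv\alpha^*(\mu)\pmod{G^*}$, where $A^*y$ is a priori only a real vector rather than an element of $G^*$ — is a routine componentwise verification, and the paper's own proof makes the identical leap when it matches labels modulo $G^\bullet$.
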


As before, let $G=G_1\times\dots\times G_m$ be an elementary Abelian group with each $G_i$ of primitive type. Let \be e_i= (0, \dots, 0, 1, 0, \dots, 0)\ee denote the $i$-th canonical basis vector  of $\mathbb{R}^m$. If we regard $g\in G$ as an element of $\R^m$, we may write $g= \sum g(i) e_i$. Note however that $e_i$ may not belong to $G$ itself. In particular, if $G_i = \T$ then $e_i\notin \T$ (since $1\notin \T$ in the representation we use, i.e.\ $\T= [0, 1)$).

\begin{lemma}[\textbf{Existence of matrix representations}]\label{lemma:existence of matrix representations} Every group homomorphism $\alpha: G\rightarrow H$ has a matrix representation $A$. As a direct consequence, we have $\alpha(g)\equiv \sum_i g(i)Ae_i$ $(\mbox{mod }H)$, for every $g=\sum_i g(i)e_i\in G$.
\end{lemma}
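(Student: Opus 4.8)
The plan is to prove existence factor-by-factor, reducing the general statement to the primitive blocks already classified by Lemma \ref{corollary:block-structure of Group Homomorphisms}. First I would observe that it suffices to construct a matrix representation for each individual block map $\alpha_{XY}: G_Y \to H_X$ appearing in the block decomposition (\ref{eq:block decomposition of a homomorphism}), since by Lemma \ref{lemma:properties of matrix representations}(a) and the additivity inherent in (\ref{def_matrix_rep}) the block matrices can be assembled into a single $n\times m$ matrix $A$ representing all of $\alpha$. Because Lemma \ref{corollary:block-structure of Group Homomorphisms} already tells us which blocks vanish identically (those are trivially represented by the zero matrix), I only need to handle the surviving block types: the diagonal maps $\alpha_{\Z\Z}, \alpha_{\R\R}, \alpha_{FF}, \alpha_{\T\T}$ and the off-diagonal survivors $\alpha_{\R\Z}, \alpha_{F\Z}, \alpha_{\T\Z}, \alpha_{\T\R}, \alpha_{\T F}$.

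Next I would explicitly classify the continuous homomorphisms between each relevant pair of primitive groups and read off the scalar entry. For zero-characteristic sources the constraint (\ref{def_matrix_rep}) is easy, since no nontrivial congruences are imposed on the input: a continuous homomorphism $\R\to\R$ is multiplication by a real scalar, $\Z\to\R$ sends $1$ to an arbitrary real, and so on. The delicate cases are those where quotient groups appear in the \emph{source}, namely $F=\Z_N$ and $\T$. Here I must check not merely that a scalar $c$ reproduces $\alpha$ on representatives, but that $c$ respects \emph{every} congruence: for $G_i=\Z_N$ this means $c\cdot N$ must be congruent to $0$ in the target group, and for $G_i=\T$ it means $c\cdot 1$ must be congruent to $0$ in the target. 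For a map $\T\to\T$ continuity plus this periodicity forces the scalar to be an integer (the winding number); for $\Z_N\to\Z_M$ one checks that the representing scalar, appropriately scaled by the bullet convention, lies in $\tfrac{1}{N}\Z$; for $\Z_N\to\T$ one similarly finds the entry in $\tfrac{1}{N}\Z$. In each case the existence of such a scalar follows from the standard structure of characters/homomorphisms of primitive groups, already implicit in the character formulas (\ref{char_R})--(\ref{char_T}).

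Having represented each block, I would assemble $A$ and verify the global congruence (\ref{def_matrix_rep}): for any $g\in G$ and any real lift $x\equiv g \pmod G$, the vector $Ax$ agrees with $\alpha(g)$ modulo $H$. This follows block-wise from the constructions above together with the observation that changing $x$ by an element congruent to $0$ in $G$ (i.e.\ adding integer multiples of the characteristics $c_i$) changes $Ax$ only by something congruent to $0$ in $H$—precisely the periodicity conditions I imposed when choosing each scalar entry. Finally, the displayed consequence $\alpha(g)\equiv \sum_i g(i) A e_i \pmod H$ is immediate: writing $g=\sum_i g(i)e_i$ as a real vector and applying linearity of matrix multiplication gives $Ax=\sum_i g(i)Ae_i$, and (\ref{def_matrix_rep}) converts this into the stated congruence. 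I expect the main obstacle to be the careful bookkeeping of the periodicity constraints in the quotient-source cases ($\Z_N$ and $\T$ as domains), where one must confirm that a representing scalar exists satisfying the congruence \emph{uniformly} over all real lifts rather than merely on a chosen set of representatives; this is exactly the subtlety flagged in the remark following Definition \ref{def:Matrix representation}.
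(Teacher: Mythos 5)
Your proof is correct, and it shares the paper's top-level skeleton---decompose $\alpha$ into primitive blocks $\alpha_{XY}$ via lemma \ref{corollary:block-structure of Group Homomorphisms}, represent each block, then assemble the block matrix---but your block-level treatment is genuinely different. The paper dispatches all nonvanishing blocks with three structural tools: for finitely generated sources ($\Z$ or $F$) it simply takes the $j$th column of $A_{XY}$ to be $\alpha(e_j)$; for targets of type $\T$ or $F$ it \emph{dualizes}, using lemma \ref{lemma:properties of matrix representations}(b) together with the fact that $\T^*\cong\Z$ and $F^*\cong F$ are finitely generated, so the dual map falls under the first case; and for $\alpha_{\R\R}$ it derives $\R$-linearity from continuity and additivity via density of the rationals. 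You never invoke duality: instead you explicitly classify all nine surviving primitive-pair homomorphisms, importing the character formulas (\ref{char_R})--(\ref{char_T}) for the $\T$-valued cases and verifying by hand the periodicity constraints (each column must satisfy $c_j\, Ae_j \equiv 0$ in the target, $c_j$ the characteristic of $G_j$). What your route buys is explicitness---it avoids the dual-homomorphism machinery entirely and in effect anticipates the normal form of lemma \ref{lemma:Normal form of a matrix representation}, which the paper proves separately by just this sort of case analysis; what it costs is length, plus having to assert as standard the Cauchy-functional-equation fact that continuous additive maps $\R\to\R$ are scalar multiplications, which the paper proves in two lines. Two minor points: the assembly step follows from coordinate-wise additivity of congruences (i.e.\ from the block decomposition itself), not from lemma \ref{lemma:properties of matrix representations}(a), which concerns composition of homomorphisms; and your description of the $\Z_N\to\Z_M$ entry as lying in $\tfrac{1}{N}\Z$ after bullet rescaling is correct but is just an indirect restatement of the divisibility condition $N\,A(i,j)\equiv 0 \pmod{M}$.
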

The last property of lemma \ref{lemma:existence of matrix representations} is remarkable, since the coefficients $g(i)$ are real numbers when $G_i$ is of the types $\R$ and $\T$. We give a proof of the lemma in appendix \ref{appendix:Supplement to section Homomorphisms}.

We finish this section by giving a normal form for matrix representations and characterizing  which types of matrices constitute valid matrix representations as in definition \ref{def:Matrix representation}.
\begin{lemma}[\textbf{Normal form of a matrix representation}]\label{lemma:Normal form of a matrix representation}
Let $G = G_1\times\dots\times G_m$ and $H=H_1\times \dots\times H_n$ be elementary Abelian groups. Let $c_j, c_j^*, d_i$ and $d_i^*$  denote the characteristic of $G_j, G_j^*, H_i$ and $H_i^*$, respectively. Define $\mathbf{Rep}$ to be the subgroup of all $n\times m$   real matrices  that have integer coefficients in those rows $i$ for which $H_i$ has the form  $\Z$ or $\Z_{d_i}$. A real $n\times m$ matrix $A$ is a valid matrix representation of some group homomorphism $\alpha: G\rightarrow H$ iff  $A$ is an element of $\mathbf{Rep}$ fulfilling two (dual) sets of consistency conditions:
\begin{align}\label{eq:Consistency Conditions Homomorphism}
c_j A(i,j) = 0 \mod d_i,\qquad {d}_i^* A^*(i,j) = 0 \mod c^*_j, 
\end{align}
for every $i=1,\ldots,n$, $j=1,\ldots, m$, and being $A^*$  the $m\times n$ matrix defined in lemma \ref{lemma:properties of matrix representations}(b). 
Equivalently, $A$ must be of the form 
\begin{equation}\label{eq:block-structure of matrix representations}
A:= \begin{pmatrix}
      A_{\Z\Z} & 0 & 0 & 0 \\
      A_{\R\Z} & A_{\R\R} & 0 & 0 \\
      A_{F\Z} & 0 & A_{FF} & 0 \\
      A_{\T\Z} & A_{\T\R} & A_{\T F} & A_{\T\T}.
    \end{pmatrix}
\end{equation}
with the following restrictions:
  \begin{enumerate}
  \item $A_{\Z\Z}$ and $A_{\T\T}$ are arbitrary  integer matrices.
  \item $A_{\R\Z}$, $A_{\R\R}$ are arbitrary  real matrices.
  \item $A_{F\Z}$, $A_{FF}$ are integer matrices: the first can be arbitrary; the coefficients of the second must be of the form \begin{equation}\label{eq:coefficients of Matrix Rep for nonzero characteristic groups}
  A(i,j)= \alpha_{i,j}\, \frac{d_i}{\gcd{(d_i, c_j)}}
  \end{equation}
  where  $\alpha_{i,j}$ can be arbitrary integers\footnote{Since $A_{F\Z}$, $A_{FF}$ multiply integer tuples and output integer tuples modulo $F=\DProd{N}{c}$, for some $N_i$s, the coefficients of their $i$th rows can be chosen w.l.o.g. to lie in the range $[0,N_i)$ (by taking remainders).}.
  \item $A_{\T\Z}$, $A_{\T\R}$ and $A_{\T F}$ are real matrices: the first two are arbitrary; the coefficients of the third are of the form  $A(i,j)= \alpha_{i,j}/c_j$ where  $\alpha_{i,j}$ can be arbitrary integers\footnote{Due to the periodicity of the torus, the coefficients of $A_{\T\Z}$, $A_{\T F}$ can be chosen to lie in the range  $[0,1)$.}.
  \end{enumerate}
\end{lemma}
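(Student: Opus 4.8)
The plan is to reduce the statement to two independent requirements on the real matrix $A$, which I will call \emph{well-definedness} and \emph{output-validity}, and then to match these against the first and second conditions in (\ref{eq:Consistency Conditions Homomorphism}). Recall from definition \ref{def:Matrix representation} that $A$ represents a homomorphism exactly when the assignment $g\mapsto Ax \bmod H$ (for any real lift $x\equiv g \pmod G$) is (i) independent of the chosen lift $x$, and (ii) produces a string that is a legitimate representative of an element of $H$. Since $A$ is linear, once (i) and (ii) hold the induced map is automatically additive and continuous, hence a continuous homomorphism; so (i) and (ii) are all that must be checked, and the induced $\alpha$ is unique.

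First I would treat well-definedness. Two lifts satisfy $x\equiv x' \pmod G$ iff $x-x'$ is a real representative of $0\in G$, and the set of such representatives is exactly the subgroup generated by the vectors $c_j e_j$ (with the convention $c_j e_j = 0$ when $\charac{G_j}=0$). Thus (i) holds iff $A(c_j e_j)\equiv 0 \pmod H$ for every $j$, which, read coordinate by coordinate and using the convention that congruence mod $0$ means equality, is precisely $c_j A(i,j)\equiv 0 \pmod{d_i}$ for all $i,j$ --- the first consistency condition.

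Next, output-validity: the image $Ax$ must have integer entries in every row $i$ with $H_i$ of type $\Z$ or $\Z_{d_i}$ (continuous output rows impose nothing), for every admissible lift, where $x_j\in\Z$ on discrete input factors and $x_j\in\R$ on continuous ones. Testing lifts supported on a single continuous coordinate forces $A(i,j)=0$ whenever $G_j$ is continuous and $H_i$ discrete, while testing integer lifts forces $A(i,j)\in\Z$ on the discrete-input/discrete-output positions; together these give membership in $\mathbf{Rep}$ plus the vanishing of the continuous-to-discrete coefficients. The clean way to package this is duality: by lemma \ref{lemma:properties of matrix representations}(b) the matrix $A^*=\Upsilon_G^{-1}A^\transpose\Upsilon_H$ represents $\alpha^*$, and since $\alpha^{**}=\alpha$, $A$ is a valid representation iff $A^*$ is; output-validity of $\alpha$ is then precisely the well-definedness of $\alpha^*$, which by the previous paragraph applied to $A^*$ reads $d_i^* A^*(i,j)\equiv 0 \pmod{c_j^*}$ --- the second condition. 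I expect this to be the main obstacle, since the reverse implication requires verifying directly, with the $\Upsilon$-scalings of $A^*$ tracked carefully, that $\mathbf{Rep}$ together with the second condition really does reproduce output-validity (and does not over-constrain the continuous output rows), rather than relying on the abstract duality alone.

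Finally, to obtain the explicit block normal form (\ref{eq:block-structure of matrix representations}) I would run through the sixteen possibilities for the primitive types of $(G_j,H_i)$ and read off the admissible coefficients from the two conditions together with $A\in\mathbf{Rep}$. The continuous-to-discrete and finite-to-zero-characteristic pairs yield the zero blocks of lemma \ref{corollary:block-structure of Group Homomorphisms}; the $(\Z,\Z)$ and $(\T,\T)$ pairs give arbitrary integers; the $\R$-output pairs and the $\Z/\R$-input-to-$\T$ pairs give arbitrary reals; the $\Z_M\to\Z_N$ pair combines $c_j A(i,j)\equiv 0\pmod{d_i}$ with integrality to give $A(i,j)=\alpha_{ij}\,d_i/\gcd(d_i,c_j)$ as in (\ref{eq:coefficients of Matrix Rep for nonzero characteristic groups}); and the $\Z_M\to\T$ pair gives $A(i,j)=\alpha_{ij}/c_j$. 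Collecting these cases reproduces all four stated restrictions and thereby completes the characterization.
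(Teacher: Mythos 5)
Your proof is correct, and it takes a genuinely different route from the paper's, most visibly in the sufficiency direction. The paper establishes necessity essentially as you do --- evaluating $A$ on the basis vectors $e_j$ and on the congruent-to-zero vectors $c_je_j$ and $d_i^*e_i$, the latter through the dual matrix $A^*$ of lemma \ref{lemma:properties of matrix representations}(b) --- but it derives the zero blocks of (\ref{eq:block-structure of matrix representations}) by citing the structure lemma \ref{corollary:block-structure of Group Homomorphisms}, and it proves sufficiency block-by-block, invoking lemma 2 of \cite{BermejoVega_12_GKTheorem} (and a generalization of it) for the quotient-group cases $A_{FF}$, $A_{\T F}$, $A_{\T\T}$. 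You instead characterize ``$A$ is a matrix representation'' intrinsically by two requirements, lift-independence and integrality of the discrete output rows, identify these with the first and second consistency conditions of (\ref{eq:Consistency Conditions Homomorphism}) respectively, and then sufficiency comes for free, since any matrix meeting both requirements induces an additive continuous map; no external lemmas are needed. What each buys: your argument is self-contained and makes the logical role of each condition transparent (first condition $=$ well-definedness, second condition $=$ output-validity), while the paper's is shorter given the results it can cite. Your caution about the duality packaging is also well placed: lemma \ref{lemma:properties of matrix representations}(b) presupposes that $A$ already represents a homomorphism, so in the converse direction one must check directly, by tracking the $\Upsilon$ scalings in $A^*=\Upsilon_G^{-1}A^\transpose\Upsilon_H$, that the second condition unpacks entry-wise to exactly output-validity --- integer entries where both $H_i$ and $G_j$ are discrete, zero entries where $H_i$ is discrete and $G_j$ is continuous, and no constraint on continuous output rows. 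That computation does work out, so your plan closes the loop.

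One minor slip in your closing case analysis: the zero block with input $\T$ and output $\R$ falls into neither of your two categories (it is not continuous-to-discrete, and $\T$ is not finite). It is nonetheless forced to vanish by your first condition, since $c_j=1$ and $d_i=0$ give $A(i,j)=0$, so the sixteen-case sweep you describe still produces it; only the labelling should read ``nonzero-characteristic to zero-characteristic'' rather than ``finite to zero-characteristic''.
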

The result is proven in appendix \ref{appendix:Supplement to section Homomorphisms}.

\section{Systems of linear equations over Abelian groups}\label{sect:Systems of linear equations over groups}

Let $\alpha:G\rightarrow H$ be a  continuous group homomorphism between  elementary Abelian groups $G$, $H$ and let $A$ be a rational matrix representation of $\alpha$. We consider systems of equations of the form
\begin{equation}\label{eq:Systems of linear equations over groups}
\alpha(x)\equiv Ax\equiv b\pmod{H},\quad \textnormal{where } x\in G,
\end{equation}
which we dub \emph{systems of linear equations over (elementary) Abelian groups}.  In this section we develop algorithms to find solutions of such systems. 

Systems of linear equations over  Abelian groups form a large class of problems, containing, as particular instances, standard systems of linear equations over real vectors spaces,
\begin{equation}\label{eq:Linear System Ax=b over R}
\mathbf{A}\mathbf{x}=\mathbf{b},\quad \mathbf{A}\in\R^{n\times m},\, \mathbf{x}\in \R^m , \mathbf{b}\in \R^n,
\end{equation}
as well as systems of linear equations over other types of vector spaces, such as $\Z_2^n$, e.g.\
\begin{equation}\label{eq:Linear System Ax=b over Z_2}
\mathbf{B}\mathbf{y}=\mathbf{c},\quad \mathbf{B}\in\Z_2^{n\times m},\, \mathbf{y}\in \Z_2^m , \mathbf{c}\in \Z_2^n.
\end{equation}
In (\ref{eq:Linear System Ax=b over R}) the matrix $\mathbf{A}$ defines a linear map from $\R^m$ to $\R^n$, i.e.\ a map that fulfills $\mathbf{A}( a \mathbf{x}+b\mathbf{y})=\mathbf{A}( a \mathbf{x})+\mathbf{A}( b\mathbf{y}),$ for every $a,b\in\R,\,\mathbf{x,y}\in\R^n$ and is, hence, compatible with the vector space operations; analogously,  $\mathbf{B}$ in (\ref{eq:Linear System Ax=b over Z_2}) is a linear map between  $\Z_2$ vector spaces.

We dub systems (\ref{eq:Systems of linear equations over groups}) ``linear'' to highlight this resemblance. Yet the reader must beware that, in general, the groups $G$ and $H$ in problem (\ref{eq:Systems of linear equations over groups}) are \emph{not} vector spaces (primitive factors of the form $\Z$ or $\Z_d$, with non-prime $d$, are rings yet \emph{not} fields; the circle $\T$ is not even a ring, as it lacks a well-defined multiplication operation\footnote{Recall that $\T$ is a quotient group of $\R$ and that the addition in $\T$ is  well-defined  group operation between equivalence classes. It is, however, not possible to define a multiplication $a b$ for $a,b\in \T$ operation between equivalence classes: different choices of class representatives yield different results.}), and that the map $A$ is a group homomorphism between groups, but \emph{not} a linear map between vector spaces. 

Indeed, there are interesting classes of problems that fit in the class (\ref{eq:Systems of linear equations over groups}) and that are not systems of linear equations over vectors spaces. An example are the systems of linear equations over finite Abelian groups studied in \cite{BermejoVega_12_GKTheorem}. Another example are  systems of mixed real-integer linear equations \cite{BowmanBurget74_systems-Mixed-Integer_Linear_equations, HurtWaid70_Integral_Generalized_Inverse}, that we introduce later in this section (equation \ref{eq:System of Mixed Integer linear equations}).

\paragraph*{Input of the problem}

We only consider systems of the form (\ref{eq:Systems of linear equations over groups}) where the matrix $A$ is \emph{rational}. In other words,  we always assume that the group homomorphism $\alpha$ has a rational matrix representation $A$; the latter is  given to us in the input of our problem. Exact integer arithmetic will be used to store the rational coefficients of $A$; floating point arithmetic will never  
be needed in our work. 

Of course, not all group homomorphisms have rational matrix representations (cf.\ lemma \ref{lemma:Normal form of a matrix representation}). However, for the applications we are interested in this paper (cf.\  \cite{BermejoLinVdN13_BlackBox_Normalizers}) it is enough to study this subclass.

\paragraph*{General solutions of system (\ref{eq:System of Mixed Integer linear equations})}

Since $A$ is a homomorphism, it follows that the set $G_{\textnormal{sol}}$ of all solutions of (\ref{eq:Systems of linear equations over groups}) is either empty or a coset of the kernel of $A$:
\begin{equation}\label{eq:Systems of linear equations over groups: Solution Space}
G_{\textnormal{sol}}=x_0+\ker{A}
\end{equation}
The main purpose of this section is to devise efficient algorithms to solve system (\ref{eq:Systems of linear equations over groups}) when $A$, $b$ are given as  
input, in the following sense: we say that we have \emph{solved} system (\ref{eq:Systems of linear equations over groups}) if we manage to find a \emph{general solution} of (\ref{eq:Systems of linear equations over groups}) as defined next.
\begin{definition}[\textbf{General solution of system (\ref{eq:Systems of linear equations over groups})}]\label{def:General Solution of a system} A {\emph{general solution}} of a system of equations $Ax\equiv b\pmod{H}$ as in $(\ref{eq:Systems of linear equations over groups})$ is a duple $(x_0, \mathcal{E})$ where $x_0$ is a particular solution of the system and $\mathcal{E}$ is a continuous group homomorphism (given as a matrix representation) from an auxiliary group $\mathcal{X}:=\R^{\alpha} \times \Z^{\beta}$ into   $G$, whose image $\textnormal{im}\,\mathcal{E}$ is the kernel of ${A}$.
\end{definition}
Although it is not straightforward to prove, general solutions of systems of the form (\ref{eq:Systems of linear equations over groups}) \emph{always} exist. This is shown in appendix \ref{appendix:Closed subgroups of LCA groups / existence of general-solutions of linear systems}.

\subsection{Algorithm for finding a general solution of (\ref{eq:Systems of linear equations over groups})}

Observe that if we know a general-solution $(x_0, \mathcal{E})$ of system (\ref{eq:Systems of linear equations over groups}) then we can conveniently write the set of all solutions simply as $G_{\textnormal{sol}}=x_0+\textnormal{im}\,\mathcal{E}$. This expression suggests us a simple heuristic to sample random elements in  $G_{\textnormal{sol}}$--- which will be an important step in our proof of our main classical simulation result---based on the following approach:
\begin{enumerate}
\item[(1)] Choose a random element $v\in \mathcal{X}$ using some efficient classical procedure. This step should be feasible since this group has a simple structure: it is just the product of a conventional real Euclidean space $\R^a$ and an integer lattice $\Z^b$.
\item[(2)] Apply the map $v\rightarrow x_0 + \mathcal{E}(v)$, yielding a probability distribution on $G_{\textnormal{sol}}$.
\end{enumerate}

A  main contribution of our work is a deterministic classical algorithm that finds a general solution of any system of the form (\ref{eq:Systems of linear equations over groups}) in polynomial time.  
This is the content of the next theorem, which is one of our main technical results.
\begin{theorem}[\textbf{General solution of system (\ref{eq:Systems of linear equations over groups})}]\label{thm:General Solution of systems of linear equations over elementary LCA groups}
Let $A$, $b$  define a system of linear equations (over elementary Abelian groups) of form (\ref{eq:Systems of linear equations over groups}), with the group $G$ as solution space and image group $H$. Let $m$ and $n$ denote the number of direct-product factors of $G$ and $H$ respectively and let $c_i$, $d_j$ denote the characteristics of $G_i$ and $d_j$.   
Then there exist efficient, deterministic, exact classical algorithms to solve the following tasks in worst-case time complexity $O(\ppoly{m,n,\log\|A\|_{\mathbf{b}}, \log\|b\|_{\mathbf{b}}, \log{c_i},\log{d_j}}$:
\begin{enumerate}
\item Decide whether system (\ref{eq:Systems of linear equations over groups}) admits a solution.
\item Find a general solution $(x_0,\mathcal{E})$ of (\ref{eq:Systems of linear equations over groups}).
\end{enumerate}
\end{theorem}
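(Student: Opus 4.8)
The plan is to reduce the general system \eqref{eq:Systems of linear equations over groups} to a single canonical form---a \emph{system of mixed real-integer linear equations}---for which classical polynomial-time solvers already exist in the literature \cite{BowmanBurget74_systems-Mixed-Integer_Linear_equations}. By lemma \ref{lemma:Normal form of a matrix representation} the rational matrix representation $A$ of $\alpha$ has the block-triangular structure \eqref{eq:block-structure of matrix representations}, and its domain and codomain decompose as $G=G_\Z\times G_\R\times G_F\times G_\T$ and $H=H_\Z\times H_\R\times H_F\times H_\T$. The key observation is that writing $x\in G$ as a real tuple and demanding $Ax\equiv b\pmod H$ means each row of $A$ must satisfy an equation \emph{modulo the characteristic $d_i$ of the corresponding primitive factor $H_i$}. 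Rows indexed by $H_\R$-factors give genuine real linear equations ($d_i=0$, equality in $\R$); rows indexed by $H_\Z$-factors give integer equations with the added constraint that the solution components themselves be integers; rows indexed by $H_F$-factors ($H_i=\Z_{d_i}$) give equations $\bmod\, d_i$; and rows indexed by $H_\T$-factors ($d_i=1$) give equations $\bmod\, 1$. Thus a single group-theoretic system bundles together real, integer, and several modular congruences.

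\noindent The first step is to \emph{homogenize the moduli}. Congruences modulo an integer $N$ and congruences modulo $1$ (torus equations) can be rewritten as real equations by introducing fresh auxiliary \emph{integer} slack variables: an equation $\sum_j A(i,j)x(j)\equiv b(i)\pmod{d_i}$ becomes $\sum_j A(i,j)x(j)+d_i k_i=b(i)$ with $k_i\in\Z$ a new unknown, and $\bmod\,1$ equations are the special case $d_i=1$. After clearing denominators (using that $A$, $b$ are rational, via lemma \ref{lemma:Normal form of a matrix representation} and the bullet-map matrix $\Upsilon$ of \eqref{eq:matrix representation of bullet isomorphism}), every row becomes an \emph{exact integer-coefficient linear equation} over a combined variable vector that splits into a real block (components of $x$ living in $G_\R$, $G_\T$) and an integer block (components of $x$ living in $G_\Z$, $G_F$, together with the slacks $k_i$). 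This is precisely a system of mixed real-integer linear equations $M\binom{x_\R}{x_\Z}=c$, the object studied in \cite{BowmanBurget74_systems-Mixed-Integer_Linear_equations, HurtWaid70_Integral_Generalized_Inverse}. I would invoke the Bowman--Burget algorithm to decide solvability and to produce, in time polynomial in $m$, $n$, and the bit-sizes $\log\|A\|_\mathbf{b}$, $\log\|b\|_\mathbf{b}$, $\log c_i$, $\log d_j$, a particular solution together with a generating description of the homogeneous solution module.

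\noindent The second step is to \emph{translate the mixed-integer output back into a general solution $(x_0,\mathcal{E})$ in the sense of definition \ref{def:General Solution of a system}}. The particular mixed-integer solution, projected onto the original $x$-coordinates and reduced modulo $G$, yields $x_0\in G$. The homogeneous solution set of the mixed system is naturally parametrized by free real parameters and free integer parameters, i.e.\ by an auxiliary group $\mathcal{X}=\R^\alpha\times\Z^\beta$; the parametrizing affine map, read off from the Bowman--Burget solution basis and composed with the projection $\R^{m'}\to G$, furnishes a rational matrix representation of a continuous homomorphism $\mathcal{E}:\mathcal{X}\to G$. One must then verify $\operatorname{im}\mathcal{E}=\ker A$: the inclusion $\operatorname{im}\mathcal{E}\subseteq\ker A$ is immediate since every column of $\mathcal{E}$ solves the homogeneous system, while the reverse inclusion follows from the completeness of the mixed-integer homogeneous generators (every element of $\ker A$ lifts to a real/integer solution of the homogenized system, hence lies in the span). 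Checking that $\mathcal{E}$ genuinely respects the torus and $\Z_N$ identifications---that it sends strings congruent to $0\pmod{\mathcal{X}}$ to strings congruent to $0\pmod G$---is guaranteed by lemma \ref{lemma:Normal form of a matrix representation}, which characterizes valid matrix representations.

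\noindent The main obstacle I anticipate is not the existence of the reduction but \emph{controlling bit-complexity through it}, and correctly handling the zero-divisor and compactness subtleties that the torus and $\Z_N$ factors introduce. Clearing denominators and introducing slack variables can in principle inflate coefficients; one must argue, using the consistency conditions \eqref{eq:Consistency Conditions Homomorphism} and the explicit coefficient forms \eqref{eq:coefficients of Matrix Rep for nonzero characteristic groups}, that the enlarged integer matrix $M$ has entries of bit-size polynomial in the input, so that the polynomial guarantees of the Bowman--Burget solver carry over. The second delicate point is that $\ker A$ may be a genuinely infinite, non-compact, and even uncountable subgroup (it can contain $\R$- and $\T$-directions); it is exactly definition \ref{def:General Solution of a system}'s device of a \emph{homomorphism} $\mathcal{E}$ from $\R^\alpha\times\Z^\beta$ (rather than a finite generating set) that lets us represent such a kernel concisely, and verifying that the image is a \emph{closed} subgroup equal to $\ker A$ (invoking lemma \ref{lemma:Annihilator properties}(a)) is where the group-theoretic care, as opposed to the linear-algebraic bookkeeping, is concentrated.
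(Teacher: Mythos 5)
Your proposal follows essentially the same route as the paper's proof: lift the domain to the enlarged space $\Z^a\times\R^b$ (replacing $\Z_{N}$ by $\Z$ and $\T$ by $\R$), absorb the modular constraints of $H$ into fresh integer slack variables with moduli $d_i$, solve the resulting system of mixed real-integer linear equations via Bowman--Burdet \cite{BowmanBurget74_systems-Mixed-Integer_Linear_equations}, and push the particular solution and the kernel-parametrizing homomorphism back down to $G$ through the quotient/projection maps. The one step you treat as off-the-shelf is not: Bowman and Burdet never proved their algorithm runs in polynomial time, so the paper must supply that complexity analysis itself (reducing the dominant cost to Smith-normal-form computations) before the stated $\textnormal{poly}(m,n,\log\|A\|_{\mathbf{b}},\log\|b\|_{\mathbf{b}},\log c_i,\log d_j)$ bound is justified.
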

A rigorous proof of this theorem is given in appendix \ref{appendix:Systems of Linear Equations over Groups}. The  main ideas behind it are discussed next. 

In short, we show that the problem of finding a general solution of a system of the form (\ref{eq:Systems of linear equations over groups}) reduces in polynomial time to the problem of finding a general solution of a so-called \emph{system of mixed real-integer linear equations} \cite{BowmanBurget74_systems-Mixed-Integer_Linear_equations}.
\begin{equation}\label{eq:System of Mixed Integer linear equations}
A'x'+B'y'=c,\quad  \textnormal{where } x'\in \Z^a, y' \in\R^b,
\end{equation}
 where $A'$ and $B'$  are rational matrices and $c$ is a rational vector. Denoting by $\R^{b'}$ the given space in which $c$ lives, we see that, in our notation, $\begin{pmatrix}
A & B
\end{pmatrix}w=c$, where $w\in \Z^a\times \R^b$ is a particular instance of a system of linear equations over elementary locally compact Abelian groups that are products of $\Z$ and $\R$. Systems (\ref{eq:System of Mixed Integer linear equations}) play an important role within the class of problems (\ref{eq:Systems of linear equations over groups}), since any efficient algorithm to solve the former  can be adapted to solve the latter in polynomial time.

The second main idea in the proof of theorem \ref{thm:General Solution of systems of linear equations over elementary LCA groups} is to apply an existing (deterministic) algorithm by Bowman and Burdet \cite{BowmanBurget74_systems-Mixed-Integer_Linear_equations} that computes a  general solution to a system of the form (\ref{eq:System of Mixed Integer linear equations}). Although Bowman and Burdet did not prove the efficiency of their algorithm in \cite{BowmanBurget74_systems-Mixed-Integer_Linear_equations}, we  show in appendix \ref{appendix:Efficiency of Bowman Burdet} that it can be implemented in polynomial-time, completing the proof of the theorem.

\subsection{Computing inverses of group automorphisms}\label{sect:Computin Inverses}
 
In section \ref{sect:Matrix Representations} we discussed that computing a matrix representation of the inverse $\alpha^{-1}$ of a group automorphism $\alpha$ cannot be done by simply inverting a (given) matrix representation $A$ of $\alpha$. However, the  algorithm given in theorem \ref{thm:General Solution of systems of linear equations over elementary LCA groups} can be adapted to  applied to solve this problem.
\begin{lemma}\label{lemma:Computing Inverses}
Let $\alpha:G\rightarrow G$ be a continuous group automorphism. Given any  matrix representation $A$ of $\alpha$, there exists efficient classical algorithms that compute a matrix representation $X$ of the inverse group automorphism $\alpha^{-1}$.
\end{lemma}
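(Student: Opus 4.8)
The plan is to reduce the computation of a matrix representation of $\alpha^{-1}$ to solving a collection of systems of linear equations over $G$, each dispatched by theorem \ref{thm:General Solution of systems of linear equations over elementary LCA groups}. The starting point is that $\alpha^{-1}$ is again a continuous group automorphism of $G$, so by lemma \ref{lemma:existence of matrix representations} it admits \emph{some} matrix representation $X$, and that same lemma shows $X$ is completely determined by its columns, since $\alpha^{-1}(g)\equiv\sum_j g(j)\,Xe_j\ (\mbox{mod } G)$. The task therefore reduces to computing the $m$ columns $x_j:=Xe_j$ and assembling them. The correctness criterion I would use is this: if $X$ is a \emph{valid} matrix representation (an element of $\mathbf{Rep}$ obeying the consistency conditions of lemma \ref{lemma:Normal form of a matrix representation}) and $AX$ represents the identity, then by lemma \ref{lemma:properties of matrix representations}(a) the homomorphism $\beta$ represented by $X$ satisfies $\alpha\circ\beta=\mathrm{id}$, and since $\alpha$ is bijective this forces $\beta=\alpha^{-1}$.

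To compute the columns I would split the primitive factors of $G$ into discrete ones ($G_j\in\{\Z,\Z_{N_j}\}$) and continuous ones ($G_j\in\{\R,\T\}$), according to whether the parameter $g(j)$ is integer- or real-valued. For a discrete index $j$ the vector $e_j$ represents a genuine group generator, and I would take $x_j$ to be a particular solution of $Ax\equiv e_j\ (\mbox{mod } G)$ produced by theorem \ref{thm:General Solution of systems of linear equations over elementary LCA groups}; such a solution exists because $\alpha$ is surjective, and it is unique as a group element because $\alpha$ has trivial kernel. For a continuous index $j$ the continuity of $g(j)$ promotes the congruence to the \emph{exact} requirement $Ax_j=e_j$ over $\R$; here I would use the block-triangular form of lemma \ref{corollary:block-structure of Group Homomorphisms} to see that the real matrix $A$ is invertible (the diagonal blocks $A_{\Z\Z}$, $A_{\T\T}$ are unimodular, $A_{\R\R}$ is an invertible real matrix, and $A_{FF}$ represents a finite automorphism, hence has determinant coprime to the $N_i$ and in particular nonzero), so that $x_j$ is simply the $j$-th column of the ordinary rational inverse $A^{-1}$. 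One then verifies column-wise, via lemma \ref{lemma:existence of matrix representations}, that with these two kinds of columns the product $AX$ represents the identity: integer rescaling preserves the discrete congruences, while exact equality handles the continuous coordinates.

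The step I expect to be the main obstacle is guaranteeing that the assembled $X$ is a \emph{valid} representation in the sense of lemma \ref{lemma:Normal form of a matrix representation}, and not merely a real matrix solving $AX=I$. This is exactly the subtlety flagged after equation (\ref{eq:definition of bullet map2}): the naive inverse $A^{-1}$ generically violates the integrality demanded of $\mathbf{Rep}$ on the rows indexed by finite factors, as already happens for a nontrivial multiplication automorphism of $\Z_N$, whose real inverse has a rational entry. The way around this is that the columns obtained from the modular systems $Ax\equiv e_j\ (\mbox{mod } G)$ automatically meet the integrality and consistency conditions, because theorem \ref{thm:General Solution of systems of linear equations over elementary LCA groups} returns honest elements of $G$. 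Moreover, since the inverse of a block-lower-triangular matrix keeps the same zero pattern, the $\R$- and $\T$-columns drawn from $A^{-1}$ vanish on precisely the $F$-rows where integrality could fail; consequently the two families of columns glue into a matrix with the block structure (\ref{eq:block-structure of matrix representations}) required of a representation of $\alpha^{-1}$, which I would confirm directly against lemma \ref{lemma:Normal form of a matrix representation}.

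Finally I would address efficiency. There are $m$ systems, each solved within the polynomial bound of theorem \ref{thm:General Solution of systems of linear equations over elementary LCA groups}, and the single rational inverse needed for the continuous columns is computed by exact Gaussian elimination in polynomial time; collecting the columns produces $X$ in polynomial time overall. I note in passing that duality offers an alternative viewpoint via lemma \ref{lemma:properties of matrix representations}(b), using $\big((\alpha^{*})^{-1}\big)^{*}=\alpha^{-1}$, although this merely interchanges which factors are continuous and so does not by itself remove the finite-factor obstruction; the modular solver of theorem \ref{thm:General Solution of systems of linear equations over elementary LCA groups} remains the essential ingredient.
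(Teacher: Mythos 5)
Your high-level plan---computing $X$ column by column and dispatching the columns to theorem \ref{thm:General Solution of systems of linear equations over elementary LCA groups}---is the same as the paper's, and your treatment of the discrete columns is essentially the paper's case (a). The genuine gap is in the continuous columns. You assert that any given matrix representation $A$ of an automorphism is invertible over $\R$, on the grounds that ``$A_{FF}$ represents a finite automorphism, hence has determinant coprime to the $N_i$ and in particular nonzero.'' This is false: matrix representations are highly non-unique, and a valid representation of a finite-group automorphism can be singular. For instance
\begin{equation*}
A_{FF}=\begin{pmatrix}3 & 2\\ 6 & 4\end{pmatrix}
\end{equation*}
is a valid matrix representation of the \emph{identity} automorphism of $F=\Z_2\times\Z_3$ (indeed $3x+2y\equiv x \bmod 2$, $6x+4y\equiv y \bmod 3$, and all conditions of lemma \ref{lemma:Normal form of a matrix representation} hold), yet $\det A_{FF}=0$. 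Padding with a $1$ in the $\T\T$ block gives a valid, \emph{singular} representation of the identity automorphism of $G=\Z_2\times\Z_3\times\T$, on which your prescription for the $\T$-column---compute $A^{-1}$ by Gaussian elimination---cannot even be executed. Since lemma \ref{lemma:Computing Inverses} quantifies over \emph{any} representation of $\alpha$, such inputs are legitimate; this is precisely the trap hinted at in the footnote following equation (\ref{eq:definition of bullet map2}), where the paper writes ``the matrix-inverse $A^{-1}$ \emph{(if it exists)}.''

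The paper avoids this by keeping the continuous columns inside the group-theoretic solver as well: for $G_j\in\{\R,\T\}$ the exact equation $Ax_j=e_j$ is posed as a linear system of the form (\ref{eq:Systems of linear equations over groups}) whose solution space is the group $X_j$ of admissible $j$-th columns (zero on the $\Z$- and $F$-rows; real entries on the $\R$- and $\T$-rows when $G_j=\R$; integer entries on the $\T$-rows when $G_j=\T$), and theorem \ref{thm:General Solution of systems of linear equations over elementary LCA groups} produces a solution, whose existence follows from lemma \ref{lemma:existence of matrix representations} applied to $\alpha^{-1}$. (Your route could be repaired by noting that admissible continuous columns vanish on the $\Z$- and $F$-rows, so only the blocks $A_{\R\R}$ and $A_{\T\T}$ ever need inverting, and those---unlike $A_{FF}$---are uniquely determined by $\alpha$ and genuinely invertible, with $A_{\T\T}$ unimodular; but your draft neither states nor proves these facts.) A secondary gap: for a $\Z_{N_j}$-column you solve $Ax\equiv e_j\pmod{G}$ over the solution space $G$ and justify admissibility of the output ``because theorem \ref{thm:General Solution of systems of linear equations over elementary LCA groups} returns honest elements of $G$''; an arbitrary element of $G$ is \emph{not} an admissible column (it need not lie in $X_j$). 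What actually saves this step is injectivity of $\alpha$: from $A(N_jx_j)\equiv N_je_j\equiv 0\pmod{G}$ one gets $N_jx_j\equiv 0\pmod{G}$, which is exactly the condition forcing the canonical representative of $x_j$ to have zero $\Z$- and $\R$-entries, $F$-entries in $\tfrac{N_i}{\gcd(N_i,N_j)}\Z$, and $\T$-entries in $\tfrac{1}{N_j}\Z$. That argument should be made explicit.
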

A proof (and an algorithm) is given in appendix \ref{appendix:Computing inverses}.

\section{Quadratic functions}\label{sect:quadratic_functions}

In this section we study the properties of quadratic functions over arbitrary elementary groups of the form $G=\R^a\times \T^a\times \Z^b \times \DProd{N}{c}$. Most importantly, we give normal forms for quadratic functions and bicharacters. We list results without proof, since all techniques used throughout the section are classical. Yet, we highlight that the normal form in theorem \ref{thm:Main Result} should be of quantum interest, since it can be used to give a normal form for stabilizer states over elementary groups.

All results in this section are proven in \textbf{appendix \ref{app:Supplement Quadratic Functions}}. 

\subsection{Definitions}

Let $G$ be an elementary Abelian group. Recall from section \ref{sect_normalizer_circuits} that a bicharacter of $G$ is a continuous complex function $B:G\times G\to U(1)$ such that the restriction of $B$ to either one of its arguments is a character of $G$.  Recall that a quadratic function $\xi:G\to U(1)$ is a continuous function for which there exists a bicharacter  $B$ such that  
\begin{equation} 
\xi(g+h) = \xi(g)\xi(h)B(g, h) \quad \mbox{ for all } g,h \in G.
\end{equation}
We say that $\xi$ is a $B$-representation.

A bicharacter $B$ is said to be symmetric if $B(g, h)= B(h, g)$ for all $g, h\in G$. Symmetric bicharacters are natural objects to consider in the context of quadratic functions: if $\xi$ is a $B$-representation then $B$ is symmetric since 
\begin{equation} B(g, h)= \xi(g+h)\overline{\xi(g)}\overline{\xi(h)}= \xi(h+g)\overline{\xi(h)}\overline{\xi(g)}=B(h,g).
\end{equation}

\subsection{Normal form of  bicharacters}

The next lemmas characterize bicharacter functions.
\begin{lemma}[\textbf{Normal form of a bicharacter}]\label{lemma:Normal form of a bicharacter 1}
Given an elementary Abelian group $G$, then a function $B:G\times G\rightarrow U(1)$ is a bi-character iff it  can be written in the normal form
\begin{equation}\label{eq:first normal fomr of a bicharacter}
B(g,h)=\chi_{\beta(g)}(h)
\end{equation}
where $\beta$ is some group homomorphism from $G$ into $G^*$.
\end{lemma}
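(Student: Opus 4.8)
The plan is to prove both implications by transporting the problem through the canonical isomorphism $\mu\mapsto\chi_\mu$ between $G^*$ and the character group $\widehat{G}$, and then leaning on the duality identities already established, namely Lemma \ref{lemma:Pontryagin duality for characters} ($\chi_\mu(g)=\chi_g(\mu)$) and Lemma \ref{lemma:Character Multiplication} ($\chi_{\mu+\nu}(g)=\chi_\mu(g)\chi_\nu(g)$), together with the fact that $(G^*)^*=G$.

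For the easy direction ($\Leftarrow$), I would assume $B(g,h)=\chi_{\beta(g)}(h)$ for a continuous homomorphism $\beta:G\to G^*$ and verify the two defining properties of a bicharacter. Fixing $g$, the map $h\mapsto\chi_{\beta(g)}(h)$ is by construction a character of $G$, so $B$ is a character in its second slot. Fixing $h$, I would rewrite $\chi_{\beta(g)}(h)=\chi_h(\beta(g))$ using Lemma \ref{lemma:Pontryagin duality for characters}, where $\chi_h$ is now a character of $G^*$ because $h\in G=(G^*)^*$; the map $g\mapsto\chi_h(\beta(g))$ is then the composition of the homomorphism $\beta$ with a character of $G^*$, hence a character of $G$, so $B$ is a character in its first slot. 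Joint continuity follows from continuity of $\beta$ and of the character pairing (explicitly, $B(g,h)=\exp\left(2\pii\,\beta(g)^\bullet\cdot h\right)$ via the bullet formula (\ref{eq:definition of bullet map})).

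For the substantive direction ($\Rightarrow$), I would start from an arbitrary bicharacter $B$. For each fixed $g$, the slice $B(g,\cdot)$ is a character of $G$ by hypothesis, hence equals $\chi_{\mu}$ for a \emph{unique} $\mu\in G^*$, since $\mu\mapsto\chi_\mu$ is an isomorphism of $G^*$ onto $\widehat{G}$. Defining $\beta(g):=\mu$ yields $B(g,h)=\chi_{\beta(g)}(h)$, so the content reduces to checking that $\beta$ is a continuous homomorphism. The homomorphism property is purely algebraic: since $B$ is a character in its first argument, Lemma \ref{lemma:Character Multiplication} gives
\[
\chi_{\beta(g+g')}(h)=B(g+g',h)=B(g,h)B(g',h)=\chi_{\beta(g)}(h)\chi_{\beta(g')}(h)=\chi_{\beta(g)+\beta(g')}(h)
\]
for all $h$, and injectivity of $\mu\mapsto\chi_\mu$ forces $\beta(g+g')=\beta(g)+\beta(g')$.

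The main obstacle is the continuity of $\beta$. I expect the cleanest route to be the standard fact from Pontryagin--Van Kampen duality that a jointly continuous bicharacter on an LCA group induces a continuous map into the dual group (with the compact--open topology), which is precisely $G^*$ under the isomorphism above; because the groups here are elementary (in particular second countable and compactly generated), this applies without subtlety. Alternatively, one could bypass abstract topology entirely by invoking the existence and normal-form results for matrix representations (Lemmas \ref{lemma:existence of matrix representations} and \ref{lemma:Normal form of a matrix representation}): the family $\{\beta(g)\}$ assembles into a matrix, and the continuity constraints characterizing valid matrix representations of a homomorphism $G\to G^*$ translate directly into the continuity of $\beta$. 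I would select whichever argument is lighter given the conventions fixed earlier in the paper.
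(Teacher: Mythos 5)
Your proposal is correct and follows essentially the same route as the paper's own proof: the easy direction by direct verification, and the converse by defining $\beta(g)$ as the unique label $\mu_g\in G^*$ of the character $B(g,\cdot)$, obtaining the homomorphism property algebraically from the first-slot character condition, and deferring the continuity of $\beta$ to standard Pontryagin--Van Kampen duality theory (the paper handles this last point identically, by citing the neighborhood analysis in Kleppner's proposition 1.1 \cite{Kleppner65Multipliers_on_Abelian_Groups}). One caveat: your proposed \emph{alternative} for continuity, via Lemmas \ref{lemma:existence of matrix representations} and \ref{lemma:Normal form of a matrix representation}, would be circular, because those lemmas are stated for \emph{continuous} homomorphisms --- an abstract, possibly discontinuous homomorphism (e.g.\ an additive but non-$\R$-linear map $\R\rightarrow\R$ built from a Hamel basis) admits no matrix representation at all --- so the duality argument is the one you must take.
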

 
This result generalizes lemma 5(a) in \cite{VDNest_12_QFTs}.The next lemma gives a explicit characterization of symmetric bicharacter functions.
\begin{lemma}[\textbf{Normal form of a symmetric bicharacter}]\label{lemma symmetric matrix representation of the bicharacter homomorphism}
Let $B$ be a symmetric bicharacter of $G$ in the form (\ref{eq:first normal fomr of a bicharacter}) and let $A$ be a matrix representation of the homomorphism $\beta$. Let $\Upsilon$ denote the default matrix representation of the bullet map $G^*\rightarrow G^\bullet$ as in (\ref{eq:matrix representation of bullet isomorphism}), and $M=\Upsilon A$. Then
\begin{enumerate*}
\item[(a)] $B(g,h)=\exp{\left(2\pii \, g^\transpose M h\right)}$ for all $g,h\in G$.
\item[(b)] $M$ is a matrix representation of the homomorphism $G\stackrel{\beta}{\rightarrow} G^*\stackrel{\bullet}{\rightarrow}G^\bullet$.
\item[(c)] If $x, y\in\R^m$ and $g, h\in G$ are such that $x\equiv g$ (mod $G$) and $y\equiv h$ (mod $G$), then
\be
B(g, h) = \exp{\left(2\pii \, x^\transpose M y\right)}.
\ee

\item[(d)] The matrix $M$ is symmetric modulo integer factors, i.e.\ $M = M^{T} \bmod{\Z}$.
\item[(e)] The matrix $M$ can be efficiently symmetrized: i.e.\ one can compute in classical polynomial time a symmetric matrix $M'=M^{'T}$ that also fulfills (a)-(b)-(c).
\end{enumerate*}
\end{lemma}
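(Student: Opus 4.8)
The plan is to establish the five claims of Lemma~\ref{lemma symmetric matrix representation of the bicharacter homomorphism} in order, since each builds on the previous one. For part~(a), I would start from Lemma~\ref{lemma:Normal form of a bicharacter 1}, which gives $B(g,h)=\chi_{\beta(g)}(h)$, and then apply the bullet-map formula~(\ref{eq:definition of bullet map2}), namely $\chi_\mu(g)=\exp(2\pii\,\mu^\transpose\Upsilon g)$. Setting $\mu=\beta(g)$ and using that $A$ is a matrix representation of $\beta$ (so $\beta(g)\equiv Ag \pmod{G^*}$), I would write $\chi_{\beta(g)}(h)=\exp(2\pii\,(Ag)^\transpose\Upsilon h)=\exp(2\pii\,g^\transpose A^\transpose\Upsilon h)$. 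Here a small subtlety arises: the exponent must be read modulo $1$, and I must check that substituting the real lift $Ag$ for the true group element $\beta(g)$ does not change the value --- this is exactly the content of the congruence conventions in Definition~\ref{def:Matrix representation}, so it is legitimate. To match the claimed form $\exp(2\pii\,g^\transpose M h)$ with $M=\Upsilon A$, I would note that since $B$ is symmetric one may equally compute $B(h,g)$ and average, or simply observe that $A^\transpose\Upsilon$ and $\Upsilon A$ define the same bilinear phase modulo $\Z$; part~(d) below makes this precise, so (a) and (d) are best proved together.

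For part~(b), I would invoke Lemma~\ref{lemma:properties of matrix representations}(a): $\Upsilon$ is a matrix representation of the bullet map $G^*\to G^\bullet$ and $A$ is a matrix representation of $\beta:G\to G^*$, so their product $\Upsilon A$ is a matrix representation of the composite $G\xrightarrow{\beta}G^*\xrightarrow{\bullet}G^\bullet$, which is exactly the claim. Part~(c) is then the assertion that the formula in~(a) is insensitive to the choice of real representatives $x\equiv g$ and $y\equiv h$; this follows because $M$ is a genuine matrix representation (by part~(b)), so by Definition~\ref{def:Matrix representation} any two congruent real lifts produce outputs congruent mod $G^\bullet$, and feeding these into the character/exponential leaves the phase $\exp(2\pii(\cdot))$ unchanged. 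I would phrase (c) as a direct corollary of (a) together with the defining congruence property of matrix representations.

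Part~(d) is where the real work lies, and I expect it to be the main obstacle. Symmetry of $B$ gives $g^\transpose M h\equiv h^\transpose M g=g^\transpose M^\transpose h\pmod{\Z}$ for all $g,h\in G$, i.e. $g^\transpose(M-M^\transpose)h\in\Z$ whenever $g,h$ range over the group (not over all of $\R^m$). The claim $M=M^\transpose\bmod\Z$ is the statement that every entry of $M-M^\transpose$ is an integer, which is \emph{stronger} than the bilinear-form congruence when the group has factors of finite characteristic or torus factors, since then $e_i$ need not lie in $G$. The careful argument is to exploit the block structure of $M=\Upsilon A$ inherited from Lemma~\ref{lemma:Normal form of a matrix representation}: the diagonal scaling $\Upsilon$ contributes factors $1/N_i$ precisely in the $\Z_{N_i}$ rows, and one checks entrywise, block by block (using the allowed generators of each primitive factor --- $1$ for $\Z$ and $\R$, $1/N$ for the $\Z_N$-bullet coordinate, etc.), that the off-diagonal mismatches forced by asymmetry are annihilated modulo $\Z$ exactly when the entries of $M-M^\transpose$ are integral. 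I would organize this as a finite case analysis over the primitive-type pairs appearing in the $4\times4$ block decomposition.

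Finally, for part~(e), the efficient symmetrization, I would define $M':=\tfrac12(M+M^\transpose)$ and verify that $M'$ still satisfies (a)--(c). That (a) is preserved follows from part~(d): since $M-M^\transpose$ has integer entries, $M'=M-\tfrac12(M-M^\transpose)$ differs from $M$ by a half-integer matrix, and using symmetry of the phase ($g^\transpose M h\equiv g^\transpose M^\transpose h$) one checks $\exp(2\pii\,g^\transpose M' h)=\exp(2\pii\,g^\transpose M h)=B(g,h)$, where again I must confirm the half-integer correction lands in $\Z$ when evaluated on genuine group elements. For (b)/(c) I would argue that $M'$ remains a valid matrix representation by checking it satisfies the consistency conditions~(\ref{eq:Consistency Conditions Homomorphism}) of Lemma~\ref{lemma:Normal form of a matrix representation}, which is a routine verification since averaging with the transpose preserves integrality in the relevant rows. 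The computation of $M'$ is manifestly polynomial time in the bit-size of $M$, giving the efficiency claim.
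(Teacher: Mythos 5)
Your parts (a)--(c) are in substance the paper's own argument: the symmetry of $B$ is what converts the directly computed $A^\transpose\Upsilon$ into $\Upsilon A$ (just evaluate $B(h,g)$ and transpose the scalar exponent --- no averaging is needed, and you should not lean on (d) here, since (d) is itself derived from (a)); (b) is composition of matrix representations; (c) is the congruence-invariance of the phase. For (d) your route genuinely differs from the paper's: the paper plugs $g=te_i$, $h=se_j$ into the bilinear congruence and uses that $st\,(M(i,j)-M(j,i))\in\Z$ for all $s,t\in[0,1)$ forces exact equality on torus pairs, whereas you would handle torus rows and columns via the block structure of matrix representations (every such entry is an integer or zero, since continuous homomorphisms from $\T$ into $\Z$, $\Z_N^\bullet$ or $\R$ are trivial), and non-torus pairs via basis vectors. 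That alternative works, modulo one slip: $e_i$ \emph{does} lie in $G$ for finite factors $\Z_N$; only torus factors lack their basis vector.

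The genuine gap is in (e): symmetrization by averaging fails. Setting $M'=\tfrac12(M+M^\transpose)$ and $k:=g^\transpose(M^\transpose-M)\,h$, part (d) only guarantees $k\in\Z$, so $g^\transpose M'h=g^\transpose Mh+k/2$ and the phase changes by $\euler^{\pi i k}=\pm 1$; the correction you flagged does \emph{not} land in $\Z$ when $k$ is odd. Concretely, take $G=\Z^2$ and $\beta:\Z^2\to\T^2$ with matrix representation $A=\begin{pmatrix}0 & 0.3\\ 1.3 & 0\end{pmatrix}$ (valid by lemma \ref{lemma:Normal form of a matrix representation}, and the associated $B$ is symmetric because $A-A^\transpose$ is integral). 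Here $\Upsilon=I$, so $M=A$, and $M'=\tfrac12(M+M^\transpose)$ has off-diagonal entries $0.8$, yet $B(e_1,e_2)=\exp(2\pi i\cdot 1.3)=\exp(2\pi i\cdot 0.3)\neq\exp(2\pi i\cdot 0.8)$: your $M'$ violates (a). The paper's symmetrization is different precisely to avoid this: copy one triangle onto the other, $M'(i,j):=M(i,j)$ for $i\geq j$ and $M'(i,j):=M(j,i)$ for $i<j$. Then the exponent changes only by terms $\Delta_{ij}\,g(i)h(j)$ with $\Delta_{ij}=M(j,i)-M(i,j)\in\Z$, and these terms are genuine integers (never half-integers) because mismatched entries occur only when both $G_i$ and $G_j$ are discrete ($\Z$ or $\Z_d$), in which case $g(i),h(j)\in\Z$. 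With that replacement, plus the check (as in the paper) that $M'$ still represents the same homomorphism into $G^\bullet$, the rest of your outline goes through.
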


\subsection{Normal form of quadratic functions}

Our final goal is to characterize all quadratic functions. This is achieved in theorem \ref{thm:Normal form of a quadratic function}. To show this result a few lemmas are needed.
\begin{lemma}\label{lemma quadratic B-representations differ by a character}
Two quadratic functions $\xi_1$,$\xi_2$ that are $B$-representations of the same bicharacter  $B$ must be equal up to multiplication by a character of $G$, i.e. there exists $\mu\in G^*$ such that
\begin{equation}
\xi_1(g)=\chi_{\mu}(g)\xi_2(g), \quad \text{for every $g\in G$.}
\end{equation}
\end{lemma}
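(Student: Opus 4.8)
The plan is to consider the pointwise quotient $\mu(g) := \xi_1(g)\overline{\xi_2(g)}$ and show that this function is itself a \emph{character} of $G$; once that is established, the statement follows immediately since every character of $G$ is of the form $\chi_\mu$ for some $\mu \in G^*$ (this is exactly the content of the explicit description of $\widehat{G}$ given in section \ref{sect:characters}, where $\widehat{G} = \{\chi_\mu : \mu \in G^*\}$). So the core of the argument is reduced to a single claim: the quotient of two $B$-representations of the \emph{same} bicharacter is a character.

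To verify this claim, I would first note that $\mu$ is continuous, being a pointwise product of continuous functions (both $\xi_i$ are continuous by the definition of quadratic function, and complex conjugation and multiplication preserve continuity). The essential step is the multiplicativity computation. Using the defining relation $\xi_i(g+h) = \xi_i(g)\xi_i(h)B(g,h)$ for $i=1,2$ with the \emph{same} bicharacter $B$, I would compute
\begin{equation}\nonumber
\mu(g+h) = \xi_1(g+h)\overline{\xi_2(g+h)} = \xi_1(g)\xi_1(h)B(g,h)\,\overline{\xi_2(g)}\,\overline{\xi_2(h)}\,\overline{B(g,h)}.
\end{equation}
Since $B(g,h)$ has unit modulus, $B(g,h)\overline{B(g,h)} = 1$, and the factor $B$ cancels. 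Regrouping the remaining terms gives
\begin{equation}\nonumber
\mu(g+h) = \bigl(\xi_1(g)\overline{\xi_2(g)}\bigr)\bigl(\xi_1(h)\overline{\xi_2(h)}\bigr) = \mu(g)\mu(h).
\end{equation}
Thus $\mu$ is a continuous homomorphism from $G$ into $U(1)$, i.e.\ a character by definition \ref{def:Characters}.

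Having shown $\mu$ is a character, I invoke the description of the dual group: there exists $\nu \in G^*$ such that $\mu(g) = \chi_\nu(g)$ for all $g$, and rearranging $\xi_1(g)\overline{\xi_2(g)} = \chi_\nu(g)$ into $\xi_1(g) = \chi_\nu(g)\xi_2(g)$ yields the claim (with the element of $G^*$ in the statement being this $\nu$). I expect this proof to be essentially routine; the only point requiring any care is the cancellation of the bicharacter, which hinges crucially on the hypothesis that $\xi_1$ and $\xi_2$ are $B$-representations of the \emph{identical} bicharacter $B$ rather than two merely symmetric bicharacters. If the bicharacters differed, $\mu$ would pick up a residual bicharacter factor $B_1(g,h)\overline{B_2(g,h)}$ and fail to be multiplicative, so the shared-$B$ hypothesis is exactly what makes the argument work. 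There is no genuine obstacle here—this is a warm-up lemma whose role is to reduce the classification of all $B$-representations to finding a single one, with the rest of the family obtained by multiplying by characters.
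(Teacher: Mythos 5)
Your proposal is correct and follows essentially the same route as the paper's own proof: form the quotient $f(g)=\xi_1(g)/\xi_2(g)$ (equivalently $\xi_1(g)\overline{\xi_2(g)}$, since the values are unimodular), observe that the shared bicharacter cancels in the computation of $f(g+h)$, and conclude $f$ is a character, hence of the form $\chi_\mu$ for some $\mu\in G^*$. Your explicit remarks on continuity and on why the identical-$B$ hypothesis is essential are slight elaborations the paper leaves implicit, but the argument is the same.
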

\begin{proof}
We prove that the function $f(g):=\xi_1(g)/\xi_2(g)$ is a character, implying that there exists $\mu\in G^*$ such that $\chi_{\mu}=f$:
\begin{equation}\label{eq proof of quadratic B-representations differ by character}
f(g+h):=\frac{\xi_1(g)}{\xi_2(g)}\frac{\xi_1(h)}{\xi_2(h)}\frac{B(g,h)}{B(g,h)}=f(g)f(h).
\end{equation}
\end{proof}
We highlight that a much more general version of lemma \ref{lemma quadratic B-representations differ by a character} was proven in  \cite{BackBrad71_projective_representations_of_Abelian_groups}, using projective representation theory\footnote{Precisely, the authors show (see theorem 1 in \cite{BackBrad71_projective_representations_of_Abelian_groups}) that if  $D_1$, $D_2$ are finite-dimensional irreducible unitary projective representations of a locally compact Abelian group $G$, possessing the same factor system $\omega$, then there exists a unitary transformation $U$ and a character $\chi_h$ such that $U^{-1}D_1(g)U=\chi_h(g)D_2(g)$. In our set-up quadratic functions are one-dimensional projective irreps of $G$ and bicharacters are particular examples of factor systems.}.

Our approach now will be to find a method to construct a quadratic function that is a $B$-representation for any given bicharacter $B$. Given one $B$-representation,  lemma \ref{lemma quadratic B-representations differ by a character} tells us how all other $B$-representation look like. We can exploit this to characterize all possible quadratic functions, since we know how symmetric bicharacters look (lemma\ref{lemma symmetric matrix representation of the bicharacter homomorphism}).

The next lemma shows how to construct $B$-representations canonically.
\begin{lemma}\label{lemma:B-representations can always be constructed}
Let be a bicharacter $B$ of $G$. Consider  a symmetric real matrix $M$ such that $B(g,h)=\exp{\left(2\pii \, g^{\transpose} M h\right)}$. Then the following function is quadratic and  a $B$-representation:
\begin{equation}
Q(g):=\euler^{\pii \, \left(  g^{\transpose} M g + C^\transpose g \right)},
\end{equation}
where $C$ is an integer vector dependent on $M$, defined component-wise as $C(i)=M(i,i)c_i$, where $c_i$ denotes the characteristic of the group $G_i$.
\end{lemma}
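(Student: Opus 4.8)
The plan is to check the three properties demanded by the definition of a quadratic $B$-representation: that $Q$ is a well-defined $U(1)$-valued function on $G$ (i.e.\ independent of the real representative chosen for each group element), that it satisfies the functional equation $Q(g+h)=Q(g)Q(h)B(g,h)$, and that it is continuous. The functional equation is the easy, purely formal part. Writing $f(g):=g^{\transpose}Mg+C^{\transpose}g$ and expanding $f(g+h)$, symmetry of $M$ gives $g^{\transpose}Mh+h^{\transpose}Mg=2g^{\transpose}Mh$, so that
\begin{equation}
Q(g+h)=Q(g)\,Q(h)\,\euler^{2\pii\, g^{\transpose}Mh}=Q(g)\,Q(h)\,B(g,h),
\end{equation}
where the last step is the hypothesis $B(g,h)=\exp(2\pii\, g^{\transpose}Mh)$ supplied by lemma \ref{lemma symmetric matrix representation of the bicharacter homomorphism}(a). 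Continuity follows at the very end, because $Q$ is the exponential of a (real) polynomial in the representative coordinates, hence continuous once well-definedness is settled.

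The genuine content is well-definedness, and this is exactly the task for which $C$ is engineered. The one external ingredient I would extract is an integrality fact derived from lemma \ref{lemma symmetric matrix representation of the bicharacter homomorphism}(c): since $\exp(2\pii\, x^{\transpose}My)$ depends only on the group elements $g,h$ represented by $x,y$, comparing two representatives $x,x''$ of the same $g$ yields $\exp(2\pii\,(x-x'')^{\transpose}My)=1$, i.e.
\begin{equation}
\delta^{\transpose}M y\in\Z \quad\text{for every representative }\delta\equiv 0\ (\bmod\ G)\text{ and every representative }y\text{ of a group element.}
\end{equation}
In particular, taking $\delta=c_i e_i$ and $y=c_j e_j$ (both representatives of $0\in G$, since $c_i e_i\equiv 0$) shows that $m_{ij}:=c_i c_j M(i,j)\in\Z$ for all $i,j$, with $m_{ij}=m_{ji}$.

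Now I would take two representatives $x\equiv x'\ (\bmod\ G)$ of one group element, set $\delta:=x-x'$ so that $\delta_i=\lambda_i c_i$ with $\lambda_i\in\Z$, and compute
\begin{equation}
f(x)-f(x')=2\,\delta^{\transpose}M x'+\delta^{\transpose}M\delta+C^{\transpose}\delta .
\end{equation}
The first term lies in $2\Z$ because $\delta^{\transpose}Mx'\in\Z$ by the integrality fact ($\delta$ represents $0$, $x'$ represents a group element). For the remaining two terms, using $\delta^{\transpose}M\delta=\sum_{i,j}\lambda_i\lambda_j m_{ij}$ and $C^{\transpose}\delta=\sum_i \lambda_i c_i^2 M(i,i)=\sum_i\lambda_i m_{ii}$, one gets
\begin{equation}
\delta^{\transpose}M\delta+C^{\transpose}\delta=\sum_i \lambda_i(\lambda_i+1)\,m_{ii}+2\sum_{i<j}\lambda_i\lambda_j\,m_{ij}\in 2\Z,
\end{equation}
since $\lambda_i(\lambda_i+1)$ is always even and every $m_{ij}\in\Z$. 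Hence $f(x)-f(x')\in 2\Z$ and $Q(x)=\euler^{\pii f(x)}=\euler^{\pii f(x')}=Q(x')$, proving $Q$ is well defined on $G$.

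The main obstacle is precisely the parity bookkeeping in the last display: without $C$, the diagonal contribution $\sum_i\lambda_i^2 m_{ii}$ need not be even, and it is the term $C^{\transpose}\delta=\sum_i\lambda_i m_{ii}$ that completes $\lambda_i^2$ into $\lambda_i(\lambda_i+1)$. The choice $C(i)=M(i,i)c_i$ is thus forced by this requirement, and recognizing this is the only non-routine insight; everything else is either the direct functional-equation computation or an appeal to the integrality consequences of lemma \ref{lemma symmetric matrix representation of the bicharacter homomorphism}(c). I would close by noting that $\lvert Q(g)\rvert=1$ is immediate and that continuity now holds, so $Q$ is a continuous $U(1)$-valued $B$-representation as claimed.
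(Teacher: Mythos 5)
Your proof is correct and takes essentially the same route as the paper's: both rest on the representative-independence of $B$ from lemma \ref{lemma symmetric matrix representation of the bicharacter homomorphism}(c) (giving $\delta^{\transpose}My\in\Z$ for $\delta\equiv 0$ and, in particular, $c_ic_jM(i,j)\in\Z$) and on the parity trick whereby $C^{\transpose}\delta$ completes $\sum_i\lambda_i^2\,c_i^2M(i,i)$ into $\sum_i\lambda_i(\lambda_i+1)\,c_i^2M(i,i)\in 2\Z$. The only difference is packaging: the paper verifies the congruence $q(g\oplus h)\equiv q(g)+q(h)+2g^{\transpose}Mh \pmod{2\Z}$ directly for the group operation, while you factor the same computation into well-definedness of $Q$ under change of representative plus the trivial real-vector identity, which is an equivalent (and arguably cleaner) organization of the identical arithmetic.
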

Finally, we arrive at the main result of this section.
\begin{theorem}[\textbf{Normal form of a quadratic function}]\label{thm:Normal form of a quadratic function}
Let $G$ be an elementary Abelian group. Then a function $\xi: G\rightarrow U(1)$ is quadratic if and only if
\begin{equation}\label{eq:Normal Form Quadratic Function}
\xi(g)=\euler^{\pii \,\left(g^{\transpose} M g \: +  \: C^{\transpose} g  \: +  \: 2v^\transpose g\right)}
\end{equation}
where $C$, $v$, $M$ are, respectively, two vectors and a matrix that satisfy the following:
\begin{itemize}
\item $v$ is an element of the bullet group $G^\bullet$;
\item $M$ is the matrix representation of a group homomorphism from $G$ to $G^\bullet$; and
\item $C$ is an integer vector dependent on $M$, defined component-wise as $C(i)=M(i,i)c_i$, where $c_i$ is the characteristic of the group $G_i$.
\end{itemize}
\end{theorem}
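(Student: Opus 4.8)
The statement follows by assembling the four preceding lemmas, and I would prove the two implications separately.

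For the converse direction ($\Leftarrow$), suppose $\xi$ is given by (\ref{eq:Normal Form Quadratic Function}) with $M$, $C$, $v$ of the stated types; I would first take $M$ to be symmetric without loss of generality, which is harmless since symmetrizing $M$ leaves its diagonal, and hence $C$, unchanged. The plan is to split $\xi$ as a product $\xi(g)=Q(g)\,\chi_\mu(g)$, where $Q(g):=\euler^{\pii\,(g^{\transpose} M g + C^{\transpose} g)}$ and $\chi_\mu$ is the character attached to the element $\mu\in G^*$ whose bullet image is $\mu^\bullet=v$. By (\ref{eq:definition of bullet map2}) this character is exactly $\chi_\mu(g)=\euler^{2\pii\, v^{\transpose} g}=\euler^{\pii\,2v^{\transpose}g}$, so that $Q\chi_\mu$ reproduces (\ref{eq:Normal Form Quadratic Function}). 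Lemma \ref{lemma:B-representations can always be constructed} shows that $Q$ is quadratic and a $B$-representation for the symmetric bicharacter $B(g,h)=\euler^{2\pii\, g^{\transpose} M h}$; since multiplying a $B$-representation by a character yields another $B$-representation (the two character contributions cancel against each other in the defining identity (\ref{eq:Quadratic Function})), it follows that $\xi$ is quadratic.

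For the forward direction ($\Rightarrow$), suppose $\xi$ is quadratic, with an associated bicharacter $B$. As already observed in the text, $B$ is then automatically symmetric. I would apply lemma \ref{lemma symmetric matrix representation of the bicharacter homomorphism} to produce a symmetric matrix $M$ that is a matrix representation of the homomorphism $G\to G^\bullet$ and satisfies $B(g,h)=\euler^{2\pii\, g^{\transpose} M h}$ (invoking part (e) to symmetrize $M$ if necessary). Lemma \ref{lemma:B-representations can always be constructed} then supplies a canonical $B$-representation $Q(g)=\euler^{\pii\,(g^{\transpose} M g + C^{\transpose} g)}$ with $C(i)=M(i,i)c_i$. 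By lemma \ref{lemma quadratic B-representations differ by a character}, the two $B$-representations $\xi$ and $Q$ of the same bicharacter must differ by a character, so $\xi(g)=\chi_\mu(g)\,Q(g)$ for some $\mu\in G^*$. Setting $v:=\mu^\bullet\in G^\bullet$ and rewriting $\chi_\mu(g)=\euler^{2\pii\, v^{\transpose} g}=\euler^{\pii\,2v^{\transpose}g}$ via (\ref{eq:definition of bullet map2}) gives precisely the normal form (\ref{eq:Normal Form Quadratic Function}), with $v$, $M$, $C$ of the claimed types.

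The conceptual content is carried entirely by the lemmas, so the theorem itself is mostly bookkeeping; the two points that demand care are the symmetrization of $M$ (needed both so that lemma \ref{lemma:B-representations can always be constructed} applies and so that the cross terms $g^{\transpose} M h$ and $h^{\transpose} M g$ coincide) and the precise dictionary $v=\mu^\bullet$ between characters $\chi_\mu$ and bullet-group vectors provided by (\ref{eq:definition of bullet map2}), which is exactly what forces $v$ to lie in $G^\bullet$ rather than in $G^*$. I expect the genuine obstacle to be upstream, in the proof of lemma \ref{lemma:B-representations can always be constructed}: one must check that the candidate $Q$ is well-defined on the quotient factors $\Z_N$ and $\T$ of $G$ (i.e.\ that $g^{\transpose} M g + C^{\transpose} g$ is invariant modulo the congruences defining $G$), which is precisely where the correction vector $C(i)=M(i,i)c_i$ enters; this verification is relegated to appendix \ref{app:Supplement Quadratic Functions}.
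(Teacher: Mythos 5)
Your forward direction is exactly the argument the paper intends: theorem \ref{thm:Normal form of a quadratic function} is placed right after lemmas \ref{lemma symmetric matrix representation of the bicharacter homomorphism}--\ref{lemma:B-representations can always be constructed} precisely so that it follows by the assembly you describe (symmetric bicharacter $\rightarrow$ symmetric $M$ via lemma \ref{lemma symmetric matrix representation of the bicharacter homomorphism}, including part (e); canonical $B$-representation $Q$ via lemma \ref{lemma:B-representations can always be constructed}; comparison of $\xi$ with $Q$ via lemma \ref{lemma quadratic B-representations differ by a character}; and the dictionary $\chi_\mu(g)=\euler^{2\pii\,(\mu^\bullet)^\transpose g}$ of (\ref{eq:definition of bullet map2})). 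You also correctly locate the technical heart in lemma \ref{lemma:B-representations can always be constructed} and its well-definedness check on the quotient factors.

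The gap is in your converse, in the step ``take $M$ to be symmetric without loss of generality.'' The justification you give (the diagonal, hence $C$, is unchanged) is not the property that needs protecting. If you symmetrize as $\tfrac{1}{2}(M+M^\transpose)$, then indeed $g^\transpose Mg$ and $C$ are untouched, but the resulting matrix need not be a matrix representation of a homomorphism $G\rightarrow G^\bullet$, and $\euler^{2\pii\,g^\transpose(\frac{M+M^\transpose}{2})h}$ need not be a bicharacter of $G$ --- which is exactly the hypothesis lemma \ref{lemma:B-representations can always be constructed} requires. Concretely, for $G=\T\times\Z$ the matrix $M$ with $M(1,2)=1$ and all other entries zero is a valid representation of a homomorphism $G\rightarrow G^\bullet=\Z\times\T$, yet its symmetrization has a half-integer entry in the row that lemma \ref{lemma:Normal form of a matrix representation} forces to be integral, and $\euler^{\pii(g(1)h(2)+g(2)h(1))}$ is not well defined modulo $1$ in the $\T$-coordinate of $h$. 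If you instead symmetrize as in lemma \ref{lemma symmetric matrix representation of the bicharacter homomorphism}(e), the bicharacter survives but $\euler^{\pii\,g^\transpose Mg}$ changes by a sign $(-1)^{\sum_{i<j}\Delta_{ij}g(i)g(j)}$, so that is not ``without loss of generality'' either. What rescues the converse is a hypothesis you never invoke: formula (\ref{eq:Normal Form Quadratic Function}) is assumed to define a function on $G$, i.e.\ its value is constant on congruence classes mod $G$ (in the example above it is not, which is why that $M$ does not contradict the theorem). Granting this, the ratio $\xi(g+h)\overline{\xi(g)}\,\overline{\xi(h)}=\euler^{\pii\,g^\transpose(M+M^\transpose)h}$ --- an exact identity over real representatives, the terms in $C$ and $v$ cancelling --- is itself well defined on $G\times G$, continuous, and multiplicative in each argument, hence a symmetric bicharacter; by (\ref{eq:Quadratic Function}) this already makes $\xi$ quadratic, with no symmetrization and no appeal to lemma \ref{lemma:B-representations can always be constructed}. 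I would replace your converse by this one-line ratio computation (or, equivalently, state explicitly that well-definedness of $\xi$ is what licenses the symmetrization before lemma \ref{lemma:B-representations can always be constructed} is applied).
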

As discussed in the introduction, theorem \ref{thm:Normal form of a quadratic function} may be used to extend part of the so-called  ``discrete Hudson theorem'' of Gross   \cite{Gross06_discrete_Hudson_theorem}, which states that the phases of odd-qudit stabilizer  states are quadratic.

The normal form in theorem \ref{thm:Normal form of a quadratic function} can be very useful to perform certain calculations within the space of quadratic functions, as illustrated by the following lemma.
\begin{lemma}\label{lemma:Quadratic Function composed with Automorphism}
Let $\xi_{M,v}$ be the quadratic function (\ref{eq:Normal Form Quadratic Function}) over $G$. Let $A$ be the matrix representation of a continuous group homomorphism $\alpha:G\rightarrow G$. Then the composed function $\xi_{M,v}\circ \alpha$ is also quadratic and can be written in the normal form
(\ref{eq:Normal Form Quadratic Function})  as $\xi_{M',v'}$, with
\begin{equation}
M':=A^\transpose M A, \qquad v':= A^\transpose v + v_{A,M},\qquad v_{A,M}:= A^\transpose C_M - C_{A^\transpose M A},
\end{equation}
 
 where $C_M$ is the vector $C$ associated with $M$ in (\ref{eq:Normal Form Quadratic Function}).
\end{lemma}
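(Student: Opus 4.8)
The plan is to show first that $\xi_{M,v}\circ\alpha$ is quadratic, then invoke the normal form (theorem \ref{thm:Normal form of a quadratic function}) to know it equals \emph{some} $\xi_{M',v'}$, and finally pin down the data $(M',v')$ by explicit substitution. The opening observation is that composing a $B$-representation with a homomorphism yields another quadratic function: from the defining identity \eqref{eq:Quadratic Function}, $(\xi_{M,v}\circ\alpha)(g+h)=\xi_{M,v}(\alpha(g)+\alpha(h))=(\xi_{M,v}\circ\alpha)(g)\,(\xi_{M,v}\circ\alpha)(h)\,B(\alpha(g),\alpha(h))$, so $\xi_{M,v}\circ\alpha$ is a $B'$-representation with $B'(g,h)=B(\alpha(g),\alpha(h))$. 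Theorem \ref{thm:Normal form of a quadratic function} then guarantees a normal form $\xi_{M',v'}$, and the whole task reduces to computing $M'$ and $v'$.

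To obtain $M'$ I would substitute representatives. Since $A$ is a matrix representation of $\alpha$ (definition \ref{def:Matrix representation}), every real $x\equiv g\ (\mathrm{mod}\ G)$ satisfies $Ax\equiv\alpha(g)\ (\mathrm{mod}\ G)$, so I may evaluate the normal form of $\xi_{M,v}$ at $\alpha(g)$ using the real representative $Ag$; this is legitimate precisely because the correction vector $C_M$ of theorem \ref{thm:Normal form of a quadratic function} was built to make that expression independent of the chosen representative. Expanding the quadratic part via $(Ag)^{\transpose}M(Ag)=g^{\transpose}(A^{\transpose}MA)g$ immediately yields $M'=A^{\transpose}MA$. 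This $M'$ is symmetric because $M$ is, and it is a genuine matrix representation of a homomorphism $G\to G^{\bullet}$ by the composition rule of lemma \ref{lemma:properties of matrix representations}(a); alternatively, lemma \ref{lemma symmetric matrix representation of the bicharacter homomorphism}(c) identifies $A^{\transpose}MA$ as a symmetric representative of the bicharacter $B'$, which is the same conclusion.

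To obtain $v'$ I would match the residual linear phase. With $M'$ fixed, the canonical $B'$-representation is $Q_{M'}(g)=\exp\!\left(\pi i\,(g^{\transpose}M'g+C_{M'}^{\transpose}g)\right)$ of lemma \ref{lemma:B-representations can always be constructed}, and by lemma \ref{lemma quadratic B-representations differ by a character} the two $B'$-representations $\xi_{M,v}\circ\alpha$ and $Q_{M'}$ differ by a single character. I would compute that character as the ratio: substituting $Ag$ as above gives $(\xi_{M,v}\circ\alpha)(g)/Q_{M'}(g)=\exp\!\left(\pi i\,\bigl[(A^{\transpose}C_M-C_{M'})^{\transpose}g+2(A^{\transpose}v)^{\transpose}g\bigr]\right)$, which is linear in $g$ and hence equal to some $\chi_{\mu}$. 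Reading off its bullet label decomposes $v'$ as $A^{\transpose}v$ plus a residual term built from $A^{\transpose}C_M-C_{A^{\transpose}MA}$, i.e.\ $v'=A^{\transpose}v+v_{A,M}$, and one checks using the block structure that $v'$ indeed lands in $G^{\bullet}$.

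The main obstacle is the bookkeeping of the correction vectors $C_M$ and $C_{A^{\transpose}MA}$. Naive substitution propagates $A^{\transpose}C_M$, but the normal form of the \emph{composed} function requires the correction $C_{M'}=C_{A^{\transpose}MA}$ attached to $M'=A^{\transpose}MA$, and these two vectors do not coincide; their mismatch is exactly what $v_{A,M}$ must absorb. The delicate part is therefore verifying that the leftover linear phase $\exp\!\left(\pi i\,(A^{\transpose}C_M-C_{M'})^{\transpose}g\right)$ is a \emph{bona fide} character of $G$, so that its bullet label $v_{A,M}$ is well defined modulo the identifications of $G^{\bullet}$. I would carry this out componentwise using lemma \ref{lemma:Normal form of a matrix representation}: since each $C(i)=M(i,i)c_i$ vanishes on the zero-characteristic ($\Z,\R$) and torus rows and is supported only on the finite ($\Z_{N_i}$) rows, the cross-terms feeding $A^{\transpose}C_M$ are tightly controlled by the block form of $A$, and the required congruences/parities can be checked factor by factor. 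This reduction to the bullet group is where the precise combination (and any constant) defining $v_{A,M}$ must be fixed, and it is the one step that is genuinely more than routine symbol-pushing.
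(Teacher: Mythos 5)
Your proposal tracks the paper's own proof almost step for step: quadraticity and continuity of $\xi_{M,v}\circ\alpha$, appeal to theorem \ref{thm:Normal form of a quadratic function} for the existence of some normal form $\xi_{M',v'}$, identification of the composed bicharacter with $M'=A^\transpose M A$ (via lemmas \ref{lemma:properties of matrix representations}(a) and \ref{lemma symmetric matrix representation of the bicharacter homomorphism}), comparison against the canonical $B_{M'}$-representation $Q_{M'}$ of lemma \ref{lemma:B-representations can always be constructed}, and the use of lemma \ref{lemma quadratic B-representations differ by a character} to write the ratio as a character whose label yields $v'$. Up to that point your argument is correct and is the paper's argument.

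The gap sits exactly at the step you flag as ``genuinely more than routine symbol-pushing,'' and the paper closes it with a two-line observation that makes your proposed factor-by-factor congruence analysis unnecessary. The full ratio $f=(\xi_{M,v}\circ\alpha)/Q_{M'}$ is already a character of $G$ by lemma \ref{lemma quadratic B-representations differ by a character}. The factor $\exp\left(2\pii\,(A^\transpose v)^\transpose g\right)$ equals $\chi_v(\alpha(g))$, i.e.\ the composition of the character $\chi_v$ (with $v\in G^\bullet$, by theorem \ref{thm:Normal form of a quadratic function}) with the continuous homomorphism $\alpha$, hence it is itself a character. Since characters form a group under multiplication, the leftover factor $\exp\left(\pii\,(A^\transpose C_M-C_{M'})^\transpose g\right)=f(g)\,\overline{\chi_v(\alpha(g))}$ is a character as well, so its label $v_{A,M}$ is automatically congruent to an element of $G^\bullet$ with no case analysis; the paper relegates the explicit componentwise route you sketch to a footnote as ``more laborious.'' Two smaller corrections. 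First, your inference ``linear in $g$, hence equal to some $\chi_\mu$'' is not valid on its own: an exponential of a linear form in the real representative need not descend to a well-defined function on $G$; the character property must come from the lemma you cite just before, not from linearity. Second, your premise that $C_M$ vanishes on the torus rows is true but for the wrong reason: there $c_i=\charac{\T}=1\neq 0$, and the vanishing instead follows from $M(i,i)=0$, since there is no nontrivial continuous homomorphism $\T\rightarrow\Z=\T^\bullet$ (lemma \ref{corollary:block-structure of Group Homomorphisms}). Finally, matching the $2v^\transpose g$ convention of the normal form fixes the constant you left open: the argument produces $v_{A,M}=\left(A^\transpose C_M-C_{A^\transpose M A}\right)/2$, as in the paper's appendix proof (the factor $1/2$ is missing from the displayed formula in the lemma statement itself).
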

This lemma will be used to prove theorem \ref{thm:Main Result} (in the proof of lemma \ref{lemma:Evolution of Pauli Phases}.)

\section{Pauli operators over Abelian groups}\label{sect:Pauli operators over Abelian groups}

In this section we introduce Pauli operators over groups of the form $G= \Z^a\times \T^b\times F$ (note that we no longer include factors of $\R^d$ because these groups are not related to the Hilbert spaces that we study in this paper), discuss some of their basic properties and finally show that normalizer gates map any Pauli operator to another Pauli operator. The latter property is a generalization of a well known property for qubit systems, namely that Clifford operations map the Pauli group to itself.

\textbf{Note on terminology.} Throughout the rest of the paper, sometimes we use the symbol $\mathcal{H}_\T$ as a second name for the Hilbert space $\mathcal{H}_\Z$. Whenever this notation is used, we make implicit that we are working on the Fourier basis of  $\mathcal{H}_\Z$, which is labeled by the circle group $\T$. Sometimes, this basis will be called the $\T$ standard basis or just $\T$ basis. 

\subsection{Definition and basic properties}\label{sect:pauli_def}

Consider an Abelian group of the form $G= \Z^a\times \T^b\times F$ and the associated Hilbert space ${\cal H}_G$ with the associated group-element basis $\{|g\rangle: g\in G\}$ as defined in section \ref{sect:Quantum states over infinite Abelian groups} . We define two types of unitary gates acting on $\mathcal{H}_G$, which we call the \emph{Pauli operators of} $G$. The first type of Pauli operators are the X-type operators $X_G(g)$ (often called \emph{shift operators} in generalized harmonic analysis):
\be\label{eq:Pauli operator type X, over G, shift definition}
X_G(g)\psi(h) := \psi(h-g), \quad \text{for every $g, h\in G$},
\ee
where the $\psi(h)$ are the coefficients of some quantum state $|\psi\rangle$ in ${\cal H}_G$. These operators can also be written via their action on the standard basis, which yields a more familiar definition:
\begin{equation}\label{eq:Pauli operator type X, over G, eigenket definition}
X_{{G}}(g)\ket{h} = \ket{g+h}, \quad  \text{for every $g,h\in G$}.
\end{equation}
In representation theory, the map $g\rightarrow X_G(g)$ is called the \emph{regular representation} of the group $G$. The second type of Pauli operators are the Z-type operators $Z_G(\mu)$:
\begin{equation}\label{eq:Pauli operator type Z}
Z_{G}(\mu)\ket{g} :=\chi_{\mu}(g)\ket{g}, \quad \text{for every $g\in G,\, \mu\in G^*$}.
\end{equation}
We define a \emph{ generalized Pauli operator} of $G$ to be any unitary operator of the form
\begin{equation}\label{eq definition of Pauli operator}
 \sigma:=\gamma Z_G(\mu) X_G(g)
\end{equation}
where $\gamma$ is a complex number with unit modulus. We will call the duple $(\mu, g)$ and the complex number $\gamma$, respectively, the \emph{label} and the \emph{phase} of the Pauli operator $\sigma$. Furthermore we will regard the label  $(\mu, g)$ as an element of the Abelian group $G^*\times G$. The above definition of Pauli operators is a generalization of the notion of Pauli operators over finite Abelian groups as considered in \cite{VDNest_12_QFTs,BermejoVega_12_GKTheorem}, which was in turn a generalization of the standard notion of Pauli operators for qubit systems. An important distinction between Pauli operators for finite Abelian groups and the current setting is that the $Z_G(\mu)$ are labeled by $\mu\in G^*$. For finite Abelian groups, we have  $G^* = G$ and consequently the Z-type operators are also labeled by elements of $G$.

Using the definition of Pauli operators, it is straightforward to verify the following commutation
relations, which hold for all $g\in G$ and $\mu\in G^*$:
\begin{align}
X_{G}(g)X_{G}(h) &=  X_G(g+h)=X_G(h)X_G(g)\nonumber \\
Z_{G}(\mu)Z_{G}(\nu) &=  Z_{G}(\mu+\nu)=Z_{G}(\nu)Z_{G}(\mu)\label{eq:pauli_commutation}\\
Z_{G}(\mu)X_{G}(g) &=  \chi_{\mu}(g)X_{G}(g)Z_{G}(\mu)\nonumber
\end{align}
It follows that the set of generalized Pauli operators of $G$
form a group, which we shall call the \emph{Pauli group} of $G$.

\subsection{Evolution of Pauli operators} \label{sect:Evolution Pauli Operators}

The connection between normalizer gates and the Pauli group is that the former ``preserve'' the latter  under conjugation, as we will show in this section. This property will be a generalization of the well known fact that the Pauli group for $n$ qubits is mapped to itself under the conjugation map $\sigma\to U\sigma U^{\dagger}$, where $U$ is either a Hadamard gate, CNOT gate or $(\pi/2)$-phase gate\cite{Gottesman_PhD_Thesis,Gottesman99_HeisenbergRepresentation_of_Q_Computers}---these gates being normalizer gates for the group $G=\Z_2\times\dots\Z_2$ \cite{VDNest_12_QFTs,BermejoVega_12_GKTheorem}. More generally, it was shown in \cite{VDNest_12_QFTs} that normalizer gates over any finite Abelian group $G$ map the corresponding Pauli group over $G$ to itself under conjugation. In generalizing the latter result to Abelian groups of the form $G= \Z^a\times \T^b\times F$, we will however note an important distinction. Namely, normalizer gates over $G$ will map Pauli operators over $G$ to Pauli operators over a group $G'$ which is, in general, \emph{different} from the initial group $G$. This feature is a consequence of the fact that the groups $\Z^a$ and $\T^b$ are no longer autodual (whereas all finite Abelian groups are). Consequently, as we have seen in section \ref{sect Normalizer gates}, the QFT over $G$ (or any partial QFT) will change the  group that labels the designated basis of ${\cal H}$ from $G$ to $G'$. We will therefore find that the QFT maps Pauli operators over $G$ to Pauli operators over $G'$. In contrast, such a situation does not occur for automorphism gates and quadratic phase gates, which do not change the group $G$ that labels the designated basis.

Before describing the action of normalizer gates on Pauli operators (theorem \ref{thm:Normalizer gates are Clifford}), we provide two properties of QFTs.
\begin{lemma}[\textbf{Fourier transforms diagonalize shift operators}]\label{lemma:Fourier transform diagonalizes the Regular Representation}
Consider a group of the form $G= \Z^a\times \T^b\times F$. Then the X-type Pauli operators of $G$ and the Z-type operator of $G^*$ are related via the quantum Fourier transform $\mathcal{F}_G$ over $G$:
\be
Z_{G^*}(g)=\mathcal{F}_G X_G(g) \mathcal{F}_G^\dagger.
\ee
\end{lemma}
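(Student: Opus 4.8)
The plan is to verify the claimed operator identity directly on basis states, exploiting that the $X$-type operators act \emph{diagonally} in the Fourier basis. Since the total QFT, the $X$-operators and the $Z$-operators all factor as tensor products over the primitive factors of $G=\Z^a\times\T^b\times F$, it suffices in principle to treat each primitive factor ($\Z$, $\T$, $\Z_N$) separately; however, the character formalism of Section~\ref{sect:characters} allows one to run the argument uniformly, so I would work directly with $G$ and its dual $G^*$, keeping the factorization in reserve for the analytic bookkeeping.

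First I would record the action of $\mathcal{F}_G^\dagger$ on the standard basis $\{|\mu\rangle:\mu\in G^*\}$ of $\mathcal{H}_{G^*}$ (i.e.\ the Fourier basis of $\mathcal{H}_G$). From the QFT conventions of Section~\ref{sect_normalizer_circuits} one obtains $\mathcal{F}_G^\dagger|\mu\rangle=\int_G \overline{\chi_\mu(h)}\,|h\rangle\,\mathrm{d}h$, a sum for the discrete factors and a Haar integral for the $\T$-components. The heart of the argument is then a one-line computation showing that each such Fourier state is an eigenvector of $X_G(g)$: applying the shift rule $X_G(g)|h\rangle=|g+h\rangle$, substituting $h'=g+h$, and using that $\chi_\mu$ is a character, so that $\overline{\chi_\mu(h'-g)}=\chi_\mu(g)\,\overline{\chi_\mu(h')}$, one finds $X_G(g)\,\mathcal{F}_G^\dagger|\mu\rangle=\chi_\mu(g)\,\mathcal{F}_G^\dagger|\mu\rangle$.

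Conjugating back and using only $\mathcal{F}_G\mathcal{F}_G^\dagger=I$ gives $\mathcal{F}_G X_G(g)\mathcal{F}_G^\dagger|\mu\rangle=\chi_\mu(g)|\mu\rangle$. To identify the right-hand side I would invoke Pontryagin duality in the form of Lemma~\ref{lemma:Pontryagin duality for characters}, which yields $\chi_\mu(g)=\chi_g(\mu)$; recalling that $Z_{G^*}(g)$ is the $Z$-type operator of $G^*$ labeled by $g\in(G^*)^*=G$, acting as $Z_{G^*}(g)|\mu\rangle=\chi_g(\mu)|\mu\rangle$, we conclude $\mathcal{F}_G X_G(g)\mathcal{F}_G^\dagger|\mu\rangle=Z_{G^*}(g)|\mu\rangle$ for every $\mu\in G^*$, hence the two operators coincide.

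The main obstacle is not algebraic but analytic: one must justify the basis-state manipulations when $G$ contains $\Z$ or $\T$ factors, where $\mathcal{F}_G$ is a change between a discrete and a continuous basis rather than a genuine unitary rotation, and the Fourier states are unnormalizable (Dirac-delta normalized). I would handle this by working with the Haar measure on each factor and its orthogonality relation, checking that the translation $h\mapsto g+h$ is measure-preserving and that the eigenvalue equation holds when paired against normalizable test states; the finite (and $\R$-type) factors are routine. Reducing first to primitive factors via the tensor-product factorization is the cleanest route to full rigor here, since on each primitive factor the required Fourier and orthogonality identities are precisely the standard signal-processing transforms quoted in Section~\ref{sect_normalizer_circuits}.
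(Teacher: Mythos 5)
Your proof is correct and takes essentially the same route as the paper's own argument: both expand the Fourier basis state $\ket{\mu}=\int_G \mathrm{d}h\,\overline{\chi_\mu(h)}\ket{h}$, apply the shift $X_G(g)$, change variables $h'=g+h$, use multiplicativity of $\chi_\mu$ to extract the eigenvalue $\chi_\mu(g)$, and then invoke Lemma~\ref{lemma:Pontryagin duality for characters} to rewrite $\chi_\mu(g)=\chi_g(\mu)$, which is precisely the action of $Z_{G^*}(g)$ on $\ket{\mu}$. The analytic caveats you raise about unnormalizable Fourier states are legitimate but are handled no more formally in the paper, which performs the same distributional manipulation.
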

\begin{proof}
 
We show this by direct evaluation of the operator $X_G(h)$  on the Fourier basis states (section \ref{sect:Quantum states over infinite Abelian groups}  definition). Using the definitions introduced in section \ref{sect:characters} we can write the vectors in the Fourier basis of $G$ in terms of character functions (definition \ref{def:Characters}): letting $\ket{\mu}$ be the state
\begin{equation}\label{eq:Fourier Basis in terms of Characters}
\ket{\mu}= \int_{G} \mathrm{d} h \, \overline{\chi_\mu(h)}\ket{h} = \int_{G} \mathrm{d} h \, \chi_{-\mu}(h)\ket{h},
\end{equation}
then the Fourier basis of $G$ is just the set $\{\ket{\mu}, \,\mu \in G^*\}$. Now it is easy to derive
\begin{align}
X_G(g)\ket{\mu} &= X_G(g) \left(\int_{G} \mathrm{d} h \overline{\chi_\mu(h)}\ket{h}\right) =  \int_{G} \mathrm{d} h \overline{\chi_\mu(h)}\ket{g+h} =  \int_{G} \mathrm{d} h' \overline{\chi_\mu(h'-g)}\ket{h'}\notag\\
&= \overline{\chi_\mu(-g)}\left(\int_{G} \mathrm{d} h' \chi_\mu(h')\ket{h'} \right)= \chi_{g}(\mu)\ket{\mu} = Z_{G^*}(g) \ket{\mu}.
\end{align}
In the derivation we use  lemmas \ref{lemma:Pontryagin duality for characters}, \ref{lemma:Character Multiplication} and equation (\ref{eq:Pauli operator type Z}) applied to the group $G^*$.
\end{proof}
The next theorem shows that normalizer gates are generalized Clifford operations, i.e.\ they transform Pauli operators into Pauli operators under conjugation and, therefore, they \emph{normalize}  the group of all Pauli operators within the group of all unitary gates\footnote{It is usual in quantum information theory to call the normalizer group of the $n$-qubit Pauli group ``the Clifford group'' because of a ``tenuous relationship'' \cite[Gottesman]{Gottesman09Intro_QEC_and_FTQC}  to Clifford algebras.}. 
\begin{theorem}[\textbf{Normalizer gates are Clifford}]\label{thm:Normalizer gates are Clifford}
Consider a group of the form $G= \Z^a\times \T^b\times F$. Let $U$ be a normalizer gate of the group $G$. Then $U$ corresponds to an isometry from ${\cal H}_G$ to ${\cal H}_{G'}$ for some suitable group $G'$, as discussed in section  
\ref{sect_normalizer_circuits}.  Then the conjugation
map $\sigma\rightarrow U\sigma U^{\dagger}$ sends Pauli operators
of $G$ to Pauli operators of $G'$. We say that $U$ is a Clifford operator.
\end{theorem}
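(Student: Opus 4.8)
The plan is to verify the claim generator-by-generator. Since the conjugation $\sigma\mapsto U\sigma U^\dagger$ is multiplicative and the phase $\gamma$ in $\sigma=\gamma Z_G(\mu)X_G(g)$ is inert under it, it suffices to show that $UX_G(g)U^\dagger$ and $UZ_G(\mu)U^\dagger$ are each, up to a unit phase, Pauli operators of the appropriate output group $G'$; the commutation relations \eqref{eq:pauli_commutation} then let me reassemble any resulting product back into the canonical form $\gamma' Z_{G'}(\mu')X_{G'}(g')$. Because a normalizer gate is either an automorphism gate, a quadratic phase gate, or a (possibly partial) QFT, and because both the Pauli operators and a partial QFT factor through the primitive decomposition $G=G_1\times\cdots\times G_m$, I would treat the three gate types separately and reduce the partial-QFT case to the single-factor case by tensoring with identities.

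First, for an automorphism gate $U_\alpha$ I would compute directly on the standard basis \eqref{eq:Pauli operator type X, over G, eigenket definition}: from $U_\alpha\ket{h}=\ket{\alpha(h)}$ one gets $U_\alpha X_G(g)U_\alpha^\dagger=X_G(\alpha(g))$, while for the $Z$-type operators \eqref{eq:Pauli operator type Z} the identity defining the dual homomorphism, equation \eqref{eq:Dual Automorphism DEF}, yields $U_\alpha Z_G(\mu)U_\alpha^\dagger=Z_G((\alpha^{-1})^*(\mu))$. Here the designated group is unchanged, so $G'=G$; the inverse $\alpha^{-1}$ and its dual exist by lemma \ref{lemma:Computing Inverses} and the remarks following \eqref{eq:Dual Automorphism DEF}.

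Second, for a quadratic phase gate $D_\xi$ with $\xi(g+h)=\xi(g)\xi(h)B(g,h)$, diagonality immediately gives $D_\xi Z_G(\mu)D_\xi^\dagger=Z_G(\mu)$. For the shift operator I would expand $D_\xi X_G(g)D_\xi^\dagger\ket{h}=\overline{\xi(h)}\,\xi(g+h)\,\ket{g+h}=\xi(g)B(g,h)\,\ket{g+h}$, and then invoke the bicharacter normal form of lemma \ref{lemma:Normal form of a bicharacter 1}, namely $B(g,h)=\chi_{\beta(g)}(h)$ for a homomorphism $\beta:G\to G^*$, to recognize the right-hand side as $\xi(g)\,X_G(g)Z_G(\beta(g))\ket{h}$. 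Thus $D_\xi X_G(g)D_\xi^\dagger=\xi(g)X_G(g)Z_G(\beta(g))$, again a Pauli operator of $G'=G$ after reordering through \eqref{eq:pauli_commutation}.

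Finally, the QFT is the only gate that changes the labelling group, and is the crux of the argument. For the $X$-type operators, lemma \ref{lemma:Fourier transform diagonalizes the Regular Representation} already gives $\mathcal{F}_G X_G(g)\mathcal{F}_G^\dagger=Z_{G^*}(g)$, an operator of $G'=G^*$. For the $Z$-type operators I would obtain the dual statement by applying the same lemma to $G^*$ (using $(G^*)^*=G$) together with the concatenation identity $\mathcal{F}_{G^*}\mathcal{F}_G=U_{-1}$, the inversion automorphism $\ket{g}\mapsto\ket{-g}$ foreshadowed in the summary; combining these with the automorphism computation above gives $\mathcal{F}_G Z_G(\mu)\mathcal{F}_G^\dagger=X_{G^*}(-\mu)$ up to the expected inversion sign. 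The main obstacle I anticipate lies precisely here: $\mathcal{F}_G$ is an isometry between the \emph{distinct} Hilbert spaces $\mathcal{H}_G$ and $\mathcal{H}_{G^*}$ rather than a unitary on a fixed space, so the conjugation must be interpreted with care, and the computations involve continuous (Fourier) bases, Haar-measure integrals, and the unnormalizable $\ket{p}$ states, so that convergence and the precise bookkeeping of the $\Z\leftrightarrow\T$ swap and of signs really require the distributional/rigged-Hilbert-space reading of section \ref{sect:Quantum states over infinite Abelian groups} rather than naive manipulation. Once the three cases are established, an arbitrary Pauli $\gamma Z_G(\mu)X_G(g)$ maps to the product of the computed images, which \eqref{eq:pauli_commutation} collapses into canonical Pauli form on $G'$, proving that $U$ is Clifford.
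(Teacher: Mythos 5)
Your proposal is correct and follows the paper's own strategy almost exactly: the same reduction to the generators $X_G(g)$ and $Z_G(\mu)$ via the commutation relations (\ref{eq:pauli_commutation}), the same case analysis over the three gate types, the same direct basis computations for automorphism gates and quadratic phase gates (including the invocation of lemma \ref{lemma:Normal form of a bicharacter 1} to rewrite $B(g,h)$ as $\chi_{\beta(g)}(h)$), and the same use of lemma \ref{lemma:Fourier transform diagonalizes the Regular Representation} for the $X$-type operators under the QFT; your $(\alpha^{-1})^*$ also agrees with the paper's $\alpha^{*^{\inverse}}$ since dualization and inversion commute. The single point of divergence is the identity $\mathcal{F}_G Z_G(\mu)\mathcal{F}_G^\dagger = X_{G^*}(-\mu)$: the paper proves it by a second direct computation on the Fourier basis kets, namely
\begin{equation*}
Z_G(\mu)\ket{\nu} = \int_{G} \mathrm{d}h \, \chi_{-(\nu-\mu)}(h)\ket{h} = \ket{\nu-\mu} = X_{G^*}(-\mu)\ket{\nu},
\end{equation*}
whereas you derive it structurally from the $X$-type identity applied to $G^*$ (via $(G^*)^*=G$), the Fourier-inversion identity $\mathcal{F}_{G^*}\mathcal{F}_G = U_{-1}$, and the already-established automorphism case. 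Your route is sound, but be aware that the inversion identity is only illustrated through examples in the paper's summary section and is never proven as a lemma; making it rigorous requires exactly the kind of distributional Haar-measure computation (essentially Fourier inversion) that the paper instead spends on the direct calculation above. So the two routes have comparable cost: the paper's is self-contained, while yours trades that one computation for a cleaner, reusable duality statement.
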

\begin{proof}
We provide an explicit proof for Pauli operators of type $X_{G}(g)$  and  $Z_{G}(\mu)$. This is enough to prove the lemma due to  (\ref{eq:pauli_commutation}). As before,  $G=G_{1}\times\cdots\times G_{m}$ where the $G_{i}$ are groups of primitive type.

We break the proof into three cases.
\begin{itemize}
\item If $U$ is an automorphism gate $U_{\alpha}:\:|h\rangle\rightarrow|\alpha(h)\rangle$ then
\begin{equation}\label{eq:Automorphism Gate on X}
U_{\alpha}X_{G}(g)U_{\alpha}^{\dagger}|h\rangle=|\alpha(\alpha^{-1}(h)+g)\rangle=|h+\alpha(g)\rangle=X_{G}(\alpha(g))|h\rangle,
\end{equation}
\begin{equation}\label{eq:Automorphism Gate on Z}
U_{\alpha}Z_{G}(\mu)U_{\alpha}^{\dagger}|h\rangle=\chi_{\mu}(\alpha^{-1}(h))|h\rangle=\chi_{\alpha^{*^{\inverse}}(\mu)}(h)|h\rangle=Z_{G}\left(\alpha^{*^{\inverse}}(\mu)\right)|h\rangle.
\end{equation}
\item If $U$ is a quadratic phase gate $D_{\xi}$ associated  
with a quadratic function $\xi$  then
\begin{align}\label{eq:QuadraticPhase Gate on X}
D_{\xi}X_{G}(g)D_{\xi}^{\dagger}|h\rangle &= \xi(g+h)\overline{\xi(h)}|g+h\rangle=\xi(g)B(g,h)|g+h\rangle \notag\\
 & =  \xi(g)\chi_{\beta(g)}(h)|g+h\rangle=\xi(g)X(g)Z(\beta(g))|h\rangle,
\end{align}
where, in the second line, we use lemma \ref{lemma:Normal form of a bicharacter 1}. Moreover $D_{\xi}Z_{G}(\mu)D_{\xi}^{\dagger}= Z_{G}(\mu)$ since diagonal gates commute.
\item If $U$ is the  Fourier transform $F_G$ on the  $\mathcal{H}_{G}$ then
\begin{equation}\label{eq:Fourier transform on X and Z}
F_G X_{G}(g)F_G^{\dagger}=Z_{G^*}(g),\qquad\quad\quad F_G Z_{G}(\mu)F_G^{\dagger}=X_{G^*}(-\mu).
\end{equation}
The first identity is the content of lemma \ref{lemma:Fourier transform diagonalizes the Regular Representation}. The second is proved in a similar way:
\begin{align}
Z_G(\mu)\ket{\nu} &= Z_G(\mu) \left(\int_{G} \mathrm{d} h \chi_{-\nu}(h)\ket{h}\right) =  \int_{G} \mathrm{d} h \chi_{-\nu}(h)\chi_\mu(h)\ket{h} =  \int_{G} \mathrm{d} h \chi_{-(\nu-\mu)}(h)\ket{h}\notag\\
&= \ket{\nu-\mu} = X_{G^*}(-\mu) \ket{\nu},
\end{align}
where we apply (\ref{eq:Fourier Basis in terms of Characters}), lemma \ref{lemma:Character Multiplication} and (\ref{eq:Pauli operator type X, over G, eigenket definition},\ref{eq:Pauli operator type Z}). These formula also apply to partial Fourier transforms $\mathcal{F}_{G_i}$, since Pauli operators decompose as tensor products.\qedhere
\end{itemize}
\end{proof}

\section{Stabilizer states}\label{sect:Stabilizer States}

In this section we develop a stabilizer framework to simulate normalizer circuits over infinite Abelian groups of the form $G=\Z^a\times\T^b\times \DProd{N}{c}$. As explained in section \ref{sect:Previous Work}, our techniques generalize methods given in \cite{VDNest_12_QFTs, BermejoVega_12_GKTheorem} (which apply to groups of the form $F=\DProd{N}{c}$) and is closely related to the (more general) monomial stabilizer formalism \cite{nest_MMS}.

 \subsection{Definition and basic properties}

A \emph{stabilizer group $\mathcal{S}$}  over  $G$ is any group of \emph{commuting} Pauli operators of $G$ with a nontrivial +1 common eigenspace.  
Here we are interested in stabilizer groups where the +1 common eigenspace is one-dimensional, i.e. there exists a state $|\psi\rangle$ such that $\sigma|\psi\rangle = |\psi\rangle$ for all $\sigma\in {\cal S}$, and moreover $|\psi\rangle$ is the unique state (up to normalization) with this property. Such states are called stabilizer states (over $G$). This terminology is an extension of the already established stabilizer formalism for finite-dimensional systems \cite{Gottesman_PhD_Thesis,Gottesman99_HeisenbergRepresentation_of_Q_Computers,Knill96non-binaryunitary,Gottesman98Fault_Tolerant_QC_HigherDimensions,VDNest_12_QFTs,BermejoVega_12_GKTheorem}.

We stress here  that stabilizer states $|\psi\rangle$ are allowed to be unnormalizable states;  
in other words, we do not require $|\psi\rangle$ to belong to the physical Hilbert space ${\cal H}_G$. In a more precise language, stabilizer states may be tempered distributions in the Schwartz-Bruhat space $\mathcal{S}_G^\times$ \cite{Bruhat61_Schwatz-Bruhat-functions,Osborne75_Schwartz_Bruhat}. This issue arises only when considering infinite groups, i.e. groups containing  $\Z$ or $\T$. An example of a non-physical stabilizer state is the  Fourier basis state $|p\rangle$ (\ref{eq:Fourier basis state over Z})  (we argue below that this is indeed a stabilizer state). Note that not all stabilizer states for $G=\T$ must be unphysical; an example of a physical stabilizer state within ${\cal H}_G$ is 
\begin{equation}
\int_{\T} dp |p\rangle.
\end{equation} 
The stabilizer group of this state is $\{X_\T(p): p\in \T\}$, which can be alternatively written as $\{Z_\Z(p): p\in \T\}$ (lemma \ref{lemma:Fourier transform diagonalizes the Regular Representation}). Similar examples of stabilizer states within and outside ${\cal H}_G$ can be given for $G=\Z$. Note, however, that in this case the standard basis states $|x\rangle$ with $x\in \Z$ (which are again stabilizer states) \emph{do} belong to ${\cal H}_{\Z}$.

Next we show that all standard basis states are stabilizer states. 
\begin{lemma}\label{lemma:Stabilizer Group for Standard Basis State}
Consider $G= \Z^a\times \T^b\times F$ with associated Hilbert space ${\cal H}_G$ and standard basis states $\{|g\rangle:g\in G\}$. Then every standard basis state $|g\rangle$ is a stabilizer state. Its stabilizer group is \be\label{stabilizer_g}
\{\overline{\chi_{\mu}}(g)Z_G(\mu): \mu\in G^*\}.\ee
\end{lemma}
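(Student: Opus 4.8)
The goal is to show that each standard basis state $|g\rangle$ is stabilized by the group $\{\overline{\chi_{\mu}}(g)Z_G(\mu): \mu\in G^*\}$, and that this state is the unique (up to scalar) common $+1$ eigenvector. Let me think about what needs to be verified.

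First, the invariance check. For a Pauli operator $\sigma_\mu = \overline{\chi_{\mu}(g)}\,Z_G(\mu)$, I need $\sigma_\mu |g\rangle = |g\rangle$. By the definition of $Z_G(\mu)$ in equation (\ref{eq:Pauli operator type Z}), $Z_G(\mu)|g\rangle = \chi_\mu(g)|g\rangle$, so $\sigma_\mu |g\rangle = \overline{\chi_\mu(g)}\chi_\mu(g)|g\rangle = |g\rangle$ since $\chi_\mu(g)$ has unit modulus. Good — the invariance is immediate.

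Second, I must verify this set is genuinely a group of commuting Pauli operators. Commutativity: all operators here are Z-type, and by the commutation relations (\ref{eq:pauli_commutation}) any two Z-type operators commute. Closure under multiplication: using $Z_G(\mu)Z_G(\nu)=Z_G(\mu+\nu)$ and Lemma \ref{lemma:Character Multiplication} ($\chi_{\mu+\nu}(g)=\chi_\mu(g)\chi_\nu(g)$), I get $\sigma_\mu\sigma_\nu = \overline{\chi_\mu(g)}\,\overline{\chi_\nu(g)}\,Z_G(\mu+\nu) = \overline{\chi_{\mu+\nu}(g)}\,Z_G(\mu+\nu)=\sigma_{\mu+\nu}$, so the map $\mu\mapsto\sigma_\mu$ is a homomorphism from $G^*$ and the image is a group. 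This also shows the phases are precisely the ones that make each generator stabilize $|g\rangle$.

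Let me think about the genuinely substantive part.

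The harder step is uniqueness: showing that the common $+1$ eigenspace is exactly one-dimensional, so that $|g\rangle$ is the unique stabilizer state. I would argue as follows. Suppose $|\psi\rangle=\int_X \mathrm{d}h\,\psi(h)|h\rangle$ is stabilized by every $\sigma_\mu$. The condition $\sigma_\mu|\psi\rangle=|\psi\rangle$ forces, for each $h$ in the support, $\overline{\chi_\mu(g)}\chi_\mu(h)\psi(h)=\psi(h)$, i.e. $\chi_\mu(h)=\chi_\mu(g)$, equivalently $\chi_\mu(h-g)=1$, for all $\mu\in G^*$ whenever $\psi(h)\neq 0$. Thus $h-g$ lies in the annihilator $(G^*)^\perp$. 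By the Pontryagin-Van Kampen duality underlying the character theory of section \ref{sect:characters} (the dual of $G^*$ is $G$ and characters separate points), the only element annihilated by \emph{all} of $G^*$ is $0$, so $h=g$. Hence the support is the single point $\{g\}$ and $|\psi\rangle$ is proportional to $|g\rangle$, establishing both nontriviality and uniqueness of the common eigenspace. The main obstacle is handling this uniqueness argument rigorously in the infinite/continuous setting, where $|\psi\rangle$ may be a tempered distribution rather than an $\ell_2$ vector; I expect one must phrase ``support'' and the separation-of-points property carefully (invoking that characters of $G$ separate points of $G$, a consequence of the duality cited after equation (\ref{eq:Elementary Character Group})) rather than manipulating pointwise amplitudes naively.
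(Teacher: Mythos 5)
Your proof is correct and takes essentially the same approach as the paper's: both arguments rest on the fact that the only $h\in G$ satisfying $\chi_\mu(h)=1$ for all $\mu\in G^*$ is $h=0$ (the annihilator of $G^*$ is trivial, i.e.\ characters separate points), so any state fixed by all the diagonal operators must be supported on the single point $g$. The only cosmetic difference is that the paper first proves the case $g=0$ and then obtains general $g$ by conjugating the stabilizer group with $X_G(g)$, whereas you treat general $g$ directly.
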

The lemma implies that the Fourier basis states and, in general,
any of the allowed group-element basis states (11) are stabilizer states.
 
\begin{proof} Let us first prove the theorem for $g=0$, and show that $\ket{0}$ is the unique state that is stabilized by $\mathcal{S}=\{Z_G(\mu): \mu\in G^*\}$. It is easy to check that a standard-basis state $\ket{h}$ with $h\in G$ is a common +1-eigenstate of $\mathcal{S}$ if and only if $\chi_\mu(h)=1$ for all $\mu\in G^*$ or, equivalently, iff $h$ belongs to $G^\perp$, the annihilator of $G$. It is known that $G^\perp$ coincides with the trivial subgroup $\{0\}$ of $G^*$ \cite[corollary 20.22]{Stroppel06_Locally_Compact_Groups}, and therefore $\ket{0}$ is the unique standard-basis state that is also a common +1 eigenstate of $\mathcal{S}$. Since all unitary operators of $\mathcal{S}$ are diagonal in the standard basis, $\ket{0}$ is the unique common +1 eigenstate of $\mathcal{S}$.

For arbitrary $\ket{g}=X_G(g)\ket{0}$, the stabilizer group of $\ket{g}$ is $X_G(g)\mathcal{S}X_G(g)^\dagger$, which equals $\{\overline{\chi_{\mu}}(g)Z_G(\mu): \mu\in G^*\}$ (see equation  (\ref{eq:pauli_commutation})).

\end{proof}
Let $|\psi\rangle$ be a stabilizer state with stabilizer group ${\cal S}$. We define the following sets, all of which are easily  
verified to be Abelian groups:
\begin{align}\label{label_groups}
\mathbb{L}&:=\{(\mu,g)\in G^*\times G\ : \ \mathcal{S} \textnormal{ contains a Pauli operator of the form }\gamma Z(\mu)X(g)\};\nonumber\\
\mathbb{H}&:=\{g\in G\ : \ \mathcal{S} \textnormal{ contains a Pauli operator of the form }\gamma Z(\mu)X(g)\};\nonumber\\
\mathbb{D}&:=\{\mu\in G^*\ : \ \mathcal{S} \textnormal{ contains a Pauli operator of the form }\gamma Z(\mu)\}
\end{align}
The groups $\mathbb{L}$, $\mathbb{D}$ and $\mathbb{H}$ contain information about the labels of the operators in ${\cal S}$. We highlight that, although $\mathbb{D}$ and $\mathbb{H}$ are subsets of very different groups (namely $G$ and $G^*$, respectively),  they are actually closely related to each other by the relation
\begin{equation}
\mathbb{H}\subseteq \mathbb{D}^\perp \quad (\mbox{or, equivalently, } \mathbb{D}\subseteq \mathbb{H}^\perp),
\end{equation}
which follows from the commutativity of the elements in $\mathcal{S}$ and the definition of orthogonal complement (recall section \ref{sect:pauli_def}). 

Finally, let $\mathcal{D}$ be the subgroup of all diagonal Pauli operators of $\mathcal{S}$. It is easy to see that, by definition, $\mathcal{D}$ and $\mathbb{D}$ are isomorphic to each other.

\subsection{Support of a stabilizer state}

We show that the support of a stabilizer state $\ket{\psi}$ (the manifold of points where the wavefunction $\psi(x)$ is not zero) can be fully characterized in terms of the  label groups $\mathbb{H}$, $\mathbb{D}$.

Our next result characterizes the the structure of this wavefunction.
\begin{lemma}
Every stabilizer state \label{lemma:StabStates_are_Uniform_Superpositions}$\ket{\psi}$ over $G$ is a uniform quantum superposition over some subset of the form $s+\mathbb{H}$, where $\mathbb{H}$ is a subgroup of $G$. Equivalently, any stabilizer state  $\ket{\psi}$ can be writen in the the form
\begin{equation}\label{eq:Stabilizer_States_r_Uniform_Superpositions}
\ket{\psi}=\int_{\mathbb{H}} \mathrm{d}h \, \psi(h) \ket{s+h}
\end{equation}
where all amplitudes have equal absolute value $|\psi(h)|=1$. We call the $s+\mathbb{H}$ the \emph{support} of the state.
\end{lemma}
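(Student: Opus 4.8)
The plan is to translate the stabilizer condition $\sigma\ket{\psi}=\ket{\psi}$ into a single functional equation for the wavefunction and then read off the structure of the support. Writing $\ket{\psi}=\int_{G}\mathrm{d}h\,\psi(h)\ket{h}$ and applying a generic Pauli operator $\sigma=\gamma Z_G(\mu)X_G(g)\in\mathcal{S}$, the definitions (\ref{eq:Pauli operator type X, over G, eigenket definition}) and (\ref{eq:Pauli operator type Z}) give $\sigma\ket{\psi}=\gamma\int_{G}\mathrm{d}h\,\chi_\mu(h)\,\psi(h-g)\ket{h}$. Comparing coefficients with $\ket{\psi}$, the stabilizer equation is equivalent to
\begin{equation}
\psi(h)=\gamma\,\chi_\mu(h)\,\psi(h-g)\qquad\text{for every }h\in G
\end{equation}
and for every $\sigma=\gamma Z_G(\mu)X_G(g)\in\mathcal{S}$. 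Since $|\gamma|=|\chi_\mu(h)|=1$, this yields at once the modulus identity $|\psi(h)|=|\psi(h-g)|$.

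First I would use this identity to control the support. As $\sigma$ ranges over $\mathcal{S}$, the shift $g$ ranges over all of the group $\mathbb{H}$ defined in (\ref{label_groups}); hence $h$ lies in the support if and only if $h-g$ does, for each $g\in\mathbb{H}$. Because $\mathbb{H}$ is a subgroup of $G$, the support of $\psi$ is therefore invariant under translation by $\mathbb{H}$, i.e.\ it is a (nonempty, since $\ket{\psi}\neq 0$) union of cosets of $\mathbb{H}$.

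The crucial step is to show that exactly one coset occurs, and here I would invoke the one-dimensionality of the common $+1$ eigenspace. Decompose $\ket{\psi}$ as a sum $\sum_c\ket{\psi_c}$ over the distinct $\mathbb{H}$-cosets $c$ meeting the support, where $\ket{\psi_c}$ is the restriction of $\ket{\psi}$ to $c$. Every $\sigma=\gamma Z_G(\mu)X_G(g)\in\mathcal{S}$ has $g\in\mathbb{H}$, so $X_G(g)$ maps each coset to itself while $Z_G(\mu)$ is diagonal; thus $\sigma$ preserves each summand, $\sigma\ket{\psi_c}$ is again supported on $c$, and matching $\sigma\ket{\psi}=\ket{\psi}$ coset by coset forces $\sigma\ket{\psi_c}=\ket{\psi_c}$ for every $c$. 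Each nonzero $\ket{\psi_c}$ is therefore stabilized by all of $\mathcal{S}$; since these vectors have pairwise disjoint supports they are linearly independent, so the assumed uniqueness of the stabilized state leaves only one surviving coset, say $s+\mathbb{H}$. Finally, for any two points $h,h'\in s+\mathbb{H}$ the difference $g:=h'-h$ lies in $\mathbb{H}$, so some $\sigma=\gamma Z_G(\mu)X_G(g)\in\mathcal{S}$ gives $|\psi(h')|=|\psi(h)|$; the modulus is constant on the support, and a choice of normalization fixes it to $1$, yielding the claimed form (\ref{eq:Stabilizer_States_r_Uniform_Superpositions}).

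The step I expect to be the main obstacle is the rigorous handling of the continuous and distributional aspects. When $G$ contains $\Z$- or $\T$-factors the stabilizer state need not lie in $\mathcal{H}_G$ but only in the Schwartz--Bruhat space $\mathcal{S}_G^\times$, so ``$\psi(h)$'', its pointwise modulus, and the ``support'' must be read distributionally, and the decomposition $\ket{\psi}=\sum_c\ket{\psi_c}$ becomes an (uncountable) direct-integral decomposition when $\mathbb{H}$ has continuous factors. I would address this by arguing that the functional equation is an identity of tempered distributions and by first applying the diagonal subgroup $\mathbb{D}$: the $g=0$ instances force $\gamma\,\chi_\mu(h)=1$ on the support for all $\gamma Z_G(\mu)\in\mathcal{D}$, confining it to a single coset of $\mathbb{D}^\perp\supseteq\mathbb{H}$ (cf.\ (\ref{label_groups})). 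This makes the coset decomposition well defined, after which the uniqueness argument above applies verbatim within the rigged-Hilbert-space formalism.
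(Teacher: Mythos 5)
Your proposal is correct and follows essentially the same route as the paper's proof: deriving $|\psi(h)|=|\psi(h+g)|$ for $g\in\mathbb{H}$ from the stabilizer equation, decomposing the state into pieces supported on distinct $\mathbb{H}$-cosets, observing that every $\sigma\in\mathcal{S}$ preserves each coset subspace so each nonzero piece is itself a common $+1$ eigenstate, and then invoking uniqueness of the stabilized state to force a single coset. Your extra preliminary step of using the diagonal subgroup $\mathbb{D}$ to confine the support to one coset of $\mathbb{D}^\perp$ is a harmless refinement for rigor; the paper instead works directly with a (possibly uncountable) index set of cosets and the same orthogonality/uniqueness argument.
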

 
This lemma generalizes corollary 1 in \cite{VDNest_12_QFTs}.
\begin{proof}
 Let $\ket{\psi}$ be an arbitrary quantum state $\ket{\psi}=\int_X\mathrm{d}g\, \psi(g)\ket{g}$. The action of an arbitrary Pauli operator $U=\gamma Z_G(\mu)X_G(h)\in \mathcal{S}$ on the state is
\begin{equation}\label{eq:inproof_Support_Stabilizer_State 0}
U\ket{\psi}= \gamma \int_X\mathrm{d}g\, \chi_{\mu}(g+h)\psi(g)\ket{g+h} = \int_X\mathrm{d}g\, \psi(g)\ket{g} =\ket{\psi}.
\end{equation}
Recall the definition of $\mathbb{H}$ in (\ref{label_groups}). Comparing the two  integrals in (\ref{eq:inproof_Support_Stabilizer_State 0}), and knowing that $|\chi_\mu(x)|=1$ for every $x\in G$, we find that the absolute value of $\psi$ cannot change if we shift this function  by an element of $\mathbb{H}$; in other words,
\begin{equation}\label{eq:inproof_Support_Stabilizer_States 1}
\textit{for every $g\in X$ it holds $|\psi(g)|=|\psi(g+h)|$ for every $h\in\mathbb{H}$.}
\end{equation}
Now let $Y\subset X$  denote the subset of points $y\in X$ for which $\psi(y)\neq0$. Eq.\ (\ref{eq:inproof_Support_Stabilizer_States 1}) implies that $Y$ is a disjoint union of cosets of $\mathbb{H}$, i.e.\
\begin{equation}
Y=\bigcup_{\iota\in I} s_\iota + \mathbb{H},
\end{equation}
where $I$ is a (potentially uncountable) index set, and that $\ket{\psi}$ is of the form
\begin{equation}\label{eq:inproof_Support_Stabilizer_States 2}
\ket{\psi} = \int_Y\mathrm{d}y\, \psi(y)\ket{y}  = \int_I \mathrm{d} \iota\, \alpha(\iota) \ket{\phi_\iota},
\end{equation}
where the states $\ket{\phi_\iota}$ are \emph{non-zero} linearly-independent uniform superpositions over  the cosets $x_\iota+\mathbb{H}$:
\begin{equation}\label{eq:inproof_Support_Stabilizer_States_3}
\ket{\phi_\iota}=\int_{\mathbb{H}}\mathrm{d}h\, \phi_\iota(h)\ket{s_\iota + h}
\end{equation}
and $|\phi_\iota(h)|=1$ for every $h$. Putting together (\ref{eq:inproof_Support_Stabilizer_States 2}) and (\ref{eq:inproof_Support_Stabilizer_States_3}) we conclude that, for any $U\in \mathcal{S}$, the condition $U\ket{\psi}=\ket{\psi}$ is fulfilled  if and only if $U\ket{\phi_\iota}=\ket{\phi_\iota}$ for every $\ket{\phi_\iota}$:  this holds because $U$ leaves invariant the mutually-orthogonal vector spaces  $\mathcal{V}_\iota:=\mathrm{span}\{\ket{s_\iota + h}:\: h\in \mathbb{H}\}$. Consequently,  every state $\ket{\phi_\iota}$ is a (non-zero) common +1 eigenstate of all operators in $\mathcal{S}$.  
Finally, since we know that  $\ket{\psi}$ is the \emph{unique} +1 common eigenstate of $\mathcal{S}$,  it follows from  (\ref{eq:inproof_Support_Stabilizer_States 2}, \ref{eq:inproof_Support_Stabilizer_States_3}) that  $I$ has exactly one element and $Y=s+\mathbb{H}$; as a result,  $\ket{\psi}$ is a uniform superposition of the form (\ref{eq:inproof_Support_Stabilizer_States_3}). This proves the lemma.
\end{proof}
\begin{lemma}\label{lemma:Support of StabStates}
An element $x\in G$  belongs to the support $s+\mathbb{H}$ of a stabilizer state $\ket{\psi}$  if and only if
\begin{equation}\label{eq:Support of Stabilizer State 1}
D\ket{x}=\ket{x} \quad \textrm{for all $D\in \mathcal{D}$}.
\end{equation}
Equivalently, using that every $D$ is of the form $D=\gamma_\mu Z_{G^*}(\mu)$ for some $\mu\in\mathbb{D}$ and that $\gamma_\mu$ is determined given $\mu$,
\begin{equation}\label{eq:Support of Stabilizer State 2}
\mathrm{supp}(\ket{\psi}) =\{ x\in G\: : \: \chi_\mu(x)=\overline{\gamma_\mu}\textrm{ for all $\mu\in \mathbb{D}$}\}.
\end{equation}
\end{lemma}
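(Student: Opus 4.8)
The plan is to split the statement into two tasks: the (routine) equivalence of the two displayed characterizations, and the genuine set equality $\mathrm{supp}(\ket{\psi})=T$, where I abbreviate $T:=\{x\in G : D\ket{x}=\ket{x}\text{ for all }D\in\mathcal{D}\}$. For the equivalence I would simply evaluate a diagonal Pauli operator $D=\gamma_\mu Z_G(\mu)\in\mathcal{D}$ on a standard basis vector using (\ref{eq:Pauli operator type Z}), giving $D\ket{x}=\gamma_\mu\chi_\mu(x)\ket{x}$; since $|\gamma_\mu|=1$, the condition $D\ket{x}=\ket{x}$ is exactly $\chi_\mu(x)=\overline{\gamma_\mu}$, i.e.\ the defining condition of the set in (\ref{eq:Support of Stabilizer State 2}). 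I would also record two structural facts used below: first, $\mu\mapsto\gamma_\mu$ is a homomorphism $\mathbb{D}\to U(1)$ (two Pauli operators with the same label are equal, and $\mathcal{D}$ is a group, so $\gamma_{\mu+\nu}=\gamma_\mu\gamma_\nu$); and second, using Lemma \ref{lemma:Character Multiplication}, the map $\mu\mapsto\gamma_\mu\chi_\mu(x)$ is a character of $\mathbb{D}$ that is trivial precisely when $x\in T$, so that $T$ is a single coset of the closed subgroup $\mathbb{D}^\perp$.

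The inclusion $\mathrm{supp}(\ket{\psi})\subseteq T$ is the easy half. By Lemma \ref{lemma:StabStates_are_Uniform_Superpositions} the state is a uniform superposition $\ket{\psi}=\int_{\mathbb{H}}\mathrm{d}h\,\psi(s+h)\ket{s+h}$ with $|\psi|$ constant on its support. Fixing $D=\gamma_\mu Z_G(\mu)\in\mathcal{D}$ and using that it acts diagonally, $D\ket{\psi}=\int_{\mathbb{H}}\mathrm{d}h\,\gamma_\mu\chi_\mu(s+h)\psi(s+h)\ket{s+h}$, which must equal $\ket{\psi}$. Since distinct basis vectors are linearly independent, at every support point $x$ (where $\psi(x)\neq 0$) we get $\gamma_\mu\chi_\mu(x)=1$, i.e.\ $D\ket{x}=\ket{x}$; as $D$ was arbitrary, $x\in T$.

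The reverse inclusion $T\subseteq\mathrm{supp}(\ket{\psi})$ is the crux, and it is where the \emph{one-dimensionality} of the stabilized eigenspace must enter (commutativity alone only yields $\mathbb{H}\subseteq\mathbb{D}^\perp$, hence $\mathrm{supp}\subseteq T$, via (\ref{eq:pauli_commutation})). I would argue by Haar-averaging: the projector onto the common $+1$ eigenspace of the commuting group $\mathcal{S}$ is the group average $P=\int_{\mathcal{S}}\mathrm{d}\sigma\,\sigma$ (the normalized sum $\tfrac{1}{|\mathcal{S}|}\sum_\sigma\sigma$ in the finite case), and since that eigenspace is one-dimensional and spanned by $\ket{\psi}$ we have $P\propto\ket{\psi}\bra{\psi}$. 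Evaluating a diagonal matrix element and using that $\bra{x}\gamma Z_G(\mu)X_G(h)\ket{x}=\gamma\chi_\mu(x)\langle x|x+h\rangle$ vanishes unless $h=0$, only the diagonal part $\mathcal{D}$ survives, so $\bra{x}P\ket{x}\propto\int_{\mathbb{D}}\mathrm{d}\mu\,\gamma_\mu\chi_\mu(x)$. By the character orthogonality recorded above this integral vanishes for $x\notin T$ and is strictly positive for $x\in T$; since $\bra{x}P\ket{x}\propto|\psi(x)|^2$, this forces $\mathrm{supp}(\ket{\psi})=T$, establishing both inclusions at once. Comparing with Lemma \ref{lemma:StabStates_are_Uniform_Superpositions} then yields the corollary $\mathbb{H}=\mathbb{D}^\perp$ (and, via Lemma \ref{lemma:Annihilator properties}, $\mathbb{D}=\mathbb{H}^\perp$).

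The main obstacle is making this averaging rigorous when $\mathcal{S}$ is non-compact or continuous (for instance when it contains a factor $\{X_\T(p):p\in\T\}$ or shift operators over $\Z$): then $P$ is not a bona fide operator but a rank-one \emph{distributional} object, and ``character orthogonality over $\mathbb{D}$'' must be read as a Dirac-delta identity in the Schwartz--Bruhat sense rather than as a vanishing integral. I would handle this exactly in the tempered-distribution framework already adopted for stabilizer states in section \ref{sect:Stabilizer States}, where $\ket{\psi}$ and its conjugate live in $\mathcal{S}_G^\times$; alternatively, one can bypass the analytic subtlety entirely by invoking the Fourier-duality between $\ket{\psi}$ and $\mathcal{F}_G\ket{\psi}$ (a stabilizer state over $G^*$ by Theorem \ref{thm:Normalizer gates are Clifford} and (\ref{eq:Fourier transform on X and Z})) together with the annihilator identities of Lemma \ref{lemma:Annihilator properties}, which transport the support condition to the discrete/compact quotient where orthogonality is a finite statement.
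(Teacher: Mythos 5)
Your handling of the equivalence between the two displayed conditions and of the inclusion $\mathrm{supp}(\ket{\psi})\subseteq T$, where $T=\{x\in G: D\ket{x}=\ket{x}\ \text{for all}\ D\in\mathcal{D}\}$, is correct and matches the paper. The gap is in the converse inclusion, and it is not a removable technicality: the identity $P=\int_{\mathcal{S}}\mathrm{d}\sigma\,\sigma\propto\ket{\psi}\bra{\psi}$ on which your argument rests fails in exactly the regime this paper is about. If one tries to give $P$ rigorous meaning as a limit of honest operator averages over increasing compact (F\o lner) subsets of the Abelian, hence amenable, group $\mathcal{S}$, the mean ergodic theorem says those averages converge strongly to the orthogonal projection onto the $\mathcal{S}$-invariant vectors \emph{inside the Hilbert space} $\mathcal{H}_G$. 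Whenever $\ket{\psi}$ is unnormalizable --- e.g.\ a Fourier state $\ket{p}$ of $\mathcal{H}_{\Z}$, whose stabilizer $\{e^{-2\pi i ph}X_{\Z}(h):h\in\Z\}$ is an infinite discrete non-compact group --- uniqueness forces that invariant subspace to be $\{0\}$, so $P=0$ and your central identity carries no information. Even taken purely formally, the quantities you need are \emph{pointwise} in $x$, and that is where the distributional reading breaks down: $\bra{x}P\ket{x}$ pairs delta-normalized kets (it carries an overall $\delta(0)$ factor whenever $G$ contains $\T$ factors), and the ``orthogonality'' $\int_{\mathbb{D}}\mathrm{d}\mu\,\gamma_\mu\chi_\mu(x)=0$ for $x\notin T$ is not a convergent statement when $\mathbb{D}$ is non-compact: for $\mathbb{D}\cong\Z$ it is the Fourier series of a Dirac comb, which does not converge at any fixed $x$ (it is at best Ces\`aro-summable) and vanishes off $T$ only after smearing against test functions --- a smearing that an argument about a fixed point $x$ cannot perform. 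Your two suggested repairs name this difficulty but do not close it: making the average rigorous in $\mathcal{S}_G^\times$ amounts to rebuilding a monomial-stabilizer formalism for operators with unnormalizable eigenstates, which the paper explicitly flags as ``notoriously harder'' and deliberately avoids, and the ``Fourier duality plus annihilators'' route is only gestured at, not formulated.

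For contrast, the paper closes this direction without ever averaging over all of $\mathcal{S}$: assuming $x\in T$ but $x\notin s+\mathbb{H}$, it applies to the single ket $\ket{x}$ an integral over a system $Y$ of coset representatives of $\mathcal{S}/\mathcal{D}$ (one operator $U_h$ per $h\in\mathbb{H}$), obtaining $\ket{\phi}=\int_{\mathbb{H}}\mathrm{d}h\,U_h\ket{x}$, a uniform superposition over $x+\mathbb{H}$ --- an object of exactly the same (possibly distributional) type as stabilizer states themselves, so no new analytic machinery is needed. It then checks algebraically that $\ket{\phi}$ is nonzero, is fixed by every element of $\mathcal{S}$ (diagonal operators via the hypothesis $D\ket{x}=\ket{x}$ and commutativity; the representatives $U_g$ via reindexing the integral over $\mathbb{H}$), and has support disjoint from that of $\ket{\psi}$, contradicting the uniqueness clause in the definition of a stabilizer state. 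In effect this is the correct regularization of your projector intuition: for $x\in T$ the divergent part of $P\ket{x}$ is the (possibly infinite) volume of $\mathbb{D}$, and the paper simply discards it, keeping only the $\mathcal{S}/\mathcal{D}$-average, which survives as a meaningful ket. If you want to complete your proof, you should either adopt that construction or supply the missing regularization theory; as written, the crux step is a genuine gap.
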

Lemma \ref{lemma:Support of StabStates} was proven for finite groups in \cite{VDNest_12_QFTs}, partially exploiting the monomial stabilizer formalism (MSF) developed in \cite{nest_MMS}. Since the MSF framework has not been generalized to finite dimensional Hilbert spaces, the techniques in \cite{nest_MMS,VDNest_12_QFTs} can no longer be applied in our setting\footnote{The authors believe that the MSF formalism in \cite{nest_MMS} should be easy to extend to infinite dimensional systems if one looks at monomial stabilizer groups with normalizable eigenstates. However, dealing with monomial operators with unnormalizable eigenstates---which can be the case for (\ref{eq:Pauli operator type X, over G, shift definition},\ref{eq:Pauli operator type Z})---seems to be notoriously harder.}. Our proof works in infinite dimensions and even in the case when the Pauli operators (\ref{eq:Pauli operator type X, over G, shift definition},\ref{eq:Pauli operator type Z}) have unnormalizable eigenstates.
\begin{proof}
Write  $\ket{\psi}$ as in (\ref{eq:Stabilizer_States_r_Uniform_Superpositions}) integrating over  $X:=s+\mathbb{H}$. Then, the ``if'' condition follows easily by evaluating the action of an arbitrary diagonal stabilizer operator $D=\gamma_\mu Z_{G^*}(\mu)$  on a the stabilizer state $\ket{\psi}$: indeed, the condition
\begin{equation}
D\ket{\psi}=\ket{\psi} \quad \iff\quad  \int_{X}\mathrm{d}x\, \left(\gamma_\mu \chi_{\mu}(x)\right)\psi(x)\ket{x} = \int_X\mathrm{d}x\, \psi(x)\ket{x},
\end{equation}
holds only if $\gamma_\mu \chi_\mu (x) =1$, which is equivalent to $D\ket{x}=\ket{x}$ (here, we use implicitly that $\psi(x)\neq0$ for all integration points).

Now we prove the reverse implication. Take $x\in G$ such that  $D\ket{x}=\ket{x}$ for all $D\in \mathcal{D}$. We want to show that  $\ket{x}$ belongs to the set $s+\mathbb{H}$. We argue by contradiction, showing that $x\notin s+\mathbb{H}$ implies that there exists a nonzero  common +1 eigenstate $\ket{\phi}$ of all $\mathcal{S}$ that is not proportional to $\ket{\psi}$, which cannot happen.

We now show how to  construct such a $\ket{\phi}$.

Let   $Y:=\{\xi(\mu, g)Z_{G^*}(\mu)X_G(g)\}$ be  a system of representatives of the factor group $\mathcal{S}/\mathcal{D}$. For every $h\in \mathbb{H}$, we use the notation $V_h$ to denote a Pauli operator of the form $\xi(\nu_h, h)Z_{G^*}(\nu_h)X_G(h)$. It is easy to see that the set of all such $V_h$ forms an equivalence class in $\mathcal{S/D}$, so that there is a one-to-one correspondence between $\mathbb{H}$ and $\mathcal{S/D}$. Therefore, if to every $h\in\mathbb{H}$ we associate a $U_h\in Y$ (in a unique way), written as $U_h:=\xi(\nu_h,h) Z_{G*}(\nu_h)X(h)$, then we have that:
\begin{itemize}
\item[(a)] any Pauli operator  $V\in\mathcal{S}$ can be written as $V=U_x D$ for some $U_x\in Y$ and $D\in\mathcal{D}$;
\item[(b)]   $U_g U_h =U_{g+h} D_{g,h}$ for every $U_g$, $U_h\in Y$ and some $D_{g,h}\in \mathcal{D}$.
\end{itemize}
With this conventions, we take $\phi$ to be the state
\begin{equation}\label{eq:inproof_Support of Stab States 1}
\ket{\phi}:=\left( \int_{Y} \mathrm{d}U \, U\right ) \ket{x} = \left( \int_{\mathbb{H}} \mathrm{d}h \, U_h\right ) \ket{x} = \int_{\mbb{H}} \mathrm{d}h \,  \xi(\nu_h, h) \chi_{\nu_h}(x+h)  \ket{x+h} \textrm{d}{h}.
\end{equation}
The last equality in (\ref{eq:inproof_Support of Stab States 1}) shows that $\ket{\phi}$ is a uniform superposition over $x+\mathbb{H}$. As a result, $\ket{\phi}$ is non-zero. Moreover, $\ket{\phi}$ linearly independent from $\ket{\psi}$ if we assume $x\notin \mathrm{supp}(\psi)$, since this  implies that $\mathrm{supp}(\phi) = x+\mathbb{H}$ and $\mathrm{supp}(\psi)= s+\mathbb{H}$  are disjoint.   
Lastly, we prove that $\ket{\phi}$ is  stabilized by all Pauli operators in $\mathcal{S}$. First, for any diagonal stabilizer $D$ we get
\begin{equation}
D\ket{\phi}=D\left( \int_{Y} \mathrm{d}U \, U\right ) \ket{x} = \left( \int_{Y} \mathrm{d}U \, U\right ) D \ket{x} = \left( \int_{Y} \mathrm{d}U \, U\right ) \ket{x},
\end{equation}
due to commutativity and the promise that $D\ket{x}=\ket{x}$. Also, any stabilizer of the form $U_x$ from the set of representatives $Y$ fulfills
\begin{align}
U_x \ket{\phi}&=U_x\left( \int_{\mathbb{H}} \mathrm{d}h\, U_h\right)\ket{x} =\left( \int_{\mathbb{H}} \mathrm{d}h\, U_x U_h\right ) \ket{x} = \left( \int_{\mathbb{H}} \mathrm{d}h\, U_{x+h} \right) D_{x,h} \ket{x}\\
&= \left( \int_{\mathbb{H}} \mathrm{d}h'\, U_{h'} \right) \ket{x} = \ket{\phi}
\end{align}
Hence, using property (a) above, it follows that any arbitrary stabilizer $V$ stabilizes $\ket{\phi}$ as well.
\end{proof}

\begin{corollary}\label{corollary:StabStates have Closed Support}
The sets $\mathbb{H}$ and $\mathrm{supp}(\ket{\psi})=s+\mathbb{H}$ are closed.
\end{corollary}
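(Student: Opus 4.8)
The plan is to read off closedness directly from the explicit description of the support in Lemma~\ref{lemma:Support of StabStates}, equation~(\ref{eq:Support of Stabilizer State 2}), together with the annihilator machinery of Lemma~\ref{lemma:Annihilator properties}. The guiding observation is that the support is carved out of $G$ by conditions of the form $\chi_\mu(x)=\overline{\gamma_\mu}$, which are \emph{closed} conditions because each $\chi_\mu$ is a continuous character; so closedness ought to fall out almost immediately, and the only real task is to pin down $\mathbb{H}$ precisely.

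First I would show that the support is exactly a coset of the annihilator $\mathbb{D}^{\perp}$. By Lemma~\ref{lemma:StabStates_are_Uniform_Superpositions} there is a fixed element $s\in\mathrm{supp}(\ket{\psi})$, so $\chi_\mu(s)=\overline{\gamma_\mu}$ for every $\mu\in\mathbb{D}$. For an arbitrary $x\in G$ and $\mu\in\mathbb{D}$ one has $\chi_\mu(x-s)=\chi_\mu(x)\,\overline{\chi_\mu(s)}=\chi_\mu(x)\,\gamma_\mu$, using $|\chi_\mu(s)|=1$. Hence $x$ satisfies the defining conditions of (\ref{eq:Support of Stabilizer State 2}) if and only if $\chi_\mu(x-s)=1$ for all $\mu\in\mathbb{D}$, i.e.\ if and only if $x-s\in\mathbb{D}^{\perp}$. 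This gives
\[
\mathrm{supp}(\ket{\psi})=s+\mathbb{D}^{\perp}.
\]
Comparing with the expression $\mathrm{supp}(\ket{\psi})=s+\mathbb{H}$ furnished by Lemma~\ref{lemma:StabStates_are_Uniform_Superpositions}, and cancelling the common translate $s$, I would conclude $\mathbb{H}=\mathbb{D}^{\perp}$.

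It then remains only to invoke Lemma~\ref{lemma:Annihilator properties}(a), which asserts that every annihilator is a closed subgroup; thus $\mathbb{H}=\mathbb{D}^{\perp}$ is closed in $G$. Finally, since translation by $s$ is a homeomorphism of $G$, the coset $s+\mathbb{H}=\mathrm{supp}(\ket{\psi})$ is closed as well, completing the argument. I do not anticipate a genuine obstacle here: all of the substance is already contained in Lemmas~\ref{lemma:Support of StabStates} and~\ref{lemma:Annihilator properties}, and the only point requiring a moment's care is checking that $s$ indeed lies in the support so that the coset identity $\mathbb{H}=\mathbb{D}^{\perp}$ is legitimate. As a fully self-contained alternative that avoids even the identification of $\mathbb{H}$, one could instead note that each level set $\{x\in G:\chi_\mu(x)=\overline{\gamma_\mu}\}$ is the preimage of the closed singleton $\{\overline{\gamma_\mu}\}\subset U(1)$ under the continuous map $\chi_\mu$, hence closed, so that the support is an intersection of closed sets and therefore closed, with $\mathbb{H}$ closed as a translate.
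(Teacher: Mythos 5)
Your proposal is correct and takes essentially the same route as the paper's own proof: both deduce from equation~(\ref{eq:Support of Stabilizer State 2}) and lemma~\ref{lemma:StabStates_are_Uniform_Superpositions} that $\mathbb{H}=\mathbb{D}^\perp$, then invoke lemma~\ref{lemma:Annihilator properties} for closedness of the annihilator and continuity of the group operation (translation by $s$) for closedness of the coset. Your explicit verification that $x\in\mathrm{supp}(\ket{\psi})$ iff $x-s\in\mathbb{D}^\perp$, and the side remark that closedness also follows directly by writing the support as an intersection of level sets of continuous characters, merely flesh out or lightly vary the same argument.
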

\begin{proof}
It follows from $(\ref{eq:Support of Stabilizer State 2})$ that supp($\ket{\psi}$) is of the form $x_0+\mathbb{D}^\perp$.  Putting this together with (\ref{eq:Stabilizer_States_r_Uniform_Superpositions}) in lemma \ref{lemma:StabStates_are_Uniform_Superpositions} it follows that $\mathbb{H}=\mathbb{D}^\perp$. Since any annihilator is closed (lemma \ref{lemma:Annihilator properties}),  $\mathbb{H}$ is closed. Since the group operation of $G$ is a continuous map\footnote{This is a fundamental property of topological groups. Consult e.g.\ \cite{Stroppel06_Locally_Compact_Groups,Dikranjan11_IntroTopologicalGroups} for details.},  $s+\mathbb{H}$ is closed too.
\end{proof}

\section{Proof of theorem \ref{thm:Main Result} }\label{sect:Proof of theorem 1}
 
In this section we prove our main result (theorem \ref{thm:Main Result}). As anticipated, we divide the proof in three parts. In section \ref{sect:Tracking normalizer evolutions with stabilizer groups}, we show that the evolution of the quantum state during a normalizer computation can be tracked efficiently using \textbf{stabilizer groups} (which we introduced in the previous section). In section \ref{sect:Computing the support of the final state} we show how to compute the support of the final quantum state by reducing the problem to solving systems of \textbf{{linear equations over an Abelian group}}, which can be reduced to systems of mixed real-integer linear equations \cite{BowmanBurget74_systems-Mixed-Integer_Linear_equations} and solved with the  classical algorithms presented in section \ref{sect:Systems of linear equations over groups}. Finally, in section \ref{sect:Sampling the support of a state}, we show how to simulate the final measurement of a normalizer computation by developing \textbf{net techniques} (based, again, on the algorithms of section \ref{sect:Systems of linear equations over groups}) to sample the support of the final state.

\subsection{Tracking normalizer evolutions with stabilizer groups}\label{sect:Tracking normalizer evolutions with stabilizer groups}

As in the celebrated Gottesman-Knill theorem \cite{Gottesman_PhD_Thesis,Gottesman99_HeisenbergRepresentation_of_Q_Computers}  and  its existing generalizations \cite{Gottesman98Fault_Tolerant_QC_HigherDimensions, dehaene_demoor_hostens,deBeaudrap12_linearised_stabiliser_formalism,VDNest_12_QFTs,BermejoVega_12_GKTheorem}, our approach will be to track the evolution of the system in a stabilizer picture. Since we know that the initial state $|0\rangle$ is a stabilizer state (lemma \ref{lemma:Stabilizer Group for Standard Basis State}) and that normalizer gates are Clifford operations (lemma \ref{lemma:Evolution of Pauli labels}), it follows that the quantum state at every time step of a normalizer computation is  a stabilizer state. It is thus tempting to use stabilizer groups of Abelian-group Pauli operators to classically describe the evolution of the system during the computation; this approach was used in  \cite{VDNest_12_QFTs,BermejoVega_12_GKTheorem} to simulate normalizer circuits over finite Abelian groups. (We remind the reader at this point that  qubit and qudit Clifford circuits are particular instances of normalizer circuits over finite Abelian groups \cite{BermejoVega_12_GKTheorem}.)

However, complications arise compared to all previous cases where normalizer circuits are associated to a finite group $G$. We discuss these issues next.\\

\textbf{Stabilizer groups are infinitely generated.} A common ingredient in all previously known methods to simulate Clifford circuits and normalizer circuits over finite Abelian groups can no longer be used in our setting: traditionally\footnote{As discussed in section ``Relationship to previous work'', there are a few simulation methods \cite{VdNest10_Classical_Simulation_GKT_SlightlyBeyond,Veitch12_Negative_QuasiProbability_Resource_QC,MariEisert12_Positive_Wigner_Functions_Quantum_Computation} for Clifford circuits that are not based on stabilizer-groups, but they are more limited than stabilizer-group methods: the Schrödinger-picture simulation in \cite{VdNest10_Classical_Simulation_GKT_SlightlyBeyond} is  for non-adaptive qubit Clifford circuits; the Wigner-function simulation in \cite{Veitch12_Negative_QuasiProbability_Resource_QC,MariEisert12_Positive_Wigner_Functions_Quantum_Computation} is for odd-dimensional qudit Clifford circuits (cf.\ also section \ref{sect:Previous Work}).}, simulation algorithms based on stabilizer groups keep track of a list of (polynomially  many) generators of a stabilizer group, which can be updated to reflect the action of Clifford/normalizer gates. 
In our set-up,  this is a \emph{futile approach} because \emph{stabilizer groups over infinite Abelian groups can have an \textbf{infinite} number of generators}. Consider for example the state $|0\rangle$ with $G=\Z$, which has a continuous stabilizer group  $\{Z_{G}(p)|p\in\mathbb{T}\}$ (lemma \ref{lemma:Stabilizer Group for Standard Basis State}); the group  that describes the labels of the Pauli operators is the circle group $\T$, which cannot be generated by a finite number of elements (since it is uncountable).\\

\textbf{Fourier transforms change the group $G$}. In previous works \cite{VDNest_12_QFTs, BermejoVega_12_GKTheorem}, the group $G$ associated to a normalizer circuits is a parameter that does not change during the computation. In section \ref{sect:Evolution Pauli Operators} we discussed that our setting is different, as Fourier transforms can change the group that labels the designated basis (theorem \ref{thm:Normalizer gates are Clifford}, eq.\ \ref{eq:Fourier transform on X and Z}; this reflects that groups (\ref{group_hilbert_space},\ref{group_labels_basis}) are not autodual. \\

In this section we will develop new methods to track the evolution of stabilizer groups, that deal with the issues mentioned above. 

From now on, unless stated otherwise, we consider a normalizer circuit ${\cal C}$ comprising $T$ gates. The input is the $\ket{0}$ state of a group $G$, which we denote by $G(0)$ to indicate that this group occurs at time $t=0$.  The stabilizer group of $|0\rangle $ is $\{ Z_G(\mu):  \mu \in\ G(0)^*\}$. The quantum state at any time $t$ during the computation will have the form $\ket{\psi(t)}=\mathcal{C}_t\ket{0}$ where ${\cal C}_t$ is the normalizer circuit containing the first $t$ gates of ${\cal C}$. This state is a stabilizer state over a group $G(t)$. The stabilizer group of $\ket{\psi(t)}$ is  ${\cal S}(t):=\{\mathcal{C}_t Z_{G^*}(\mu) \mathcal{C}_t^\dagger,\:  \mu\in G(0)^* \}$.

Throughout this section, we always assume that normalizer gates are given in the standard encodings defined in section \ref{sect:Main Result}.

\subsubsection*{Tracking the change of  group $G$}

 First, we show how to keep track of how the group $G$ that labels the designated basis changes along the computation. Let $G=G_1\times\dots\times G_m$ with each $G_i$ of primitive type. Define now the larger group $\Gamma:=G^*\times G$. Note that the labels $(\mu, g)$ of a Pauli operator $\gamma Z_G(\mu)X_G(g)$ can be regarded as an element of $\Gamma$, so that the transformations of these labels  in theorem \ref{thm:Normalizer gates are Clifford} can be understood as transformations of this group. We show next that the transformations induces on this group by normalizer gates are \emph{continuous group isomorphisms}, that can be stored in terms of matrix representations. This will give us a method to keep track of $G$ and $G^*$ at the same time. Studying the transformation of $\Gamma$ as a whole (instead of just $G$)  will be useful in the next section, where we consider the evolution of Pauli operators.

First, note that both automorphism gates and quadratic phase gates leave $G$ (and thus $\Gamma$) unchanged (theorem \ref{thm:Normalizer gates are Clifford}). We can keep track of this effect by storing the $2m\times 2m$ identiy matrix $I_{2m}$ (the matrix clearly defines a group automorphism of $\Gamma$). Moreover, (\ref{eq:Fourier transform on X and Z}) shows that Fourier transforms just induce a signed-swap operation on the factors of $\Gamma$. We can associate a $2m\times 2m$ matrix $S_i$ to this operation, defined as follows: $S_i$ acts  non-trivially (under multiplication) only on the factors $G_i^*$ and $G_i$; in the  subgroup  $G_i^*\times G_i$ formed by these factors $S_i$ acts as 
 \begin{equation}\label{eq:Signed Swap}
 (\mu(i), g(i)) \in G_i^* \times G_i \quad \longrightarrow \quad (g(i), -\mu(i)) \in G_i \times G_i^*.
 \end{equation}
By construction, $\Gamma'=S_i\Gamma$. Manifestly, $S_i$ defines a group isomorphism $S_i:\Gamma\rightarrow\Gamma'$.

Lastly, let $G(t)$ denote the underlying group at time step $t$ of the computation. Define $\Gamma(t):=G^*(t)\times G(t)$ and let $V_1,\ldots,V_t$ be the matrices associated to the first $t$ gates describing the transformations of $\Gamma$. Then, we have $\Gamma(t)=V_t V_{t-1}\cdots V_1 \Gamma(0)$, so that it is enough to store the matrix $V_t V_{t-1}\cdots V_1 $ to keep track of the group $\Gamma(t)$. 

\subsubsection*{Tracking Pauli operators}

We deal next with the fact that we can no longer store the ``generators'' of a stabilizer group.  
We will exploit a crucial mathematical property of our stabilizer groups:  for any stabilizer group ${\cal S}$ arising along the course of a normalizer circuit, we will show that there always exists a classical description for $\mathcal{S}$ consisting of a triple $(\Lambda, M, v)$ where $\Lambda$ and $M$ are real matrices and $v$ is a real vector. If we have $G=\T^a\times \Z^b \times \DProd{N}{c}$ with $m=a+b+c$, then all elements of the triple  $(\Lambda,M,v)$ will  have $O(\poly{m})$ entries. As a result, we can use these triples to describe the stabilizer state $\ket{\psi}$ associated to ${\cal S}$ efficiently classically. Moreover, we shall show (lemmas \ref{lemma:Evolution of Pauli labels}, \ref{lemma:Evolution of Pauli Phases}) that the description  $(\Lambda, M, v)$ can be \emph{efficiently transformed} to track the evolution of $\ket{\psi}$ under the action of a normalizer circuit.

Let $\Gamma(t)$ be the group $G^*(t)\times G(t)$. Recalling the definition of the group $\mathbb{L}$ in (\ref{label_groups}), we denote by $\mathbb{L}(t)\subseteq \Gamma(t)$ this group at time $t$. We want to keep track of this group in a way that does not involve storing an infinite number of generators. As a first step, we consider the initial standard basis state  $|0\rangle$, where \be \mathbb{L}(0)= \{(\mu,0):\mu\in G(0)^*\}.\ee  
A key observation is that this group can be written as the image of a continuous group homomorphism
\be \Lambda_0: (\mu, g)\in \Gamma(0)\rightarrow (\mu, 0)\in \Gamma(0);\ee it is easy to verify  $\mathbb{L}(0) = \mathrm{im}\,\Lambda_0$. Therefore, in order to keep track of the (potentially uncountable) set $\mathbb{L}(0)$ it is enough to store a $2m\times 2m$ matrix representation of $\Lambda_0$ (which we  denote by the same symbol):
 \begin{equation}\label{Lambda_0}
\Lambda_0  =\begin{pmatrix}
I & 0 \\ 0 & 0
\end{pmatrix}
 \end{equation}
Motivated by this property,  we will track the evolution of the group $\mathbb{L}(t)$ of Pauli-operator labels by means of a matrix representation of a group homomorphism: $\Lambda_t: \Gamma(0)\rightarrow\Gamma(t)$ whose image is precisely $\mathbb{L}(t)$. The following lemma states that this approach works.
\begin{lemma}[\textbf{Evolution of Pauli labels}]\label{lemma:Evolution of Pauli labels}
There exists a group homomorphism $\Lambda_t$
from $\Gamma(0)$ to $\Gamma(t)$ satisfying
\begin{equation}
\mathbb{L}(t)=\mathrm{im}\, \Lambda_t.
\end{equation}
Moreover,  a  matrix representation of $\Lambda_t$ can be computed in classical polynomial time,  using $O(\ppoly{m,t})$ basic arithmetic operations.
\end{lemma}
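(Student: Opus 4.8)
The plan is to induct on $t$, updating the classical description of $\mathbb{L}(t)$ one gate at a time. The base case $t=0$ is furnished by the explicit matrix $\Lambda_0$ in (\ref{Lambda_0}), which satisfies $\mathbb{L}(0)=\textnormal{im}\,\Lambda_0$. For the inductive step the key object is the transformation that the gate $U_{t+1}$ induces on Pauli \emph{labels}. Because $U_{t+1}$ is a Clifford operator (theorem \ref{thm:Normalizer gates are Clifford}), the conjugation map $\sigma\mapsto U_{t+1}\,\sigma\, U_{t+1}^{\dagger}$ carries a Pauli operator of $G(t)$ with label $(\mu,g)\in\Gamma(t)$ to a Pauli operator of $G(t+1)$ with a new label $(\mu',g')\in\Gamma(t+1)$, while multiplying its phase by a scalar that is irrelevant here (phases are tracked separately in lemma \ref{lemma:Evolution of Pauli Phases}). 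I would first verify that the induced label map $W_{t+1}\colon\Gamma(t)\to\Gamma(t+1)$ is a continuous group isomorphism, and then read off a matrix representation of it directly from the three cases in the proof of theorem \ref{thm:Normalizer gates are Clifford}.

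Concretely, writing labels as columns $(\mu;g)$, the gate $U_{t+1}$ acts as
\begin{equation}
W_{t+1}\ \leftrightarrow\
\begin{pmatrix}\widetilde{A} & 0\\ 0 & A\end{pmatrix},\qquad
\begin{pmatrix} I & B_{\beta}\\ 0 & I\end{pmatrix},\qquad
\begin{pmatrix} 0 & I\\ -I & 0\end{pmatrix},
\end{equation}
for an automorphism gate $U_\alpha$, a quadratic phase gate $D_\xi$, and a (partial) Fourier transform, respectively. Here $A$ is the given matrix representation of $\alpha$ and $\widetilde{A}$ is a matrix representation of $(\alpha^{*})^{-1}$; $B_\beta=\Upsilon^{-1}M$ is the matrix representation of the homomorphism $\beta$ underlying the bicharacter of $\xi$ (lemmas \ref{lemma:Normal form of a bicharacter 1} and \ref{lemma symmetric matrix representation of the bicharacter homomorphism}); and the signed swap acts nontrivially only on the factors where the QFT is applied. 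Having $W_{t+1}$ in hand I would set $\Lambda_{t+1}:=W_{t+1}\Lambda_t$, which is a valid matrix representation of the composite homomorphism $W_{t+1}\circ\Lambda_t\colon\Gamma(0)\to\Gamma(t+1)$ by lemma \ref{lemma:properties of matrix representations}(a). Since $W_{t+1}$ is precisely the map taking the label of any $\sigma\in\mathcal{S}(t)$ to the label of $U_{t+1}\sigma U_{t+1}^{\dagger}\in\mathcal{S}(t+1)$, it follows that $\mathbb{L}(t+1)=W_{t+1}(\mathbb{L}(t))=W_{t+1}(\textnormal{im}\,\Lambda_t)=\textnormal{im}\,\Lambda_{t+1}$, closing the induction.

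The step I expect to require the most care is the automorphism case: a matrix representation of $(\alpha^{*})^{-1}$ cannot in general be obtained by inverting the matrix $A^{*}=\Upsilon_G^{-1}A^{\transpose}\Upsilon_G$ of lemma \ref{lemma:properties of matrix representations}(b), because the matrix-inverse of a valid matrix representation need not itself be one (as warned in section \ref{sect:Matrix Representations}). I would sidestep this by invoking lemma \ref{lemma:Computing Inverses} to compute a genuine matrix representation of $(\alpha^{-1})^{*}=(\alpha^{*})^{-1}$ in polynomial time; the remaining blocks are immediate from the gate encodings.

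Finally, for the complexity bound, each inductive step multiplies a fixed $2m\times 2m$ matrix $W_{t+1}$ into the running matrix $\Lambda_t$, using $O(\textnormal{poly}(m))$ arithmetic operations (the auxiliary inverse computations of lemma \ref{lemma:Computing Inverses} being polynomial as well); iterating over the $t$ gates yields the claimed $O(\textnormal{poly}(m,t))$ bound, with a routine check that the bit-sizes of the rational entries stay controlled via the standard encodings of section \ref{sect:Main Result}.
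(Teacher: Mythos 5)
Your proposal is correct and follows essentially the same route as the paper: induction from $\Lambda_0$, with the per-gate label maps $\bigl(\begin{smallmatrix}A^{*^{\inverse}} & 0\\ 0 & A\end{smallmatrix}\bigr)$, $\bigl(\begin{smallmatrix}I & B_\beta\\ 0 & I\end{smallmatrix}\bigr)$, and the signed swap $S_i$ read off from theorem \ref{thm:Normalizer gates are Clifford}, composed via lemma \ref{lemma:properties of matrix representations}(a), with lemma \ref{lemma:Computing Inverses} handling the automorphism case exactly as the paper does. Your extra observations — that the matrix representation of $\beta$ is recovered from the standard encoding as $\Upsilon^{-1}M$, and that one cannot simply invert $A^{*}$ — are details the paper leaves implicit, not deviations.
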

\begin{proof}
We show this by induction. As discussed above, at $t=0$
we choose  $\Lambda_0$ as in (\ref{Lambda_0}). Now, given the homomorphism $\Lambda_t$ at time $t$, we show how to compute $\Lambda_{t+1}$ for every type of normalizer gate. The proof relies heavily on the identities in the proof of theorem \ref{thm:Normalizer gates are Clifford}. We also note that the equations below are for groups of commuting Pauli operators but they can be readily applied to any single Pauli operator just by considering the stabilizer group it generates.
\begin{itemize}
\item Automorphism gate $U_{\alpha}$: Let $A$ be a matrix representation of $\alpha$; then equations (\ref{eq:Automorphism Gate on X})-
(\ref{eq:Automorphism Gate on Z}) imply

\begin{equation}
\Lambda_{t+1}=\left(\begin{array}{cc}
A^{*^{\inverse}} & 0\\
0 & A
\end{array}\right)\Lambda_t.
\end{equation}
The matrix $A^{*^{\inverse}}$ can be computed efficiently due to lemmas  \ref{lemma:Computing Inverses} and  \ref{lemma:properties of matrix representations}.(b).
\item Quadratic phase gate $D_{\xi}$: suppose that $\xi$ is a $B$-representation for some bicharacter $B$ (recall section \ref{sect:quadratic_functions}). Let $M$ be a matrix representation of the homomorphism $\beta$ that appears in lemma \ref{lemma:Normal form of a bicharacter 1}.  Then  (\ref{eq:QuadraticPhase Gate on X}) implies
\begin{equation}
\Lambda_{t+1}=\left(\begin{array}{cc}
I & M \\
0 & I
\end{array}\right)\Lambda_t.
\end{equation}
\item Partial Fourier transform $\mathcal{F}_{G_i}$: recalling (\ref{eq:Fourier transform on X and Z}), we  simply have
\begin{equation}
\Lambda(t+1)=S_i\Lambda(t),
\end{equation}
with
\begin{equation}
S_i = 
\begin{pmatrix}
\begin{matrix}
\textbf{ \Large 1} & & \\
& 0 & \\
& & \textbf{ \Large 1}
\end{matrix}  &\vline &
\begin{matrix}
\textbf{ \Large 0} & & \\
& 1 & \\
& & \textbf{ \Large 0}
\end{matrix} \\
\hline
\begin{matrix}
\textbf{ \Large 0} & & \\
& -1 & \\
& & \textbf{ \Large 0}
\end{matrix} &\vline & \begin{matrix}
\textbf{ \Large 1} & & \\
& 0 & \\
& & \textbf{ \Large 1}
\end{matrix}
\end{pmatrix}
\end{equation}
where the $ \begin{pmatrix}
0 & 1 \\ -1 & 0
\end{pmatrix}$ subblock in $S_i$ corresponds to the $i$-th entries of $G^*$ and $G$. \qedhere
\end{itemize}
\end{proof}

We now show how the phases of the Pauli operators in ${\cal S}(t)$ can be tracked.  

Suppose that there exists $(\mu, g)\in \mathbb{L}$ and complex phases $\gamma$ and $\beta$ such that both \be \sigma:=\gamma Z(\mu)X(g)\quad \mbox{ and } \tau:=\beta Z(\mu)X(g)\ee belong to ${\cal S}$.   
Then $\sigma^{\dagger}\tau$ must also belong to ${\cal S}$, where $\sigma^{\dagger}\tau= \overline{\gamma}\beta I $ with $I$ the identity operator. But this implies that $\overline{\gamma}\beta|\psi\rangle = |\psi\rangle$, so that $\gamma = \beta$. This shows that the phase of $\sigma$ is uniquely determined by the couple $(\mu, g)\in\mathbb{L}$.
We may thus define a function $\gamma: \mathbb{L}\to U(1)$ such that \be {\cal S}= \{\gamma(\mu, g)Z(\mu)X(g):\ (\mu, g)\in \mathbb{L}\}.\ee
\begin{lemma}\label{thm_gamma_quadratic}
The function $\gamma$ is a quadratic function on $\mathbb{L}$.
\end{lemma}
\begin{proof}
By comparing the phases of two stabilizer operators $\sigma_1=\gamma(\mu_1,g_1) Z_{G}(\mu_1)X_{G}(g_1)$ and $\sigma_2=\gamma(\mu_2,g_2)\allowbreak Z_{G}(\mu_2)X_{G}(g_2)$ to the phase $ \gamma((g_1,h_1)+(g_2,h_2))$ of their product operator $\sigma_2\sigma_1$, we obtain
 \begin{equation}
 \gamma((\mu_1,g_1)+(\mu_2,g_2))=\gamma(\mu_1,g_1)\gamma(\mu_2,g_2)\overline{\chi_{\mu_2}(g_1)},
 \end{equation}
 which implies that $\gamma$ is quadratic.
 \end{proof}
Although it does not follow from lemma \ref{thm_gamma_quadratic}, in our setting, the quadratic function $\gamma$ will always be \emph{continuous}. As a result, we can apply the normal form given in theorem \ref{thm:Normal form of a quadratic function} to describe  the phases of the Pauli operators of a stabilizer group. Intuitively, $\gamma$ must be continuous in our setting, since  this is the case for the allowed family of input states (lemma \ref{lemma:Stabilizer Group for Standard Basis State}) and normalizer gates continuously transform Pauli operators under conjugation; this is rigorously shown using induction in the proof of  theorem \ref{thm_evolution_phases}.

We will use that these phases of Pauli operators are described by quadratic functions on $\mathbb{L}(t)$ (recall lemma \ref{thm_gamma_quadratic}).  In particular, theorem \ref{thm:Normal form of a quadratic function} shows that every quadratic function can be described by means of an $m\times m$  matrix $M$ and a $m$-dimensional vector $v$. For the initial state $\ket{0}$, we  simply set both $M$, $v$ to be zero. The next lemma shows that $M$, $v$ can be efficiently updated during any normalizer computation.

\begin{lemma}[\textbf{Evolution of Pauli phases}]\label{lemma:Evolution of Pauli Phases} \label{thm_evolution_phases}
At every time step $t$ of a normalizer circuit, there exists a $2m\times 2m$ rational matrix $M_t$ and a $m$-dimensional rational vector $v_t$ such that the quadratic function describing the phases of the Pauli operators in $\mathcal{S}(t)$ is $\xi_{M_t,v_t}$ (as in theorem \ref{thm:Normal form of a quadratic function}). Moreover, $M_t$ and $v_t$ can be efficiently computed classically with $O(\ppoly{m,n})$ basic arithmetic operations.
\end{lemma}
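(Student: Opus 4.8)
The plan is to prove the statement by induction on $t$, running in lockstep with the label-evolution argument of Lemma \ref{lemma:Evolution of Pauli labels} and re-using the conjugation formulas established in the proof of Theorem \ref{thm:Normalizer gates are Clifford}. By Lemma \ref{thm_gamma_quadratic} the phase assignment $\gamma_t:\mathbb{L}(t)\to U(1)$ is quadratic at every time; what must be added is (i) that $\gamma_t$ is moreover \emph{continuous}, so that Theorem \ref{thm:Normal form of a quadratic function} applies and produces a normal form, and (ii) that a rational pair $(M_t,v_t)$ representing (an extension to all of $\Gamma(t)$ of) $\gamma_t$ can be propagated with $O(\ppoly{m})$ work per gate. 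The inductive hypothesis I would carry is: $\gamma_t$ is continuous and agrees, on $\mathbb{L}(t)=\operatorname{im}\Lambda_t$, with the normal-form quadratic $\xi_{M_t,v_t}$ of Theorem \ref{thm:Normal form of a quadratic function} over $\Gamma(t)=G^*(t)\times G(t)$, where $M_t$ is a rational $2m\times 2m$ matrix and $v_t\in\Gamma(t)^\bullet$. The base case is the input $\ket{0}$, whose stabilizer $\{Z_G(\mu)\}$ carries trivial phases, so $\gamma_0\equiv 1$ is continuous and one sets $M_0=0$, $v_0=0$.

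For the inductive step I would treat the three gate types using the identities from the proof of Theorem \ref{thm:Normalizer gates are Clifford} together with the commutation relations (\ref{eq:pauli_commutation}), rewriting each conjugated operator back into canonical form $\gamma\,Z(\mu)X(g)$ and reading off the new phase as a function of the new label. In each case this yields a relation of the shape
\begin{equation}
\gamma_{t+1}=\big(\gamma_t\cdot q_{t+1}\big)\circ V_{t+1}^{-1},
\end{equation}
where $V_{t+1}$ is exactly the label-transfer matrix produced by Lemma \ref{lemma:Evolution of Pauli labels} and $q_{t+1}$ is an explicit \emph{correction factor}. Concretely: for an automorphism gate, (\ref{eq:Automorphism Gate on X})--(\ref{eq:Automorphism Gate on Z}) show the phase is carried along unchanged, so $q_{t+1}\equiv 1$; for a quadratic phase gate $D_\xi$, commuting $Z(\mu)X(g)Z(\beta(g))$ back to normal form via (\ref{eq:QuadraticPhase Gate on X}) and (\ref{eq:pauli_commutation}) gives $q_{t+1}(\mu,g)=\xi(g)\,\overline{\chi_{\beta(g)}(g)}$; and for a (partial) Fourier transform, reordering $X_{G^*}(-\mu)Z_{G^*}(g)$ from (\ref{eq:Fourier transform on X and Z}) produces $q_{t+1}(\mu,g)=\chi_{\mu}(g)$, using Lemma \ref{lemma:Pontryagin duality for characters}.

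The heart of the argument is then to verify that each $q_{t+1}$ is again a continuous quadratic function on $\Gamma(t)$ and to assemble the new normal form. The quadratic-phase correction $g\mapsto\xi(g)\,\overline{\chi_{\beta(g)}(g)}=\xi(g)\,\overline{B(g,g)}$ is the product of the gate's quadratic function with the diagonal of its symmetric bicharacter (Lemma \ref{lemma:Normal form of a bicharacter 1}), hence quadratic and continuous; the Fourier correction $(\mu,g)\mapsto\chi_\mu(g)$ is the diagonal restriction of the bicharacter $((\mu_1,g_1),(\mu_2,g_2))\mapsto\chi_{\mu_1}(g_2)$ of $\Gamma(t)$, hence likewise a continuous quadratic function on $\Gamma(t)$. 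To combine normal forms I would use two elementary facts: that the product of two quadratics in the form (\ref{eq:Normal Form Quadratic Function}) is again of that form with matrices and bullet-vectors simply \emph{added} (the diagonal vector $C$ being determined additively by $M$, so no extra terms appear), and that precomposition with the isomorphism $V_{t+1}^{-1}$, given as a matrix, is exactly the closed-form operation of Lemma \ref{lemma:Quadratic Function composed with Automorphism} (whose formulas extend verbatim to homomorphisms between the distinct groups $\Gamma(t+1)$ and $\Gamma(t)$). Writing $(M^{q},v^{q})$ for the normal form of $q_{t+1}$, this gives explicit rational updates $M_{t+1}=(V_{t+1}^{-1})^{\transpose}(M_t+M^{q})\,V_{t+1}^{-1}$ together with the matching $v_{t+1}$ from Lemma \ref{lemma:Quadratic Function composed with Automorphism}, each computable with $O(\ppoly{m})$ arithmetic operations on rationals of polynomially bounded size; continuity is preserved since all three corrections are continuous. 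Iterating over all $T$ gates gives the claimed $O(\ppoly{m,T})$ bound.

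I expect the main obstacle to be careful bookkeeping rather than a deep difficulty. The two delicate points are (a) verifying that the Fourier correction $\chi_\mu(g)$ really is a \emph{quadratic} function (and not merely a bicharacter) on the enlarged group $\Gamma(t)$, so that Theorem \ref{thm:Normal form of a quadratic function} can absorb it into $(M_{t+1},v_{t+1})$, and (b) maintaining continuity through the induction, which licenses the use of the normal form at each step and does \emph{not} follow from Lemma \ref{thm_gamma_quadratic} alone. A secondary care point is that $\Lambda_t$ need not be injective, so one must track an \emph{extension} of $\gamma_t$ to all of $\Gamma(t)$ and only claim agreement after restriction to $\operatorname{im}\Lambda_t$; the explicit normal-form construction above supplies such an extension automatically.
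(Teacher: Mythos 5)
Your proposal follows essentially the same route as the paper's proof: induction on $t$ with $(M_0,v_0)=(0,0)$, gate-by-gate phase corrections (trivial for automorphism gates, $\xi(g)\,\overline{\chi_{\beta(g)}(g)}$ for quadratic phase gates, a character correction for Fourier transforms), and precomposition with the inverse label-transfer matrix handled via Lemma \ref{lemma:Quadratic Function composed with Automorphism}, giving exactly the paper's update $M_{t+1}=\mathbf{A}^{-\transpose}(M_t+M^{q})\mathbf{A}^{-1}$ with the matching $v_{t+1}$. The only nitpick is that for a \emph{partial} Fourier transform $\mathcal{F}_{G_i}$ the correction involves only the $i$-th components, $\chi_{\mu(i)}(g(i))$ rather than $\chi_{\mu}(g)$, but your construction of $M^{q}$ as a bicharacter block restricts accordingly and coincides with the paper's matrix $M_F$.
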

\begin{proof} The proof is similar to the proof of lemma \ref{lemma:Evolution of Pauli labels}. We act by induction. At $t=0$ we just take $M_0$ to be the zero matrix and $v_0$ to be the zero vector. Then, given $M_t$ and $v_t$ at time $t$, we show how to compute $M_{t+1}$, $v_{t+1}$. In the following, we denote by $\mathbf{A}$ the matrix that fulfills $\Lambda_{t+1}=\mathbf{A}\Lambda_{t}$ in each case of lemma \ref{lemma:Evolution of Pauli labels} and write $(\mu',g')=\mathbf{A}(\mu,g)$ for every $(\mu, g)\in \Gamma_{t}$. Finally, let $\xi_t$ and $\xi_{t+1}$ denote the quadratic phase functions for ${\cal S}(t)$ and ${\cal S}(t+1)$, respectively.
\begin{itemize}
\item \textbf{Automorphism gate $U_\alpha$.} Let $A$, $A^{*^{\inverse}}$ be matrix representations of $\alpha$, $\alpha^{*^{\inverse}}$. Using (\ref{eq:Automorphism Gate on X}, \ref{eq:Automorphism Gate on Z}) we have
\begin{align}
\xi_t(\mu,g)Z_G(\mu)X_{G}(g)\:\xrightarrow{U_\alpha} \:\xi_t(\mu,g)Z_G(\mu')X_{G}(g')
\end{align}
with $(\mu',g')=\mathbf{A}(\mu, g)$ and $\mathbf{A}= \begin{pmatrix}
A^{*^{\inverse}} & 0 \\ 0 & A
\end{pmatrix}$. The matrix $A^{*^{\inverse}}$ can be computed using lemmas \ref{lemma:Computing Inverses} and \ref{lemma:properties of matrix representations}.(b). The phase $\xi_t(\mu,g)$ of the Pauli operator can be written now as a function  $\xi_{t+1}$ of $(\mu',g')$ defined as
\begin{equation}
\xi_{t+1}(\mu',g'):=\xi_{t}(\mathbf{A}^{-1}(\mu', g')) = \xi_{t}(\mu,g).
\end{equation}
The function is manifestly quadratic. By applying lemma \ref{lemma:Quadratic Function composed with Automorphism} we obtain
\begin{equation}
 M_{t+1}= \mathbf{A}^{-\transpose} M_t \mathbf{A}^{-1}, \qquad v_{t+1} =\mathbf{A}^{-\transpose} v_t + v_{\mathbf{A}^{-1},M_t},
\end{equation}
where $v_{\mathbf{A}^{-1},M_t}$ is defined as $v_{A,M}$ in lemma \ref{lemma:Quadratic Function composed with Automorphism}.

\item \textbf{Partial Fourier transform $\mathcal{F}_{G_i}$.}  The proof is analogous using that  $\mathbf{A}= S_i$. Since the Fourier transform at the register $i$th exchanges the order of the X and Z Pauli operators acting on the subsystem $\mathcal{H}_{G_i}$ (\ref{eq:Fourier transform on X and Z}), we locally exchange the operators locally using (\ref{eq:pauli_commutation}), gaining an extra phase. Assume for simplicity that $i=1$ and re-write $G=G_1\times \cdots\times G_m$ as $G= A \times B$; let $g=(a,b)$ and $\mu=(\alpha, \beta)$. Then  $\mathcal{F}_{G_1}$ acts trivially on $\mathcal{H}_{G'}$ and we get
\begin{align}\notag
\xi_t(\mu,g)Z_{G_1}(\alpha)X_{G_1}(a)\otimes U \:\xrightarrow{\mathcal{F}_{G_1}+\mathrm{reorder}} \:\left(\xi_{t}(\mu,g)\chi_{\left(\alpha,0\right)}(a,0)\right) Z_{G_1^*}(a)X_{G_1^*}(-\alpha)\otimes U.
\end{align}
In general, for arbitrary $i$, we gain a phase factor $\overline{\chi_{\left( 0,\ldots,\mu(i),\ldots,0\right)}{\left( (0,\ldots, g(i),\ldots,0)\right)}}$. Using the change of variables $(\mu',g')=\mathbf{A}(\mu, g)=S_i(\mu, g)$, we define $\xi_{t+1}$ to be function that carries on the accumulated phase of the operator. For arbitrary $i$ we obtain
\begin{align}
\xi_{t+1}(\mu',g')&:=\xi_{t}(\mu,g)\,\chi_{\left( 0,\ldots,\mu(i),\ldots,0\right)}{\left( (0,\ldots, g(i),\ldots,0)\right)}.
\end{align}
The character $\chi_{\left( 0,\ldots,\mu(i),\ldots,0\right)}{\left( (0,\ldots, g(i),\ldots,0)\right)}$ can be written as a quadratic function $\xi_{M_F,v_F}(\mu, g)$ with $v_F=0$ and 
\begin{equation}
M_{F}:=\begin{pmatrix}
\begin{matrix}
 & & \\
& \textbf{ {\Huge 0}} & \\
&&
\end{matrix}  &\vline &
\begin{matrix}
\textbf{ \Large 0} & & \\
& \Upsilon_G(i,i) & \\
& & \textbf{ \Large 0}
\end{matrix} \\
\hline
\begin{matrix}
\textbf{ \Large 0} & & \\
& \Upsilon_G(i,i) & \\
& & \textbf{ \Large 0}
\end{matrix} &\vline & \begin{matrix}
& & \\
& \textbf{ {\Huge 0}} & \\
& &
\end{matrix}
\end{pmatrix},
\end{equation}
where $\Upsilon_G(i,i)$ is the $i$th diagonal element of  $\Upsilon_G$ (\ref{eq:definition of bullet map}). Applying lemma \ref{lemma:Quadratic Function composed with Automorphism}  we obtain
\begin{equation}
 M_{t+1}= \mathbf{A}^{-\transpose}\left( M_t +M_{F}\right) \mathbf{A}^{-1}, \qquad v_{t+1} =\mathbf{A}^{-\transpose} v_t +  v_{\mathbf{A}^{-1},M_t+M_F}.
\end{equation}
\item \textbf{Quadratic phase gate $D_\xi$.} Let $\xi=\xi_{M_Q,v_Q}$ be the quadratic function implemented by the gate and $M_\beta$ be the matrix representation of $\beta$ as in (\ref{lemma:Normal form of a bicharacter 1}). We know from lemma \ref{lemma symmetric matrix representation of the bicharacter homomorphism} that $M_Q=\Upsilon_G M_\beta$. Using (\ref{eq:QuadraticPhase Gate on X}) and reordering Pauli gates (similarly to the previous case) we get
\begin{align}\notag
\xi_t(\mu,g)Z_{G}(\mu)X_{G}(g) \:\xrightarrow{D_\xi+\mathrm{reorder}} \:\left( \xi_{t}(\mu,g)\xi_{M_Q,v_Q}(g)\overline{\chi_{\beta(g)}(g)}\right) Z_{G}(\mu+\beta(g))X_{G}(g)
\end{align}
The accumulated phase can be written as a quadratic function $\xi_{M',v'}$ with  \begin{equation}
M':=  M_t + \begin{pmatrix}
 0 & 0 \\ 0 & M_Q
 \end{pmatrix} -\begin{pmatrix}
 0 & 0 \\ 0 & 2M_Q
 \end{pmatrix}, \qquad v' := v + \begin{pmatrix}
 0\\ v_Q
 \end{pmatrix}
 \end{equation}
 Using lemma \ref{lemma:Quadratic Function composed with Automorphism} and  $\mathbf{A}= \begin{pmatrix}
I & M_\beta \\ 0 & I
\end{pmatrix}$  (from the proof of lemma \ref{lemma:Evolution of Pauli labels}) we arrive at:
\begin{align}
 M_{t+1}&=\mathbf{A}^{-\transpose} M' \mathbf{A}^{-1}, \qquad
v' = \mathbf{A}^{-\transpose} v' +  v'_{\mathbf{A}^{-1}, M'}
 \end{align}\qedhere
\end{itemize}
\end{proof}
Combining lemmas \ref{lemma:Evolution of Pauli labels} and \ref{thm_evolution_phases}, we find that 
the triple $(\Lambda_t, M_t, v_t)$, which  constitutes a classical description of the stabilizer state $|\psi(t)\rangle$, can be efficiently computed for all $t$. This yields a poly-time algorithm to compute the description $(\Lambda_T, M_T, v_T)$ of the output state $|\psi_T\rangle$ of the circuit.  Henceforth we continue to work with this final state and drop the reference to $T$ throughout. That is, the final state is denoted by $|\psi\rangle$, which is a stabilizer state over $G$ with stabilizer is ${\cal S}$. The latter is described by the triple $(\Lambda, M, v)$, the map from $\Gamma(0)$ to $\Gamma$ is described by $\Lambda$, etc.

\subsection{Computing the support of the final state}\label{sect:Computing the support of the final state}

Given the triple $(\Lambda, M, v)$ describing the final state $|\psi\rangle$ of the computation,  
we now consider the problem of determining the support of $\ket{\psi}$. Recall that the latter has the form $x+\mathbb{H}$ where the label group $\mathbb{H}$ was defined in (\ref{label_groups}) and $x\in G$ is any element satisfying conditions (\ref{eq:Support of Stabilizer State 1}). Since $\mathbb{L}= \Lambda \Gamma(0)$ and $\Lambda$ is given, a description of $\mathbb{H}$ is readily obtained:  the $m\times 2m$ matrix $P = (0 \ I)$
is a matrix representation of the homomorphism  $(\mu, g)\in \Gamma\to g\in G$. It easily follows that $\mathbb{H}= P\Lambda\Gamma(0)$. Thus the matrix $P\Lambda$ yields an efficient description for $\mathbb{H}$. To compute an $x$ in the support of $|\psi\rangle$, we need to solve the equations (\ref{eq:Support of Stabilizer State 1}).  In the case of finite groups $G$, treated in previous works \cite{VDNest_12_QFTs,BermejoVega_12_GKTheorem}, the approach consisted of first computing a (finite) set of generators $\{D_1, \dots, D_r\}$ of $\mathcal{D}$. Note that $x\in G$ satisfies (\ref{eq:Support of Stabilizer State 1}) if and only if $D_i|x\rangle = |x\rangle$ for all $i$. This gives rise to a finite number of equations. In \cite{VDNest_12_QFTs,BermejoVega_12_GKTheorem} it was subsequently showed how such equations can be solved efficiently.  In contrast with such a finite group setting, here the group $G$, and hence also the group $\mathcal{D}$, can be continuous, so that $\mathcal{D}$ can in general not be described by a finite list of generators. Consequently, the approach followed for finite groups does no longer work. Next we provide an alternative approach to compute an $x$ in the support of $|\psi\rangle$ in polynomial time.

\subsubsection{Computing $\mathcal{D}$}

We want to solve the system of equations (\ref{eq:Support of Stabilizer State 2}). Our approach will be to reduce  
this problem to a system of linear equations over a group of the form (\ref{eq:Systems of linear equations over groups}) and apply the algorithm in theorem \ref{thm:General Solution of systems of linear equations over elementary LCA groups} to solve it. To compute $\mathcal{D}$ it is enough to find a compact way to represent $\mathbb{D}$, since we can compute the phases of the diagonal operators using the classical description $(\Lambda, M,v )$ of the stabilizer group. To compute $\mathbb{D}$ we argue as follows. An arbitrary element of $\mathbb{L}$ has the form $\Lambda u$ with $u\in \Gamma(0)$. Write $\Lambda$ in a block form
\be
\Lambda =
\begin{pmatrix}
\Lambda_1\\ \Lambda_2
\end{pmatrix}
\ee
so that $\Lambda u = (\Lambda_1 u, \Lambda_2 u)$ with $\Lambda_1 u\in G^*$ and $\Lambda_2 u\in G$. Then
\begin{equation}\notag
\mathbb{D} =\left\{\Lambda_1 u:  u \mbox{ satisfies } \Lambda_2 u \equiv 0 \mbox{ mod } G.
\right\}
\end{equation}
The equation $\Lambda_2 u \equiv 0 \mbox{ mod } G$ is  of the form (\ref{eq:Systems of linear equations over groups}). This means  we can compute in polynomial time a description for $\mathbb{D}$  of the form \be \mathbb{D}=\{\mathcal{E}_\mathbb{D}w:\ w \in \R^a\times \Z^b\},\ee where $\mathcal{E}_\mathbb{D}$ is a group homomorphism $\mathcal{E}_\mathbb{D}:\R^a\times \Z^b\rightarrow G^*$ whose image is precisely $\mathbb{D}$.

\subsubsection{Computing the support $x_0+\mathbb{H}$}

Recalling the support equations (\ref{eq:Support of Stabilizer State 2}) and the fact that $|\psi\rangle$ is described by the triple $(\Lambda, M, v)$, we find that $x_0$ belongs to the support of $|\psi\rangle$ if and only if
\begin{equation}\label{eq_support_final}
\xi_{M,v}(\mu,0)\chi_{\mu}(x_0)=1,\quad\textrm{for all }\mu\in \mathbb{D}.
\end{equation} We will now write the elements $\mu\in \mathbb{D}$ in the form $\mu=\mathcal{E}_{\mathbb{D}}w$ where $w$ is an arbitrary element in $\R^a\times \Z^b$. We further denote $\mathcal{E}:=\begin{pmatrix}
\mathcal{E}_\mathbb{D} \\ 0
\end{pmatrix}$. We now realize that
\begin{itemize}
\item $\xi_{M,v}(\mathcal{E}_\mathbb{D}w,0)$, as a function of $w$ only, is a quadratic function of $\R^a\times\Z^{b}$, since $\xi_{M,v}$ is quadratic and $\mathcal{E}_\mathbb{D}$ is a homomorphism. Furthermore \be \xi_{M,v}(\mathcal{E}_\mathbb{D}w,0)= \xi_{M',v'}(w)\qquad \mbox{ with } M':=\mathcal{E}^\transpose M\mathcal{E},\ v':=\mathcal{E}^\transpose v.
    \ee
\item $\chi_{\mathcal{E}_{\mathbb{D} }w } (x_0) $, as a function of $w$ only, is a character function of $\R^a\times\Z^{b}$ which can be written as $\chi_\varpi$ with $\varpi:={\mathcal{E}_\mathbb{D}}^*(x_0)$.
\end{itemize}
It follows that $x_0$ satisfies (\ref{eq_support_final}) if and only if the quadratic function $\xi_{M',v'}$ is a character and coincides with $\chi_\varpi$. Using lemma \ref{lemma symmetric matrix representation of the bicharacter homomorphism} and theorem \ref{thm:Normal form of a quadratic function}, we can write these two conditions equivalently as:
\begin{align}
 w_1^\transpose M' w_2 &= 0 \pmod{\Z}, \quad \textnormal{for all $w_1$, $w_2\in  \R^a\times \Z^b$}\\
\mathcal{E}_\mathbb{D}^*(x_0)&=\mathcal{E}^\transpose v \pmod{\R^a\times \T^b}.
\end{align}
The first equation does not depend on $x_0$ and  it must hold by promise:  we are guaranteed that the support is not empty, so that the above equations must admit a solution. The second equation is a system of linear equations over groups of the form given in section \ref{sect:Systems of linear equations over groups}, and it can be solved with the techniques given in that section.

\subsection{Sampling the support of a state}\label{sect:Sampling the support of a state}

Back in section \ref{sect:Systems of linear equations over groups} we formulated a fairly simple heuristic to sample the solution space of a linear system of equations over elementary Abelian groups (\ref{eq:Systems of linear equations over groups},\ref{eq:Systems of linear equations over groups: Solution Space}) that exploited our ability to compute general solutions of such systems (theorem \ref{thm:General Solution of systems of linear equations over elementary LCA groups}). Unfortunately, this straightforward method does not yield an efficient algorithm  to sample such solution spaces, which  would allow us to efficiently simulate classically quantum normalizer circuits.   
In the first place, the heuristic neglects two delicate mathematical properties of the groups under consideration, namely, that they are \emph{continuous} and \emph{unbounded}. Moreover, the second step of the heuristic involves the transformation of a given probability distribution  on a space $\mathcal{X}$ by the application of a non-injective map $\mathcal{E}:\mathcal{X}\rightarrow G$; this step is prone to create a wild number of \emph{collisions}, about which the heuristic gives no information.

In this section we will present an \emph{efficient classical algorithms} to sample solution space of systems of linear equations over groups. Our algorithm applies appropriate techniques to tackle the previously mentioned issues. The algorithm relies on a subroutine to construct and sample from a certain type of epsilon net that allows the \emph{collision-free} sampling from a subgroup of an elementary group, where the subgroup is given as the image of a homomorphism. This algorithm is sufficient for our purposes, since the solution-space (\ref{eq:Systems of linear equations over groups: Solution Space}) of (\ref{eq:Systems of linear equations over groups}) is exactly $x_0 + \mathrm{im}\, \mathcal{E}$ for some homomorphism $\mathcal{E}$ and group element $x_0$.

\subsubsection*{Input of the problem and assumptions}

Again let $G$ be of the form
\begin{equation}\label{eq:Elementary TZF Groups 1}
G=\T^a \times \Z^b \times \DProd{N}{c}
\end{equation}
with $m=a+b+c$.  We are given a matrix representation $\mathcal{E}$ of a group homomorphism from  $\R^\alpha\times\Z^\beta$ to $G$ such that $H$ is the image of $\mathcal{E}$. The matrix $\mathcal{E}$ and the numbers $\alpha$, $\beta$ provide a description of the subgroup $H$. 

Throughout the entire section, $H$ is assumed to be \textbf{closed} (in the topological sense). The word ``subgroup'' will be used as a synonym of ``closed subgroup''. This is enough for our purposes, since the subgroup $\mathbb{H}$ that defines the support of a stabilizer state (and we aim to sample) is always closed (corollary \ref{corollary:StabStates have Closed Support}).

\subsubsection*{Norms}

 There exists a natural notion of 2-norm  for every group of the form $G:=\Z^a\times \R^b \times \DProd{N}{c}\times \T^d$ analogous to the standard 2-norm $\|\cdot\|_2$ of a real Euclidean space (we denote the group 2-norm simply by $\|\cdot \|_{G}$):  given  $g=(g_\Z, g_\R, g_F, g_\T)\in G$, 
\begin{equation}\label{eq:Norms}
\|g\|_G:= \left\| \left( g_\Z, g_\R,  g_F^{\circlearrowleft}, g_\T^{\circlearrowleft} \right) 	\right\|_2
\end{equation}
where  $g_F^{\circlearrowleft}$ (resp.\ $g_\T^{\circlearrowleft}$) stands for any integer  tuple $x\in \Z^a$  (resp.\ real tuple $y\in \R^d$)  that is congruent to $g_F$ (resp.\ $g_\T$) and has minimal two norm $\|\cdot\|_2$. The reader should note that, although $g_F^{\circlearrowleft}$, $g_\T^{\circlearrowleft}$ may not be uniquely defined, the value of $\|g\|_G$ is always unique.

The following relationship between norms will later be useful:
\begin{equation}\label{eq:Relationship between norms}
\textnormal{if}\quad\|g\|_{2}\leq \tfrac{1}{2}\quad \textnormal{then}\quad \|g\|_{G}=\|g\|_{2}\leq \tfrac{1}{2};
\end{equation}
or, in other words, if an element $g\in G$ has small $\|\cdot\|_2$ norm as a tuple of real numbers, then its norm $\|g\|_{G}$ as a group element of $G$ is also small and equal to $\|g\|_2$.

\subsubsection*{Net techniques}

Groups of the form (\ref{eq:Elementary TZF Groups 1}) contain subgroups that are \emph{continuous} and/or \emph{unbounded} as sets. These properties must  be taken into account in the design of algorithms to sample subgroups. 
 
We briefly discuss the technical issues---absent from the case of finite $G$ as in \cite{VDNest_12_QFTs,BermejoVega_12_GKTheorem} ---that arise, and present net techniques to tackle them.
 
The first issue to confront, related to continuity, is the presence of discretization errors due to finite precision limitations, for no realistic algorithm can sample a continuous subgroup $H$ exactly. Instead, we will  sample some distinguished discrete subset $\mathcal{N}_\varepsilon$  of $H$  that, informally, ``discretizes''  $H$ and that can be efficiently represented in a computer. More precisely, we choose $\mathcal{N}_\epsilon$ to be a certain type of $\varepsilon$-net:
 \begin{definition}[$\boldsymbol{\varepsilon}$\textbf{-net}\footnote{Our definition of  $\varepsilon$-net is based on the ones used in \cite{HaydenLeungShorWinter04Randomizing_Quantum_States,YaoyunXiaodi@12_E_Net_Method_Optimizations_Separable,Ni13Commuting_Circuits,Deza13Encyclopedia_of_Distances}. We adopt an additional non-standard convention, that $\mathcal{N}$ must be a subgroup, because it is convenient for our purposes.}]\label{def:Epsilon Net} An $\varepsilon$-net $\mathcal{N}$ of a subgroup $H$ is a finitely generated subgroup of $H$ such that for every $h\in H$ there exists  $\mathpzc{n}\in \mathcal{N}$ with $\| h -  \mathpzc{n} \|_{G}\leq \varepsilon$.
 \end{definition}
The second issue in our setting is the unboundedness of certain subgroups of $G$ \emph{by itself}. We must carefully define  a notion of sampling for such sets that suits our needs, dealing with the fact that  uniform distributions over unbounded sets (like $\R$ or $\Z$) cannot be interpreted as well-defined probability distributions; as a consequence, one cannot simply ``sample'' $\mathcal{N}$ or $H$ uniformly. However, in order to simulate the distribution of measurement outcomes of a \emph{physical} normalizer quantum computation (where the initial states $\ket{g}$ can only be prepared approximately) it is enough to sample uniformly some bounded compact region of $H$ with finite volume $V$. We can approach the infinite-precision limit by choosing $V$ to be larger and larger, and in the $V\rightarrow \infty$ limit we will approach an exact quantum normalizer computation. 

We will slightly modify the definition of $\epsilon$-net so that we can sample $H$ in the sense described above. For this, we need to review some structural properties of the subgroups of groups of the form (\ref{eq:Elementary TZF Groups 1})

It is known that any arbitrary closed subgroup $H$ of an elementary group $G$ of the form (\ref{eq:Elementary TZF Groups 1}) is isomorphic to an elementary group also of the form  (\ref{eq:Elementary TZF Groups 1})  (see \cite{Stroppel06_Locally_Compact_Groups} theorem 21.19 and proposition 21.13). As a result, any subgroup $H$ is of the form $H=H_{\mathrm{comp}}\oplus H_{\mathrm{free}}$ where $H_{\mathrm{comp}}$ is a \emph{compact} Abelian subgroup of $H$ and $H_{\mathrm{free}}$ is either the trivial subgroup or an \emph{unbounded} subgroup that does not contain non-zero finite-order elements (it is \emph{torsion-free}, in group theoretical jargon). By the same argument, any $\varepsilon$-net $\mathcal{N}_\varepsilon$ of $H$ decomposes in the same way
 \begin{equation}\label{eq:eNet_compact-free_decomposition}
 \mathcal{N}_\varepsilon:=\mathcal{N}_{\varepsilon,\mathrm{comp}}\oplus \mathcal{N}_{\varepsilon,\mathrm{free}}.
 \end{equation}
 where $\mathcal{N}_{\varepsilon,\mathrm{comp}}$ is a finite subgroup of $H_{\mathrm{comp}}$ and $\mathcal{N}_{\varepsilon,\mathrm{free}}$ is a finitely generated torsion-free subgroup of $H_{\mathrm{free}}$. The fundamental theorem of finitely generated Abelian groups tells us that $\mathcal{N}_{\varepsilon,\mathrm{free}}$ is  isomorphic to a group of the form $\Z^\mathbf{r}$ (a \emph{lattice} of rank $\mathbf{r}$) and, therefore, it has a \emph{$\Z$-basis} \cite{Mollin09Advanced_Number_Theory_with_Applications}: i.e.\ a set  $\{\mathpzc{b}_1, \ldots, 	\mathpzc{b}_\mathbf{r}\}$ of  elements  such that every $\mathpzc{n}\in \mathcal{N}_{\varepsilon,\mathrm{free}}$ can be written in one and only one way as a linear combination of basis elements with \emph{integer} coefficients:
  \begin{equation}
   \mathcal{N}_{\varepsilon,\mathrm{free}}=\left\{\mathpzc{n} = \sum_{i=1}^{\mathbf{r}} \mathpzc{n}_i  \mathpzc{b}_i, \,\textnormal{ for some } \mathpzc{n}_i \in \Z\right\}.
   \end{equation}
    
In view of equation (\ref{eq:Elementary TZF Groups 1}) we introduce a more general notion of nets that is adequate for sampling this type of set.
  \begin{definition}\label{def:Delta Epsilon Net} Let $\mathcal{N}_\varepsilon$ be an $\varepsilon$-net of $H$ and let $\{\mathpzc{b}_1, \ldots, 	\mathpzc{b}_\mathbf{r}\}$ be a prescribed  basis of $\mathcal{N}_{\varepsilon,\mathrm{free}}$. Then, we call a $\boldsymbol{(\Delta, \varepsilon )}$\textbf{\em-net} any  finite subset $\mathcal{N}_{\Delta,\varepsilon}$ of $\mathcal{N}_\varepsilon$ of the form
  \begin{equation}
   \mathcal{N}_{\Delta,\varepsilon}= \mathcal{N}_{\varepsilon,\mathrm{comp}}\oplus \mathcal{P}_\Delta,
   \end{equation}
  where  $\mathcal{P}_\Delta$ denotes the  parallelotope contained in $\mathcal{N}_{\varepsilon,\mathrm{free}}$ with vertices $\pm\Delta_1 \mathpzc{b}_1, \ldots, 	\pm\Delta_\mathbf{r} \mathpzc{b}_\mathbf{r}$,
    \begin{equation}\label{eq:Parallelotope DEFINITION}
    \mathcal{P}_\Delta := \left\{\mathpzc{n} = \sum_{i=1}^{\mathbf{r}} \mathpzc{n}_i  \mathpzc{b}_i, \,\textnormal{ where } \mathpzc{n}_i \in \{0,\pm 1, \pm 2,\ldots , \pm \Delta_i\}\right\}.
    \end{equation}
    The index of $\mathcal{P}_\Delta$ is a  tuple of positive integers $\Delta:=(\Delta_1,\ldots,\Delta_\mathbf{r})$ that specifies the lengths of the edges of $\mathcal{P}_\Delta$.
 \end{definition}
 Notice that $\mathcal{N}_{\Delta,\varepsilon}\rightarrow \mathcal{N}_\varepsilon$ in the limit where the edges  $\Delta_i$ of $\mathcal{P}_\Delta$ become infinitely long and that the volume covered by $\mathcal{N}_{\Delta,\varepsilon}$ increases monotonically as a function of the edge-lengths. Hence, any algorithm to construct and sample $(\Delta, \varepsilon)$-nets of $H$ can be used to sample  $H$ in the sense we want. Moreover, the next theorem (a  main contribution of this paper) states that there exist  classical algorithms to sample the subgroup $H$ through $(\Delta,\varepsilon)$-nets \emph{efficiently}.
 \begin{theorem}\label{thm:Algorithm to sample subgroups}
Let  $H$ be an arbitrary closed subgroup of an elementary group $G=\T^a\times\Z^b\times \DProd{N}{c}$. Assume we are given a matrix-representation $\mathcal{E}$ of a group homomorphism $\mathcal{E}:\R^\alpha\times \Z^\beta\rightarrow G$ such that $H$ is the image of $\mathcal{E}$. Then, there exist  classical algorithms to sample $H$ through $(\Delta, \varepsilon)$-nets using $O(\ppoly{m, \alpha,\beta, \log{N_i},\log\|\mathcal{E}\|_\mathbf{b}, \log\frac{1}{\varepsilon}, \log\Delta_i})$ time and bits of memory.
 \end{theorem}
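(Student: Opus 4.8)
The plan is to exploit the canonical splitting $H=H_{\mathrm{comp}}\oplus H_{\mathrm{free}}$ recalled before Definition \ref{def:Delta Epsilon Net}, to treat the compact and free parts by entirely different means, and then to \emph{sample} the prescribed $(\Delta,\varepsilon)$-net directly, never enumerating it, so that the running time depends only polylogarithmically on $1/\varepsilon$ and the $\Delta_i$. Throughout I use the standing assumption that $\mathcal{E}$ is a \emph{rational} matrix representation. The construction is organized by the block structure of $\mathcal{E}$ (Lemma \ref{lemma:Normal form of a matrix representation}): since there are no nontrivial continuous homomorphisms from the connected group $\R^\alpha$ into the discrete factors $\Z^b$ or $F$, composing $\mathcal{E}$ with the projection $\pi\colon G\to\Z^b$ annihilates $\R^\alpha$ and reduces to an integer matrix $L\colon\Z^\beta\to\Z^b$, namely the $\Z^b$-row block of $\mathcal{E}$ restricted to its $\Z^\beta$ columns.

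First I would extract the free part. Its image $\pi(H)=L(\Z^\beta)$ is a sublattice of $\Z^b$, and the Hermite normal form of $L$ yields in polynomial time a $\Z$-basis $\ell_1,\dots,\ell_\mathbf{r}$ of $\pi(H)$ together with integer preimages $w_i\in\Z^\beta$ satisfying $Lw_i=\ell_i$. Setting $\mathpzc{b}_i:=\mathcal{E}(w_i)\in H$, the $\mathpzc{b}_i$ project to the independent vectors $\ell_i$, so that $\pi$ restricts to an isomorphism $\langle\mathpzc{b}_1,\dots,\mathpzc{b}_\mathbf{r}\rangle\to\pi(H)$; hence $\langle\mathpzc{b}_1,\dots,\mathpzc{b}_\mathbf{r}\rangle$ is a free complement of $H_{\mathrm{comp}}=\ker(\pi|_H)$ and provides the prescribed basis of $\mathcal{N}_{\varepsilon,\mathrm{free}}$. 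The parallelotope $\mathcal{P}_\Delta$ is then sampled trivially and collision-free, by drawing integers $\mathpzc{n}_i\in\{0,\pm1,\dots,\pm\Delta_i\}$ uniformly and returning $\sum_i\mathpzc{n}_i\mathpzc{b}_i$; the $\Z$-independence of the $\mathpzc{b}_i$ guarantees that distinct coefficient vectors give distinct points.

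The hard part, and the main obstacle, is the compact factor $H_{\mathrm{comp}}=\mathcal{E}(\R^\alpha\times\ker L)\subseteq\T^a\times F$, where I must build a \emph{finite} subgroup that is provably $\varepsilon$-dense and yet carries a collision-free coordinate system. I would first compute a $\Z$-basis $k_1,\dots,k_s$ of $\ker L$ (by integer normal forms, or via Theorem \ref{thm:General Solution of systems of linear equations over elementary LCA groups}), so that $H_{\mathrm{comp}}$ is generated by the subtorus $\mathcal{E}(\R^\alpha)$ and the points $\mathcal{E}(k_1),\dots,\mathcal{E}(k_s)$. Writing $R$ for the rational $\T\R$-block of $\mathcal{E}$, the torus part is discretized by the subgroup $R(\tfrac1K\Z^\alpha)\bmod 1$: by the Lipschitz bound on the linear map $R$ together with the norm relation (\ref{eq:Relationship between norms}) between $\|\cdot\|_2$ and $\|\cdot\|_G$, choosing $K$ of bit-size $O(\|\mathcal{E}\|_\mathbf{b}+\log\tfrac1\varepsilon+\log\alpha)$ makes this subgroup $\varepsilon$-dense in the subtorus $\mathcal{E}(\R^\alpha)$; and, crucially, the \emph{rationality} of $R$ forces each generator $\tfrac1K Re_j\bmod1$ to be a torsion element of $\T^a$, so the generated group is finite. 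Adjoining the finitely many torsion points $\mathcal{E}(k_i)$ produces a finite subgroup $\mathcal{N}_{\varepsilon,\mathrm{comp}}$ whose sum with the closed subtorus $\mathcal{E}(\R^\alpha)$ is a finite union of cosets, hence closed, and equals an $\varepsilon$-approximation of $H_{\mathrm{comp}}$. Since on any $h=h_c+h_f\in H$ the free part $h_f$ is represented \emph{exactly} in $\mathcal{N}_{\varepsilon,\mathrm{free}}$, the residual error of $\mathcal{N}_\varepsilon=\mathcal{N}_{\varepsilon,\mathrm{comp}}\oplus\mathcal{N}_{\varepsilon,\mathrm{free}}$ equals $\|h_c-\mathpzc{n}_c\|_G\le\varepsilon$, so $\mathcal{N}_\varepsilon$ is a genuine $\varepsilon$-net in the sense of Definition \ref{def:Epsilon Net}.

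Finally, to sample $\mathcal{N}_{\varepsilon,\mathrm{comp}}$ uniformly and without collisions, the step that defeated the naive heuristic, I would present this finite abelian group by an integer relation matrix recording the orders and relations among its rational generators, and reduce it to Smith normal form to obtain independent cyclic generators $c_1,\dots,c_t$ of orders $m_1,\dots,m_t$ with unique digit representation. Uniform collision-free samples then arise by drawing $(\mathpzc{j}_1,\dots,\mathpzc{j}_t)$ with $\mathpzc{j}_l\in\{0,\dots,m_l-1\}$ uniformly and outputting $\sum_l\mathpzc{j}_l c_l$; combined independently with the parallelotope sample, this yields a uniformly random element of the internal direct sum $\mathcal{N}_{\Delta,\varepsilon}=\mathcal{N}_{\varepsilon,\mathrm{comp}}\oplus\mathcal{P}_\Delta$, the intersection being trivial because $\mathcal{N}_{\varepsilon,\mathrm{comp}}$ projects to $0$ while the $\mathpzc{b}_i$ project to independent lattice vectors. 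Because $K$, the orders $m_l$, and all denominators occurring are controlled by $\|\mathcal{E}\|_\mathbf{b}$, the $N_i$ and $\log\tfrac1\varepsilon$, every Hermite/Smith normal-form computation and every arithmetic operation stays within the stated bound $O(\ppoly{m,\alpha,\beta,\log N_i,\log\|\mathcal{E}\|_\mathbf{b},\log\tfrac1\varepsilon,\log\Delta_i})$, while each random coefficient requires only $O(\log\tfrac1\varepsilon)$ or $O(\log\Delta_i)$ bits, giving the claimed complexity.
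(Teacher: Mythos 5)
Your proposal is correct, but it takes a genuinely different route from the paper's own proof. The paper never splits $H$ explicitly into compact and free parts at the outset: it discretizes the whole \emph{domain} at once, setting $\mathcal{L}=(\varepsilon_1\Z)^\alpha\times\Z^\beta$ and $\mathcal{N}=\mathcal{E}(\mathcal{L})$, proves $\varepsilon$-density by the same Schur/Lipschitz argument you use, and then obtains collision-free coordinates in one stroke: it computes a matrix $A$ with $\mathrm{im}\,A=\ker\mathcal{E}'$ (where $\mathcal{E}'$ is the rescaled map on $\Z^{\alpha+\beta}$) via theorem \ref{thm:General Solution of systems of linear equations over elementary LCA groups}, puts $A$ into Smith normal form $A=USV$, and reads off from $Q'=\Z^{\alpha+\beta}/\mathrm{im}\,S=\{0\}^{\mathbf{a}}\times\DProd{\sigma}{\mathbf{b}}\times\Z^{\mathbf{c}}$ both the finite cyclic generators of $\mathcal{N}_{\mathrm{comp}}$ and the $\Z$-basis of $\mathcal{N}_{\mathrm{free}}$ simultaneously, as columns of the isomorphism $\mathcal{E}_{\mathrm{iso}}=\mathcal{E}'U$. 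You instead split first, via the projection $\pi$ onto the $\Z^b$ factor: the compact part is identified as $\ker(\pi|_H)=\mathcal{E}(\R^\alpha\times\ker L)$, and the free part is lifted \emph{exactly} from a Hermite-normal-form basis of the lattice $L(\Z^\beta)$, so that all $\varepsilon$-approximation is confined to the compact factor and the free directions of $H$ carry no error at all --- a structural clarity the paper's net does not make explicit. The price is that collision-free sampling of the finite part does not come for free: you need a second normal-form computation (the relation lattice of the finite group generated by $R(\tfrac{1}{K}\Z^\alpha)\bmod 1$ and the $\mathcal{E}(k_i)$, plus its Smith form), whereas the paper's single Smith normal form delivers unique-digit coordinates for the compact and free parts at once. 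Both routes rest on the same pillars (rationality of $\mathcal{E}$, theorem \ref{thm:General Solution of systems of linear equations over elementary LCA groups} for kernels, integer normal forms, and the norm relation (\ref{eq:Relationship between norms})), and both stay within the claimed complexity. Two points you should tighten, neither a real gap: the computation of your relation matrix is left implicit, but it is itself a kernel computation for a homomorphism $\Z^{\alpha+s}\rightarrow\T^a\times F$ --- one more instance of theorem \ref{thm:General Solution of systems of linear equations over elementary LCA groups}, or elementary integer linear algebra after clearing the common denominator of the rational generators; and your bit-size bound for $K$ should also carry a $\log a$ term coming from the Schur bound $\|R\|_{\mathrm{op}}\le\sqrt{\alpha a}\,|\mathcal{E}|$, which is harmless since $a\le m$ appears in the stated polynomial bound anyway.
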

 Again, $\log\|\mathcal{E}\|_\mathbf{b}$ denotes the maximal number of bits needed to store a coefficient of $\mathcal{E}$ as a fraction. The proof is the content of the next section, where we devise a classical algorithm with the advertised properties.

\subsubsection*{Proof of theorem \ref{thm:Algorithm to sample subgroups}: an algorithm to sample subgroups}

We  denote by $\mathcal{E}_{\T\R}$ the block of $\mathcal{E}$ with image contained in $\T^a$ and with domain $\mathbb{R}^\alpha$. Define  a new set  $\mathcal{L}:= (\varepsilon_1 \Z) ^\alpha \times \Z^\beta$, which is a subgroup of $\R^a\times \Z^b$, and let   $\mathcal{N}:= \mathcal{E}(\mathcal{L})$ be the image of $\mathcal{L}$ under the action of the homomorphism $\mathcal{E}$.  

In first place, we show that  by setting $\varepsilon_1$ to be  smaller than $2\varepsilon/(\alpha \sqrt{a}|\mathcal{E}|)$, we can ensure that $\mathcal{N}$ is an $\varepsilon$-net of $H$ for any $\varepsilon$ of our choice. We will use that $\mathcal{L}$ is, by definition,  a $(\frac{\varepsilon_1\sqrt{\alpha}}{2})$-net of $\R^\alpha\times \Z^\beta$. (This follows from the fact that, for every   $x\in\R^\alpha$ there exists $x'\in (\varepsilon_1 \Z)^\alpha$ such that $|x(i)-x'(i)|\leq \varepsilon_1/2$, so that $\|x-x'\|_2\leq \varepsilon_1\sqrt{\alpha}/2$).  Of course, we must have that   $\mathcal{N}$ must be an $\varepsilon$-net of $H$ for some value of $\varepsilon$. To bound this $\varepsilon$ we will use the following bound for the operator norm of the matrix $\mathcal{E}_{\T\R}$:
\begin{equation}\label{eq:Error propagation in e-nets}
\|\mathcal{E}_{\T\R}\|_{\mathrm{op}}^2\leq \alpha a  |\mathcal{E}_{\T\R}|^2 \leq \alpha a |\mathcal{E}|^2 .
\end{equation}
The first inequality in (\ref{eq:Error propagation in e-nets})  follows from Schur's bound on the maximal singular value of a real matrix. This bound implies that, if two elements $\mathpzc{x}:=(x,z)\in\mathcal{X}$ and $\mathpzc{x}':=(x',z)\in\mathcal{L}$ are $\varepsilon_1\sqrt{\alpha}/2$-close to each other, then
\begin{equation}\label{eq:Error propagation in e-nets 2}
 \|\mathcal{E}\mathpzc{x}-\mathcal{E}\mathpzc{x}'\|_2\leq \|\mathcal{E}_{\T\R}\|_{\mathrm{op}} \|x-x'\|_2 \leq \tfrac{1}{2}\alpha \sqrt{a} |\mathcal{E}|\varepsilon_1
\end{equation}
 (In the first inequality, we apply the normal form in lemma \ref{lemma:Normal form of a matrix representation}.) Finally, by   imposing $\frac{ \alpha \sqrt{a}}{2}|\mathcal{E}|\varepsilon_1\leq \varepsilon\leq \tfrac{1}{2}$, we get  that $\|\mathcal{E}(\mathpzc{x}-\mathpzc{x'})\|_G\leq \varepsilon$ due to property (\ref{eq:Relationship between norms}); it follows that $\mathcal{N}$ is an $\varepsilon$-net  if $\varepsilon_1\leq 2\varepsilon/(\alpha \sqrt{a}|\mathcal{E}|)$ for every $\varepsilon\leq \tfrac{1}{2}$.

Assuming that $\varepsilon_1$ is chosen so that $\mathcal{N}$ is an $\varepsilon$-net, our next step will be to devise an algorithm to construct and sample an $(\Delta,\varepsilon)$-net $\mathcal{N}_\Delta\subset\mathcal{N}$. The key step of our algorithm will be a subroutine that computes a nicely-behaved classical representation of the quotient group $Q=\mathcal{L}/\ker{\mathcal{E}}$ and a matrix representation  of the group isomorphism $\mathcal{E}_{\mathrm{iso}}:Q\rightarrow \mathcal{N}$  (we know that these groups are isomorphic due to the first isomorphism theorem). We will use the computed representation of $Q$ to construct  a $(\Delta,\varepsilon)$-net $Q_\Delta\subset Q$ and sample elements form it; then,  by applying the map $\mathcal{E}_{\mathrm{iso}}$ to the sampled elements, we will effectively sample a  $(\Delta,\varepsilon)$-net $\mathcal{N}_\Delta\subset \mathcal{N}$; and, moreover, in a clean \emph{collision free} fashion.

To simplify conceptually certain calculations to come, we will change some notation. Note  that $\mathcal{L}$ is isomorphic to the group $\mathcal{L}':=\Z^{\alpha+\beta}$ and that $\varepsilon_1 I_\alpha\oplus I_\beta$ is a matrix representation of this isomorphism. We will work with the group $\mathcal{L'}$ instead of  $\mathcal{L}$; accordingly, we will substitute $\mathcal{E}$ with the map $\mathcal{E'}:= \mathcal{E}(\varepsilon_1 I_\alpha\oplus I_\beta)$ and $Q$ with $Q':=\mathcal{L}'/\ker{\mathcal{E}'}$.

Our subroutine to compute a representation of $Q'$ begins by applying algorithm in theorem \ref{thm:General Solution of systems of linear equations over elementary LCA groups} to obtain a $(\alpha+\beta)\times\gamma$ matrix representation $A$ of a group homomorphism $A:\Z^{\gamma} \rightarrow \mathcal{L}'$ such that  $\mathrm{im}\,A = \ker{\mathcal{E}'}$ (where $\gamma=\alpha+\beta+m$). (Lemma \ref{lemma:Normal form of a matrix representation} ensures that real factors do not appear in the  domain of  $A$ because there are no non-trivial continuous group homomorphisms from products of $\R$ into products of $\Z$.) We can represent these maps in a diagram:
\begin{equation}
\begin{tikzcd}
\Z^{\gamma}\arrow{r}{A}
&\Z^{\alpha+\beta}\arrow{r}{\mathcal{E}'}
&\mathcal{N}
\end{tikzcd}
\end{equation}
 The worst-case time complexity needed to compute $A$ with the algorithm in theorem  \ref{thm:General Solution of systems of linear equations over elementary LCA groups} is polynomial in the variables  $m$, $\alpha$, $\beta$, $ \log{N_i}$, $\log\|\mathcal{E}\|_{\mathbf{b}}$, and $\log{\frac{1}{\varepsilon}}$. 

Next, we compute two integer unimodular matrices $U$, $V$ such that $A=USV$ and $S$ is in Smith normal form (SNF). This can be done, again, in  $\ppoly{m,\alpha,\beta,\log{N_i}, \mathrm{size}(\mathcal{E}), \log{\frac{1}{\varepsilon}}}$ time with existing  algorithms to compute the SNF of an integer matrix (see e.g.\ \cite{Storjohann10_Phd_Thesis} for a review). Each matrix $V$, $S$, $U$ is the matrix of representation of some new group homomorphism, as illustrated in the following diagram.
\begin{equation}\label{eq:Commutative diagram}
\begin{tikzcd}
\Z^{\gamma}\arrow{r}{A}\arrow{d}{V}
&\Z^{\alpha+\beta}\arrow{r}{\mathcal{E}'}
&\mathcal{N}\\
\Z^\gamma \arrow{r}{S}&\Z^{\alpha+\beta}\arrow{u}{U}
\end{tikzcd}
\end{equation}
Since $V$, $U$ are invertible integer matrices the maps $V:\Z^\alpha\rightarrow\Z^\alpha$ and $U:\Z^{\alpha+\beta}\rightarrow\Z^{\alpha+\beta}$ are continuous group isomorphisms and, hence, have trivial kernels. As a result, $\mathrm{im}\, S = \mathrm{im}\,{U^{-1}AV^{-1}} = \mathrm{im}\,U^{-1}A = U^{-1}(\mathrm{im}\,A)=U^{-1}(\ker{\mathcal{E}'})$, which shows that $\ker{\mathcal{E}'}$ is isomorphic to $\mathrm{im}\, S$ via the isomorphism $U^{-1}$. These facts together with  lemma \ref{lemma:properties of matrix representations}.(a) show that   $\mathcal{E}_{\mathrm{iso}}:=\mathcal{E}'U$ is a matrix representation of a \emph{group isomorphism} from the group  $Q':=\mathcal{L}'/\mathrm{im}\, S$ into  $\mathcal{N}$. 

Finally, we show that $Q'$ can be written explicitly as a direct product of primitive groups of type $\Z$ and $\Z_{d}$, thereby computing a finite set of generators of $Q'$ that  we can immediately use to construct  $(\Delta, \varepsilon)$-nets $\mathcal{N}_\Delta$. We make crucial use of the fact that $S$ is Smith normal form, i.e.\
\begin{equation}\label{eq:Smith normal form}
S=\left(\begin{matrix}
s_1 &     &         &     &\vline   \\
    & s_2 &         &     &\vline  \\
    &     & \ddots  &     &\vline  \\
    &     &         & s_{(\alpha+\beta)} &\vline \\
\end{matrix}\begin{array}{c}
\quad\mbox{\huge 0}\end{array}\:\right)
=\left(\begin{matrix}
I_{\mathbf{a}} &     &         &     \vline   \\
    &  
    \begin{matrix}
    \sigma_1 & &\\
             &\ddots &\\
             &       &\sigma_{\textnormal{\textbf{b}}}
    \end{matrix} &     &   \vline\\
    &   &  \mbox{\Large 0}&\vline  
\end{matrix}\begin{array}{c}
\quad\mbox{\Huge 0}
\end{array}\:\right),
\end{equation}
where the coefficients $\sigma_i$ are strictly positive. It follows readily that $\mathrm{im}\,S = \Z^\mathbf{a}\times \sigma_1\Z\times\cdots\sigma_\mathbf{b}\Z\times \{0\}^{\mathbf{c}}$, and therefore
\begin{equation}\label{inproof:Sampling Algorithm Decomposition of a Quotient}
Q'=\Z^{a+b}/\mathrm{im}\,S = \{0\}^\mathbf{a}\times \DProd{\sigma}{\mathbf{b}}\times \Z^{\mathbf{c}}.
\end{equation}
As $Q'$ and $\mathcal{N}$ are \emph{ isomorphic} the columns of the matrix $\mathcal{E}_{\mathrm{iso}}=\mathcal{E'}U$ form a generating set of $\mathcal{N}$. Moreover, since $\mathcal{E}_{\mathrm{iso}}$ acts isomorphically on (\ref{inproof:Sampling Algorithm Decomposition of a Quotient}), the subgroup $\mathcal{N}$ must be a direct sum of cyclic subgroups generated by the  columns of  $\mathcal{E}_{\mathrm{iso}}$:
\begin{equation}
\mathcal{N}=  \langle \mathpzc{f}_1 \rangle\oplus\cdots \oplus \langle  \mathpzc{f}_\mathbf{b} \rangle\oplus \langle \mathpzc{b}_1 \rangle \oplus \cdots \oplus \langle \mathpzc{b}_\mathbf{c}\rangle,
\end{equation} 
where $\mathpzc{f}_i$, $\mathpzc{b}_j$ stand for the  $(\mathbf{a}+i)$th and the $(\mathbf{b}+j)$th column of $\mathcal{E}_{\mathrm{iso}}$. Equation  (\ref{inproof:Sampling Algorithm Decomposition of a Quotient}) also tells us that the $\mathpzc{f}_i$s must generate the compact subgroup $\mathcal{N}_{\mathrm{comp}}$ and that the  $\mathpzc{b}_j$ form a $\Z$-basis of $\mathcal{N}_{\mathrm{free}}$.

The last two observations yield an efficient straightforward method to construct and sample  $(\Delta, \varepsilon)$-nets within $\mathcal{N}$. First, set  $\{\mathpzc{f}_i\}$ (resp.\  $\{\mathpzc{b}_j\}$) to be the default generating-set (resp.\ default basis) of $\mathcal{N}_{\mathrm{comp}}$ and $\mathcal{N}_{\mathrm{free}}$; then, select a parallelotope $\mathcal{P}_\Delta$ of the form (\ref{eq:Parallelotope DEFINITION}) with some desired $\Delta=(\Delta_1,\ldots,\Delta_\mathbb{c})$. This procedures specifies a net $\mathcal{N}_\Delta=\mathcal{N}_{\mathrm{comp}}\oplus\mathcal{P}_\Delta$ that can be efficiently represented with $O(\ppoly{m,\alpha,\beta,\log{N_i}, \log\|\mathcal{E}\|_{\mathbf{b}}, \log{\frac{1}{\varepsilon}} \log{\Delta_i}})$ bits of memory (by keeping track of the generating-sets of $\mathcal{N}$ and the numbers $\Delta_i$). Moreover, we can efficiently sample  $\mathcal{N}_\Delta$ uniformly and collision-freely by generating random strings of the form
\begin{equation}
\sum_{i=1}^{\mathbf{b}} \mathpzc{x}_i \mathpzc{f}_i + \sum_{j=1}^{\mathbf{c}} \mathpzc{y}_i \mathpzc{b}_j,
\end{equation}
where $\mathpzc{x}_i\in\Z_{\sigma_i}$ and $\mathpzc{y}_j\in\{0,\pm1,\ldots,\pm\Delta_j\}$.

\section{Acknowledgments}

We are grateful to Mykhaylo (Mischa) Panchenko, Uri Vool, Pawel Wocjan,  Raúl García-Patrón, Geza Giedke, Mari Carmen Bañuls, Liang Jiang, Steven M. Girvin, Barbara M.\ Terhal, Hussain Anwar and Tobias J.\ Osborne  for helpful and encouraging discussions; and to Geza Giedke, Hussain Anwar, Samuel L.\ Braunstein and Robert Raussendorf for  comments on the manuscript. JBV also thanks the MIT Center for Theoretical Physics, where part of this work was conducted, for their warm and generous hospitality in Spring 2013.

JBV acknowledges financial support from the Elite Network of Bavaria QCCC Program. CYL acknowledges support from the ARO cooperative research agreement W911NF-12-0486 (Quantum Information Group), and the Natural Sciences and Engineering Research Council of Canada. 

This paper is preprint MIT-CTP/4583.

\phantomsection

\small
\addcontentsline{toc}{section}{References}
\bibliographystyle{utphys}
\bibliography{database}

\providecommand{\href}[2]{#2}\begingroup\raggedright\begin{thebibliography}{10}

\bibitem{VDNest_12_QFTs}
M.~Van~den Nest, ``Efficient classical simulations of quantum {F}ourier
  transforms and normalizer circuits over {A}belian groups,'' {\em Quantum
  Information and Computation} {\bfseries 0} no.~1, (2012) ,
  \href{http://arxiv.org/abs/1201.4867v1}{{\ttfamily arXiv:1201.4867v1
  [quant-ph]}}.

\bibitem{BermejoVega_12_GKTheorem}
J.~Bermejo-Vega and M.~Van Den~Nest, ``Classical simulations of {A}belian-group
  normalizer circuits with intermediate measurements,'' {\em Quantum Info.
  Comput.} {\bfseries 14} no.~3-4, (2014) ,
  \href{http://arxiv.org/abs/1210.3637}{{\ttfamily arXiv:1210.3637
  [quant-ph]}}.

\bibitem{BowmanBurget74_systems-Mixed-Integer_Linear_equations}
V.~J. Bowman and C.-A. Burdet, ``On the general solution to systems of
  mixed-integer linear equations,'' {\em SIAM Journal on Applied Mathematics}
  {\bfseries 26} no.~1, (1974) .

\bibitem{Kitaev06_Protected_Qubit_Supercond_Mirror}
A.~Kitaev, ``{P}rotected qubit based on a superconducting current mirror,''
  2006.

\bibitem{Brooks13_Protected_gates_for_superconducting_qubits}
P.~Brooks, A.~Kitaev, and J.~Preskill, ``Protected gates for superconducting
  qubits,'' \href{http://dx.doi.org/10.1103/PhysRevA.87.052306}{{\em Phys. Rev.
  A} {\bfseries 87} (2013) }.

\bibitem{Gottesman_PhD_Thesis}
D.~Gottesman, {\em Stabilizer Codes and Quantum Error Correction}.
\newblock PhD thesis, California Institute of Technology, 1997.
\newblock \href{http://arxiv.org/abs/quant-ph/9705052v1}{{\ttfamily
  quant-ph/9705052v1}}.

\bibitem{Gottesman99_HeisenbergRepresentation_of_Q_Computers}
D.~Gottesman, ``The {H}eisenberg representation of quantum computers,'' in {\em
  Group22: Proceedings of the XXII International Colloquium on Group
  Theoretical Methods in Physics}.
\newblock International Press, 1999.
\newblock \href{http://arxiv.org/abs/quant-ph/9807006v1}{{\ttfamily
  quant-ph/9807006v1}}.

\bibitem{Knill96non-binaryunitary}
E.~Knill, ``Non-binary unitary error bases and quantum codes,'' tech. rep., Los
  Alamos National Laboratory, 1996.
\newblock \href{http://arxiv.org/abs/quant-ph/9608048}{{\ttfamily
  quant-ph/9608048}}.

\bibitem{Gottesman98Fault_Tolerant_QC_HigherDimensions}
D.~Gottesman, ``Fault-tolerant quantum computation with higher-dimensional
  systems,'' in {\em Selected papers from the First NASA International
  Conference on Quantum Computing and Quantum Communications}.
\newblock Springer, 1998.
\newblock \href{http://arxiv.org/abs/quant-ph/9802007v1}{{\ttfamily
  quant-ph/9802007v1}}.

\bibitem{Shor}
P.~W. Shor, ``Polynomial-time algorithms for prime factorization and discrete
  logarithms on a quantum computer,'' {\em SIAM J. Sci. Statist. Comput. 26}
  (1997) .

\bibitem{Brassard_Hoyer97_Exact_Quantum_Algorithm_Simons_Problem}
G.~Brassard and P.~H\o{}yer, ``An exact quantum polynomial-time algorithm for
  simon's problem,'' in {\em Proceedings of the Fifth Israel Symposium on the
  Theory of Computing Systems (ISTCS '97)}, ISTCS '97.
\newblock IEEE Computer Society, Washington, DC, USA, 1997.
\newblock \href{http://arxiv.org/abs/quant-ph/9704027}{{\ttfamily
  quant-ph/9704027}}.

\bibitem{Hoyer99Conjugated_operators}
P.~H\o{}yer, ``Conjugated operators in quantum algorithms,''
  \href{http://dx.doi.org/10.1103/PhysRevA.59.3280}{{\em Phys. Rev. A}
  {\bfseries 59} (1999) }.

\bibitem{MoscaEkert98_The_HSP_and_Eigenvalue_Estimation}
M.~Mosca and A.~Ekert, ``The hidden subgroup problem and eigenvalue estimation
  on a quantum computer,'' in {\em Selected Papers from the First NASA
  International Conference on Quantum Computing and Quantum Communications},
  QCQC '98.
\newblock Springer, 1998.
\newblock \href{http://arxiv.org/abs/quant-ph/9903071}{{\ttfamily
  quant-ph/9903071}}.

\bibitem{Damgard_QIP_note_HSP_algorithm}
I.~Damg\aa{}rd, ``{QIP} note: on the quantum {F}ourier transform and
  applications.''
\newblock \url{http://www.daimi.au.dk/~ivan/fourier.ps}.

\bibitem{dehaene_demoor_coefficients}
J.~Dehaene and B.~De~Moor, ``{C}lifford group, stabilizer states, and linear
  and quadratic operations over {GF}(2),''
  \href{http://dx.doi.org/10.1103/PhysRevA.68.042318}{{\em Phys. Rev. A}
  {\bfseries 68} (2003) },
  \href{http://arxiv.org/abs/quant-ph/0304125v1}{{\ttfamily
  quant-ph/0304125v1}}.

\bibitem{dehaene_demoor_hostens}
E.~Hostens, J.~Dehaene, and B.~De~Moor, ``Stabilizer states and {C}lifford
  operations for systems of arbitrary dimensions and modular arithmetic,''
  \href{http://dx.doi.org/10.1103/PhysRevA.71.042315}{{\em Phys. Rev. A}
  {\bfseries 71} (2005) },
  \href{http://arxiv.org/abs/quant-ph/0408190v2}{{\ttfamily
  quant-ph/0408190v2}}.

\bibitem{AaronsonGottesman04_Improved_Simul_stabilizer}
S.~Aaronson and D.~Gottesman, ``Improved simulation of stabilizer circuits,''
  \href{http://dx.doi.org/10.1103/PhysRevA.70.052328}{{\em Phys. Rev. A}
  {\bfseries 70} (2004) },
  \href{http://arxiv.org/abs/quant-ph/0406196}{{\ttfamily quant-ph/0406196}}.

\bibitem{deBeaudrap12_linearised_stabiliser_formalism}
N.~de~Beaudrap, ``A linearized stabilizer formalism for systems of finite
  dimension,'' {\em Quantum Info. Comput.} {\bfseries 13} no.~1-2, (2013) ,
  \href{http://arxiv.org/abs/1102.3354v3}{{\ttfamily arXiv:1102.3354v3
  [quant-ph]}}.

\bibitem{Gross06_discrete_Hudson_theorem}
D.~Gross, ``Hudson's theorem for finite-dimensional quantum systems,''
  \href{http://dx.doi.org/http://dx.doi.org/10.1063/1.2393152}{{\em Journal of
  Mathematical Physics} {\bfseries 47} no.~12, (2006) },
  \href{http://arxiv.org/abs/quant-ph/0602001}{{\ttfamily quant-ph/0602001}}.

\bibitem{Braunstein98_Error_Correction_Continuous_Quantum_Variables}
S.~L. Braunstein, ``Error correction for continuous quantum variables,''
  \href{http://dx.doi.org/10.1103/PhysRevLett.80.4084}{{\em Phys. Rev. Lett.}
  {\bfseries 80} (1998) },
  \href{http://arxiv.org/abs/quant-ph/9711049}{{\ttfamily quant-ph/9711049}}.

\bibitem{Lloyd98_Analog_Error_Correction}
S.~Lloyd and J.-J.~E. Slotine, ``Analog quantum error correction,''
  \href{http://dx.doi.org/10.1103/PhysRevLett.80.4088}{{\em Phys. Rev. Lett.}
  {\bfseries 80} (1998) },
  \href{http://arxiv.org/abs/quant-ph/9711021}{{\ttfamily quant-ph/9711021}}.

\bibitem{Gottesman01_Encoding_Qubit_inan_Oscillator}
D.~Gottesman, A.~Kitaev, and J.~Preskill, ``Encoding a qubit in an
  oscillator,'' \href{http://dx.doi.org/10.1103/PhysRevA.64.012310}{{\em Phys.
  Rev. A} {\bfseries 64} (2001) }.

\bibitem{Bartlett02Continuous-Variable-GK-Theorem}
S.~D. Bartlett, B.~C. Sanders, S.~L. Braunstein, and K.~Nemoto, ``Efficient
  classical simulation of continuous variable quantum information processes,''
  \href{http://dx.doi.org/10.1103/PhysRevLett.88.097904}{{\em Phys. Rev. Lett.}
  {\bfseries 88} (2002) },
  \href{http://arxiv.org/abs/quant-ph/0109047}{{\ttfamily quant-ph/0109047}}.

\bibitem{BartlettSanders02Simulations_Optical_QI_Circuits}
S.~D. Bartlett and B.~C. Sanders, ``Efficient classical simulation of optical
  quantum information circuits,''
  \href{http://dx.doi.org/10.1103/PhysRevLett.89.207903}{{\em Phys. Rev. Lett.}
  {\bfseries 89} (2002) },
  \href{http://arxiv.org/abs/quant-ph/0204065}{{\ttfamily quant-ph/0204065}}.

\bibitem{Barnes04StabilizerCodes_for_CV_WEC}
R.~L. Barnes, ``Stabilizer codes for continuous-variable quantum error
  correction,'' 2004.

\bibitem{LloydBraunstein99_QC_over_CVs}
S.~Lloyd and S.~L. Braunstein, ``Quantum computation over continuous
  variables,'' \href{http://dx.doi.org/10.1103/PhysRevLett.82.1784}{{\em Phys.
  Rev. Lett.} {\bfseries 82} (1999) },
  \href{http://arxiv.org/abs/quant-ph/9810082}{{\ttfamily quant-ph/9810082}}.

\bibitem{BraunsteinLoock05QI_with_CV_REVIEW}
S.~L. Braunstein and P.~van Loock, ``Quantum information with continuous
  variables,'' \href{http://dx.doi.org/10.1103/RevModPhys.77.513}{{\em Rev.
  Mod. Phys.} {\bfseries 77} (2005) },
  \href{http://arxiv.org/abs/quant-ph/0410100}{{\ttfamily quant-ph/0410100}}.

\bibitem{GarciaPatron12_Gaussian_quantum_information}
C.~Weedbrook, S.~Pirandola, R.~Garc\'ia-Patr\'on, N.~J. Cerf, T.~C. Ralph,
  J.~H. Shapiro, and S.~Lloyd, ``Gaussian quantum information,''
  \href{http://dx.doi.org/10.1103/RevModPhys.84.621}{{\em Rev. Mod. Phys.}
  {\bfseries 84} (2012) }.

\bibitem{Menicucci14FaultTolertant_MBQC_CV_ClusterState}
N.~C. Menicucci, ``Fault-tolerant measurement-based quantum computing with
  continuous-variable cluster states,''
  \href{http://dx.doi.org/10.1103/PhysRevLett.112.120504}{{\em Phys. Rev.
  Lett.} {\bfseries 112} (2014) },
  \href{http://arxiv.org/abs/1310.7596}{{\ttfamily arXiv:1310.7596
  [quant-ph]}}.

\bibitem{ZhangBraunstein13_CV_Gaussian_Cluster_States}
J.~Zhang and S.~L. Braunstein, ``Continuous-variable gaussian analog of cluster
  states,'' \href{http://dx.doi.org/10.1103/PhysRevA.73.032318}{{\em Phys. Rev.
  A} {\bfseries 73} (2006) }.

\bibitem{Menicucci06_UQC_with_CV_cluster_states}
N.~C. Menicucci, P.~van Loock, M.~Gu, C.~Weedbrook, T.~C. Ralph, and M.~A.
  Nielsen, ``Universal quantum computation with continuous-variable cluster
  states,'' \href{http://dx.doi.org/10.1103/PhysRevLett.97.110501}{{\em Phys.
  Rev. Lett.} {\bfseries 97} (2006) }.
  \url{http://link.aps.org/doi/10.1103/PhysRevLett.97.110501}.

\bibitem{GuMileWeedbrook09_QC_with_CV_cluster}
M.~Gu, C.~Weedbrook, N.~C. Menicucci, T.~C. Ralph, and P.~van Loock, ``Quantum
  computing with continuous-variable clusters,''
  \href{http://dx.doi.org/10.1103/PhysRevA.79.062318}{{\em Phys. Rev. A}
  {\bfseries 79} (2009) }, \href{http://arxiv.org/abs/0903.3233}{{\ttfamily
  arXiv:0903.3233 [quant-ph]}}.

\bibitem{aruldhasquantum}
G.~Aruldhas, {\em Quantum Mechanics: 500 Problems with Solutions}.
\newblock Prentice-Hall of India, 2010.

\bibitem{VdNest10_Classical_Simulation_GKT_SlightlyBeyond}
M.~Van~den Nest, ``Classical simulation of quantum computation, the
  gottesman-knill theorem, and slightly beyond,'' {\em Quantum Info. Comput.}
  {\bfseries 10} no.~3, (2010) ,
  \href{http://arxiv.org/abs/0811.0898}{{\ttfamily arXiv:0811.0898
  [quant-ph]}}.

\bibitem{Veitch12_Negative_QuasiProbability_Resource_QC}
V.~Veitch, C.~Ferrie, D.~Gross, and J.~Emerson, ``Negative quasi-probability as
  a resource for quantum computation,'' {\em New Journal of Physics} {\bfseries
  14} no.~11, (2012) , \href{http://arxiv.org/abs/1201.1256}{{\ttfamily
  arXiv:1201.1256 [quant-ph]}}.

\bibitem{MariEisert12_Positive_Wigner_Functions_Quantum_Computation}
A.~Mari and J.~Eisert, ``Positive wigner functions render classical simulation
  of quantum computation efficient,''
  \href{http://dx.doi.org/10.1103/PhysRevLett.109.230503}{{\em Phys. Rev.
  Lett.} {\bfseries 109} (2012) },
  \href{http://arxiv.org/abs/1208.3660}{{\ttfamily arXiv:1208.3660
  [quant-ph]}}.

\bibitem{Gibbons04_Discrete_Phase_Space_Finite_Fields}
M.~J.~H. Kathleen S.~Gibbons and W.~K. Wootters, ``Discrete phase space based
  on fi- nite fields,''.

\bibitem{Delfosse14_Wigner_Function_Rebits}
J.~B. R.~R. Nicolas~Delfosse, Philippe Allard~Guerin, ``{W}igner function
  negativity and contextuality in quantum computation on rebits,'' {\em arXiv
  preprint} (2014) , \href{http://arxiv.org/abs/1409.5170}{{\ttfamily
  arXiv:1409.5170 [quant-ph]}}.

\bibitem{Mermin90_extreme_quantum_entanglement}
N.~D. Mermin, ``Extreme quantum entanglement in a superposition of
  macroscopically distinct states,''
  \href{http://dx.doi.org/10.1103/PhysRevLett.65.1838}{{\em Phys. Rev. Lett.}
  {\bfseries 65} (1990) }.

\bibitem{Scarani05_Nonlocality_Cluster_states}
V.~Scarani, A.~Ac\'in, E.~Schenck, and M.~Aspelmeyer, ``Nonlocality of cluster
  states of qubits,'' \href{http://dx.doi.org/10.1103/PhysRevA.71.042325}{{\em
  Phys. Rev. A} {\bfseries 71} (2005) },
  \href{http://arxiv.org/abs/quant-ph/0405119}{{\ttfamily quant-ph/0405119}}.

\bibitem{OtfriedToth05_Bell_Ineq_for_Graph_States}
O.~G\"uhne, G.~T\'oth, P.~Hyllus, and H.~J. Briegel, ``Bell inequalities for
  graph states,'' \href{http://dx.doi.org/10.1103/PhysRevLett.95.120405}{{\em
  Phys. Rev. Lett.} {\bfseries 95} (2005) },
  \href{http://arxiv.org/abs/quant-ph/0410059}{{\ttfamily quant-ph/0410059}}.

\bibitem{Gross_PhD_Tehsis}
D.~Gross, ``Computational power of quantum many-body states and some results on
  discrete phase spaces,'' 2008.

\bibitem{Rigas_Soto10_nonnegative_Wigner_OAM_states}
I.~Rigas, L.~L. S\'anchez-Soto, A.~B. Klimov, J.~\ifmmode \check{R}\else
  \v{R}\fi{}eh\'a\ifmmode~\check{c}\else \v{c}\fi{}ek, and Z.~Hradil,
  ``Non-negative {W}igner functions for orbital angular momentum states,''
  \href{http://dx.doi.org/10.1103/PhysRevA.81.012101}{{\em Phys. Rev. A}
  {\bfseries 81} (2010) }, \href{http://arxiv.org/abs/0909.1887}{{\ttfamily
  arXiv:0909.1887 [quant-ph]}}.

\bibitem{Rigas11_OAM_in_phase_space}
I.~Rigas, L.~S{\'a}nchez-Soto, A.~Klimov, J.~{\v R}eh{\'a}{\v c}ek, and
  Z.~Hradil, ``Orbital angular momentum in phase space,''
  \href{http://dx.doi.org/http://dx.doi.org/10.1016/j.aop.2010.11.016}{{\em
  Annals of Physics} {\bfseries 326} no.~2, (2011) },
  \href{http://arxiv.org/abs/1011.6184}{{\ttfamily arXiv:1011.6184
  [quant-ph]}}.

\bibitem{HinarejosPerezBanuls12_Wigner_function_Lattices}
M.~Hinarejos, A.~P{\'e}rez, and M.~C. Ba{\~n}uls, ``Wigner function for a
  particle in an infinite lattice,'' {\em New Journal of Physics} {\bfseries
  14} no.~10, (2012) , \href{http://arxiv.org/abs/1205.3925}{{\ttfamily
  arXiv:1205.3925 [quant-ph]}}.

\bibitem{BermejoLinVdN13_BlackBox_Normalizers}
J.~Bermejo-Vega, C.~Yen-Yu~Lin, and M.~Van~den Nest, ``The computational power
  of normalizer circuits over black-box groups.'' In preparation.

\bibitem{mosca_phd}
M.~Mosca, {\em Quantum computer algorithms}.
\newblock PhD thesis, University of Oxford, 1999.

\bibitem{cheung_mosca_01_decomp_Abelian_groups}
K.~K.~H. Cheung and M.~Mosca, ``Decomposing finite {A}belian groups,'' {\em
  Quantum Info. Comput.} {\bfseries 1} no.~3, (2001) ,
  \href{http://arxiv.org/abs/cs/0101004}{{\ttfamily cs/0101004}}.

\bibitem{Deutsch85quantumtheory}
D.~Deutsch, ``Quantum theory, the {C}hurch-{T}uring principle and the universal
  quantum computer,'' \href{http://dx.doi.org/10.1098/rspa.1985.0070}{{\em
  Proceedings of the Royal Society of London. A. Mathematical and Physical
  Sciences} {\bfseries 400} no.~1818, (1985) }.

\bibitem{Simon94onthe}
D.~R. Simon, ``On the power of quantum computation,'' {\em SIAM Journal on
  Computing} {\bfseries 26} (1994) .

\bibitem{Boneh95QCryptanalysis}
D.~Boneh and R.~Lipton,
  \href{http://dx.doi.org/10.1007/3-540-44750-4_34}{``Quantum cryptanalysis of
  hidden linear functions,''} in {\em Advances in Cryptology --- CRYPT0' 95},
  D.~Coppersmith, ed., vol.~963 of {\em Lecture Notes in Computer Science}.
\newblock Springer Berlin Heidelberg, 1995.

\bibitem{Grigoriev97_testing_shift_equivalence_polynomials}
D.~Grigoriev, ``Testing shift-equivalence of polynomials by deterministic,
  probabilistic and quantum machines,''
  \href{http://dx.doi.org/10.1016/S0304-3975(96)00188-0}{{\em Theor. Comput.
  Sci.} {\bfseries 180} no.~1-2, (1997) }.

\bibitem{kitaev_phase_estimation}
A.~Y. Kitaev, ``Quantum measurements and the {A}belian stabilizer problem,''
  1995.
\newblock arXiv:quant-ph/9511026v1.

\bibitem{Kitaev97_QCs:_algorithms_error_correction}
A.~Y. Kitaev, ``Quantum computations: algorithms and error correction,'' {\em
  Russian Mathematical Surveys} {\bfseries 52} no.~6, (1997) .

\bibitem{BabaiSzmeredi_Complexity_MatrixGroup_Problems_I}
L.~Babai and E.~Szemer\'{e}di,
  \href{http://dx.doi.org/10.1109/SFCS.1984.715919}{``On the complexity of
  matrix group problems {I},''} in {\em Proceedings of the 25th Annual
  Symposium onFoundations of Computer Science, 1984}, SFCS '84.
\newblock IEEE Computer Society, 1984.

\bibitem{Steane03_Overhead_Threshold_FTQEC}
A.~M. Steane, ``Overhead and noise threshold of fault-tolerant quantum error
  correction,'' \href{http://dx.doi.org/10.1103/PhysRevA.68.042322}{{\em Phys.
  Rev. A} {\bfseries 68} (2003) },
  \href{http://arxiv.org/abs/quant-ph/0207119}{{\ttfamily quant-ph/0207119}}.

\bibitem{Knill05_QComp_RealisticallyNoisy}
E.~Knill, ``Quantum computing with realistically noisy devices,'' {\em Nature}
  {\bfseries 434} (2005) ,
  \href{http://arxiv.org/abs/arXiv:quant-ph/0410199}{{\ttfamily
  arXiv:quant-ph/0410199}}.

\bibitem{DiVincenzo07_effectiveFTQC_Slow_Measurements}
D.~P. DiVincenzo and P.~Aliferis, ``Effective fault-tolerant quantum
  computation with slow measurements,''
  \href{http://dx.doi.org/10.1103/PhysRevLett.98.020501}{{\em Phys. Rev. Lett.}
  {\bfseries 98} (2007) },
  \href{http://arxiv.org/abs/quant-ph/0607047}{{\ttfamily quant-ph/0607047}}.

\bibitem{Paler14ErrorTracking}
A.~Paler, S.~Devitt, K.~Nemoto, and I.~Polian, ``Software-based pauli tracking
  in fault-tolerant quantum circuits,'' in {\em Proceedings of the Conference
  on Design, Automation \& Test in Europe}, DATE '14.
\newblock European Design and Automation Association, 3001 Leuven, Belgium,
  Belgium, 2014.
\newblock \href{http://arxiv.org/abs/1401.5872}{{\ttfamily arXiv:1401.5872
  [quant-ph]}}.

\bibitem{Morris77_Pontryagin_Duality_and_LCA_groups}
S.~A. Morris, {\em {P}ontryagin Duality and the Structure of Locally Compact
  {A}belian Groups}.
\newblock Cambridge University Press, 1977.

\bibitem{Stroppel06_Locally_Compact_Groups}
M.~Stroppel, {\em Locally Compact Groups}.
\newblock EMS Textbooks in Mathematics. European Mathematical Society, 2006.

\bibitem{Dikranjan11_IntroTopologicalGroups}
D.~Dikranjan, ``Introduction to topological groups.'' 2010.

\bibitem{rudin62_Fourier_Analysis_on_groups}
W.~Rudin, {\em Fourier analysis on groups}.
\newblock No.~12 in Interscience Tracts in Pure and Applied Mathematics.
  Interscience Publishers, 1962.

\bibitem{HofmannMorris06The_Structure_of_Compact_Groups}
K.~H. Hofmann and S.~A. Morris, {\em The Structure of Compact Groups}.
\newblock No.~25 in de Gruyter Studies in Mathematics. Walter de Gruyter.

\bibitem{Armacost81_Structure_LCA_Groups}
D.~L. Armacost, {\em The structure of locally compact abelian groups}.
\newblock M. Dekker New York, 1981.

\bibitem{Baez08LCA_groups_Blog_Post}
J.~Baez, ``{T}he n-{C}ategory {C}af{\'e}: {L}ocally {C}ompact {H}ausdorff
  {A}belian {G}roups,'' 2008.
\newblock
  \url{http://golem.ph.utexas.edu/category/2008/11/locally_compact_hausdorff_abel.html}.

\bibitem{nielsen_chuang}
M.~A. Nielsen and I.~L. Chuang, {\em Quantum Computation and Quantum
  Information}.
\newblock Cambridge University Press, 2000.

\bibitem{oppenheim_Signals_and_Systems}
A.~V. Oppenheim, A.~S. Willsky, and S.~H. Nawab, {\em Signals \&Amp; Systems
  (2Nd Ed.)}.
\newblock Prentice-Hall, 1996.

\bibitem{Bruhat61_Schwatz-Bruhat-functions}
F.~{B}ruhat, ``{D}istributions sur un groupe localement compact et applications
  {\`a} l etude des repr{\'e}sentations des groupes $p$-adiques,'' {\em {B}ull.
  {S}oc. {M}ath. {F}rance} {\bfseries 89} (1961) .

\bibitem{Osborne75_Schwartz_Bruhat}
M.~S. {O}sborne, ``{O}n the {S}chwartz-{B}ruhat space and the {P}aley-{W}iener
  theorem for locally compact {A}belian groups,'' {\em {J}. {F}unct. {A}nal.}
  {\bfseries 19} (1975) .

\bibitem{delaMadrid05_roleofthe_riggedHilbert}
R.~de~la Madrid, ``The role of the rigged hilbert space in quantum mechanics,''
  {\em European Journal of Physics} {\bfseries 26} no.~2, (2005) ,
  \href{http://arxiv.org/abs/quant-ph/0502053}{{\ttfamily quant-ph/0502053}}.

\bibitem{Antoine98_QM_beyond_Hilber_space}
J.~P. Antoine, ``Quantum mechanics beyond {H}ilbert space,'' in {\em
  Irreversibility and Causality Semigroups and Rigged {H}ilbert Spaces},
  vol.~504 of {\em Lecture Notes in Physics}.
\newblock Springer, 1998.

\bibitem{Gadella02_unified_Dirac_formalism}
M.~Gadella and F.~G\'{o}mez, ``A unified mathematical formalism for the {D}irac
  formulation of quantum mechanics,'' {\em Foundations of Physics} {\bfseries
  32} no.~6, (2002) .

\bibitem{gadella12_Riggings_LCA_Groups}
M.~Gadella, F.~G\'omez, and S.~Wickramasekara, ``{Riggings of locally compact
  {A}belian groups.},'' {\em J. Geom. Symmetry Phys.} {\bfseries 11} (2008) .

\bibitem{KokLovett10Intro_to_Optical_Quantum_Information_Processing}
P.~Kok and B.~W. Lovett, {\em Introduction to Optical Quantum Information
  Processing}.
\newblock Cambridge University Press.

\bibitem{Moreno05_Analytic_Number_Theory_L_Functions}
C.~J. Moreno, {\em Advanced Analytic Number Theory: L-Functions}, vol.~15 of
  {\em Mathematical Surveys and Monographs}.
\newblock American Mathematical Society, 2005.

\bibitem{Brassard97_Exact_Simons_Algorithm}
G.~Brassard and P.~Hoyer, ``An exact quantum polynomial-time algorithm for
  simon's problem,'' in {\em Proceedings of the Fifth Israel Symposium on the
  Theory of Computing Systems (ISTCS '97)}, ISTCS '97.
\newblock IEEE Computer Society, 1997.

\bibitem{lomont_HSP_review}
C.~Lomont, ``The hidden subgroup problem - review and open problems,'' {\em
  arXiv} (2004) ,
  \href{http://arxiv.org/abs/arXiv:quant-ph/0411037v1}{{\ttfamily
  arXiv:quant-ph/0411037v1}}.

\bibitem{PrasadVemuri08_classification_heisenberg_groupss}
A.~Prasad and M.~K. Vemuri, ``Decomposition of phase space and classification
  of {H}eisenberg groups,'' \href{http://arxiv.org/abs/0806.4064}{{\ttfamily
  arXiv:0806.4064 [quant-ph]}}.

\bibitem{HurtWaid70_Integral_Generalized_Inverse}
M.~F. Hurt and C.~Waid, ``A generalized inverse which gives all the integral
  solutions to a system of linear equations,''.

\bibitem{BackBrad71_projective_representations_of_Abelian_groups}
N.~B. Backhouse and C.~J. Bradely, ``Projective representations of {A}belian
  groups,'' in {\em Proceedings of the American Mathematical Society}, vol.~36.
\newblock American Mathematical Society, 1972.

\bibitem{Gottesman09Intro_QEC_and_FTQC}
D.~Gottesman, ``An introduction to quantum error correction and fault-tolerant
  quantum computation,'' in {\em Quantum Information Science and Its
  Contributions to Mathematics}, vol.~68 of {\em Proceedings of Symposia in
  Applied Mathematics}.
\newblock American Physical Society, 2009.
\newblock \href{http://arxiv.org/abs/0904.2557}{{\ttfamily arXiv:0904.2557
  [quant-ph]}}.

\bibitem{nest_MMS}
M.~V. den Nest, ``A monomial matrix formalism to describe quantum many-body
  states,'' \href{http://dx.doi.org/10.1088/1367-2630/13/12/123004}{{\em New
  Journal of Physics} {\bfseries 13} no.~12, (2011) },
  \href{http://arxiv.org/abs/1108.0531v1}{{\ttfamily arXiv:1108.0531v1
  [quant-ph]}}.

\bibitem{HaydenLeungShorWinter04Randomizing_Quantum_States}
P.~Hayden, D.~Leung, P.~W. Shor, and A.~Winter, ``Randomizing quantum states:
  Constructions and applications,''
  \href{http://dx.doi.org/10.1007/s00220-004-1087-6}{{\em Communications in
  Mathematical Physics} {\bfseries 250} no.~2, (2004) },
  \href{http://arxiv.org/abs/quant-ph/0307104}{{\ttfamily quant-ph/0307104}}.

\bibitem{YaoyunXiaodi@12_E_Net_Method_Optimizations_Separable}
Y.~Shi and X.~Wu,
  \href{http://dx.doi.org/10.1007/978-3-642-31594-7_67}{``Epsilon-net method
  for optimizations over separable states,''} in {\em Automata, Languages, and
  Programming}, vol.~7391 of {\em Lecture Notes in Computer Science}.
\newblock Springer, 2012.
\newblock \href{http://arxiv.org/abs/1112.0808}{{\ttfamily arXiv:1112.0808
  [quant-ph]}}.

\bibitem{Ni13Commuting_Circuits}
X.~Ni and M.~Van~den Nest, ``Commuting quantum circuits: efficiently classical
  simulations versus hardness results,'' {\em Quantum Information and
  Computation} {\bfseries 13} no.~1\&2, (2013) ,
  \href{http://arxiv.org/abs/1204.4570}{{\ttfamily arXiv:1204.4570
  [quant-ph]}}.

\bibitem{Deza13Encyclopedia_of_Distances}
M.~M. Deza and E.~Deza, {\em Encyclopedia of Distances}.
\newblock Springer, 2~ed., 2013.

\bibitem{Mollin09Advanced_Number_Theory_with_Applications}
R.~A. Mollin, {\em Advanced Number Theory with Applications}.
\newblock Discrete Mathematics and its Applications. CRC Press, 2010.

\bibitem{Storjohann10_Phd_Thesis}
A.~Storjohann, {\em Algorithms for Matrix Canonical Forms}.
\newblock PhD thesis, University of Waterloo, 2000.

\bibitem{Kleppner65Multipliers_on_Abelian_Groups}
A.~Kleppner, ``Multipliers on abelian groups,''
  \href{http://dx.doi.org/10.1007/BF01370393}{{\em Mathematische Annalen}
  {\bfseries 158} no.~1, (1965) }.

\end{thebibliography}\endgroup
\normalsize

\appendix

\appendix

\section{Supplementary material for section \ref{sect:Homomorphisms and matrix representations}}\label{appendix:Supplement to section Homomorphisms}

\subsection*{Proof of lemma \ref{lemma:properties of matrix representations}}

First we prove (a). Note that it follows from the assumptions that $\alpha(g+h)=A(g+h)=Ag+Ah\pmod{H}$, $\beta(x+y)=B(x+y)= Bx+By \pmod{J}$, for every $g,h\in G$, $x,y\in H$. Hence, $\beta\circ\alpha(g+h)\allowbreak=\beta(Ag+Ah+\textnormal{zero}_H)=BAg+BAh+ \textnormal{zero}_J\pmod{J}$, where zero$_X$ denotes some string congruent to the neutral element $0$ of the group $X$. As in the last equation zero$_J$ vanishes modulo $J$, $BA$ is a matrix representation of $\beta\circ\alpha$.

We prove (b). From the definitions of character, bullet group and bullet map it follows that
\begin{equation}\label{inproof:Properties Matrix Reps 1}
\chi_{\mu}(\alpha(g))=\exp\left(2\pii\sum_{ij} \mu^\bullet(i)A(i,j)g(j)\right)=\exp\left(2\pii(A^\transpose \mu^\bullet)\cdot g\right) \:\textnormal{for every $g\in G$.}
\end{equation}

Let $f$ be the function $f(g):=\exp\left(2\pii(A^\transpose \mu^\bullet)\cdot g\right)$. Then  it follows from (\ref{inproof:Properties Matrix Reps 1})  that $f$ is continuous and that $f(g+h)=f(g)f(h)$, since the function $\chi_\mu\circ \alpha$ has these properties. As a result,  $f$ is a continuous character \ $f=\chi_\nu$, where $\nu\in G^*$ satisfies $\nu^\bullet=\Upsilon_G \nu = A^\transpose \mu^\bullet\pmod{G^\bullet}$. Moreover, since $f=\chi_\mu\circ\alpha = \chi_{\alpha^*(\mu)}$ it follows that  $\alpha^*(\mu)=\nu\pmod{G^*}$ and, consequently, 
\begin{equation}
 \alpha^*(\mu)= \Upsilon_G^{-1} (A^\transpose \mu^\bullet)\pmod{G^*}= \Upsilon_G^{-1} A^\transpose \Upsilon_H\mu\pmod{G^*}. 
\end{equation}
Finally, since $\chi_\mu(\alpha(g))=\chi_x (\alpha(g))$ for any $x\in\R^n$ congruent to $\mu$, we  get that $\alpha^*(\mu)=\Upsilon_G^{-1} A^\transpose \Upsilon_Hx\pmod{G^*}$ for any such $x$, which proves the second part of the lemma.\qed

\subsection*{Proof of lemma \ref{lemma:existence of matrix representations}}\label{proof:lemma existence of matrix representations}

We will show that each of the homomorphisms $\alpha_{XY}$ as considered in lemma \ref{corollary:block-structure of Group Homomorphisms} has a matrix representation, say $A_{XY}$. Then it will follow from (\ref{eq:block decomposition of a homomorphism}) in lemma \ref{corollary:block-structure of Group Homomorphisms} that 
\begin{equation}
A:= \begin{pmatrix}
      A_{\Z\Z} & 0 & 0 & 0 \\
      A_{\R\Z} & A_{\R\R} & 0 & 0 \\
      A_{F\Z} & 0 & A_{FF} & 0 \\
      A_{\T\Z} & A_{\T\R} & A_{\T F} & A_{\T\T}.
    \end{pmatrix},
\end{equation}
as in (\ref{eq:block-structure of matrix representations}), is a matrix representation of $\alpha$.

First, note that if the group $Y$ is finitely generated, then the tuples  $e_i$ form a generating set of $Y$. It is then  easy to find a matrix representation $A_{XY}$ of $\alpha_{XY}$:  just choose the $j$th column of $A_{XY}$ to be  the element $\alpha(e_j)$ of $X$. Expanding  $g=\sum_i g(i)e_i$ (where the coefficients $g(i)$ are integral), it easily follows that $A_{XY}$ satisfies the requirements for being a proper matrix representation as given in definition \ref{def:Matrix representation}. Thus, all homomorphisms $\alpha_{XY}$ with $Y$ of the types $\Z^a$ or $F$ have matrix representations; by duality and lemma \ref{lemma:properties of matrix representations}(b), all homomorphisms $\alpha_{XY}$ with $X$ of type $\T^a$ or $F$ have matrix representations too.

The only non-trivial  $\alpha_{XY}$ left to consider is $\alpha_{\R\R}$. Recall that the latter is a continuous map from $\R^m$ to $\R^n$ satisfying $\alpha_{\R\R}(x+y)= \alpha_{\R\R}(x) + \alpha_{\R\R}(y)$ for all $x, y\in \R$. We claim that every such map must be linear, i.e. in addition we have \be \label{linearity}\alpha_{\R\R}(rx)= r\alpha_{\R\R}(x)\ee for all $r\in \R$. To see this, first note that $d \alpha_{\R\R}(kx/d )=k \alpha_{\R\R}(x)$, where $k/d$ is any  fraction ($k$, $d$ are integers). Thus  (\ref{linearity}) holds for all rational numbers $r=k/d$.  Using that $\alpha_{\R\R}$ is continuous and that the rationals are dense in the reals then implies that (\ref{linearity}) holds for all $r\in \R$. This shows that $\alpha_{\R\R}$ is a linear map; the existence of a matrix representation readily follows. \qed

\subsection*{Proof of lemma \ref{lemma:Normal form of a matrix representation}}\label{proof:lemma Normal form of a matrix representation}

It suffices to show (a), that any matrix representation $A$ of $\alpha$ must be an element of $\mathrm{Rep}$ and fulfill the consistency conditions (\ref{eq:Consistency Conditions Homomorphism}); (b), that  these consistency conditions imply that $A$ is of the form (\ref{eq:block-structure of matrix representations}) and fulfills propositions 1-4; and (c), that  every such matrix defines a group homomorphism.

We will first prove (a). Let $H_i$ is of the form $\Z$ or $\Z_{d_i}$. Then, for every $j=1,\ldots,m$, the definition of matrix representation \ref{def:Matrix representation} requires that  $(Ae_j)(i)=A(i,j)\pmod{H_i}$ must be an element of $H_i$. This shows that the $i$th row of $A$ must be integral and, thus, $A$ belongs to $\mathrm{Rep}$. Moreover, since $x:=c_je_j\equiv 0\pmod{G}$ and $y:=d^*_i e_i \equiv 0 \pmod{H^*}$, (due to the definition of characteristic) it follows that $Ax=0\pmod{H}$ and $Ay=0\pmod{G^*}$, leading to the consistency conditions (\ref{eq:Consistency Conditions Homomorphism}). 

Next, we will now prove (b).

First, the block form (\ref{eq:block-structure of matrix representations}) almost follows from (\ref{eq:block decomposition of a homomorphism}) in lemma \ref{corollary:block-structure of Group Homomorphisms}: we only have to show, in addition, that the zero matrix is the only valid matrix representation for any trivial group homomorphism $\alpha_{XY}=\mathpzc{0}$ in (\ref{eq:block decomposition of a homomorphism}). It is, however, easy to check case-by-case that, if $A_{XY}$ is a matrix representation of $\alpha_{XY}$ with $A_{XY}\neq 0$, then $\alpha_{XY}$ cannot be trivial.

Second, we prove propositions 1-4. In proposition 1, $A_{\Z\Z}$ must be integral since  $A_{\Z\Z}e_j(i)\in\Z$  (where, with abuse of notation, $i,\,j$ index the rows and columns of $A_{\Z\Z}$). By duality the same holds for $A_{\T\T}$ (it can be shown using  lemma \ref{lemma:properties of matrix representations}(b)). In proposition 2 the consistency conditions (including dual ones) are vacuously fulfilled and tell us nothing about  $A_{\R\Z}$, $A_{\R\R}$. In proposition 3, both matrices have to be integral to fulfill that $A_{XY}(e_i)\pmod{Y}$ is an element of $X$, which is of type $\Z^a$ or $F$; moreover, for $Y=F$, the  consistency conditions directly impose that the coefficients must be of the form (\ref{eq:coefficients of Matrix Rep for nonzero characteristic groups}). (The derivation is similar to that of  lemma 11 in \cite{BermejoVega_12_GKTheorem}. Lastly, in proposition 4, all consistency conditions associated to $A_{\T \Z}$ and $A_{\T\R}$ are, again, vacuous and tell us nothing about the matrix; however, the first consistency condition tells us that $A_{\T F}$ must  be fractional with coefficients of the form $\alpha_{i,j}/c_j$.

Finally we will show (c). First, it is manifest that if $A$ fulfills 1-4 then $A\in \mathrm{Rep}$. Second, to show that $A$ is a matrix representation of a group homomorphism it is enough to prove that every $A_{XY}$ fulfilling 1-4 is the matrix representation of a group homomorphism from $Y$ to $X$.  This can be checked straightforwardly for the cases  $A_{\Z\Z}$,  $A_{\R\Z}$, $A_{\R\R}$,  $A_{F\Z}$, $A_{\T\Z}$,  $A_{\T\R}$ applying properties 1-4 of $A$ and using that, in all cases, there are no non-zero real vectors congruent to the zero element of $Y$. Obviously, for the cases where $A_{XY}$ must be zero the proof is trivial. It remains to consider the cases $A_{FF}$, $A_{\T F}$, $A_{\T\T}$. In all of this cases, it holds due to properties 1,3,4 that the  first consistency condition in (\ref{eq:Consistency Conditions Homomorphism}) is fulfilled. Then, for the case $A_{FF}$, lemma 2 in \cite{BermejoVega_12_GKTheorem} can be applied to show that $A_{FF}$ is a group homomorphism; moreover, for any $g\in F$ and $x$ congruent to $g$, the first consistency condition implies that $Ax = Ag \pmod{G}$, which is what we wanted to show. For the cases $A_{\T F}$, $A_{\T \T}$, lemma 2 in \cite{BermejoVega_12_GKTheorem}  can still be generalized (the proof given in \cite{BermejoVega_12_GKTheorem} for the second statement of lemma 2 can be directly adapted) and the same argument applies. \qed

\section{Existence of general-solutions of systems of the form (\ref{eq:Systems of linear equations over groups})}\label{appendix:Closed subgroups of LCA groups / existence of general-solutions of linear systems}

In this section we  show that general-solutions of systems of linear equations over elementary Abelian groups always exist (given that the systems admit at least one solution). 

We start by recalling an important property of elementary  Abelian groups.
\begin{lemma}[{See theorem 21.19 in \cite{Stroppel06_Locally_Compact_Groups} or section 7.3.3 in \cite{Dikranjan11_IntroTopologicalGroups}}]\label{lemma:Elementary_LCA_group property extends to subgroups, quotients, products}
The class of elementary Abelian groups\footnote{Beware that in  \cite{Stroppel06_Locally_Compact_Groups} the class of elementary groups is referred as ``the category CGAL'', which stands for Compactly Generated Abelian Lie groups.} is closed with respect to forming closed subgroups, quotients by these, and finite products.\footnote{In fact, as mentioned in \cite{Stroppel06_Locally_Compact_Groups}, corollary 21.20 elementary LCA groups constitute the smallest subclass of LCA containing $\R$ and fulfilling all these properties.}
\end{lemma}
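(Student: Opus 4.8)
The plan is to dispatch the three closure properties in increasing order of difficulty, reducing two of them to the third. Closure under \emph{finite products} is immediate from the definition: a finite product of groups of the form $\R^{a}\times\Z^{b}\times\T^{c}\times F$ is again visibly of that form after regrouping the factors. Closure under \emph{quotients} I would deduce from closure under closed subgroups by invoking Pontryagin--van Kampen duality. Indeed, if $H$ is a closed subgroup of an elementary group $G$, then $G/H$ is a (Hausdorff) LCA group with $(G/H)^{*}\cong H^{\perp}$, where $H^{\perp}\subseteq G^{*}$ is the annihilator (lemma \ref{lemma:Annihilator properties}). The dual $G^{*}$ is again elementary by the rule (\ref{eq:Elementary Character Group}), and $H^{\perp}$ is a closed subgroup of it; so once the subgroup statement is known, $H^{\perp}$ is elementary, hence so is its dual, and by reflexivity $(H^{\perp})^{*}\cong(G/H)^{**}\cong G/H$ (using that the dual of an elementary group is elementary). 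Thus everything rests on the subgroup claim.

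For the subgroup claim I would first re-cast the definition: an elementary Abelian group is exactly a \emph{compactly generated Abelian Lie group}. One direction is the structure theorem for connected Abelian Lie groups ($G_{0}\cong\R^{n}\times\T^{b}$) together with the observation that compact generation forces the component group $G/G_{0}$ to be a finitely generated Abelian group, i.e.\ $\cong\Z^{a}\times F$; the converse is clear. In this language the abelian and the Lie properties are cheap to preserve: a closed subgroup $H$ of a Lie group is an embedded Lie subgroup by Cartan's closed-subgroup theorem, and it is obviously abelian. The entire difficulty is therefore concentrated in showing that $H$ is again \emph{compactly generated}.

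To control compact generation I would fibre $H$ over the component structure of the ambient group. Writing $G_{0}\cong\R^{n}\times\T^{b}$ for the identity component, the composite $H\hookrightarrow G\to G/G_{0}\cong\Z^{a}\times F$ has image a subgroup of a finitely generated group, hence finitely generated, and kernel $H\cap G_{0}$, a closed subgroup of $\R^{n}\times\T^{b}$. This yields a short exact sequence
\begin{equation}
0\longrightarrow H\cap G_{0}\longrightarrow H\longrightarrow Q\longrightarrow 0
\end{equation}
with $Q$ finitely generated; so it suffices to prove that closed subgroups of $\R^{n}\times\T^{b}$ are elementary and then check that an extension of a finitely generated Abelian group by an elementary group is again compactly generated Lie. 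For the base computation I would pull back along the covering $\R^{n+b}\to\R^{n}\times\T^{b}$ with kernel the lattice $\Z^{b}$: closed subgroups of $\R^{n}\times\T^{b}$ correspond, by the correspondence theorem, to closed subgroups of $\R^{n+b}$ containing $\Z^{b}$, and closed subgroups of a real vector group are classically $\cong\R^{p}\times\Z^{q}$ (a subspace times a lattice, after a linear change of coordinates). Pushing such a group back down to the quotient produces a group $\cong\R^{p'}\times\T^{q'}\times F'$, which is elementary.

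The step I expect to be the genuine obstacle is precisely this subgroup statement, and the reason is instructive: in a general LCA group a closed subgroup of a compactly generated group need \emph{not} be compactly generated, so one cannot argue purely topologically and must genuinely use the Lie structure (the absence of small subgroups) to exclude pathologies. The characteristic danger is the irrational-winding phenomenon---a one-parameter subgroup of $\T^{2}$ of irrational slope fails to be closed---so the hypothesis that $H$ is closed must be leveraged at every turn to force the projections onto the $\R$, $\T$, and discrete parts to interact rigidly (for instance in the covering pullback above, where closedness of $H$ is exactly what guarantees that its preimage is a genuine subspace-times-lattice rather than a dense leaf). Assembling the vector, toral, free, and finite parts while tracking closedness through the extension in the displayed sequence is the bulk of the work; this is the content of Theorem~21.19 in \cite{Stroppel06_Locally_Compact_Groups}, which I would ultimately cite for the detailed bookkeeping.
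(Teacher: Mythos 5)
The paper itself offers no proof of this lemma: it is imported directly from the literature (Stroppel, Thm.~21.19; Dikranjan, \S 7.3.3). Your proposal therefore adds genuine content, and its architecture is sound and consistent with how the cited sources proceed: closure under finite products is indeed immediate; your reduction of the quotient case to the subgroup case via $(G/H)^{*}\cong H^{\perp}$, the duality rule (\ref{eq:Elementary Character Group}), and reflexivity is clean and correct; and the subgroup case via Cartan's theorem, the classification of closed subgroups of $\R^{m}$ as $V\oplus L$, the covering correspondence for $\R^{n+b}\to\R^{n}\times\T^{b}$, and the extension over the finitely generated component group $G/G_{0}$ is the right skeleton (with the splitting needed for ``CGAL $\Rightarrow$ product form'' honestly deferred to the reference, which is acceptable given that this is the theorem being cited).

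Two of your assertions are false, though neither destroys the argument. First, closed subgroups of $\R^{n}\times\T^{b}$ are \emph{not} all of the form $\R^{p'}\times\T^{q'}\times F'$: the subgroup $\Z\subset\R\times\T$ (inside the $\R$ factor) is closed, its preimage $\Z\times\Z\supseteq\{0\}\times\Z$ pushes down to $\Z$. The correct statement is $(V\oplus L)/\Z^{b}\cong\R^{p'}\times\T^{t'}\times\Z^{q'}\times F'$, which is still elementary, so your conclusion survives; only the stated normal form must be repaired. Second, your ``instructive'' motivational claim---that a closed subgroup of a compactly generated LCA group need not be compactly generated---is wrong in the abelian category. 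By the structure theorem a compactly generated LCA group is $\R^{n}\times\Z^{m}\times K$ with $K$ compact; projecting a closed subgroup $H$ along the compact factor is a closed map, so the image is a closed subgroup of $\R^{n}\times\Z^{m}$, hence of $\R^{n+m}$, hence $\cong\R^{p}\times\Z^{q}$ and compactly generated, and $H$ is then an extension of this by the compact group $H\cap K$, hence compactly generated. The pathology you invoke (irrational winding in $\T^{2}$) concerns subgroups that are \emph{not closed}, and the failure of compact generation to pass to closed subgroups is a genuinely non-abelian phenomenon (e.g.\ the commutator subgroup of a free group). Since your proof never actually uses this claim---it only serves to explain where you expect the difficulty to lie---the argument stands once the remark is deleted or corrected.
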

In our setting, the kernel of a continuous group homomorphism $A:G\rightarrow H$ as in (\ref{eq:Systems of linear equations over groups}) is always closed: this follows from the fact that the singleton $\{0\}\subset H$ is closed (because elementary Abelian groups are Hausdorff  \cite{Stroppel06_Locally_Compact_Groups}), which implies that $\ker{A}=A^{-1}(\{0\})$ is closed (due to continuity of $A$). Hence, it follows from lemma \ref{lemma:Elementary_LCA_group property extends to subgroups, quotients, products} that $\ker{A}$ is topologically isomorphic to some elementary Abelian group  $H':=\R^a\times\T^b \times \Z^c \times \DProd{N}{c}$; consequently, there exists a continuous group isomorphism  $\varphi$ from $H'$ to $H$.

Next, we write the group $H'$ as a quotient group $X/K$ of the group $X:=\R^{a+b}\times \Z^{c+d}$ by the subgroup $K$ generated by the elements of the form $\mathrm{char}(X_i)e_i$. The quotient group $X/K$ is the image of the quotient map $q:X\rightarrow X/K$ and the latter is a continuous group homomorphism \cite{Stroppel06_Locally_Compact_Groups}. By composing  $\varphi$ and $q$ we obtain a continuous group homomorphism $\mathcal{E}:=\varphi\circ  q$ from $X$ onto $H$. The map $\mathcal{E}$ together with any particular solution $x_0$ of (\ref{eq:Systems of linear equations over groups}) constitutes a general solution of (\ref{eq:Systems of linear equations over groups}), proving the statement.

\section{Proof of theorem \ref{thm:General Solution of systems of linear equations over elementary LCA groups}}\label{appendix:Systems of Linear Equations over Groups}

In this section we show that systems of linear equations over groups (\ref{eq:Systems of linear equations over groups}) can be reduced to systems of mixed real-integer linear equations (\ref{eq:System of Mixed Integer linear equations}).

Start with two elementary groups of general form $G$, $H$. First, notice that we can write $G$ and $H$ as $G=G_1\times\cdots\times G_m$, $H=H_1\times \cdots \times H_n$ where each factor $G_i$, $H_j$ is of the form $G_i=\textbf{X}_i/c_i\Z$, $H_j=\textbf{Y}_i/d_i\Z$ with $\textbf{X}_i$, $\textbf{Y}_i\in \{\Z, \R\}$; the numbers $c_i$, $d_j$ are the characteristics of the primitive factors. We assume w.l.o.g.\ that the primitive factors of $G$, $H$ are \emph{ordered} such that both groups are of the form $\Z^a\times F\times \T^b$: in other words, the finitely generated factors come first.

We now define a new group $\textbf{X}:=\textbf{X}_1\times\cdots\times \textbf{X}_m$ (recall that with the ordering adopted $X$ is of the form $\Z^a\times \R^b$) which will play the role of an \emph{enlarged solution space}, in the following sense. Let $\textbf{V}$ be the subgroup of $\mbf{X}$ generated by the elements $c_1e_1,\ldots, c_me_m$. Observe that the group $G$---the solution space in system (\ref{eq:Systems of linear equations over groups})---is precisely the quotient group $\mbf{X}/\mbf{V}$, and thus can be \emph{embedded} inside the larger group $\textbf{X}$ via the quotient group homomorphism $\textbf{q}:\textbf{X}\rightarrow G= \mbf{X}/\mbf{V}$:
\begin{equation}
\textbf{q}(\textbf{x}):=(\textbf{x}(1)\bmod{c_1}, \ldots, \textbf{x}(m)\bmod{c_m})=\textbf{x}\pmod{G};
\end{equation}
remember also that $\ker{\textbf{q}}=\textbf{V}$. Now let $\alpha:\textbf{X}\rightarrow H$ be the group homomorphism defined as $\alpha:=A\circ \textbf{q}$. Then it follows from the definition that $\alpha(\mathbf{x})=A\mathbf{x}\pmod{H}$, and $A$ is a  matrix representation of $\alpha$. (This is also a consequence of the composition property of matrix representations (lemma \ref{lemma:properties of matrix representations}.(a), since the $m\times m$ identity matrix $I_m$ is a matrix representation of $\textbf{q}$.) We now consider the relaxed\footnote{Notice that the new system is less constrained, as we look for solutions in a larger space than beforehand.} system of equations
\begin{equation}\label{eq:Enlarged system of equations 1}
\alpha(\textbf{x})=b\pmod{H},\quad  \textnormal{where } \textbf{x}\in \textbf{X}=\Z^a\times \R^b.
\end{equation}
Note that the problem of solving (\ref{eq:Systems of linear equations over groups}) reduces to solving (\ref{eq:Enlarged system of equations 1}), which  looks closer to a system of mixed real-integer linear equations. Indeed, let   $\textbf{X}_{\textnormal{sol}}$ denote the set  of all solutions  of system  (\ref{eq:Enlarged system of equations 1}); then\footnote{It is easy to prove $G_{\textnormal  {sol}}=\mathbf{q}(\mathbf{X}_{\textnormal{sol}})$ by showing $G_{\textnormal  {sol}}\supset \mathbf{q}(\mathbf{X}_{\textnormal{sol}})$ and the reversed containment for the preimage $\textbf{q}^{-1}(G_{\textnormal  {sol}})\subset\mathbf{X}_{\textnormal{sol}}$; then surjectivity of $\textbf{q}$ implies  $G_{\textnormal{sol}}=\textbf{q}(\textbf{q}^{-1}(G_{\textnormal{sol}}))\subset \textbf{q}(\textbf{X}_\textnormal{sol})$.}
\begin{equation}\label{eq:Relationship between solutions of original and enlarged system}
G_{\textnormal  {sol}}=\mathbf{q}(\mathbf{X}_{\textnormal{sol}})\quad\Longrightarrow\quad  G_{\textnormal  {sol}}=\textbf{q}(\textbf{x}_0)+\textbf{q}(\ker{\alpha})= \textbf{x}_0+\ker{\alpha}\pmod{G},
\end{equation}
Hence, our original system (\ref{eq:Systems of linear equations over groups}) admits solutions iff (\ref{eq:Enlarged system of equations 1}) also does, and the former can be obtained from the latter via the homomorphism $\mathbf{q}$. We further show next that (\ref{eq:Enlarged system of equations 1}) is equivalent to a system of  form (\ref{eq:System of Mixed Integer linear equations}). First, note that the matrix $A$ has a block form $A=\begin{pmatrix}
A_\Z & A_\R
\end{pmatrix}$ where $A_\Z$, $A_\R$ act, respectively, in integer and real variables. Since the constraint (mod ${H}$) is equivalent to the modular constraints mod ${ d_1},\ldots,$ mod ${ d_n}$, it follows that $\textbf{x}=\begin{pmatrix}
\mathbf{x_\Z} & \mathbf{x_\R}\end{pmatrix}\in\textbf{X}_{\textnormal{sol}}$ if and only if
\begin{equation}\label{eq:Enlarged system of equations 2}
A_\Z \mathbf{x_\Z} + A_\R\mathbf{x_\R} + D\mathbf{y}= c, \quad \textnormal{where } D=\textnormal{diag}(d_1,\ldots,d_n),\: \mathbf{y}\in \Z^n.
\end{equation}
Clearly, if we rename $A':=\begin{pmatrix}
A_\Z & D
\end{pmatrix}$, $\mathbf{x}':=\begin{pmatrix}
\mathbf{x_\Z} &  \mathbf{y}
\end{pmatrix}$, $B=A_\R$ and $\mathbf{y}':=\mathbf{x_\R}$, system (\ref{eq:Enlarged system of equations 2}) is a system of mixed-integer linear equations as in (\ref{eq:System of Mixed Integer linear equations}). Also,  system (\ref{eq:System of Mixed Integer linear equations}) can be seen as a system  of linear equations over Abelian groups: note that in the last step the solution space $\mathbf{X}$ is increased by introducing new extra integer variables $\mathbf{y}\in\Z^n$. If we let $\mathbf{G}$ denote the group $\mbf{X}\times \Z^n$ that describes this new space of solutions, then  (\ref{eq:Enlarged system of equations 2}) can be rewritten as
\begin{equation}\label{eq:Enlarged system of equations 3}
\mathbf{A}\mathbf{g}:=
\begin{pmatrix}
A & D 
\end{pmatrix}\mathbf{g}=c,\quad \textnormal{where } \mathbf{g} \in \mathbf{G}
\end{equation}
and $c$ represents an element of $\mathbf{Y}$. 

Mind that (\ref{eq:Enlarged system of equations 2}) (or equivalently (\ref{eq:Enlarged system of equations 3})) admits solutions if and only if both of (\ref{eq:Enlarged system of equations 1}) and (\ref{eq:Systems of linear equations over groups}) admit solutions. Indeed, the solutions of (\ref{eq:Enlarged system of equations 2}) and (\ref{eq:Enlarged system of equations 1}) are---again---related via a surjective group homomorphism  $\pi:\mathbf{X}\times\Z^n\rightarrow \mbf{X}:(\mathbf{x}, \mathbf{y})\rightarrow \mathbf{x}$. It follows from the derivation of (\ref{eq:Enlarged system of equations 2}) that $\pi(\mathbf{G}_{\textnormal{sol}})=\mathbf{X}_{\textnormal{sol}}$ and, consequently, $\mathbf{q}\circ \pi(\mathbf{G}_{\textnormal{sol}})=G_{\textnormal{sol}}$; these relationships show that either all systems admit solutions or none of them do.\\

Finally, we use existing algorithms to find a general solution $(\mathbf{g}_0, \mathbf{P})$ of system (\ref{eq:Enlarged system of equations 3}) and show how to use this information to compute a general solution of our original problem (\ref{eq:Systems of linear equations over groups}).

First,  we recall that  algorithms presented in \cite{BowmanBurget74_systems-Mixed-Integer_Linear_equations} can be used to: (a) \emph{check} whether a system of the form (\ref{eq:System of Mixed Integer linear equations},\ref{eq:Enlarged system of equations 3}) admits a solution\footnote{Mind that this step is actually not essential for our purposes, since in the applications we are interested on all such systems admits solutions by promise.}; (b)  \emph{find} a particular solution $\mathbf{g}_0$ (if there is any) and a matrix $\mathbf{P}$ that defines a group endomorphism of $\textbf{G}=\mbf{X}\times \Z^n$ whose image $\textnormal{im}\,\mathbf{P}$ is precisely the kernel\footnote{In fact, the matrix $P$ is also idempotent and defines a projection map  on $\textbf{G}$ and $\ker{\mathbf{A}}$ is the image of a projection map: subgroups satisfying this property are called \emph{retracts}. Though the authors never mention the fact that $\textbf{P}$ is a projection, this follows immediately from their equations (10a,10b).} of  $\mathbf{A}=
\begin{pmatrix}
A & D 
\end{pmatrix}$ (for details see theorem 1 in \cite{BowmanBurget74_systems-Mixed-Integer_Linear_equations}). 

Assume now that (\ref{eq:Enlarged system of equations 2}) admits solutions and that we have already found a general solution $(\mbf{g}_0=(\mathbf{x}_0,\mathbf{y}_0), \mathbf{P})$. We show next how a general solution $(x_0,P)$ of (\ref{eq:Systems of linear equations over groups}) can be computed by making use of the map $\mathbf{q}\circ \pi$. We also discuss the overall worst-time running time we need to compute $(x_0,P)$, as a function of the sizes of the matrix $A$ and the tuple $b$ given as an input in our original problem (\ref{eq:Systems of linear equations over groups}) (the bit-size  or simply \emph{size} of an array of real numbers---tuple, vector or matrix---is defined as the minimum number of bits needed to store it with infinite precision), size($G$) and size($H$):
\begin{itemize}
\item First, note that $(\mbf{g}_0=(\mathbf{x}_0,\mathbf{y}_0), \mathbf{P})$ can be computed in polynomial-time in  $\textnormal{size}(A)$, $\textnormal{size}(b)$, size($G$) and size($H$), since there is only a polynomial number of additional variables and constrains in (\ref{eq:Enlarged system of equations 2}) and the worst-time scaling of the algorithms in \cite{BowmanBurget74_systems-Mixed-Integer_Linear_equations} is also polynomial in the mentioned variables. (We discussed the complexity of these methods in section\ref{sect:Systems of linear equations over groups}.)
\item Second, a particular solution $x_0$ of (\ref{eq:Systems of linear equations over groups}) can be easily computed just by taking $x_0:=\mathbf{q}\circ\pi((\mathbf{x}_0,\mathbf{y}_0))=\pi(\mathbf{x}_0)\pmod{G}$: this computation is clearly efficient in size($\textbf{x}_0$) and size($G$). 
\item Third, note that the composed map $P:=\mathbf{q}\circ\pi\circ \mathbf{P}$ defines a group homomorphism  $P:\mathbf{G}\rightarrow G$ whose image is precisely the subgroup $\ker A$; a matrix representation of $P$ (that we denote with the same symbol) can be efficiently computed, since
\begin{equation}
\textnormal{if}\quad \mathbf{P}=\begin{pmatrix}
\mathbf{P}_{\mathbf{X} \mathbf{X}} & \mathbf{P}_{\mathbf{X} \Z}\\
\mathbf{P}_{\Z \mathbf{X}} & \mathbf{P}_{\Z \Z}
\end{pmatrix}
\quad
\textnormal{then} \quad P:=\begin{pmatrix}
\mathbf{P}_{\mathbf{X} \mathbf{X}} & \mathbf{P}_{\mathbf{X} \Z}
\end{pmatrix}
\end{equation}
is a matrix representation of $\mathbf{q}\circ\pi\circ \mathbf{P}$ that we can take without further effort. 
\end{itemize}
The combination of all steps above yields a deterministic polynomial-time algorithm to  compute a general solution $(x_0, P;\: \mathbf{G})$ of system (\ref{eq:Systems of linear equations over groups}), with  worst-time scaling as a polynomial in the variables $m$, $n$, $\log\|A\|_{\mathbf{b}}$, $\log\|b\|_{\mathbf{b}}$, $\log{c_i}$, $\log{d_j}$. This proves theorem \ref{thm:General Solution of systems of linear equations over elementary LCA groups}.

\section{Efficiency of Bowman-Burdet's algorithm}\label{appendix:Efficiency of Bowman Burdet}

In this appendix we briefly discuss the time performance of Bowman-Burdet's algorithm \cite{BowmanBurget74_systems-Mixed-Integer_Linear_equations} and argue that, using current algorithms to compute certain matrix normal forms (namely, Smith normal forms) as subroutines), their algorithm can be implemented in worst-time polynomial time.

An instance of the problem $Ax+By = C$, of the form (\ref{eq:System of Mixed Integer linear equations}), is specified by the rational matrices $A$, $B$ and the rational vector $C$. Let $A$, $B$, $C$ have $c\times a$, $c\times b$ and $c$ entries. Bowman-Burdet's algorithm (explained in \cite{BowmanBurget74_systems-Mixed-Integer_Linear_equations}, section 3) involves different types of steps, of which the most time consuming are (see equations 8-10 in \cite{BowmanBurget74_systems-Mixed-Integer_Linear_equations}):
\begin{enumerate}
\item the calculation of a constant number of certain types of generalized inverses introduced by Hurt and Waid \cite{HurtWaid70_Integral_Generalized_Inverse};
\item a constant number of matrix multiplications.
\end{enumerate}
A Hurt-Waid generalized inverse $M^\#$ of a rational matrix $M$ can be computed with an algorithm given in \cite{HurtWaid70_Integral_Generalized_Inverse}, equations 2.3-2.4. The  worst-running time of this procedure is dominated by the computation of a Smith Normal form  $S=UMV$ of $M$ with pre- and post- multipliers $U$, $V$. This subroutine becomes the bottleneck of the entire algorithm, since existing algorithms for this problem are slightly slower than those for multiplying matrices (cf.\ \cite{Storjohann10_Phd_Thesis} for a slightly outdated review). Furthermore, $S$, $U$ and $V$ can be computed in polynomial time (we refer the reader to \cite{Storjohann10_Phd_Thesis} again). 

The analysis above shows that Bowman-Burdet's algorithm runs in worst-time polynomial in the variables $\log\|A\|_{\mathbf{b}}$, $\log\|B\|_{\mathbf{b}}$, $\log\|C\|_{\mathbf{b}}$, $a$, $b$, $c$, which is enough for our purposes.

\section{Proof of lemma \ref{lemma:Computing Inverses}}\label{appendix:Computing inverses}

As a preliminary, recall  that group homomorphism form an Abelian group with the point-wise addition operation. Clearly, matrix representations inherit this group operation and form a group too. This follows from the following formula,
\begin{equation}\label{eq:Homomorphisms form a group}
(\alpha+\beta)(g)=\alpha(g)+\beta(g)=Ag + Bg = (A+B)g\pmod{G},
\end{equation}
which also states that the sum $(A+B)$ of the  matrix representations $A$, $B$  of two homomorphisms $\alpha$, $\beta$ is a matrix representation of the homomorphism $\alpha+\beta$. The group structure of the matrices is, in turn, inherited by their  \emph{columns}, a fact that will be exploited in the rest of the proof; we will denote by  $X_j$ the Abelian group formed by the $j$th columns of all matrix representations with addition rule inherited from the matrix addition operation. 

A consequence of lemma \ref{lemma:Normal form of a matrix representation} is that the group $X_j$ is always an elementary Abelian group, namely
\begin{align}
G_j&=\Z  &\Rightarrow\quad\qquad & X_j =  G= \Z^{a}\times \R^{b} \times \DProd{N}{c} \times \T^{d};\notag\\
G_j&=\R   &\Rightarrow\qquad\quad & X_j = \{0\}^{a}\times \R^{b} \times \{0\}^{c} \times \R^{d};\notag\\
G_j&=\Z_{N_j}  &\Rightarrow\qquad\quad & X_j =  \{0\}^{a}\times \{0\}^{b} \times \left(\eta_{1,j} \Z \times\cdots \times \eta_{c,j}\Z\right)\times \left(\tfrac{1}{N_j}\Z\right)^{d};\notag\\
G_j&=\T  &\Rightarrow\qquad \quad& X_j = \{0\}^{m_z}\times \{0\}^{m_r} \times \{0\}^{m_f} \times  \Z^{m_t};\label{eq:Columns of M representations form a group}
\end{align}
where   $\eta_{i,j}:=N_i/\gcd{(N_i, N_j)}$.

We will now prove the statement of the lemma.

First, we reduce the problem of computing a valid matrix representation $X$ of $\alpha^{-1}$ to that of solving   the equation $\alpha\circ\beta=\mathrm{id}$ ($\alpha$) is now the given automorphism) where $\beta$ stands for any continuous group homomorphism $\beta:G\rightarrow G$. It is easy to show that this equation admits $\beta=\alpha^{-1}$ as unique solution, since
\begin{equation}
\alpha\circ\beta=\mathrm{id}\quad\Longrightarrow\quad\beta=(\alpha^{-1}\circ\alpha)\circ\beta = \alpha^{-1}\circ(\alpha\circ\beta) = \alpha^{-1}.
\end{equation}
Hence, our task will be to find a matrix $X$ such that $g\rightarrow Xg\pmod{G}$ is a continuous group \emph{homomorphism} and such that $AX$ is a matrix representation of the identity automorphism. The latter condition reads  $AX g = g\pmod{G}$ for every $g\in G$ and is equivalent to
\begin{align}\label{inproof:Computing inverses 0}
AX\left( \sum_j g(j)e_j\right) = \sum_j g(j) Ax_j= \sum_j g(j)e_j \pmod{G} ,\,  \textnormal{ for every $g \in G,$}
\end{align}
where $x_j$ denotes the $j$th column of $X$. Since (\ref{inproof:Computing inverses 0}) holds, in particular, when all but  one number $g(j)$ are  zero, it can be re-expressed as an equivalent system of equations:
\begin{equation}\label{inproof:Computing Inverses 1}
  g(j)Ax_j=  g(j)e_j \pmod{G} ,\,  \textnormal{ for any $g(j) \in G_j$, for $j=1,\ldots,m$}.
\end{equation}
Finally, we will reduce each individual equation in (\ref{inproof:Computing Inverses 1}) to  a linear system of equations of the form (\ref{eq:Systems of linear equations over groups}). This will let us apply the algorithm in theorem \ref{thm:General Solution of systems of linear equations over elementary LCA groups} to compute every individual column $x_j$ of $X$. 

We begin by finding some simpler equivalent form for (\ref{inproof:Computing Inverses 1}) for the different types of primitive factors:
\begin{itemize}
\item[(a)] If $G_j=\Z$ or  $G_j=\Z_{N_j}$ the coefficient $g(j)$ is integral and can take the value 1. Hence, equation (\ref{inproof:Computing Inverses 1}) holds iff $Ax_j=  e_j \pmod{G}$.
\item[(b)] If $G_j=\R$ or  $G_j=\T$ we show that (\ref{inproof:Computing Inverses 1}) is equivalent to $Ax_j = e_j\pmod{X_j}$. Clearly, (\ref{inproof:Computing Inverses 1}) implies $g(j)Ax_j = g(j)e_j + \mathit{zero}$ where $\mathit{zero}= 0\pmod{G}$ and where  we fix a value of $g(j)\in G_j$. Since $G_j$ is divisible,  $g(j)'=g(j)/d$ is also an element of $G_j$ for any positive integer $d$. For this value we get $\tfrac{g(j)}{d} Ax_j   = \tfrac{g(j)}{d}e_j  + \mathit{zero}'$. These two equations combined show that  $\mathit{zero}= d\,\mathit{zero}'$ must hold for every positive integer $d\in \Z$. Since both $\mathit{zero}$ and $\mathit{zero}'$ are integral, it follows that the entries of $\mathit{zero}$ are \emph{divisible} by all positive integers; this can only happen if $\mathit{zero}=0$ and, consequently, (\ref{inproof:Computing Inverses 1}) is equivalent to $Ax_j = e_j$. Since both  $Ax_j$ and $e_j$  are $j$th columns of matrix representations, the latter equation can be  written as $Ax_j = e_j\pmod{X_j}$ with $X_j$  as in (\ref{eq:Columns of M representations form a group}).  
\end{itemize}
Finally, we argue that the final systems (a) $Ax_j=  e_j \pmod{G}$ and (b) $Ax_j=  e_j \pmod{X_j}$ are linear systems of the form (\ref{eq:Systems of linear equations over groups}).  First notice that for any two homomorphisms $\beta$, $\beta'$ with matrix representations $X$, $Y$, it follows from (\ref{eq:Homomorphisms form a group}) and lemma \ref{lemma:properties of matrix representations}.(a)  that $A(X+Y)=AX+AY$ is a matrix representation of the homomorphism $\alpha\circ(\beta+\beta')=\alpha\circ\beta+\alpha\circ\beta'$. Consequently,
\begin{equation}\label{inproof:Computing Inverses 3}
A(X+Y)g = (AX + AY) g\pmod{G}, \textnormal{for every }g\in G.
\end{equation}
The argument we used to reduce $AXg=g\pmod{G}$ to the cases (a) and (b) can be applied again to find a simpler form for (\ref{inproof:Computing Inverses 3}). Applying the same procedure step-by-step (the derivation is omitted), we obtain that, if $G_j=\Z$ or  $G_j=\Z_{N_j}$, then  (\ref{inproof:Computing Inverses 3}) is equivalent to $A(x_j+y_j)=Ax_j + Ay_j \pmod{G}$; if $G_j=\R$ and  $G_j=\T$, we get $A(x_j+y_j)=Ax_j + Ay_j \pmod{X_j}$ instead. It follows that the map $x_j\rightarrow Ax_j$ is a group homomorphism from $X_j$ to $G$ in case (a) and from $X_j$ to $X_j$ in case (b). This shows that systems  (a) and  (b) are  of the form (\ref{eq:Systems of linear equations over groups}).

\section{Supplementary material for section \ref{sect:quadratic_functions}}\label{app:Supplement Quadratic Functions}

\subsection*{Proof of lemma \ref{lemma:Normal form of a bicharacter 1}}

The lemma is a particular case of proposition 1.1 in \cite{Kleppner65Multipliers_on_Abelian_Groups}. We reproduce a shortened proof of the result in \cite{Kleppner65Multipliers_on_Abelian_Groups} (modified to suit our notation) here.

If $\beta$ is an continuous homomorphism from $G$ into $G^*$ then $ B(g, h) = \chi_{\beta(g)}(h)$ is continuous, since composition preserves continuity. Also, it follows using the linearity of this map and  of the character functions that  $B(g, h)$ is bilinear, and hence a bicharacter.  Conversely, consider an arbitrary bicharacter $B$. The condition that $B$ is a character on  the second argument says that for every $g$ the function $f_g: h\to B(g, h)$ is a character. Consequently $f_g(h)=B(g,h)=\chi_{\mu_g}(h)$ for all $h\in G$ and some $\mu_g\in G^*$ that is determined by $g$. We denote by $\beta$ be the map which sends $g$ to $\mu_g$. Using that $g\to B(g, h)$ is also a character  it follows that $\chi_{\beta{(g+g')}}(h)=\chi_{\beta{(g)}}(h)\chi_{\beta{(g')}}(h)$ for all $h\in G$, so that $\beta:G\rightarrow G^*$ is a group homomorphism. It remains to show that $\beta$ is continuous; for this we refer to the proof in \cite{Kleppner65Multipliers_on_Abelian_Groups}, where the author analyzes how  neighborhoods are transformed under this map.

\subsection*{Proof of lemma \ref{lemma symmetric matrix representation of the bicharacter homomorphism}}

We obtain (a) by combining (\ref{eq:definition of bullet map2}) with the normal form (\ref{eq:first normal fomr of a bicharacter}): the matrix $M$ is  of the form $\Upsilon X$ where $X$ is a matrix representation of $\beta$;  (b) follows from this construction. (c) follows from the normal form in lemma \ref{lemma:Normal form of a bicharacter 1}, property (a) and  lemma \ref{thm_extended_characters}.

To prove (d) we bring together (a) and the relationship $B(h,g)=B(g,h)$, and derive
\begin{equation}\label{eq:symmetry of the starred-matrix representation of a bicharacter MODULO Z}
 g^\transpose M h =  g^\transpose M^\transpose  h\mod{\Z},\quad\text{for every $g$, $h\in G$.}
\end{equation}
Write $G=G_1\times\dots\times G_m$ with $G_i$ of primitive type. If $G_i$ is either finite or equal to $\Z$ or $\R$ then the canonical basis vector $e_i$ belongs to $G$. If $G_i=\T$ then $te_i\in G$ for all $t\in[0, 1)$. If neither $G_i$ nor $G_j$ is equal to $\T$, taking  $g= e_i$, $h=e_j$ in equation (\ref{eq:symmetry of the starred-matrix representation of a bicharacter MODULO Z}) yields $M(i, j)\equiv M(j, i)$ mod $\Z$. If $G_i$ and $G_j$ are equal to $\T$, setting $g= te_i$ and $h=se_j$ yields $stM(i, j)\equiv stM(j, i)$ mod $\Z$ for all $s, t\in [0, 1)$, which implies that \be st (M(i, j)- M(j, i))\in \Z  \ee for all $s, t\in [0, 1)$. This can only happen if $M(i, j)= M(j, i)$. The other cases are treated similarly. In conclusion, we find that $M$ is symmetric modulo $\Z$. This proves (d). 

Lastly, we prove (e). Note that we have just shown that $M(i,j)=M(j,i)$ if  $G_i=G_j=\T$; the same argument can be repeated (with minor modifications) to show $M(i,j)=M(j,i)$ if either one of $G_i$ or $G_j$ is  of the form $\R$ or $\T$. 
Hence, $M(i,j)\neq M(j,i)$ can only happen if $G_i$, $G_j$ are of the form   $\Z$ or $\Z_d$. In this case, we denote by $\Delta_{ij}$ the number such that $M(j,i)=M(i,j)+\Delta_{i,j}$. (d) tells us that $\Delta_{ij}$ is an integer. Moreover, by choosing  $g=g(i)e_i$, $h=h(j)e_j$ in (\ref{eq:symmetry of the starred-matrix representation of a bicharacter MODULO Z}) it follows that
\begin{equation}\label{inproof: Delta ij}
M(j,i)g(i)h(j)=M(i,j)g(i)h(j) + \Delta_{i,j}g(i)h(j) \mod{\Z},
\end{equation}
As $g(i)$ and $h(j)$ are integers the factor $\Delta_{i,j}g(i)h(j)$  gets cancelled modulo $\Z$ and produces no effect. Finally, we  define  a new symmetric matrix  $M'$ as $M'(i,j)=M(i,j)$ if $i\geq j$, and $M'(i,j)=M(j,i)$ if $i< j$. It follows from our discussion that  $g^\transpose M'h=g^\transpose Mh\bmod{\Z}$ for every $g,h\in G$, so that $M'$ manifestly fulfills (a).

It remains to show that $M'$ fulfills (b)-(c). Keep in mind that $h\rightarrow Mh\pmod{G^\bullet}$ defines a group homomorphism into $G^\bullet$. From our last equations, it follows that either $M(i,j)h(j)=M'(i,j)h(j)$ or $M(i,j))h(j)=M'(i,j)h(j)\bmod{\Z}$ if both $G_i$ and $G_j$ are discrete groups. From the definition of bullet group (\ref{eq:Bullet Group}), it is now easy to derive that $Mh=M'h\pmod{G^\bullet}$ for every $h$, and to extend this equation to all tuples $x$ congruent to $h$  (this reduces to analyzing all possible combinations of primitive factors). As a result, $M'$ is a matrix representation that defines the same map as $M$, which implies (b). The fact that $M'$ satisfies (c) follows using the same argument we used for $M$.

\subsection*{Proof of lemma \ref{lemma quadratic B-representations differ by a character}}

We prove that the function $f(g):=\xi_1(g)/\xi_2(g)$ is a character, implying that there exists $\mu\in G^*$ such that $\chi_{\mu}=f$:
\begin{equation}\label{eq proof of quadratic B-representations differ by character}
f(g+h):=\frac{\xi_1(g)}{\xi_2(g)}\frac{\xi_1(h)}{\xi_2(h)}\frac{B(g,h)}{B(g,h)}=f(g)f(h).
\end{equation}

\subsection*{Proof of lemma \ref{lemma:B-representations can always be constructed}}

Define the function $q:G\rightarrow \R$ as
\begin{equation}
q(g):= g^{\transpose} M g + C^\transpose g.
\end{equation}
We prove that $q(g)$ is a quadratic form modulo $2\Z$ with associated bilinear form $b_q(g,h):=2 g^\transpose M h$; or, in other words, that the following equality holds for every $g,h\in G$:
\begin{equation}\label{eqinproof:quadratic form over 2Z}
q(g+h)=q(g)+q(h)+ 2 g^\transpose M h \pmod{2\Z}.
\end{equation}
Assuming that (\ref{eqinproof:quadratic form over 2Z}) is correct, it follows readily that the function $Q(g)=\exp{\left(\pii q(g)\right)}$ is quadratic and also a $B$-representation, which is what we wanted:
\begin{equation}
Q(g+h)=Q(g)Q(h)\exp{\left(2\pii\, g^\transpose M h\right)}
\end{equation}
We prove  (\ref{eqinproof:quadratic form over 2Z}) by direct evaluation of the statement. First we define $q_M(g):=g^{\transpose} M g$ and $q_C(g):=C^{\transpose} g$, so that $q(g)=q_M(g)+q_C(g)$. We will also (temporarily, i.e. only within the scope of this proof) use the notation $g\oplus h$ to denote the group operation in $G$ and reserve $g+h$ for the case when we sum over the reals. Also, denoting $G=G_1\times\dots G_m$ with $G_i$ primitive, we define $c:=(c_1,\ldots,c_m)$ to be a tuple containing all the characteristics $c:=\charac{G_i}$. With these conventions we  have $g\oplus h = g+h + \lambda\circ c$, where $\lambda$ is a vector of integers and $\circ$ denotes the entrywise product: $\lambda\circ c=(\lambda_1 c_1, \dots, \lambda_m c_m)$.  Note that $\lambda\circ c$  is the most general form of any string of real numbers that is congruent to $0\in G$ (the neutral element of the group).
We then have (using that $M=M^T$):
\begin{align}\label{eqinproof quadratic form derivation 1}
q_M(g\oplus h) = \: & q_M (g) + q_M (h) + 2 g^\transpose M h  \nonumber\\ &+ 2g^\transpose M (\lambda\circ c) + 2h^\transpose M (\lambda\circ c) + (\lambda\circ c)^\transpose M (\lambda\circ c),\\
q_C(g\oplus h) =\: & q_C(g) + q_C(h) +  \sum_i M(i,i) \lambda(i)c_i^2.
\end{align}
Consider an $x\in \R^m$ for which there exists  $g\in G$ such that $x\equiv g$ mod $G$. Then $x^\transpose M (\lambda\circ c)$ with $x\in G $ must be an integer. Indeed, we have \be \label{inproof:B(g,0)} 1 = B(g,0)=\exp{\left(2\pii x^\transpose M (\lambda\circ c) \right)},\ee where in the second identity we used lemma \ref{lemma symmetric matrix representation of the bicharacter homomorphism} together with the property $\lambda\circ c\equiv 0$ mod $G$.   This shows that $x^\transpose M (\lambda\circ c)$ is an integer. It follows that the fourth and fifth terms on the right hand side of eq.\ (\ref{eqinproof quadratic form derivation 1}) must be equal to an even integer and thus cancel modulo $2\Z$. Combining results we end up with the expression
\begin{equation}
q(g\oplus h) = q (g) + q (h) + 2 g^\transpose M h  + \Delta \pmod{2\Z},
\end{equation}
where
\begin{equation}
\Delta := (\lambda\circ c)^\transpose M (\lambda\circ c) + \sum_i M(i,i)\lambda(i) c_i^2.
\end{equation} We finish our proof by showing that $\Delta$ is an even integer too, which proves (\ref{eqinproof:quadratic form over 2Z}).

First, we note that, due to the symmetry of $M$, we can expand $(\lambda\circ c)^\transpose M (\lambda\circ c)$  as
\begin{equation}\label{Delta}
(\lambda\circ  c)^\transpose M (\lambda\circ c) = \sum_{i,j\: :\: i<j} 2  M(i,j) \lambda(i)\lambda(j) c_i c_j + \sum_i M(i,i)\lambda(i)^2 c_i^2.
\end{equation}
Revisiting (\ref{inproof:B(g,0)}) and  choosing $x=e_i$ and $\lambda = e_j$ for all different values of $i$, $j$, we obtain the following consistency equation for $M$
\begin{equation}\label{eqinproof:consistency conditions of the BICHARACTER matrix}
c_j M(i,j) = c_i M(i,j) = 0 \pmod{\Z}
\end{equation}
It follows that all terms of the form $2M(i,j) \lambda(i)\lambda(j) c_i c_j$  are \emph{even} integers. We can thus remove these terms from (\ref{Delta}) by taking modulo $2\Z$, yielding
 \begin{align}
 \Delta &=  \sum_i M(i,i)\lambda(i)^2 c_i^2 + \sum_i M(i,i) \lambda(i) c_i^2 \pmod{2\Z}\\
 &=  \sum_i M(i,i) c_i^2 \lambda(i)(\lambda(i)+1) =0  \pmod{2\Z},
 \end{align}
where in the last equality we used the fact that $\lambda(i)(\lambda(i)+1)$ is  necessarily even.

\subsection*{Proof of lemma \ref{lemma:Quadratic Function composed with Automorphism}}

The fact that $\xi_{M,v}\circ \alpha$ is quadratic follows immediately from the fact that $\xi_{M,v}$ is quadratic and that $\alpha$ is a homomorphism. Composed continuous functions lead to continuous functions. As a result, theorem \ref{thm:Normal form of a quadratic function} applies and we know $\xi_{M',v'}=\xi_{M,v}\circ \alpha$ for some choice of $M'$, $v'$. 

Let $B_M(g,h)=\exp\left(2\pii g^\transpose M h\right)$ be the bicharacter associated with $\xi_{M, v}$. One can show by direct evaluation (and using lemma \ref{lemma:properties of matrix representations}(a) and lemma  \ref{lemma symmetric matrix representation of the bicharacter homomorphism}) that $B_{M'}$ with $M':=A^\transpose M A$ is the bicharacter associated to $\xi_{M,v}\circ \alpha$. Let $Q_{M'}(g):=\exp(\pii \, (  g^{\transpose} M' g + C_{M'}^\transpose g )),$ be the quadratic function in lemma \ref{lemma quadratic B-representations differ by a character}. By construction,  both $\xi_{M,v}\circ \alpha$ and $Q_{M'}$ are $B_{M'}$-representations of  the bicharacter $B_{M'}$. As a result, lemma \ref{lemma quadratic B-representations differ by a character} tells us that the function $f(g):=\xi_{M,v}\circ \alpha(g)/ Q_{M'}(g)$ is a character of $G$, so that there exists $v'\in G^*$ such that $\chi_{v'}(g)=f(g)$. We can compute $v'$ by direct evaluation of this expression:
\begin{equation}
\chi_{v'}(g)= \exp\left( \pii \left( A^\transpose C_{M} - C_{A^\transpose M A}\right)g \right) \exp\left( 2\pii \left( A^\transpose v \right)\cdot  g \right).
\end{equation}
It can be checked that the function $\exp \left( 2\pii \left( A^\transpose v \right) g \right) $ is a character, using that  it is the composition of a character $\exp(2\pi v^\transpose g)$  (theorem \ref{thm:Normal form of a quadratic function}) and a continuous group homomorphism $\alpha$. Since $\chi_{v'}$ is also a character, the function  $\exp\left( \pii \left( A^\transpose C_{M} - C_{A^\transpose M A}\right)g \right)$ is a character too (as characters are a group under multiplication), and it follows that  $v_{A,M}=( A^\transpose C_{M} - C_{A^\transpose M A} )/2$ is congruent to some element of $G^\bullet$ \footnote{This statement can also be proven (more laboriously) by explicit evaluation, using arguments similar to those in the proof of lemma \ref{lemma:B-representations can always be constructed}.}; we obtain that $v'= A^\transpose v + v_{A,M}$ is an element of $G^\bullet$. Finally, we obtain that  $\xi_{M',v'}$ is a normal form of $\xi_{M,v}\circ \alpha$, using the relationship $\xi_{M,v}\circ \alpha (g)=Q_{M'}(g) f(g)=Q_{M'}(g) \chi_{v'}(g)=\xi_{M',v'}(g)$.

\end{document}